\renewcommand*{\backref}[1]{
  %
}
\renewcommand*{\backrefalt}[4]{%
  \par
  \ifnum#1=1 %
    cited on page %
  \else
    cited on pages %
  \fi
  #2.\par
}
\newtheorem{theorem}{Theorem}[chapter]
\newtheorem{definition}[theorem]{Definition}
\newtheorem{proposition}[theorem]{Proposition}
\newtheorem{lemma}[theorem]{Lemma}
\newtheorem{corollary}[theorem]{Corollary}
\newcounter{itm}
\newenvironment{myprotocol}[2][]
  {\begin{minipage}{\columnwidth} 
    \begin{framed}\hspace{0ex} 
     \begin{minipage}{0.99\columnwidth}
       {{\bf #2:} #1}
       \setcounter{itm}{1}
       \begin{list}{\arabic{itm}.}{\usecounter{itm}}}
   {    \end{list}
       \vspace{-1.5ex} 
       \end{minipage} 
     \end{framed} 
    \end{minipage}\vspace{-0.6ex}}
\newenvironment{myfigure}[1]    
         {\begin{figure}[#1] \centering}
         { \end{figure}}
\newcommand*{\comment}[1]{\textsf{[#1]}}
\newcommand*{\remove}[1]{}
\newcommand*{\clearemptydoublepage}{\newpage{\pagestyle{empty}\cleardoublepage}}
\renewcommand*{\S}{\ensuremath{{\sf S}}} 
\renewcommand*{\R}{\ensuremath{{\sf R}}} 
\newcommand*{\dS}{\ensuremath{\tilde{\sf S}}} 
\newcommand*{\dR}{\ensuremath{\tilde{\sf R}}} 
\newcommand*{\A}{\ensuremath{{\sf S}}} 
\newcommand*{\B}{\ensuremath{{\sf R}}} 
\newcommand*{\dA}{\ensuremath{\tilde{\sf S}}} 
\newcommand*{\dB}{\ensuremath{\tilde{\sf R}}} 
\newcommand*{\V}{\ensuremath{{\sf V}}} 
\renewcommand*{\C}{\ensuremath{{\sf C}}} 
\newcommand*{\dC}{\ensuremath{\tilde{\sf C}}} 
\newcommand*{\qot}{{\sc qot}}
\newcommand*{\epr}{{\sc epr}}
\newcommand*{\eprqot}{\epr-\qot}
\newcommand*{\comm}{{\sc comm}}
\newcommand*{\eprcomm}{{\sc epr-comm}}
\newcommand*{\BBqot}{{\sc bb84-qot}}
\newcommand*{\BBeprqot}{{\sc bb84-epr-qot}}
\newcommand*{\OTUOT}{{\sf \textsl{OT2UOT}}}      
\newcommand*{\onetwo}[1][2]{\mbox{\textsl{1\hspace{-0.1ex}-#1}}}
\newcommand*{\onen}{\mbox{\textsl{1\hspace{-0.1ex}-\hspace{-0.1ex}n}}}
\newcommand*{\onethree}{\mbox{\textsl{1\hspace{-0.1ex}-\hspace{-0.1ex}3}}}
\newcommand*{\pOT}{\index{oblivious transfer}\textsl{OT}\xspace}                    
\newcommand*{\OT}[1][2]{\index{oblivious transfer!one-out-of-two}\textsl{\onetwo[#1]\:OT}\xspace}               
\newcommand*{\StringOT}{\index{oblivious transfer!one-out-of-two}\textsl{\onetwo\:String\:OT\xspace}}    
\newcommand*{\lStringOT}[1][2]{\textsl{\onetwo[#1]\:OT\,$^{\ell}$}\xspace}    
\newcommand*{\onenOT}{\index{oblivious transfer!one-out-of-$n$}\textsl{\onen\:OT}\xspace}             
\newcommand*{\onenlStringOT}{\textsl{\onen\:OT\,$^{\ell}$}\xspace}    
\newcommand*{\onethreeOT}{\textsl{\onethree\:OT}\xspace}               
\newcommand*{\Rand}{\textsl{Rand}}                 
\newcommand*{\RandOT}{\Rand\:\OT}                  
\newcommand*{\RandStringOT}{\Rand\:\StringOT}      
\newcommand*{\RandlStringOT}[1][2]{\Rand\:\lStringOT[#1]}    
\newcommand*{\onenRandOT}{\Rand\:\onenOT}                  
\newcommand*{\onenRandlStringOT}{\Rand\:\onenlStringOT}    
\newcommand*{\Randqot}[1][2]{{\small \sc Rand 1\hspace{-0.1ex}-\hspace{-0.1ex}#1 QOT}\,}
\newcommand*{\Randlqot}[1][2]{{\small \sc Rand 1\hspace{-0.1ex}-\hspace{-0.1ex}#1 QOT}\,$^{\ell}$}
\newcommand*{\eprRandlqot}[1][2]{{\small\sc EPR Rand 1\hspace{-0.1ex}-\hspace{-0.1ex}#1 QOT}\,$^{\ell}$}
\newcommand*{\RabinOT}{\index{oblivious transfer!Rabin}\textsl{Rabin\:OT}\xspace}            
\newcommand*{\RabinlStringOT}{\index{oblivious transfer!Rabin}\textsl{Rabin\:OT\,$^{\ell}$}\xspace}      
\newcommand*{\XOT}{\index{oblivious transfer!XOR}\textsl{\onetwo\:XOT}\xspace}             
\newcommand*{\GOT}{\index{oblivious transfer!generalized}\textsl{\onetwo\:GOT}\xspace}             
\newcommand*{\BUOT}{\index{oblivious transfer!universal}\textsl{\onetwo\:UOT}\xspace}            
\newcommand*{\pUOT}{\index{oblivious transfer!universal}\textsl{UOT}\xspace}                     
\newcommand*{\UOT}[3]{\index{oblivious transfer!universal}\textsl{$(#1,#2)$-UOT$(#3)$}\xspace}     
\newcommand*{\BC}{\index{bit commitment}\textsl{BC}\xspace}  
\newcommand*{\QKD}{\index{quantum key distribution}\textsl{QKD}\xspace}
\newcommand*{\assign}{\ensuremath{\kern.5ex\raisebox{.1ex}{\mbox{\rm:}}\kern -.3em =}}
\def\={\hspace{-0.5ex}=\hspace{-0.4ex}}
\renewcommand*{\sp}{\kappa}      
\newcommand*{\ol}[1]{\overline{#1}}
\renewcommand*{\b}[1]{\text{\boldmath${\bf #1}$}}
\renewcommand*{\id}{\mathbbm{1}}   
\newcommand*{\cB}{\mathcal{B}}
\newcommand*{\cE}{\mathcal{E}}    
\newcommand*{\cH}{\mathcal{H}}
\newcommand*{\cU}{\mathcal{U}}
\newcommand*{\cX}{\mathcal{X}}
\newcommand*{\cY}{\mathcal{Y}}
\newcommand*{\cZ}{\mathcal{Z}}
\newcommand*{\bs}{\b{s}}           
\renewcommand*{\zero}{\b{o}}       
\newcommand*{\Sone}[1][s]{{\cal S}_1(\b{#1})}  
\newcommand*{\Stwo}[1][s]{{\cal S}_2(\b{#1})}  
\newcommand*{\negl}[1]{\mathit{negl}({#1})}
\newcommand*{\unif}{\mbox{\sc unif}}
\newcommand*{\nbit}{\set{0,1}^n}
\newcommand*{\bal}{\beta}              
\newcommand*{\ip}[1]{\langle#1\rangle} 
\newcommand*{\set}[1]{\{#1\}}          
\newcommand*{\Set}[2]{\{#1:\,#2\}}     
\newcommand*{\card}[1]{\big|#1\big|}   
\newcommand*{\bset}{\mathcal{S}}       
\newcommand*{\pad}{|^{\circ}}          
\newcommand*{\Qp}{Q^{+}}
\newcommand*{\Qt}{Q^{\times}}
\newcommand*{\qp}{q^{+}}
\newcommand*{\qt}{q^{\times}}
\newcommand*{\ball}[1]{{\mathrm{B}}^{#1}}  
\newcommand*{\Ba}[1]{\ensuremath{\mathcal{B}^{#1}}}  
\renewcommand*{\I}{\mathbbm{1}}
\newcommand*{\dens}[1]{\mathcal{P}(#1)}  
\newcommand*{\phix}{\phi_{\mbox{\scriptsize{\sc x}}}}
\newcommand*{\phidc}{\phi_{\mbox{\scriptsize{\sc dc}}}}
\newcommand*{\etamq}{\eta_{\mbox{\scriptsize{\sc mq}}}}
\newcommand*{\etaab}{\eta_{\mbox{\scriptsize{\sc ab}}}}
\newcommand*{\prei}[2][i-1]{#2^{#1}} 
\renewcommand*{\H}{\operatorname{H}}   
\newcommand*{\hmin}{\ensuremath{\H_{\infty}}}
\newcommand*{\hmax}{\ensuremath{\H_{0}}}
\newcommand*{\hie}[2]{\ensuremath{\hmin^{#1}(#2)}}
\newcommand*{\hiee}[1]{\hie{\varepsilon}{#1}}
\newcommand*{\hmaxe}[2]{\ensuremath{\hmax^{#1}(#2)}}
\newcommand*{\hmaxee}[1]{\hmaxe{\varepsilon}{#1}}
\newcommand*{\qhmin}{\ensuremath{\H_{\rm min}}}  
\newcommand*{\qhmax}{\ensuremath{\H_{\rm max}}}  
\newcommand*{\hminee}{\qhmin^{\varepsilon}} 
\newcommand*{\tH}{\tilde{\H}}         
\newcommand*{\dist}[1]{\delta\big(#1\big)}  
\newcommand*{\eps}{\varepsilon}
\newcommand*{\epsclose}{\approx_{\varepsilon}}
\renewcommand*{\P}{P}            
\newcommand*{\E}{\mathbb{E}}   
\newcommand*{\ev}{\ensuremath{\mathcal{E}}\xspace} 
\newcommand*{\regE}{E}
\newcommand*{\univ}{two-universal\xspace}
\newcommand*{\hf}{f}             
\newcommand*{\Hf}{F}             
\newcommand*{\chf}[1]{\ensuremath{\mathcal{F}_{#1}}} 
\newcommand*{\UH}{{\cal F}}      
\begin{document}

\pagestyle{empty} 
\pagenumbering{roman} 
\setcounter{secnumdepth}{-1}


\vspace*{\fill}
\begin{flushright}
  {\Huge\sf Cryptography in the Bounded-Quantum-Storage Model}\\[3ex]
  {\huge\sf Christian Schaffner} 
\end{flushright}
\noindent\rule{\linewidth}{1mm}\\[-.5ex]
\noindent\rule{\linewidth}{2.5mm}
\vfill
\begin{center}
  {\huge\sf PhD Dissertation}\\[\fill]
  \includegraphics{./figures/au-segl}\\[\fill]
  {\sf BRICS Research School\\DAIMI -- Department of Computer Science\\University of Aarhus\\Denmark}
\end{center}
\vspace*{\fill}
\cleardoublepage
\begin{center}
  \vspace*{\stretch{1}}
  {\huge Cryptography in the\\[3mm]
Bounded-Quantum-Storage Model}\\[\fill]
  A Dissertation\\
  Presented to the Faculty of Science\\
  of the University of Aarhus\\
  in Partial Fulfillment of the Requirements for the\\
  PhD Degree\\[\stretch{2}]
  by\\
  Christian Schaffner\\
  official version submitted: March 2, 2007\\[5mm]

  final version: \makeatletter\@date\makeatother\\
\end{center}
\vspace*{\stretch{1}}


\clearemptydoublepage
\pagestyle{plain}
\phantomsection
\addcontentsline{toc}{chapter}{Abstract}
\chapter*{{\Huge Abstract}}

Cryptographic primitives such as oblivious transfer and bit commitment
are impossible to realize if unconditional security is required against
adversaries who are unbounded in running time and memory size.
Therefore, it is a great challenge to come up with restrictions on the
adversary's capabilities such that on one hand interesting
cryptographic primitives become possible, but on the other hand the
model is still realistic and as close to practice as possible.

The \emph{bounded-quantum-storage model} is a prime example of
such a cryptographic model. In this thesis, we initiate the study of
cryptographic primitives with unconditional security under the sole
assumption that the adversary's {\em quantum} memory is of bounded
size.

Oblivious transfer and bit commitment can be implemented in this model
using protocols where honest parties need no quantum memory, whereas
an adversarial player needs to store \emph{at least a large fraction} of the
total number of transmitted qubits in order to break the protocol.
This is in sharp contrast to the classical bounded-memory model, where
we can only tolerate adversaries with memory of size polynomially
larger than the honest players' memory size.

On the practical side, our protocols are efficient, non-interactive
and can be adapted to cope with various kinds of noise in the
transmission. In fact, they can be \emph{implemented using today's
technology}. 

On the theoretical side, new \emph{entropic uncertainty relations}
involving min-entropy are established and used to prove the security
of protocols in the bounded-quantum-storage model according to new
strong security definitions. The uncertainty relations lower bound
the min-entropy of the encoding used in most quantum-cryptographic
protocols and therefore contribute to the understanding of the quantum
effects which these protocols are based upon. The most direct way to
make use of these lower bounds is by assuming a quantum-memory bound on
the adversary. For instance, in the realistic setting of \emph{Quantum
  Key Distribution (\QKD)} against quantum-memory-bounded
eavesdroppers, the uncertainty relation allows to prove the security
of QKD protocols while tolerating considerably higher error rates
compared to the standard model with unbounded adversaries. 

\vspace{2mm} In addition, though not directly related to the
bounded-quantum-storage model, a classical result about
unconditionally secure 1-out-of-2 Oblivious Transfer (\OT) is
obtained. It is pointed out that the standard security requirement for
\OT\ of bits, namely that the receiver only learns one of the bits
sent, holds if and only if the receiver has no information on the XOR
of the two bits. This result generalizes to \OT\ of strings, in which
case the security can be characterized in terms of \emph{binary linear
  functions}.  More precisely, it is shown that the receiver learns
only one of the two strings sent, if and only if he has no information
on the result of applying any binary linear function which
non-trivially depends on both inputs to the two strings. This result
not only gives new insight into the nature of \OT, but it in
particular provides a \emph{powerful tool for analyzing \OT\ 
  protocols}. With this characterization at hand, the reducibility of
\OT\ of strings to a wide range of weaker primitives follows by a very
simple argument.


\clearemptydoublepage
\phantomsection
\addcontentsline{toc}{chapter}{Acknowledgements}
\chapter*{{\Huge Acknowledgements}}

I am grateful to everyone who helped and supported me during my PhD
studies here in {\AA}rhus. 

First of all, I want to cordially thank my supervisors and co-authors
Louis Salvail and Ivan Damg{\aa}rd and the whole cryptology group at
DAIMI for providing an excellent environment for cryptographic
research. Countless are the hours I have spent discussing scientific
as well as non-scientific issues with Louis, \emph{merci beaucoup}! I
thank my other co-authors Claude Cr\'epeau, Serge Fehr, Renato Renner,
George Savvides and J\"urg Wullschleger for many inspiring visits and
discussions.

I appreciated very much being a PhD student in a well-organized and
well-funded research group and to be able to work in a brand-new building
with plenty of space, great infrastructure and always helpful and
friendly staff and secretaries: Ellen, Hanne, Karen, Lene,
Michael, and Uffe.

Studying in {\AA}rhus has been a great experience mainly because of
all the friends from the constantly changing ``gang'' of foreign and
Danish fellows at DAIMI including Allan, Claudio, Claus, Doina, Gabi,
Henrik, Jan, Jesper, Jooyong, Johan, Kevin, Michael, Mikkel, Mirka,
Rune, Tord, Thomas M, Tomas, and Troels; but not to forget the ones
who have left Denmark and are now spread around the world:
Barnie, Christopher, Emanuela and Paolo, Fitzi, Gosia and Darek, Jens,
Jes\'us, Karl, Kirill, Marco, Nelly and Antonio, Philipp, Thomas P, and
Saurabh. I thank you all for the wonderful time, both at and off the
table-soccer table. Special thanks to
Gosia and Henrik for constructive comments on the introduction of this
thesis and to J\"urg and Serge for further comments.

I would also like to thank Claude Cr\'epeau for hosting me for a
fantastic summer half-year at McGill university in Montr\'eal where I
had the chance to meet many interesting people doing quantum research
and experience the exciting spot where the francophone part of North
America meets the anglophone rest of the continent.

I thank Prof.~Andreas Winter from the University of Bristol and
Prof.~Stefan Wolf from ETH Z\"urich as well as Prof.~Susanne
B{\o}dker from the University of Aarhus for agreeing to constitute the
evaluation committee for my PhD thesis.

Last but not least, I want to express my gratitude to my family for
their immense love and support from the distance. I am infinitely
grateful for the great childhood they gave me which was and still is an
invaluable source of self-confidence for me.

\vspace{8mm}
This research was partially supported by the EU Project SECOQC,
No:~FP6-2002-IST-1-506813.

\vspace{2ex}
\begin{flushright}
  \emph{Christian Schaffner,}\\
  \emph{{\AA}rhus, March 2, 2007.}
\end{flushright}



\clearemptydoublepage
\phantomsection
\addcontentsline{toc}{chapter}{Contents}
\tableofcontents
\clearemptydoublepage
\pagenumbering{arabic}
\setcounter{secnumdepth}{2}

\pagestyle{myheadings}
\renewcommand*{\chaptermark}[1]{\markboth{\textsc{\chaptername\
      \thechapter. #1}}{}} 
\renewcommand*{\sectionmark}[1]{\markright{\textsc{\thesection. #1}}}

\clearemptydoublepage
\chapter{Introduction} \label{chap:intro}

In the quest for interesting \index{cryptographic model}cryptographic
models, bounding the quantum memory of adversarial players is a great
assumption.

\section{Cryptographic Models and Basic Primitives} \label{sec:cryptomodels}
It is a fascinating art to come up with
\emph{\index{protocol}protocols}\footnote{A protocol consists of
  clear-cut instructions for the participating players.} that achieve a
cryptographic task like encryption, authentication, identification,
voting, secure function evaluation to name just a famous few. To
define a notion of security for such protocols, one needs to specify a
\emph{\index{cryptographic model}cryptographic model}, i.e. an
environment in which the protocol is run. The model states for example
the number of honest and dishonest players, the allowed running time
and amount of memory available to honest and dishonest players, how
dishonest players are allowed to deviate from the protocol, the use of
external resources like (quantum) communication channels or other
already established cryptographic functionalities etc.

While coming up with more and more protocols for different models,
cryptographers realized that some basic \emph{primitives}
(i.e.~precisely defined cryptographic tasks) are useful as
``benchmarks'' of how powerful a particular cryptographic model is.
An example is the two-party primitive \emph{
\index{oblivious transfer}Oblivious Transfer} (\pOT).  It comes in different flavors,
but all of these variants are equivalent in the sense that anyone of
them can be implemented using (possibly several instances of) an
other.  The \emph{\index{oblivious transfer!one-out-of-two}one-out-of-two} variant \OT was originally
introduced by Wiesner around 1970 (but only published much later
in~\cite{Wiesner83}) in the very first paper about quantum
cryptography, and later rediscovered by Even, Goldreich, and Lempel~\cite{EGL82}. It lets a sender Alice transmit two bits to a
receiver Bob who can choose which of them to receive. A secure
implementation of \OT does not allow a dishonest sender to learn which
of the two bits was received and it does not allow a dishonest
receiver to learn any information about the second bit. It was a
surprising insight when Kilian showed that this simple primitive is \emph{complete} for
two-party cryptography \cite{Kilian88}. In other words, a model in
which \OT can be securely implemented allows to implement any
cryptographic functionality between two players\footnote{If the model
  can be reasonably extended to more players, this usually allows to
  implement secure multi-party protocols as well.}. Another variant we
are concerned with in this thesis was introduced by Rabin
\cite{Rabin81} and is hence called \index{oblivious transfer!Rabin}Rabin Oblivious Transfer
(\RabinOT). It is basically a ``secure erasure channel'': the sender
Alice sends a bit which with probability one half is absorbed and with
probability one half finds its way to the receiver Bob. The security
requirements are the following: whatever a dishonest Alice does, she cannot find
out whether the bit was received or not; and whatever a dishonest
receiver does, he does not get any information about the bit with
probability one half.

Yet another basic two-party primitive of interest is \index{bit
  commitment} Bit Commitment (\BC) which allows a player to commit
himself to a choice of a bit $b$ by communicating with a verifier. The
verifier should not learn $b$ (we say the commitment is
\emph{\index{bit commitment!hiding}hiding}), yet the committer can
later choose to reveal $b$ in a convincing way, i.e. only the value
fixed at commitment time will be accepted by the verifier (we say the
commitment is \emph{\index{bit commitment!binding}binding}).  Bit
Commitment is a fundamental building block of virtually every more
complicated cryptographic protocol. Implementing secure \BC with a
secure \OT at hand is not difficult\footnote{To commit to a bit $b$,
  the committer sends random bits of parity $b$ via (several instances
  of) \OT and the verifier picks randomly one of the bits. To open,
  the committer sends all the random bits he was using, the verifier
  checks whether these are consistent with what he received.}. On the
other hand, there are cryptographic models allowing to securely
implement \BC, but not \OT. Moran and Naor gave an example of such a
model by assuming the physical device of a tamper-proof seal~\cite{MN05}.

It is not hard to see that the two security requirements for \BC are
in a sense contradictory, so perfectly secure bit commitment cannot be
implemented ``from scratch'', that is if only error-free communication
is available and there is no limitation assumed on the computing power
and memory of the players. The informal reason for this is that the
hiding property implies that when 0 is committed to, exactly the same
information exchange could have happened when committing to 1.
Hence, even if 0 was actually committed to, the committer could always
compute a complete view of the protocol consistent with having
committed to 1, and pretend that this view was what he had in mind
originally. By the reduction of \BC to \OT follows that also \OT and
many other cryptographic functionalities cannot be perfectly secure
when built from scratch.

One might hope that allowing the protocol to make use of quantum
communication would make a difference. Here, information is stored in
qubits, i.e., in the state of two-level quantum mechanical systems,
such as the polarization state of a single photon. Quantum information
behaves in a way that is fundamentally different from classical
information, enabling, for instance, unconditionally secure key
exchange between two honest players (so-called
\emph{\index{quantum key distribution}Quantum Key Distribution}).
However, in the case of two mutually distrusting parties, we are not
so fortunate: even with quantum communication, unconditionally secure
\BC and \OT remain \index{impossibility!of quantum bit
  commitment}impossible. This is the infamous
impossibility result by Mayers and by Lo and Chau~\cite{Mayers97,LC97}.

For this reason, cryptographers have tried hard to exhibit more
restricted models where these impossibility results do not apply. The
high art in this process is to find assumptions that are as realistic
as possible -- thus only minimally restricting the model, but still
strong enough to allow for implementing interesting functionalities.
There are at least three kinds of possible assumptions, namely
\begin{itemize}
\item bounding the computing power of players,
\item using the noise in the communication channel,
\item exploiting some physical limitation of the adversary, e.g., if
  the size of the available memory is bounded.
\end{itemize}

The first scenario is the basis of many well known solutions based on
plausible but unproven complexity assumptions, such as hardness of
factoring or discrete logarithms. A term often used for such schemes
is ``\index{computational security}computational security'', meaning
that it is \emph{not impossible} for an adversary to behave
dishonestly, but it is \emph{computationally infeasible} for him to do
so. Security proofs are usually done by reduction in the sense that
breaking the security of the protocol would imply solving a hard
problem like factoring the product of two large prime numbers. The
second scenario has been used to construct both \BC and \pOT protocols
in various models for the noise by Cr\'epeau, Kilian, Damg{\aa}rd,
Salvail, Fehr, Morozov, Wolf, and Wullschleger
\cite{CK88,DKS99,DFMS04,CMW04,Wullschleger07}.

The third scenario is the focus of this thesis. In contrast to the
first scenario, we deal with ``\index{unconditional security}unconditional security'' where (depending on the task a
protocol aims to achieve) an adversary has no way whatsoever to gain
illegal information. Proofs are not done by reduction, but we can
prove in information-theoretic terms that except with negligible
probability, the adversary does not learn \emph{any information} that
is meant to remain secret.

\section{Classical Bounded-Storage Model} \label{sec:ClassicalBSMIntro}
In the \index{classical bounded-storage model}classical
bounded-storage model, we assume the players to use classical
error-free communication and to be computationally unbounded, but on the
other hand restrict the size of their memory. In the usual setting,
there is a large random source $R$ (often called the
\emph{\index{randomizer}randomizer}) which all players can access, but
which is too large (or transmitted too quickly) to store as a whole.
One can think of $R$ as a deep-space radio source or a satellite broadcasting
random bits at a very high rate.

When Maurer introduced the classical bounded-storage model
in~\cite{Maurer90}, the goal was \emph{secure message transmission}.
He showed that two honest parties Alice and Bob sharing an initial key
can \index{key expansion}expand that key unless the eavesdropper Eve
can store more than a large fraction of the randomizer.  The basic
idea of the technique allowing Alice and Bob to get an advantage
over Eve is that their initial secret key indexes some positions in
the randomizer about which Eve has some uncertainty if she cannot
store the whole randomizer. Therefore, the bits at these positions can
be combined to yield more secure key bits and so to expand the initial
key.

A line of subsequent work by Maurer, Cachin, Aumann, Ding, Rabin,
Dziembowski, Lu, and Vadhan \cite{Maurer92, CM97, ADR02, DM04, Lu04,
  Vadhan04} improved this original protocol in terms of efficiency and
security. Aumann, Ding and Rabin~\cite{ADR02} noticed that
protocols in this model enjoy the property of ``\index{everlasting
security}everlasting security'' in the sense that the newly
generated key remains secure even when the initial key is later
revealed and Eve is no longer memory-bounded, under the sole condition
that the original randomizer cannot be accessed any
more. Ding~\cite{Ding05} showed how to do 
\index{error correction!classical bounded-storage model}error correction in the
bounded-storage model and therefore how to cope with the situation
when the honest parties do not have exactly the same view on the
randomizer.

Cachin, Cr\'epeau and Marcil illustrated the power of the
bounded-storage model by exhibiting in~\cite{CCM98} a protocol for
\OT. Ding improved on this \cite{Ding01} and later showed a
constant-round protocol for oblivious transfer in joint work with
Harnik, Rosen and Shaltiel \cite{DHRS04}.

All these protocols are shown secure as long as the adversary's memory
size is at most quadratic in the memory size of the honest players.
Considering the ease and low cost of storing massive amounts of classical
data nowadays, it is questionable how practical such an assumption on the
memory size of the players is. It would be clearly more satisfactory
to have a larger than quadratic separation between the memory size of
honest players and that of the adversary. However, this was shown to
be impossible by Dziembowski and Maurer~\cite{DM04}.

\section{Contributions} \label{sec:contributions}
In this section, we give an overview of the contributions of this
thesis.
The results about classical oblivious transfer described in
Chapter~\ref{chap:ClassicalOT} and summarized in
Section~\ref{sec:ClassicalOTReductions} are joint
work with Damg{\aa}rd, Fehr and Salvail~\cite{DFSS06}. 
All other results are based on two papers co-authored with Damg{\aa}rd, Fehr,
Salvail and Renner: \cite{DFSS05} and \cite{DFRSS07}. A journal version of \cite{DFSS05}
is to appear in a special issue of the SIAM Journal of Computing
\cite{DFSS08journal}.

\subsection{Bounded-Quantum-Storage Model}
\index{bounded-quantum-storage model}In this thesis, we study for the
first time protocols where quantum communication is used and we place
a bound on the adversary's {\em quantum} memory size.  There are two
reasons why this may be a good idea: first, if we do not bound the
classical memory size, we avoid the impossibility result of
\cite{DM04}.  Second, the adversary's typical goal is to obtain a
certain piece of classical information that we want to keep hidden
from him. However, if he cannot store all the quantum information that
is sent, he must convert some of it to classical information by
measuring. This may irreversibly destroy information, and we may be
able to arrange it in such a way that the adversary cannot afford to lose
information this way, while honest players can.

It turns out that this can be achieved indeed: we present protocols for
both \BC and \pOT in which $n$ qubits are transmitted, where honest
players need {\em no quantum memory}, but where the adversary must
store at least a large fraction (typically $n/2$ or $n/4$) of the $n$
transmitted qubits to break the protocol. We emphasize that no bound
is assumed on the adversary's computing power, nor on his classical
memory. This is clearly much more satisfactory than the classical
case, not only from a theoretical point of view, but also in practice:
while sending qubits and measuring them immediately as they arrive is
well within reach of current technology, storing even a single qubit
for more than a fraction of a second is a formidable technological
challenge.

Furthermore, we show that our protocols also work in a non-ideal
setting where we allow the quantum source to be imperfect and the
quantum communication to be noisy. We emphasize that what makes \pOT and
\BC possible in our model is not so much the memory bound per se, but
rather the loss of information on the part of the adversary. Indeed,
our results also hold if the adversary's memory device holds an
arbitrary number of qubits, but is imperfect in certain ways.

All these factors make the assumption of \index{bounded-quantum-storage
model}bounded quantum memory a very attractive cryptographic model.
On one hand, as for the \index{classical bounded-storage
model}classical bounded-storage model, it is simple to work with and
yields beautiful theoretical results. On the other hand, it is much
more reasonable to assume the difficulty of storing quantum
information compared to storing classical one and hence, we are very
close to the physical reality and get schemes that can actually be
implemented!


\subsection{Characterization of Security of Classical \OT} \label{sec:ClassicalOTReductions}
While the task of formally defining \index{unconditional security}unconditional security of classical protocols for \RabinOT
and \BC is well understood, capturing the security of \OT in
information-theoretic terms is considerably more delicate, as was
pointed out by Cr\'epeau, Savvides, Schaffner and
Wullschleger~\cite{CSSW06}. 
For \OT of bits, it is clear that the
security for a honest sender against a cheating receiver guarantees
that the receiver does not learn any information about the XOR of the
two bits. Somewhat surprisingly, the converse is true as well, not
having any information about the XOR of the two bits sent implies that
we can point at one bit which the dishonest receiver does not know
(given the other).

This idea can be generalized to \OT of strings where the ignorance of
the XOR becomes ignorance of the outcome of all \index{non-degenerate
linear function}Non-Degenerate Linear binary Functions (NDLFs)
applied to the two strings sent. Such a characterization of
\index{sender-security!characterization of}sender-security in terms of
NDLF composes well with \emph{strongly \index{two-universal
    hashing!strongly}two-universal hashing} and hereby yields a
powerful technique to improve the analyses of the standard \index{reduction}reductions
from \OT to weaker variants of \pOT.

As a historical side note, the original motivation for this classical
characterization was the hope that it translates to the quantum
setting and thereby yields a security proof of the \OT scheme in the
\index{bounded-quantum-storage model}bounded-quantum-storage model. We
will point out why this approach does \emph{not} work.

\subsection{Quantum Security Definitions and Protocols}
When the players are allowed to use quantum communication, the output
of a dishonest player is a quantum state even when the protocol
implements a classical primitive. Therefore, security definitions for
\RabinOT, \OT and \BC have to be phrased in quantum terms. As an
easy-to-use \index{composability}composability framework has not yet
been established for quantum protocols\footnote{Some rather
complicated frameworks are known. They have been put forward by Ben-Or and
Mayers \cite{BM04} and Unruh \cite{Unruh02}.}, various \emph{ad-hoc}
security requirements are commonly used. The definitions in this
thesis are the strongest so far proposed, and as they are based on the
(classical) considerations in \cite{CSSW06}, we believe that they are
best suited to provide \emph{sequential composability}.

Most of the presented protocols in the bounded-quantum-storage model
can be cast in a non-interactive form, i.e.~only one party sends
information when doing \pOT, commitment or opening. We show the following.

\medskip
\noindent
{\bf {\em \pOT in the Bounded-Quantum-Storage Model:}} {\em There
  exist non-interactive protocols for \RabinOT and 1-out-of-2
  Oblivious Transfer (\OT[2]) of $\ell$-bit messages, secure in the
  bounded-quantum-storage model against adversaries with
  quantum-memory size at most $n/2- \ell$ for \RabinOT and $n/4 -
  2\ell$ for \OT. Here, $n$ is the number of qubits transmitted in the
  protocol and $\ell$ can be a constant fraction of $n$. Honest
  players need no quantum memory at all.}  \medskip

For the case of bit commitment, the standard definition of the
\index{bit commitment!binding}binding property used in the quantum
setting was introduced by Dumais, Mayers and Salvail~\cite{DMS00}. For
$b \in \set{0,1}$, let $p_b$ denote the probability that a dishonest
committer successfully opens the commitment to value $b$. The binding
condition then requires that the sum of $p_0$ and $p_1$ does
essentially not exceed 1. More formally, $p_0 + p_1 \leq 1 + \negl{n}$
where $\negl{n}$ stands for a term which is negligible in $n$ such as
$2^{-cn}$ (for a constant $c>0$) which is exponentially small in $n$.
This is to capture that a quantum committer can always commit to the
values $0$ and $1$ in superposition. We call this notion \emph{weakly
  binding} in the following. A shortcoming of this notion is that
committing bit by bit is not guaranteed to yield a secure string
commitment---the argument that one is tempted to use requires
independence of the $p_{b}$'s between the different executions, which
in general does not hold.


Instead, we propose the following \emph{strong binding} condition:
After the commitment phase, there exists a binary random variable $D
\in \set{0,1}$ such that a dishonest committer cannot open the
commitment to value $D$ except with negligible probability. The point
is that the distribution of $D$ is not under control of the dishonest
committer. We will point out that using this definition, we can easily
derive the security of a string commitment from the security of the
individual bits.

\medskip
\noindent
{\bf {\em \BC in the Bounded-Quantum-Storage Model:}} {\em There
  exists a protocol for bit commitment which is non-interactive.
 It is perfectly hiding and weakly binding in the
  bounded-quantum-storage model against dishonest committers with
  quantum-memory size at most $n/2$. It is strongly binding against
  memory sizes of at most $n/4$. Here, $n$ is the number of qubits
  transmitted in the protocol. Honest players need no quantum memory
  at all.}  \medskip

Furthermore, the commitment protocol has the interesting property that
the only message is sent \emph{to} the committer, i.e., it is possible
to commit while only {\em receiving} information.  Such a scheme
clearly does not exist without a bound on the committer's memory, even
under computational assumptions and using quantum communication: a
corrupt committer could always store (possibly quantumly) all the
information sent, until opening time, and only then follow the honest
committer's algorithm to figure out what should be sent to
convincingly open a 0 or a~1.  

Note that in the \index{classical bounded-storage model}classical
bounded-storage model, it has been shown by Moran, Shaltiel and
Ta-Shma~\cite{MST04} how to do \index{time-stamping}time-stamping
that is non-interactive in our sense: a player can time-stamp a
document while only receiving information.  However, no reasonable
protocol for \BC or for time-stamping a single bit exists in this
model.  It is straightforward to see that any such protocol can be
broken by an adversary with classical memory of size twice that of an
honest player, while our protocol requires no quantum memory for the
honest players and remains secure against any adversary unable to
store more than half the size of the quantum transmission.

We also note that it has been shown earlier by Salvail
\cite{Salvail98} that \BC is possible using quantum communication,
assuming a different type of physical limitation, namely a bound on
the size of coherent measurement that can be implemented. This
limitation is incomparable to ours: it does not limit the total size
of the memory, instead it limits the number of bits that can be
simultaneously operated on to produce a classical result. Our
adversary has a limit on the total quantum memory size, but can
measure all of it coherently. The protocol from \cite{Salvail98} is
interactive, and requires a bound on the maximal measurement size that
is sub-linear in $n$.


\subsection{Quantum Uncertainty Relations}
\index{uncertainty relation}A problem often encountered in
\index{quantum cryptography}quantum cryptography is the following:
through some interaction between the players, a quantum state is
generated and then measured by one of the players (we call her Alice
in the following). Assuming Alice is honest, we want to know how
unpredictable her measurement outcome is to the adversary.  Once a
lower bound on the adversary's uncertainty about Alice's measurement
outcome is established, it is usually easy to prove the desired
security property of the protocol. Many existing constructions in
quantum cryptography have been proven secure following this paradigm.

Typically, Alice does not make her measurement in a fixed basis, but
chooses at random from a set of different bases. These bases are
usually chosen to be pairwise {\em \index{mutually unbiased
bases}mutually unbiased}, meaning that if the quantum state is
such that the measurement outcome in one basis is fixed, then this
implies that the uncertainty about the outcome of the measurement in
the other basis is maximal. In this way, one hopes to keep the
adversary's uncertainty high, even if the state is (partially) under
the adversary's control.

An inequality that lower bounds the adversary's uncertainty in such a
scenario is called an {\em uncertainty relation}.  There exist
uncertainty relations for different measures of uncertainty but
cryptographic applications typically require the adversary's
\index{entropy!min-}min-entropy to be bounded from below. Such uncertainty relations are
the key ingredient in the security proofs of our protocols in the
bounded-quantum-storage model.

In this thesis, we introduce new general and tight high-order entropic
uncertainty relations. Since the relations are expressed in terms of
lower bounds on the min-entropy or upper-bounds on large probabilities
respectively, they are applicable to a large class of natural
protocols in quantum cryptography.

The first uncertainty relation is concerned with the situation where a
\smash{$n$-qubit} state $\rho$ is measured in one out of two mutually
unbiased bases, say either in the
\index{basis!computational}computational basis (the $+$-basis) or in
the \index{basis!diagonal}diagonal basis (the $\times$-basis).

\medskip
\noindent
{\bf {\em First Uncertainty Relation:}} {\em Let $\rho$ be an
  arbitrary state of $n$ qubits, and let $\Qp(\cdot)$ and $\Qt(\cdot)$
  be the respective probability distributions over $\nbit$ of the
  outcome when $\rho$ is measured in the $+$-basis respectively the
  $\times$-basis.  Then, for any two sets $L^+ \subset \set{0,1}^n$
  and $L^{\times} \subset \set{0,1}^n$ it holds that
\[ \Qp(L^+)+\Qt(L^{\times}) \leq 1 + 2^{-n/2} \sqrt{|L^+|
  |L^{\times}|}. \] } 

Another \index{uncertainty relation}uncertainty relation is derived
for situations where an $n$-qubit state $\rho$ has each of its qubits
measured in a random and independent basis sampled uniformly from a
fixed set ${\cal B}$ of bases.  ${\cal B}$ does not necessarily have
to be \index{mutually unbiased bases}mutually unbiased, but we assume
a lower bound $h$---the so-called {\em \index{average entropic
uncertainty bound}average entropic uncertainty bound}---on the
average \index{entropy!Shannon}Shannon entropy of the distribution $P_{\vartheta}$, obtained
by measuring an arbitrary one-qubit state in basis $\vartheta \in
{\cal B}$, meaning that $\frac{1}{|{\cal B}|}\sum_{\vartheta}
\H(P_{\vartheta}) \geq h$.

\medskip
\noindent
{\bf {\em Second Uncertainty Relation (informal):}} {\em Let $\cal B$ be a
  set of bases with an average entropic uncertainty bound $h$ as
  above.  Let $P_{\theta}$ denote the probability distribution defined
  by measuring an arbitrary $n$-qubit state $\rho$ in basis $\theta
  \in {\cal B}^n$. For a uniform choice $\Theta \in_R {\cal B}^n$,
  it holds except with negligible probability (over $\Theta$ and over
  $P_{\theta}$) that 
\begin{equation}\label{main}
\hmin(P_\theta \mid \Theta=\theta) \gtrsim n h.
\end{equation}
} \medskip 

Observe that (\ref{main}) cannot be improved significantly since the
min-entropy of a distribution is at most equal to the Shannon entropy.
Our uncertainty relation is therefore asymptotically tight when the
bound $h$ is tight.

Any lower bound on the Shannon entropy associated to a set of
measurements ${\cal B}$ can be used in (\ref{main}).  In the special
case where the set of bases is ${\cal B}=\{+,\times\}$ (i.e. the two
BB84 bases named after Bennett and Brassard who used them in the first
quantum-key-distribution protocol~\cite{BB84}), 
$h$ is known precisely using Maassen and Uffink's
entropic relation~\cite{MU88}, see~(\ref{eq:maassenuffink}).  We
get $h=\frac{1}{2}$ and (\ref{main}) results in $\hmin(P_{\theta} \mid
\Theta=\theta) \gtrsim \frac{n}{2}$.
Uncertainty relations for the \index{BB84 coding scheme}BB84 coding
scheme are useful, since this coding is widely used in
quantum cryptography.  Its resilience to imperfect quantum channels,
sources, and detectors is an important advantage in practice.

A major difference between the first and second uncertainty relation
is that while both relations can be used to bound the min-entropy
conditioned on an event, this event happens in the latter case with
probability essentially 1 (on average) whereas the corresponding event
from the first relation (defined in Corollary~\ref{cor:hadamard}) only
happens with probability about $1/2$.

\subsection{\QKD against Quantum-Memory-Bounded Eavesdropper}
We illustrate the versatility of our second uncertainty relation by
applying it to Quantum-Key-Distribution (\QKD) settings.  \QKD is the
art of distributing a secret key between two distant parties, Alice
and Bob, using only a completely insecure quantum channel and
authentic classical communication. \QKD protocols typically provide
\index{unconditional security}unconditional security, i.e., even an
adversary with unlimited resources cannot get any information about
the key.  A major difficulty when implementing \QKD schemes is that
they require a low-noise quantum channel.  The tolerated noise level
depends on the actual protocol and on the desired security of the key.
Because the quality of the channel typically decreases with its
length, the maximum tolerated noise level is an important parameter
limiting the maximum distance between Alice and Bob.

We consider a model in which the adversary has a limited amount of
quantum memory to store the information she intercepts during the
protocol execution. In this model, we show that the maximum
tolerated noise level is larger than in the standard scenario where
the adversary has unlimited resources.  
For {\em one-way \QKD protocols} which are protocols where error-correction is
performed non-interactively (i.e., a single classical message is sent
from one party to the other), we show the following result:

\medskip
\noindent
{\bf {\em \QKD Against Quantum-Memory-Bounded Eavesdroppers:}} {\em Let
  $\cB$ be a set of orthonormal bases of the two-dimensional Hilbert space $\cH_2$ with average entropic
  uncertainty bound $h$. Then, a \emph{one-way \QKD-protocol} produces
  a secure key against eavesdroppers whose quantum-memory size is
  sublinear in the length of the raw key at a positive rate, as long as
  the bit-flip probability $p$ of the quantum channel fulfills $h(p)
  < h $ where $h(\cdot)$ denotes the binary Shannon-entropy
  function.  } \medskip

Although this result does not allow us to improve (compared to
unbounded adversaries) the maximum error-rate for the BB84 protocol
(the \index{protocol!4-state}4-state protocol), the
\index{protocol!6-state}6-state (using three mutually unbiased bases)
protocol can be shown secure against adversaries with memory bound
sublinear in the secret-key length as long as the bit-flip error-rate
is less than $17\%$. This improves over the maximal error-rate of
$13\%$ for this protocol against unbounded adversaries.  We also show
that the generalization of the 6-state protocol to more bases (not
necessarily mutually unbiased) can be shown secure for a maximal
error-rate up to $20\%$ provided the number of bases is large enough.
Note that the best known one-way protocol based on qubits is proven
secure against general attacks for an error-rate of only up to roughly
$14.1\%$, and the theoretical maximum is $16.3\%$~\cite{RGK05}.

The quantum-memory-bounded eavesdropper model studied here is not
comparable to other restrictions on adversaries considered in the
literature (e.g. \emph{individual attacks}, where the eavesdropper is
assumed to apply independent measurements to each qubit sent over the
quantum channel as considered by Fuchs, Gisin, Griffiths, Niu, Peres,
and L\"utkenhaus~\cite{FGGNP97,Lutkenhaus00}).  In fact, these
assumptions are generally artificial and their purpose is to simplify
security proofs rather than to relax the conditions on the quality of
the communication channel from which secure key can be generated.  We
believe that the quantum-memory-bounded eavesdropper model is more
realistic.


\section{Outline of the Thesis}
In Chapter~\ref{chap:prelim}, we introduce notation and present some
basic concepts from probability and quantum information theory like
quantum states and various kinds of their entropies. We prepare the
stage by reproducing and slightly extending the results about privacy
amplification via two-universal hashing from Renner's PhD thesis \cite{Renner05}.

Chapter~\ref{chap:ClassicalOT} is the only (almost) exclusively
classical chapter. It introduces the different flavors of oblivious
transfer and gives a characterization of the security for the sender
of \OT in terms of non-degenerate linear functions. It is cast in a
stand-alone manner and the rest of the thesis can be understood
without reading this chapter.

In Chapter~\ref{chap:uncertrelations}, the basis for the security
proofs of the following chapters is laid by establishing the quantum
min-entropic uncertainty relations. The following
Chapters~\ref{chap:RabinOT} and \ref{chap:12OT} contain the quantum
definitions, protocols and security proofs for \RabinOT and \OT,
respectively. Chapter~\ref{chap:qbc} treats quantum bit commitment.
Two flavors of the ``binding property'' are defined and the techniques
from the two previous chapters are used to prove security in the
bounded-quantum-storage model.

Chapter~\ref{chap:qkd} is devoted to another application of the
(second) uncertainty relation, quantum key distribution against a
quantum-memory-bounded eavesdropper. The last
Chapter~\ref{chap:conclusions} addresses some practical issues in
greater detail and concludes.

A short summary of the notation, the bibliography and an index
can be found at the end of the thesis.


\section{Related Work}
The classical bounded-storage model is described in
Section~\ref{sec:ClassicalBSMIntro}. Besides work pointed out
in the overview of the contributions in
Section~\ref{sec:contributions} above, it is worth mentioning that several protocols aiming at achieving quantum oblivious transfer have been proposed. After Wiesner's original conjugate-coding protocol~\cite{Wiesner83}, Bennett, Brassard, Cr\'epeau, and Skubiszewska proposed an interactive protocol for \OT~\cite{BBCS91}, whose security was subsequently analyzed by Cr\'epeau~\cite{Crepeau94}, Mayers, Salvail~\cite{MS94, Mayers95}, and Yao~\cite{Yao95}. The protocol from \cite{BBCS91} is interactive and can be easily broken by a dishonest receiver with unbounded quantum memory. To ensure that the receiver actually performs a measurement, it was suggested to use (quantum) bit-commitment schemes such as~\cite{BCJL93} which were believed to be secure against such adversaries at this point in time. After the impossibility proofs of quantum bit-commitment by Lo and Chau~\cite{LC97}, and Mayers~\cite{Mayers97}, and of oblivious transfer by Lo~\cite{Lo97}, it became clear that assumptions are necessary in order to securely realize these primitives. Compared to these previous attempts, the protocols in this thesis are simpler, non-interactive, and provably secure according to stronger security definitions.

Work related to classical OT-reductions is referred to in the introductory sections to
Chapter~\ref{chap:ClassicalOT} in Sections~\ref{sec:introtoNDLF}
and~~\ref{sec:reductions}.  Previous work about quantum uncertainty
relations is described in Section~\ref{sec:uncerthistory}.

\clearemptydoublepage
\chapter{Preliminaries}   \label{chap:prelim}
In this chapter, we introduce notation and basic concepts used
throughout the rest of the thesis. In addition, most of the following
chapters have an individual preliminary section introducing concepts
that are exclusively used in those specific chapters.

This chapter does \emph{not} give a thorough introduction to
probability theory, information theory and quantum information
processing, but we rather assume the reader familiar with the basic
concepts from the standard literature like \cite{CT91, NC00}.
Instead, we give a specific overview of the concepts which are
required for understanding this thesis.

\section{Notation and Basic Tools} \label{sec:notation}
For a sequence of variables $x_1,\ldots, x_n$, we use the abbreviation
$\prei[i]{x} \assign x_1,\ldots,x_i$ for the collection of variables
up to index $i$, and we define \mbox{$\prei[0]{x} \assign \emptyset$} to
be the empty string.

For a set $I=\{i_1, i_2, \ldots, i_{\ell} \} \subseteq \{1, \ldots,
n\}$ and a $n$-bit string $x \in \nbit$, we define $x|_I \assign
x_{i_1} x_{i_2} \cdots x_{i_{\ell}}$. It is sometimes convenient that
all \index{substring}substrings of this form have the same length, irrespective of the
actual size $\ell$ of the index set $I$. Therefore, we define the
$n$-bit string $x \pad_I \assign x_{i_1} x_{i_2} \cdots
x_{i_{\ell}} 0 \cdots 0$ to be the original substring padded with $n-
\ell$ zeros.

Most logarithms in this thesis are with respect to base 2 and denoted
by \index{log@$\log(\cdot)$}$\log(\cdot)$. However, when needed,
\index{ln@$\ln(\cdot)$}$\ln(\cdot)$ denotes the natural logarithm to base
$e$.

We write $\ball{\delta n}(x)$ for the ball of all $n$-bit strings at
\index{Hamming distance}Hamming distance at most $\delta n$ from $x$. Note that the number of
elements in $\ball{\delta n}(x)$ is the same for all $x$, we denote it
by $\ball{\delta n} \assign |\ball{\delta n}(x)|$. It is well known
that $\ball{\delta n} \leq 2^{n h(\delta)}$, where
\[h(p)\assign -\big(p\cdot\log{p} + (1-p)\cdot\log{(1-p)}\big)\]
is the \index{binary entropy function}binary entropy function.

We denote by \index{negl@$\negl{n}$}$\negl{n}$ any function of $n$
smaller than the inverse of any polynomial provided $n$ is
sufficiently large.

If we want to choose two symbols $+$ or $\times$ according to the bit
$b \in \{0,1\}$, we write $[ +, \times ]_b$. The \index{Kronecker
delta}Kronecker delta function is defined as
\[
\delta_{i,j}= \left\{ \begin{array}{r@{\quad \mbox{if} \quad}l} 1 & i=j, \\ 0 & i
    \neq j. \end{array} \right.
\]
The \index{indicator random variable}indicator random variable
$\id_\ev$ equals 1, if the event $\ev$ occurs and $0$ else.

\index{convex function}\index{concave function}
\begin{definition}[convex/concave function] \label{def:convexfunction}
A function $f: \reals \rightarrow \reals$ is \emph{convex} on the
interval $[a,b]$, if for any two points $x,y \in [a,b]$ and
$0 \leq s \leq 1$, it holds that
\[
f(s x + (1-s) y) \leq s f(x) + (1-s) f(y).
\]
Analogously, the function is \emph{concave} on $[a,b]$, if 
\[
f(s x + (1-s) y) \geq s f(x) + (1-s) f(y).
\]
\end{definition}

\index{Jensen's inequality}
\begin{lemma}[Jensen's inequality] \label{lem:jensen}
Let $f: \reals \rightarrow \reals$ be a convex function on $\reals$ and let $x_1,
\ldots, x_n \in \reals$. Let $p_1, \ldots, p_n \in [0,1]$ be such that
$\sum_i p_i=1$. Then,
\[ f\left(\sum_{i=1}^n p_i x_i \right) \leq \sum_{i=1}^n p_i f(x_i) \,
.
\]
For $x_1=x_2=\ldots=x_n$, equality holds.
\end{lemma}

\index{Cauchy-Schwarz inequality}
\begin{lemma}[Cauchy-Schwarz inquality] \label{lem:cauchyschwarz}
For real numbers $x_1,\ldots,x_n$ and $y_1,\ldots,y_n$, the following holds
\[ \left(\sum_{i=1}^n x_i \cdot y_i \right)^2 \leq \left(\sum_{i=1}^n x_i^2\right)
\cdot \left(\sum_{i=1}^n y_i^2\right) \, .
\]
\end{lemma}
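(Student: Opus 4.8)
The plan is to use the classical ``nonnegative quadratic'' argument. First I would dispose of the degenerate case: if $\sum_{i=1}^n x_i^2 = 0$ then every $x_i$ equals $0$, so both sides of the claimed inequality vanish and there is nothing to prove; hence from now on assume $A \assign \sum_{i=1}^n x_i^2 > 0$.

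Next, for a real parameter $t$ consider the quantity $\sum_{i=1}^n (t\, x_i - y_i)^2$. Being a sum of squares, it is nonnegative for every $t \in \reals$. Expanding the square and collecting terms gives
\[
0 \;\le\; \sum_{i=1}^n (t\, x_i - y_i)^2 \;=\; A\, t^2 - 2 B\, t + C,
\]
where $B \assign \sum_{i=1}^n x_i y_i$ and $C \assign \sum_{i=1}^n y_i^2$. Since $A > 0$, the right-hand side is a genuine quadratic polynomial in $t$ that never takes a negative value, so its discriminant must be non-positive, i.e.\ $(2B)^2 - 4 A C \le 0$, which rearranges to $B^2 \le A C$ --- exactly the assertion of the lemma. (Equivalently, one may simply substitute the minimizing value $t = B/A$ into the displayed inequality and clear denominators.)

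I would additionally mention the alternative route via Lagrange's identity,
\[
\Big(\sum_{i=1}^n x_i^2\Big)\Big(\sum_{i=1}^n y_i^2\Big) - \Big(\sum_{i=1}^n x_i y_i\Big)^2 \;=\; \sum_{1 \le i < j \le n} (x_i y_j - x_j y_i)^2,
\]
whose right-hand side is manifestly a sum of squares, hence nonnegative; this variant has the bonus of exhibiting the equality condition (proportionality of the two vectors), although that is not needed here.

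There is essentially no real obstacle in this proof: the only point requiring a moment's care is the degenerate case in which all $x_i$ vanish, where ``take the discriminant of a quadratic'' does not literally apply, together with the routine bookkeeping in expanding the square. Everything else is a two-line computation, which is why the lemma is quoted here as a standard tool rather than proved in detail.
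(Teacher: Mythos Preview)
Your proof is correct and follows essentially the same approach as the paper: both consider a sum of squares $\sum_i (t x_i \pm y_i)^2$, expand it as a quadratic in the real parameter, and conclude via the non-positivity of its discriminant. You are in fact slightly more careful than the paper in separating out the degenerate case $\sum_i x_i^2 = 0$, and your mention of Lagrange's identity is a nice extra that the paper omits.
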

\begin{proof}
Note that $\sum_{i=1}^n (x_i \cdot z + y_i)^2$ is a quadratic
polynomial $a \cdot z^2 + b z +c$ without real roots unless all
$x_i/y_i$ are equal. Therefore, its discriminant $b^2 - 4ac$ is non-positive:
\[ 4 \left(\sum_{i=1}^n x_i \cdot y_i \right)^2 - 4 \left(\sum_{i=1}^n
  x_i^2 \right) \cdot \left(\sum_{i=1}^n y_i^2 \right) \leq 0 \, .
\]
\end{proof}

\section{Probability Theory}
For a discrete probability space $(\Omega,P)$, we write $P[\ev]$ for
the probability of the \index{event $\ev$}event $\ev \subset \Omega$,
and we write $P_X$ for the \index{probability distribution!of random
variable}distribution of the random variable $X:\Omega\to\cX$ taking
values in the finite set $\cX$. As is common practice, we do not refer
to the probability space $(\Omega,P)$ but leave it implicitly defined
by the joint probabilities of all considered events and random
variables. For two random variables $X$ and $Y$ with joint
distribution $P_{XY}$ over $\cX \times \cY$, the \index{probability
distribution!conditional}conditional
probability distribution of \emph{$X$ given $Y$} is defined as $P_{X |
  Y}(x|y) \assign \frac{P_{XY}(x,y)}{P_Y(y)}$ for all $x \in \cX$ and
$y \in \cY$ with $P_Y(y)>0$.  For a probability distribution $Q$ over
$\cal X$, we abbreviate the (overall) probability of a set $L
\subseteq \cal X$ with $Q(L) \assign \sum_{x \in L} Q(x)$.

Let $P$ and $Q$ be two probability distributions over the same finite
domain $\cX$. The {\em \index{variational distance}variational distance}\footnote{also called
\emph{statistical} or \emph{Kolmogorov} distance} $\dist{P,Q}$ between $P$ and $Q$
is defined as $$\dist{P,Q} \assign \frac{1}{2} \sum_{x \in \cX}
\big|P(x)-Q(x)\big| \, .$$  Note that this definition makes sense also for
{\em non-normalized} distributions, and indeed we define and use
$\dist{P,Q}$ for arbitrary positive-valued functions $P$ and $Q$ with
common domain.  In case $\cX$ is of the form $\cX = {\cal U} \times
{\cal V}$, we can expand $\dist{P,Q}$ to $\dist{P,Q} = \sum_u
\dist{P(u,\cdot),Q(u,\cdot)} = \sum_v \dist{P(\cdot,v),Q(\cdot,v)}$.
We write \smash{$P \epsclose Q$} to denote that $P$ and $Q$ are
\index{$\varepsilon$-close}$\varepsilon$-close, i.e., that $\dist{P,Q} \leq \varepsilon$.

By \index{unif@$\unif$}$\unif$ we denote a uniformly distributed binary random variable
independent of anything else, such that $P_{\unif}(b) = \frac12$ for
both $b \in \set{0,1}$, and $\unif^{\ell}$ stands for $\ell$
independent copies of $\unif$.

For a random variable $R$ over the reals $\reals$, its \index{expected
value}expected value is denoted by $\E[R]$. \index{$\E[\cdot]$|see
{expected value}}

\index{Markov's inequality}
\begin{lemma}[Markov's inequality] \label{lem:markov}
For a non-negative real random variable $X$ and $\eps>0$, we have
\[ \Pr\left[ X \geq \frac{\E[X]}{\eps} \right] \leq \eps \, .\]
\end{lemma}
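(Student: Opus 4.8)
The plan is to prove the statement directly from the definition of expected value, splitting the sum (or integral) over the event $\{X \geq \E[X]/\eps\}$ and its complement. First I would set $t \assign \E[X]/\eps$ and observe that, since $X$ is non-negative,
\[
\E[X] = \sum_{x} x \cdot P_X(x) \geq \sum_{x \geq t} x \cdot P_X(x) \geq \sum_{x \geq t} t \cdot P_X(x) = t \cdot \Pr[X \geq t].
\]
The first inequality drops the (non-negative) contribution of the outcomes with $x < t$, and the second replaces each surviving $x$ by the smaller value $t$. Rearranging gives $\Pr[X \geq t] \leq \E[X]/t = \eps$, which is exactly the claim.

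A second, even shorter route is to introduce the indicator random variable $\id_{\{X \geq t\}}$ and note the pointwise bound $t \cdot \id_{\{X \geq t\}} \leq X$ (valid because $X \geq 0$: on the event the left side is $t \leq X$, off the event it is $0 \leq X$). Taking expectations and using linearity and monotonicity of $\E[\cdot]$ yields $t \cdot \Pr[X \geq t] = \E[t \cdot \id_{\{X \geq t\}}] \leq \E[X]$, and one divides by $t$. Either presentation works; I would pick whichever matches the level of formality elsewhere in the preliminaries, probably the indicator version since the notation $\id_\ev$ has just been introduced.

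There is essentially no obstacle here: the only things to be slightly careful about are that $\eps > 0$ so that dividing by $t$ is legitimate, and that the non-negativity of $X$ is genuinely used (it is what licenses discarding the $x < t$ terms, or equivalently the pointwise inequality $t\,\id_{\{X\geq t\}} \leq X$). If $\E[X] = \infty$ the statement is vacuous, and if $\E[X] = 0$ then $X = 0$ almost surely and both sides behave correctly, so no separate case analysis is really needed. I would keep the write-up to two or three lines.
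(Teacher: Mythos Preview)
Your proposal is correct, and the indicator version you single out is exactly the paper's proof: it writes $\frac{\E[X]}{\eps} \cdot \id_{\{X \geq \E[X]/\eps\}} \leq X$, takes expectations, and rearranges. Your instinct to choose that presentation because $\id_\ev$ has just been introduced matches what the paper does.
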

\begin{proof}
For the indicator function $\id_{\ev}$ which equals 1 if the
event $\ev$ occurs and 0 else, we observe that
\[ \frac{\E[X]}{\eps} \cdot \id_{\left\{ X \geq \frac{\E[X]}{\eps}
  \right\} } \leq X \, .
\]
Taking the expected values on both sides, using linearity of the
expectation and rearranging the terms yields the claim.
\end{proof}

\index{Chernoff's inequality}
\begin{lemma}[Chernoff's inequality] \label{lem:chernoff}
  Let $X_1,\ldots,X_n$ be identically and independently distributed
  random variables with Bernoulli distribution, i.e. $X_i=1$ with
  probability $p$ and $X_i=0$ with probability $1-p$. Then $S \assign
  \sum_{i=1}^n X_i$ has binomial distribution with parameters $(n,p)$
  and it holds that
\[ P\left[ \: |S - pn| > \eps n \: \right] \leq 2 e^{- 2 \eps^2 n} \, .
\]
\end{lemma}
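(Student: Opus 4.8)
The plan is to prove the tail bound by the exponential-moment (Chernoff) method, isolating Hoeffding's lemma as the one genuinely non-trivial ingredient. The distributional claim is routine: by independence, $S=\sum_i X_i$ takes the value $k\in\set{0,\ldots,n}$ with probability $\binom{n}{k}p^k(1-p)^{n-k}$, so $S$ is binomial with parameters $(n,p)$ and $\E[S]=pn$; all the work is in the concentration estimate.

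First I would handle the upper tail. Fix $t>0$. Since $z\mapsto e^{tz}$ is increasing, $\set{S-pn>\eps n}=\set{e^{t(S-pn)}>e^{t\eps n}}$, so Markov's inequality (Lemma~\ref{lem:markov}) together with independence gives
$$ P[S-pn>\eps n]\;\leq\; e^{-t\eps n}\,\E\!\left[e^{t(S-pn)}\right]\;=\;e^{-t\eps n}\prod_{i=1}^n\E\!\left[e^{t(X_i-p)}\right]. $$
The main obstacle — really the only step that is not bookkeeping — is the per-factor bound $\E[e^{t(X_i-p)}]\leq e^{t^2/8}$ (Hoeffding's lemma for a $[0,1]$-valued variable). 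I would prove it as follows: by convexity of $z\mapsto e^{tz}$ on $[0,1]$ (Definition~\ref{def:convexfunction}), for every $x\in[0,1]$ one has $e^{tx}\leq(1-x)+xe^{t}$, hence $\E[e^{tX_i}]\leq 1-p+pe^{t}$ and therefore $\E[e^{t(X_i-p)}]\leq e^{\varphi(t)}$ with $\varphi(t)\assign -pt+\ln(1-p+pe^t)$. A short computation shows $\varphi(0)=0$, $\varphi'(0)=0$, and $\varphi''(t)=u(1-u)$ where $u=u(t)\assign\frac{pe^t}{1-p+pe^t}\in(0,1)$; since $u(1-u)\leq\tfrac14$, Taylor's formula with Lagrange remainder yields $\varphi(t)\leq t^2/8$, which is the claim.

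Plugging this in gives $P[S-pn>\eps n]\leq e^{-t\eps n+nt^2/8}$ for every $t>0$; the exponent is minimized at $t=4\eps$, which produces $P[S-pn>\eps n]\leq e^{-2\eps^2 n}$. For the lower tail I would run the identical argument on the variables $1-X_i$, which are Bernoulli with parameter $1-p$ and whose sum is $n-S$ with mean $(1-p)n$, obtaining $P[S-pn<-\eps n]=P[(n-S)-(1-p)n>\eps n]\leq e^{-2\eps^2 n}$. A union bound over the two one-sided events then gives
$$ P\big[\,|S-pn|>\eps n\,\big]\;\leq\;P[S-pn>\eps n]+P[S-pn<-\eps n]\;\leq\;2e^{-2\eps^2 n}, $$
which completes the proof. (An alternative to Hoeffding's lemma would be to optimize $e^{-t(p+\eps)n}(1-p+pe^t)^n$ directly, giving the sharper Chernoff--Kullback--Leibler exponent and then weakening it to $2\eps^2$ via $\mathrm{kl}(p+\eps\,\|\,p)\geq 2\eps^2$; I would prefer the route above since it keeps the optimization in $t$ trivial.)
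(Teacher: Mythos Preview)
Your proof is correct and is the standard Chernoff--Hoeffding argument. The paper itself does not give a proof of this lemma at all; it simply refers the reader to \cite{AS00} or \cite{MR95}, so there is nothing to compare against beyond noting that your argument is precisely the textbook proof one finds in those references.
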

See \cite{AS00} or \cite{MR95} for a proof.

\section{Quantum Information Theory} \label{sec:qit}
In this section, we give a very brief introduction to the quantum
notions we use in this thesis, we refer to \cite{NC00, Renner05} for
further explanations.

For any positive integer $d \in \naturals$, $\cH_d$ stands for the
complex \index{Hilbert space}Hilbert space of dimension $d$.
Sometimes, we omit the dimension and simply write $\cH$. The state of
a quantum-mechanical system in $\cH$ is described by a
\emph{\index{density operator}density operator} $\rho$. A density
operator $\rho$ is normalized with respect to the trace norm
($\trace{\rho}=1$), \index{Hermitian}Hermitian ($\rho^*=\rho$) and has
non-negative eigenvalues. \index{$\dens{\cH}$ (set of density
operators)}$\dens{\cH}$ denotes the set of all \emph{density
operators} acting on $\cH$. \index{$\I$ (fully mixed state)}$\I$
denotes the identity matrix (describing the fully mixed state)
renormalized by the appropriate dimension.

A quantum state $\rho \in \dens{\cH}$ is called \emph{\index{pure state}pure} if it is of the form
$\rho=\proj{\varphi}$ for a (normalized) vector $\ket{\varphi} \in \cH$.

A \emph{\index{positive operator-valued measurement}positive
  operator-valued measurement (POVM)} is a family $M = \set{E_x}_{x
  \in \cX}$ of non-negative operators such that $\sum_{x \in \cX} E_x$ equals the identity matrix.
The probability distribution $P_X$ obtained when
applying the POVM $M$ to the quantum state $\rho$ is defined as
$P_X(x) \assign \trace{E_x \rho}$.

The general evolution (like unitary transforms, measurements, applying
noise etc.) of a quantum system in state $\rho$ can be described by a
\emph{\index{quantum operation}quantum operation} \index{$\cE(\rho)$
  quantum operation}$\cE(\rho)$, which is a completely positive
and trace-preserving map, i.e. $\cE$ is linear and maps non-negative
normalized operators $\rho \in \dens{\cH}$ into non-negative
normalized operators $\cE(\rho) \in \dens{\cH}$.

The notion of (variational) distance of two random variables can be
naturally extended to the \emph{\index{trace distance}trace distance} between two density
operators $\rho, \sigma \in \dens{\cH}$ defined by
$\dist{\rho,\sigma} \assign \frac12 \trace{|\rho-\sigma|}$, where we
define $|A| \assign \sqrt{A^* A}$ to be the positive square-root of
$A$. As in the classical case, we write $\rho \approx_{\eps} \sigma$
to denote that $\rho$ and $\sigma$ are $\eps$-close, i.e.
$\dist{\rho,\sigma} \leq \eps$. The trace distance has an operational
meaning in that the value $\frac12 + \frac12 \dist{\rho,\sigma}$ is
the average success probability when distinguishing $\rho$ from
$\sigma$ via a measurement. In fact, the relation to the classical
variational distance becomes evident in $\dist{\rho,\sigma} = \max_M
\dist{M(\rho),M(\sigma)}$ where the maximization is over all POVMs
$M$ and $M(\rho)$ refers to the probability distribution obtained when
measuring $\rho$ using $M$.  Ruskai~\cite{Ruskai94} showed that the trace distance
does not increase under (trace-preserving) quantum operations,
formally $\dist{ \rho,\sigma } \leq \dist{ \cE(\rho),\cE(\sigma) }$ for
any quantum operation $\cE$.

\index{basis!rectilinear|see{basis}}
The pair $\{\ket{0},\ket{1}\}$ denotes the \index{basis!computational}computational or
rectilinear or ``$+$'' basis for the $2$-dimensional Hilbert space
${\mathcal H}_2$.  The \index{basis!diagonal}diagonal or ``$\times$'' basis is defined as
$\{\ket{0}_\times,\ket{1}_{\times}\}$ where
\smash{$\ket{0}_{\times}=(\ket{0}+\ket{1})/\sqrt2$} and
\smash{$\ket{1}_{\times}=(\ket{0}-\ket{1})/\sqrt2$}. The \index{basis!circular}circular or
``$\circlearrowleft$'' basis consists of vectors \smash{$(\ket{0}+i
  \ket{1})/\sqrt2$} and \mbox{$(\ket{0} - i \ket{1})/\sqrt2$}.
Measuring a qubit in the $+\,$-basis (resp.\ $\times$-basis) means
applying the measurement described by projectors $\proj{0}$ and
$\proj{1}$ (resp.  projectors $\ket{0}_\times \bra{0}_\times$ and
$\ket{1}_{\times}\bra{1}_\times$).  When the context requires it, we
write $\ket{0}_+$ and $\ket{1}_+$ instead of $\ket{0}$ respectively
$\ket{1}$. For a $n$-bit string $x \in \nbit$, $\ket{x}_+$ stands for
the state $\bigotimes_{i=1}^n \ket{x_i}_+ \in \cH_{2^n}$ and analogous for
  $\ket{x}_{\times}$.



As mentioned above, the behavior of a quantum state in a register
$\regE$ is fully described by its density matrix~$\rho_\regE$. We
often consider cases where a quantum state may depend on some
classical random variable $X$, in that it is described by the density
matrix $\rho_\regE^x$ if and only if $X = x$. For an observer who has
only access to the register $\regE$ but not to $X$, the behavior of
the state is determined by the density matrix $\sum_x P_X(x)
\rho_\regE^x$. The joint state, consisting of the \emph{c}lassical $X$ and
the \emph{q}uantum register $\regE$ and therefore called 
\emph{\index{cq-state}cq-state}, is
described by the density matrix $\sum_x P_X(x) \proj{x} \otimes
\rho_\regE^x$.  In order to have more compact expressions, we use the
following notation. We write
$$
\rho_{X\regE} = \sum_x P_X(x) \proj{x} \otimes \rho_\regE^x 
\qquad\text{and}\qquad
\rho_\regE = \tr_X(\rho_{X\regE}) = \sum_x P_X(x)\rho_\regE^x \, .
$$
More general, for any event $\ev$, we write $$
\rho_{X\regE|\ev} = \sum_x P_{X|\ev}(x) \proj{x} \otimes \rho_\regE^x
\quad\text{and}\quad
\rho_{\regE|\ev} = \tr_X(\rho_{X\regE|\ev}) = \sum_x P_{X|\ev}(x)\rho_\regE^x \, .
$$
We also write $\rho_X = \sum_x P_X(x) \proj{x}$ for the quantum
representation of the classical random variable $X$ (and similarly for
$\rho_{X|\ev}$).
This notation extends naturally to quantum states that depend on
several classical random variables (i.e. to ccq-states, cccq-states etc.).
Given a cq-state $\rho_{X \regE}$ as above, by saying that there
exists a random variable $Y$ such that $\rho_{XY\regE}$ satisfies some
condition, we mean that $\rho_{X \regE}$ can be understood as
$\rho_{X\regE} = \tr_Y(\rho_{XY\regE})$ for a ccq-state
$\rho_{XY\regE}$ that satisfies the required condition.


Obviously, $\rho_{X\regE} = \rho_X \otimes \rho_\regE$ holds if and only if
the quantum part is independent of $X$ (in that $\rho_\regE^x = \rho_\regE$ for
any $x$), where the latter in particular implies that no information
on $X$ can be learned by observing only $\rho_{\regE}$.  Furthermore, if
$\rho_{X\regE}$ and $\rho_X \otimes \rho_\regE$ are
$\varepsilon$-close in terms of their trace distance
$\dist{ \rho,\sigma } = \frac{1}{2} \tr(|\rho-\sigma|)$, then the real
system $\rho_{X\regE}$ ``behaves'' as the ideal system $\rho_X \otimes
\rho_\regE$ except with probability~$\varepsilon$ (as explained by
Renner and K\"onig in~\cite{RK05}) in that
for any evolution of the system no observer can distinguish the real
from the ideal one with advantage greater than $\varepsilon$.

\section{Entropies}\label{sec:entropies}
\index{entropy!classical R\'enyi}
\subsection{Classical R\'enyi Entropy} \label{app:Renyi}
\begin{definition} \label{def:ordersum}
  Let $P$ be a probability distribution over the finite set $\cX$
  and $\alpha \in [0,\infty]$. The \index{$\alpha$-order sum}\emph{$\alpha$-order sum} of the
  probability distribution~$P$ is defined as $\pi_\alpha(P) \assign
  \sum_{x \in \cX} P(x)^\alpha$.
\end{definition}

In the limits $\alpha \rightarrow \infty$ and $\alpha \rightarrow 0$,
we set $\pi_\infty(P) \assign \max_{x \in \cX} P(x)$ and $\pi_0(P) \assign
|\Set{x \in \cX}{P(x)>0}$.

\begin{definition}[R\'enyi entropy~\cite{Renyi61}] \label{def:renyi}
  Let $P$ be a probability distribution over the finite set $\cX$ and
  $\alpha \in [0,\infty]$. The {\em R\'enyi entropy of order $\alpha$}
  is defined as
$$
\H_{\alpha}(P) \assign \frac{1}{1-\alpha} \, \log \left(
  \pi_{\alpha}(P) \right) = - \log \Big( \big(\sum_{x \in \cX}
P(x)^{\alpha})^{\frac{1}{\alpha-1}} \Big) \, .
$$
\end{definition}

In the limit \ \!$\alpha \rightarrow \infty$, we obtain the {\em
\index{entropy!min-}min-entropy} $\H_{\infty}(P) = - \log \big(\max_{x \in \cX}
P(x)\big)$ and for $\alpha \rightarrow 0$, we obtain
\emph{\index{entropy!max-}max-entropy} $\H_0(P) = \log | \Set{x \in \cX}{P(x)>0} |$.
Another important special case is the case $\alpha = 2$, also known as
{\em \index{collision probability}collision probability} $\pi_2(P)= \sum_{x \in \cX} P(x)^2$ and
\index{entropy!collision}\emph{collision entropy} $\H_2(P) = - \log \big(\sum_x P(x)^2\big)$.

For the limit $\alpha \rightarrow 1$, we can use Jensen's inequality
(Lemma~\ref{lem:jensen}) with $p_x \assign P(x)$ to obtain
\[
- \frac{1}{\alpha-1} \log \left(\sum_x p_x P(x)^{\alpha-1} \right) \leq - \sum_x
p_x \log \left( (P(x)^{\alpha-1})^{\frac{1}{\alpha-1}} \right)  \, .
\]
In the limit $\alpha \rightarrow 1$, all $P(x)^{\alpha-1}$ go to 1 and
therefore, equality holds and we obtain the standard definition of
\emph{\index{entropy!Shannon}Shannon entropy} $\H(P) \assign - \sum_x P(x) \log P(x)$ as in~\cite{Shannon48}.

For a random variable $X$ with probability distribution $P_X$, we will
most often slightly abuse notation and use the common shortcut
$\H_{\alpha}(X)$ instead of $\H_{\alpha}(P_X)$. For a fixed random
variable $X$ over the finite set $\cX$, $\alpha \mapsto \H_\alpha(X)$ is a decreasing
function on $[0,\infty]$:
\[ \log|\cX| \geq H_0(X) \geq \H(X) \geq \H_2(X) \geq H_\infty(X) \, ,
\]
with equality if and only if $X$ is uniform over a subset of
$\cX$. Furthermore, we have that for $\alpha>1$, $\pi_\alpha(X) =
\sum_x P_X(x)^\alpha \geq \max_x P_X(x)^\alpha$ and therefore,
\[ \H_\alpha(X) = \frac{1}{1-\alpha} \log \pi_\alpha(X) \leq
\frac{1}{1-\alpha} \log \max_x P_X(x)^\alpha = \frac{\alpha}{1-\alpha}
\log \max_x P_X(x) \, , 
\]
which implies the following relation between R\'enyi entropies of
order $\alpha>1$:
\begin{equation} \label{eq:renyirelation}
\frac{\alpha-1}{\alpha} \H_\alpha(X) \leq \H_{\infty}(X) \, .
\end{equation}

\index{entropy!conditional R\'enyi}
\subsubsection{Conditional R{\'e}nyi entropy} \label{sec:conditionalrenyientropy}
The R{\'e}nyi entropy $\H_{\alpha}(X|Y\=y)$ of $X$ given the event
$Y=y$ is naturally defined as $\H_{\alpha}(X|Y\=y) =
\frac{1}{1-\alpha} \, \log\big(\sum_x P_{X|Y=y}(x)^{\alpha}\big)$. We
can define the \index{$\alpha$-order sum}\emph{conditional $\alpha$-order sum of $X$ given $Y$}
and \emph{conditional R{\'e}nyi entropy} by
\begin{align*}
\pi_\alpha(X|Y) &\assign \max_y \sum_{x} P_{X|Y=y}(x)^{\alpha} \quad\mbox{and}\quad  \H_\alpha(X|Y) \assign \frac{1}{1-\alpha} \log(\pi_\alpha(X|Y)) \, . 
\end{align*}
In the limits we have, $\pi_\infty(X|Y)=\max_{x,y} P_{X|Y=y}(x)$,
$\pi_0(X|Y) = \max_y |\Set{x \in \cX}{P_{X|Y=y}(x)>0}|$. For the
conditional min-, collision- and max-entropy, we get
\begin{align*}
\H_{\infty}(X|Y) &\assign \min_y \H_{\infty}(X|Y=y) = \min_{x,y} -\log
P_{X|Y=y}(x), \\
\H_2(X|Y) &\assign \min_y \H_2(X|Y=y) = \min_y -\log \left( \sum_x
  P_{X|Y=y}(x)^2 \right) \, ,\\
\H_0(X|Y) &\assign \max_y \H_0(X|Y=y) = \max_y \, \log | \Set{x \in
  \cX}{P_{X|Y=y}(x)>0} | .
\end{align*}
In the limit $\alpha \downarrow 1$, we get $\H_{\downarrow 1}(X|Y) =
\min_y \H(X|Y=y)$ and for $\alpha \uparrow 1$, we get $\H_{\uparrow
  1}(X|Y) = \max_y \H(X|Y=y)$ which might be different. However, the
standard definition of conditional \index{entropy!Shannon}Shannon entropy is neither of
those, but ``in between'':
\[
\H(X|Y) \assign \sum_y P_Y(y) \H(X|Y=y) = \sum_{x,y} P_{XY}(x,y) \log
P_{X|Y=y}(y) \, .
\]

We note that in the literature, $\H_{\alpha}(X|Y)$ is sometimes
defined as average over $Y$, $\sum_y P_Y(y) \, \H_{\alpha}(X|Y\=y)$,
like for Shannon entropy.  However, we define the more natural
following notion. For $1 < \alpha < \infty$, we define the {\em
  average conditional} R{\'e}nyi \index{entropy!average conditional
R\'enyi}entropy $\tH_{\alpha}(X|Y)$ as
$$
\tH_{\alpha}(X|Y) \assign - \log \Big( \sum_y P_Y(y) \big(\sum_x
P_{X|Y}(x|y)^{\alpha})^{\frac{1}{\alpha-1}} \Big) \, ,
$$
and as $\tH_{\infty}(X|Y) = -\log \big(\sum_y P_Y(y) \max_x
P_{X|Y}(x|y)\big)$ for $\alpha = \infty$. This notion is useful in
particular because it has the property that if the {\em average}
conditional R{\'e}nyi entropy is large, then the conditional R{\'e}nyi
entropy is large with high probability:
\begin{lemma}\label{lemma:average}
  Let $\alpha > 1$ (allowing $\alpha = \infty$) and $t \geq 0$. Then
  with probability at least $1 - 2^{-\sp}$ (over the choice of $y$)
  $\H_{\alpha}(X|Y=y) \geq \tH_{\alpha}(X|Y) - \sp$.
\end{lemma}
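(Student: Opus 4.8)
The plan is to reduce the statement to a single application of Markov's inequality (Lemma~\ref{lem:markov}). The key is to package both quantities into one non-negative random variable. For $1 < \alpha < \infty$ define $Z = Z(Y)$ by $Z(y) \assign \big(\sum_x P_{X|Y}(x|y)^\alpha\big)^{1/(\alpha-1)}$, and for $\alpha = \infty$ define $Z(y) \assign \max_x P_{X|Y}(x|y)$. Unfolding the definitions of Section~\ref{sec:conditionalrenyientropy} then gives at once, on the one hand, $\H_\alpha(X|Y\=y) = \frac{1}{1-\alpha}\log\big(\sum_x P_{X|Y=y}(x)^\alpha\big) = -\log Z(y)$ (and $= -\log\max_x P_{X|Y=y}(x) = -\log Z(y)$ in the case $\alpha=\infty$), and, on the other hand, $\tH_\alpha(X|Y) = -\log\big(\sum_y P_Y(y) Z(y)\big) = -\log \E_Y[Z]$. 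So the claim is equivalent to: with probability at least $1 - 2^{-\sp}$ over the choice of $y$, one has $-\log Z(y) \geq -\log \E_Y[Z] - \sp$.

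Next I would rewrite this inequality using that $t \mapsto -\log t$ is strictly decreasing: the event $-\log Z(y) \geq -\log \E_Y[Z] - \sp = -\log\big(2^{\sp}\,\E_Y[Z]\big)$ is exactly the event $Z(y) \leq 2^{\sp}\,\E_Y[Z]$. Applying Markov's inequality to the non-negative random variable $Z$ with $\eps \assign 2^{-\sp}$ yields $\Pr\big[\, Z \geq 2^{\sp}\,\E_Y[Z] \,\big] \leq 2^{-\sp}$; hence the complementary event $Z < 2^{\sp}\,\E_Y[Z]$, which in particular implies $\H_\alpha(X|Y\=y) \geq \tH_\alpha(X|Y) - \sp$, occurs with probability at least $1 - 2^{-\sp}$. (If $\E_Y[Z] = 0$ there is nothing to prove; otherwise the division in Markov's inequality is legitimate, and the step from the strict inequality $Z < 2^{\sp}\E_Y[Z]$ to the non-strict one only enlarges the favorable set, so it is harmless.)

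The case $\alpha = \infty$ goes through verbatim with the alternative definition of $Z$, using $\tH_\infty(X|Y) = -\log\big(\sum_y P_Y(y)\max_x P_{X|Y}(x|y)\big) = -\log \E_Y[Z]$ and $\H_\infty(X|Y\=y) = -\log Z(y)$. I do not expect any real obstacle here: the entire content is the identification of the two entropies with $-\log Z(y)$ and $-\log\E_Y[Z]$ respectively, after which the statement is precisely Markov's inequality composed with the monotone map $-\log$. The only point requiring a little care is the bookkeeping of strict versus non-strict inequalities when translating Markov's bound back into the $\geq$ appearing in the statement, which is settled as noted above.
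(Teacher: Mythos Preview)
Your proof is correct and is essentially identical to the paper's: the paper also sets up $2^{-\tH_{\alpha}(X|Y)} = \E_y\big[(\pi_{\alpha}(X|Y=y))^{1/(\alpha-1)}\big]$ (your $Z$) and applies Markov's inequality with $\eps = 2^{-\sp}$ to conclude. Your write-up is slightly more careful about the $\alpha=\infty$ case and the strict/non-strict inequality bookkeeping, but the argument is the same.
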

\begin{proof}
By definition of average conditional R\'enyi entropy, we have
\[ 2^{-\tH_{\alpha}(X|Y)} = \E_y\left[
  (\pi_{\alpha}(X|Y=y))^\frac{1}{\alpha-1} \right] \, .
\]
By the \index{Markov's inequality}Markov's inequality (Lemma~\ref{lem:markov}), we get that 
\[ \Pr_y \left[ \pi_\alpha(X|Y=y)^{\frac{1}{\alpha-1}} \geq
  2^{-\tH_\alpha(X|Y) + \sp} \right] \leq 2^{-\kappa} \,
\]
and therefore, the probability (over $y$) that $\H_\alpha(X|Y=y)
\leq \tH_\alpha(X|Y) - \sp$ is at most $2^{-\sp}$.
\end{proof}

As long as $\alpha>1$, the minimization (or average) over $y$ is the
same for all orders of R\'enyi entropy hence,
Equation~\eqref{eq:renyirelation} translates to (average) conditional
R{\'e}nyi entropy:
\begin{lemma}\label{lemma:bounds}
  For any $1 < \alpha < \infty$, we have
\begin{align*}
\H_2(X|Y) &\geq \H_{\infty}(X|Y) \geq \frac{\alpha-1}{\alpha}
\H_{\alpha}(X|Y) \\
\tH_2(X|Y) &\geq \tH_{\infty}(X|Y) \geq \frac{\alpha-1}{\alpha}
\tH_{\alpha}(X|Y) .
\end{align*}
\end{lemma}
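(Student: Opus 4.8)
The plan is to reduce both lines to the single-distribution inequality Equation~\eqref{eq:renyirelation}, together with the monotonicity of $\alpha\mapsto\H_\alpha$ (in particular $\H_2\geq\H_\infty$), and to handle the conditioning separately in the non-averaged and the averaged case.

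For the first line I would first observe that, since $\alpha>1$ makes the prefactor $\frac{1}{1-\alpha}$ negative while $\pi_\alpha(X|Y)=\max_y\pi_\alpha(X|Y=y)$, we have $\H_\alpha(X|Y)=\min_y\H_\alpha(X|Y=y)$, and by definition also $\H_2(X|Y)=\min_y\H_2(X|Y=y)$ and $\H_\infty(X|Y)=\min_y\H_\infty(X|Y=y)$. Applying the monotonicity bound and Equation~\eqref{eq:renyirelation} to each fixed distribution $P_{X|Y=y}$ separately gives $\H_2(X|Y=y)\geq\H_\infty(X|Y=y)\geq\frac{\alpha-1}{\alpha}\H_\alpha(X|Y=y)$ for every $y$; taking $\min_y$ across this chain, and using that the positive scalar $\frac{\alpha-1}{\alpha}$ commutes with $\min_y$, yields the first line.

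For the second line I would rewrite the average conditional entropies in ``collision-probability form'': since $\pi_\alpha(X|Y=y)^{1/(\alpha-1)}=2^{-\H_\alpha(X|Y=y)}$, the definition of $\tH_\alpha$ gives $2^{-\tH_\alpha(X|Y)}=\E_y\bigl[2^{-\H_\alpha(X|Y=y)}\bigr]$, and likewise for $\alpha=2$ and $\alpha=\infty$. The inequality $\tH_2(X|Y)\geq\tH_\infty(X|Y)$ then follows by taking $\E_y$ of the per-$y$ bound $2^{-\H_2(X|Y=y)}\leq2^{-\H_\infty(X|Y=y)}$ and applying $-\log$. For $\tH_\infty(X|Y)\geq\frac{\alpha-1}{\alpha}\tH_\alpha(X|Y)$ I would chain two steps: first the per-$y$ consequence of Equation~\eqref{eq:renyirelation}, namely $2^{-\H_\infty(X|Y=y)}\leq\bigl(2^{-\H_\alpha(X|Y=y)}\bigr)^{(\alpha-1)/\alpha}$; then Jensen's inequality (Lemma~\ref{lem:jensen}) applied to the concave function $t\mapsto t^{(\alpha-1)/\alpha}$ (note $0<\frac{\alpha-1}{\alpha}<1$), which gives $\E_y\bigl[(2^{-\H_\alpha(X|Y=y)})^{(\alpha-1)/\alpha}\bigr]\leq\bigl(\E_y[2^{-\H_\alpha(X|Y=y)}]\bigr)^{(\alpha-1)/\alpha}$. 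Combining these and applying $-\log$ produces the claim, with the $\alpha=\infty$ case following as the evident limit.

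I do not anticipate a substantive obstacle; the one point demanding care is the \emph{direction} of Jensen's inequality in the averaged case — it must be applied to the concave power map $t\mapsto t^{(\alpha-1)/\alpha}$, so that the expectation moves to the outside, which is precisely what is needed to lower-bound $\tH_\infty$ by $\tH_\alpha$. A minor bookkeeping point is to justify, for the first line, that $\H_\alpha(X|Y)$ really is the \emph{minimum} over $y$ (not an average) when $\alpha>1$, so that the per-$y$ inequalities pass cleanly through $\min_y$.
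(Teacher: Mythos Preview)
Your proposal is correct and follows the same approach as the paper, which merely remarks (in the sentence preceding the lemma) that since the minimization (or average) over $y$ is the same for all orders $\alpha>1$, Equation~\eqref{eq:renyirelation} translates directly to the (average) conditional case. Your write-up supplies the details the paper omits, in particular the Jensen step for the averaged version, which is exactly what is needed to justify the paper's terse claim.
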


\subsubsection{Concavity}
\begin{lemma} \label{lem:concavity}
  For $0 \leq \alpha \leq 1$, R\'enyi Entropy is a \emph{concave
    \index{entropic functional}entropic functional}, i.e., for $0 \leq s \leq 1$ and
  distributions $P,Q$, we have
\[ \H_\alpha(sP+(1-s)Q) \geq s
  \H_\alpha(P) + (1-s) \H_\alpha(Q) \, . \]
\end{lemma}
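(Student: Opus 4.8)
The plan is to reduce concavity of $\H_\alpha = \frac{1}{1-\alpha}\log \pi_\alpha$ to two facts: first, that the $\alpha$-order sum $\pi_\alpha(P) = \sum_x P(x)^\alpha$ is \emph{concave} in $P$ when $0 \leq \alpha \leq 1$; and second, that $\log$ is increasing and concave, so composing the increasing concave function $\frac{1}{1-\alpha}\log(\cdot)$ (note $1-\alpha \geq 0$ here) with a concave function preserves concavity. For the first fact, I would argue coordinatewise: for each fixed $x$ the map $t \mapsto t^\alpha$ on $[0,1]$ (indeed on $[0,\infty)$) is concave for $\alpha \in [0,1]$, since its second derivative $\alpha(\alpha-1)t^{\alpha-2}$ is non-positive. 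Hence for the mixture $R \assign sP + (1-s)Q$ we get, for each $x$,
\[
R(x)^\alpha = \big(sP(x) + (1-s)Q(x)\big)^\alpha \geq s\,P(x)^\alpha + (1-s)\,Q(x)^\alpha \,,
\]
and summing over $x$ gives $\pi_\alpha(R) \geq s\,\pi_\alpha(P) + (1-s)\,\pi_\alpha(Q)$.

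Next I would apply $\frac{1}{1-\alpha}\log(\cdot)$ to both sides. Since $\log$ is monotone increasing, the inequality above yields
\[
\H_\alpha(R) = \frac{1}{1-\alpha}\log \pi_\alpha(R) \geq \frac{1}{1-\alpha}\log\big(s\,\pi_\alpha(P) + (1-s)\,\pi_\alpha(Q)\big) \,,
\]
using that $\frac{1}{1-\alpha} > 0$ for $\alpha < 1$. Then, because $\log$ is concave, $\log(s a + (1-s)b) \geq s\log a + (1-s)\log b$, so dividing by $1-\alpha > 0$ preserves the direction and I obtain
\[
\H_\alpha(R) \geq \frac{1}{1-\alpha}\big(s\log\pi_\alpha(P) + (1-s)\log\pi_\alpha(Q)\big) = s\,\H_\alpha(P) + (1-s)\,\H_\alpha(Q) \,,
\]
which is exactly the claim. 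The boundary cases $\alpha = 0$ and $\alpha = 1$ should be handled by taking limits (or noting that $\H_0$ and $\H_1 = \H$ are the well-known concave max-entropy and Shannon entropy), or by a short separate remark, since the factor $\frac{1}{1-\alpha}$ degenerates at $\alpha = 1$; concavity of Shannon entropy is standard and can be cited or obtained as the limit of the statement just proved.

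I do not anticipate a serious obstacle here: the only point needing care is the sign bookkeeping — one must use that $1-\alpha \geq 0$ so that dividing by it does not flip the inequality, and that all quantities $\pi_\alpha(P), \pi_\alpha(Q)$ are strictly positive so that $\log$ is well-defined and the concavity inequality for $\log$ applies. If one wanted a fully self-contained treatment of the $\alpha=1$ endpoint, the mild subtlety is the interchange of limit and inequality, but that is routine given continuity of $\alpha \mapsto \H_\alpha(P)$ on $(0,1]$ for fixed $P$.
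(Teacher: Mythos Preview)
Your argument is correct. The paper itself only treats the Shannon case $\alpha=1$ explicitly, by noting that $f(p) = -p\log p$ has $f''(p) = -1/p \leq 0$ and summing coordinatewise; it does not spell out the general $0 \leq \alpha < 1$ case. Your route---concavity of $t \mapsto t^\alpha$ on $[0,\infty)$ to get $\pi_\alpha$ concave, then monotonicity and concavity of $\log$ together with the sign of $\frac{1}{1-\alpha}$---is the natural way to cover the full range and is more complete than what the paper provides. The only thing to add is that your separate remark for $\alpha=1$ is exactly what the paper does, and for $\alpha=0$ one can observe directly that $\mathrm{supp}(sP+(1-s)Q) = \mathrm{supp}(P) \cup \mathrm{supp}(Q)$ for $s \in (0,1)$, whence $\H_0(sP+(1-s)Q) \geq \max\{\H_0(P),\H_0(Q)\} \geq s\H_0(P)+(1-s)\H_0(Q)$.
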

For the case of Shannon entropy, note that the function $f(p)\assign
-p \log{p}$ has derivatives $f'(p)=-1-\log p$ and $f''(p)=-1/p$ and
$f''(p) \leq 0$ for $0\leq p \leq 1$. Therefore, $f(p)$ is concave and
we have 
\begin{align*}
\H(sP+(1-s)Q) &= \sum_x f(sP(x)+(1-s)Q(x)) \geq \sum_x
sf(P(x)) + (1-s)f(Q(x))\\ 
&= s \sum_x f(P(x)) + (1-s) \sum_x f(Q(x)) = s \H(P) + (1-s) \H(Q).
\end{align*}

Higher-order R\'enyi entropy is not necessarily concave as the
following example illustrates. Consider the distributions
$P(x)=\delta_{x,0}$ and $Q(x)=2^{-n}$ over $\set{0,1}^n$ with
$\H_2(P)=0$ and $\H_2(Q)=n$. For the equal mixture of these
distributions holds $\H_2((P+Q)/2) = -\log(1/4)+O(2^{-n}) \approx 2 <
n/2 = (\H_2(P)+\H_2(Q))/2$ for $n>5$.

\subsubsection{Fano's Inequality}
\index{Fano's inequality}
\begin{lemma}[Fano's Inequality] \label{lem:fano}
Let $X \!\leftrightarrow\! Y \!\leftrightarrow\! X'$ be a \index{Markov
chain}Markov
chain\footnote{Think of $X'$ as guess of $X$ based only on
  $Y$.}. Then, for the error probability $p_e \assign P[X \neq X']$,
it holds
\[ \H(X|Y) \leq h(p_e) + p_e \cdot \log(|\cX|-1) \, .
\]
\end{lemma}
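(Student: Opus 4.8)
The plan is to run the standard textbook argument: introduce the binary error indicator, expand one joint conditional entropy via the chain rule in two different ways, bound the two resulting terms separately, and invoke the Markov assumption only at the very end through the data-processing inequality.

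First I would define the error indicator random variable $E \assign \id_{\{X \neq X'\}}$, so that $P[E=1]=p_e$ and $P[E=0]=1-p_e$. Since $E$ is a deterministic function of the pair $(X,X')$, we have $\H(E\mid X,X')=0$, and the chain rule gives $\H(E,X\mid X') = \H(X\mid X') + \H(E\mid X,X') = \H(X\mid X')$. Expanding the same quantity in the other order, $\H(E,X\mid X') = \H(E\mid X') + \H(X\mid E,X')$. Equating the two expressions yields the identity $\H(X\mid X') = \H(E\mid X') + \H(X\mid E,X')$.

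Next I would bound the two terms on the right. For the first, conditioning cannot increase Shannon entropy, so $\H(E\mid X') \leq \H(E) = h(p_e)$, using that $E$ is binary with $P[E=1]=p_e$. For the second, I split according to the value of $E$: $\H(X\mid E,X') = (1-p_e)\,\H(X\mid E=0,X') + p_e\,\H(X\mid E=1,X')$. Conditioned on $E=0$ we have $X=X'$, so the conditional distribution of $X$ given $X'$ is a point mass and $\H(X\mid E=0,X')=0$; conditioned on $E=1$ we have $X\neq X'$, so for each value of $X'$ the variable $X$ ranges over at most $|\cX|-1$ values and hence $\H(X\mid E=1,X')\leq\log(|\cX|-1)$. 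This gives $\H(X\mid E,X')\leq p_e\log(|\cX|-1)$, and therefore $\H(X\mid X') \leq h(p_e) + p_e\log(|\cX|-1)$.

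Finally I would bring in the Markov chain $X\leftrightarrow Y\leftrightarrow X'$: the data-processing inequality gives $I(X;X')\leq I(X;Y)$, i.e.\ $\H(X)-\H(X\mid X') \leq \H(X)-\H(X\mid Y)$, which rearranges to $\H(X\mid Y)\leq\H(X\mid X')$. Chaining this with the bound of the previous paragraph completes the proof. I do not expect a genuine obstacle: every step is a one-line application of a standard identity or inequality. The only point requiring a little care is that $X'$ need not be a deterministic function of $Y$, so the last step must genuinely use the data-processing inequality for the Markov chain rather than the simpler remark that conditioning on more information reduces entropy.
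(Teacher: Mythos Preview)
Your proof is correct and follows the same standard chain-rule-with-error-indicator argument as the paper. The only difference is a minor variation in bookkeeping: the paper conditions on $Y$ throughout (expanding $\H(XE\mid Y)$ two ways), whereas you condition on $X'$ (expanding $\H(XE\mid X')$) and then bring in $Y$ at the end via data processing. Your version has the small advantage that the steps $\H(X\mid E{=}0,X')=0$ and $\H(X\mid E{=}1,X')\leq\log(|\cX|-1)$ hold on the nose without any assumption on how $X'$ depends on $Y$, so the Markov condition is used exactly once and explicitly; the paper's version, conditioning on $Y$, tacitly treats $X'$ as determined by $Y$ in those two steps.
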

\begin{proof}
We denote by $E \assign \id_{\set{X \neq X'}}$ the indicator random
variable  of the event $\set{X \neq X'}$ that the guess was not successful. By
the chain rule for Shannon entropy, we can write
\[ \H(XE|Y) = \H(X|Y)+\H(E|XY) = \H(E|Y) + \H(X|EY) \]
We observe that $H(E|Y) \leq h(p_e)$, $\H(E|XY)\geq 0$ and 
\[ \H(X|EY) = (1-p_e) \H(X| \set{X=X'} Y) + p_e \H(X|\set{X \neq X'} Y) = p_e
\log(|\cX|-1) \]
and the claim follows by rearranging the terms.
\end{proof}

\subsection{Smooth R\'enyi Entropy} \label{sec:defsmoothrenyientropy}
\index{entropy!smooth}Smooth min- and max-entropies were introduced by Renner and Wolf in~\cite{Renner05,RW05}\footnote{The notion of \emph{smoothing a
    probability distribution} was already used in~\cite{ILL89}.
  Furthermore, a different kind of \emph{smooth R\'enyi entropy} (not
  equivalent to the ones used here) was introduced by Cachin~\cite{Cachin97}.}. They are families of entropy measures
parametrized by non-negative real numbers $\varepsilon$, called the
{\em smoothness}. It is a generalization of the notions of
conditional min- and max-entropy defined in the last section.
\begin{align*}
\hiee{X|Y} &\assign \max_{\ev} \min_{x,y} - \log
\left(\frac{P_{XY\ev}(x,y)}{P_Y(y)} \right) \, ,\\
\hmaxee{X|Y} &\assign \min_{\ev} \max_{y} \log
|\Set{x \in \cX}{\frac{P_{XY\ev}(x,y)}{P_Y(y)} > 0}| \,
\end{align*}
where the maximum/minimum ranges over all events $\ev$ with
probability $\Pr[\ev] \geq 1-\eps$.  $P_{XY\ev}(x,y)$ is the
probability that $\ev$ occurs and $X,Y$ take values $x,y$. Hence, the
``distribution'' $P_{XY\ev}$ is not normalized.

For a given distribution $P_{XY}$, it is easy to compute its smooth
min-entropy (max-entropy), simply by cutting a maximum mass of
$\eps$ off the largest (smallest) probabilities.

Informally, the statement $\hiee{X} = r$ can be understood that the
standard min-entropy of $X$ is close to $r$, except with probability
$\varepsilon$. As $\varepsilon$ can be interpreted as an error
probability, we typically require $\varepsilon$ to be negligible in
the security parameter. 

The reason why we only define the min- and max-versions of smooth
R\'enyi entropy is that it is shown in \cite{RW05} that for example
smooth R\'enyi entropy of order $\alpha>1$ obeys
\[ \hie{\eps+\eps'}{X|Y} + \frac{\log(1/\eps')}{\alpha-1} \geq
\H_{\alpha}^{\eps}(X|Y) \geq \hiee{X|Y} \, .
\]
and hence is equivalent to smooth min-entropy up to an additive term
which depends on $\alpha$ and the smoothness $\eps'$. An analogue
statement holds for $\alpha<1$ and smooth max-entropy. As pointed out
in \cite{RW05}, for $\eps=0$ the relation above shows for example that
$\H_2(X)$ cannot be larger than $\hiee{X} + \log(1/\eps)$ whereas for
the non-smooth versions, we only know from
Equation~(\ref{eq:renyirelation}) that $\H_2(X) \leq 2 \hmin(X)$.

Most importantly, smooth min- and max-entropy have an
\emph{operational meaning} as they provide the answer to two
fundamental information-theoretic problems: 
\begin{itemize} 
\item $\hiee{X|Y}$ is the maximum amount\footnote{up to some small
    additive error term which depends logarithmically on $\eps$} of
  randomness that can be extracted from $X$ and an independent random
  string $R$, such that except with probability $\eps$, the extracted
  string looks completely uniform to an adversary who knows $Y$ and
  learns $R$. This falls into the setting of privacy amplification,
  see Section~\ref{sec:pa} below. \index{randomness extraction}
\item $\hmaxee{X|Y}$ is the minimal
  length\addtocounter{footnote}{-1}\footnotemark\ of an encoding
  computed from $X$ and some additional independent randomness $R$,
  such that except with probability $\eps$, someone knowing $Y$ and
  $R$ can reconstruct $X$ from the encoding. This is a
  data-compression problem which is often called
  \emph{\index{information reconciliation}information reconciliation}
  or \emph{\index{error correction}error correction} in cryptographic
  settings.
\end{itemize}

In \cite{RW05}, it is shown that smooth min- and max-entropies enjoy
several Shannon-like properties such as the \index{entropy!chain
rule}chain rule (see Lemma~\ref{lem:chain} below),
\index{entropy!sub-additivity}sub-additivity $\hiee{XY} \leq
\hie{\eps+\eps'}{X} + \hmaxe{\eps'}{Y}$ and
\index{entropy!monotonicity}monotonicity $\hiee{X} \leq \hiee{XY})$.
\index{sub-additivity|see{entropy}}
\index{monotonicity|see{entropy}}

\begin{lemma}[Chain Rule~\cite{RW05}]\label{lem:chain}
For all $\varepsilon,\varepsilon' > 0$, we have 
\[\hie{\varepsilon+\varepsilon'}{X | Y} > \hie{\varepsilon}{XY}
- \hmax(Y) - \log{\left(\frac{1}{\varepsilon'}\right)}. \] 
\end{lemma}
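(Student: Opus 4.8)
The plan is to start from an event that is optimal for $\hie{\eps}{XY}$ and to shrink it by throwing away the values of $Y$ that occur with tiny probability; conditioning on the surviving values then costs only $\hmax(Y)+\log(1/\eps')$ in the min-entropy and $\eps'$ in the smoothness.

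First I would unwind the definition of $\hiee{XY}$ from Section~\ref{sec:defsmoothrenyientropy}: putting $r\assign\hie{\eps}{XY}$, there is an event $\ev$ with $\Pr[\ev]\geq 1-\eps$ such that $P_{XY\ev}(x,y)\leq 2^{-r}$ for all $x,y$. Writing $m\assign\hmax(Y)$, the variable $Y$ takes at most $2^m$ values, so setting $\cB\assign\Set{y}{0<P_Y(y)<\eps'\,2^{-m}}$ a union bound gives $\Pr[Y\in\cB]=\sum_{y\in\cB}P_Y(y)<2^m\cdot\eps'\,2^{-m}=\eps'$.

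Next I would take the shrunken event $\ev'\assign\ev\wedge\set{Y\notin\cB}$. Then $\Pr[\ev']\geq\Pr[\ev]-\Pr[Y\in\cB]>1-\eps-\eps'$, so $\ev'$ is admissible in the definition of $\hie{\eps+\eps'}{X|Y}$. To bound the quantity inside the minimum there, note that for $y\in\cB$ the numerator $P_{XY\ev'}(x,y)$ vanishes, which only helps, while for $y\notin\cB$ with $P_Y(y)>0$,
\[
\frac{P_{XY\ev'}(x,y)}{P_Y(y)}\;\leq\;\frac{P_{XY\ev}(x,y)}{P_Y(y)}\;\leq\;\frac{2^{-r}}{\eps'\,2^{-m}}\;=\;2^{-(r-m-\log(1/\eps'))}.
\]
Taking $-\log$ and minimising over $x,y$ then gives $\hie{\eps+\eps'}{X|Y}\geq r-m-\log(1/\eps')=\hie{\eps}{XY}-\hmax(Y)-\log(1/\eps')$; the strict inequality stated in the lemma is inherited from the strict union bound $\Pr[Y\in\cB]<\eps'$, which leaves $\ev'$ a little room to spare.

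I expect the only genuine point of care to be the treatment of $\cB$: without discarding those values, the ratio $P_{XY\ev}(x,y)/P_Y(y)$ could be close to $1$ even when $P_{XY\ev}(x,y)$ is exponentially small, so the naive bound is useless. Everything hinges on the choice of threshold $\eps'\,2^{-m}$ — raising it discards more of $Y$'s mass, but never more than $\eps'$ (which is exactly why the smoothness grows by $\eps'$), while lowering it worsens the amplification factor $1/(\eps'\,2^{-m})$ — and this trade-off is precisely what produces the two correction terms $\hmax(Y)$ and $\log(1/\eps')$.
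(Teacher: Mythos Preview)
The paper does not actually prove this lemma; it is stated with a citation to~\cite{RW05} and no argument is given. So there is nothing to compare against, and your proof stands on its own.

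Your argument is correct. Unwinding the definition of $\hie{\eps}{XY}$ to get the event $\ev$, discarding the low-probability values of $Y$ via the threshold $\eps'2^{-m}$, and bounding the resulting ratio is exactly the right mechanism, and your identification of why the set $\cB$ must be removed (otherwise $P_{XY\ev}(x,y)/P_Y(y)$ can be close to $1$) is the key insight. One minor remark: your justification of the \emph{strict} inequality is a little loose. The strict bound $\Pr[Y\in\cB]<\eps'$ gives slack in the admissibility of $\ev'$, not directly in the min-entropy estimate. To make the strictness rigorous, observe that since $Y$ takes only finitely many values, the threshold $\eps'2^{-m}$ can be replaced by a slightly larger $t'$ without changing $\cB$ (choose $t'$ below the next value of $P_Y$); then every surviving $y$ satisfies $P_Y(y)\geq t'>\eps'2^{-m}$, and the strict inequality follows. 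Alternatively, one simply notes that $\log(1/\eps')>0$ already provides slack whenever the ratio bound is not simultaneously tight in both factors.
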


As a consequence of the asymptotic equipartition property (cf.~\cite{CT91}), smooth R\'enyi entropy is asymptotically equal to
Shannon entropy in the following sense. 
\begin{lemma}[\cite{RW05,HR06}] \label{lem:asymptshannon}
Let $(X_1,Y_1), \ldots, (X_n, Y_n)$ be $n$ independent pairs of random
variables distributed according to $P_{XY}$. Then, for any
$\alpha \neq 1$,
\[ \lim_{\eps \rightarrow 0} \lim_{n \rightarrow \infty}
\frac{\H_{\alpha}^{\eps}(X^n | Y^n)}{n} = \H(X | Y).
\]
\end{lemma}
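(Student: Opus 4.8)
The plan is to reduce the general statement to the two boundary cases, smooth min-entropy ($\alpha=\infty$) and smooth max-entropy ($\alpha\to 0$), and to treat those via the law of large numbers. For the reduction, recall from the discussion preceding this lemma that for every $\alpha>1$ one has $\hie{\eps+\eps'}{X|Y}+\frac{\log(1/\eps')}{\alpha-1}\ge\H_\alpha^\eps(X|Y)\ge\hiee{X|Y}$, with an additive term that does not depend on the number of copies, and an analogous sandwich of $\H_\alpha^\eps(X|Y)$ between two smooth max-entropies for $\alpha<1$. Applying these bounds with $X^n$ and $Y^n$ in place of $X$ and $Y$, dividing by $n$ and letting $n\to\infty$ makes the $\alpha$-dependent additive terms disappear. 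Hence it suffices to show, for every fixed $\eps\in(0,1)$, that
\[
\lim_{n\to\infty}\frac{\hiee{X^n|Y^n}}{n}=\H(X|Y)=\lim_{n\to\infty}\frac{\hmaxee{X^n|Y^n}}{n},
\]
after which the outer limit $\eps\to0$ is immediate.

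The common tool is that the i.i.d.\ random variables $Z_i\assign-\log P_{X|Y}(X_i|Y_i)$ have expectation $\H(X|Y)$, so by the law of large numbers $-\frac1n\log P_{X^n|Y^n}(X^n|Y^n)=\frac1n\sum_{i=1}^n Z_i$ concentrates around $\H(X|Y)$. Fix $\delta>0$; then the conditionally typical set $T_n\assign\Set{(x^n,y^n)}{2^{-n(\H(X|Y)+\delta)}\le P_{X^n|Y^n}(x^n|y^n)\le 2^{-n(\H(X|Y)-\delta)}}$ has probability at least $1-\eps$ once $n$ is large. Smoothing the min-entropy towards the event $\{P_{X^n|Y^n}(X^n|Y^n)\le 2^{-n(\H(X|Y)-\delta)}\}$ (which has probability $\ge 1-\eps$) makes every point of the resulting support have conditional probability at most $2^{-n(\H(X|Y)-\delta)}$, whence $\hiee{X^n|Y^n}\ge n(\H(X|Y)-\delta)$; and smoothing the max-entropy towards $T_n$ makes each $y^n$-slice of the support contain at most $2^{n(\H(X|Y)+\delta)}$ values of $x^n$, whence $\hmaxee{X^n|Y^n}\le n(\H(X|Y)+\delta)$. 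Letting $\delta\to0$ yields $\liminf_n\frac1n\hiee{X^n|Y^n}\ge\H(X|Y)$ and $\limsup_n\frac1n\hmaxee{X^n|Y^n}\le\H(X|Y)$.

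The reverse inequalities are the harder half and constitute the main obstacle: the single global smoothing event $\ev$ removes only an $\eps$-fraction of the \emph{joint} mass but may distribute the removed mass unevenly over the exponentially many conditioning values $y^n$, so one cannot argue slice by slice for free. For the upper bound on $\hiee{X^n|Y^n}$: given any $\ev$ with $\Pr[\ev]\ge1-\eps$, one has $\Pr[\ev\cap T_n]\ge 1-\eps-o(1)>0$; averaging $P_{X^nY^n\ev}(x^n,y^n)/P_{Y^n}(y^n)$ first over the $y^n$-slices (the weighted average over $y^n$ being $\ge 1-\eps-o(1)$) and then over the at most $2^{n(\H(X|Y)+\delta)}$ strings $x^n$ lying in $T_n$ and compatible with the selected slice produces a point of the support with conditional probability at least $(1-\eps-o(1))\,2^{-n(\H(X|Y)+\delta)}$, so $\hiee{X^n|Y^n}\le n(\H(X|Y)+\delta)+O_\eps(1)$. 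A symmetric counting argument, now invoking the \emph{upper} typical bound $P_{X^n|Y^n}\le 2^{-n(\H(X|Y)-\delta)}$ to force some $y^n$-slice of any $\eps$-smoothed support to contain at least $(1-\eps-o(1))\,2^{n(\H(X|Y)-\delta)}$ strings $x^n$, gives $\hmaxee{X^n|Y^n}\ge n(\H(X|Y)-\delta)-O_\eps(1)$. Dividing the four estimates by $n$, letting $n\to\infty$ and then $\delta\to0$ establishes the two displayed equalities; alternatively, one can simply invoke the corresponding statements from \cite{RW05,HR06} and only carry out the order-reduction step explicitly.
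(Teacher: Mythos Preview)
The paper does not give its own proof of this lemma; it merely cites \cite{RW05,HR06} and remarks that the statement is ``a consequence of the asymptotic equipartition property''. Your argument is exactly such an AEP proof: the reduction to $\alpha=\infty$ and $\alpha\to 0$ via the sandwich inequalities stated just before the lemma is sound, and your typical-set estimates for both directions of the min- and max-entropy limits are correct (the averaging-over-slices step for the upper bound on $\hiee{X^n|Y^n}$ and the counting step for the lower bound on $\hmaxee{X^n|Y^n}$ are the right way to handle the worst-case conditioning).
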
 
Note that such a lemma does \emph{not} hold at all for non-smooth
R\'enyi entropies.

To provide some intuition about smooth min-entropy, the following lemma shows how to translate smooth min-entropy
back to regular conditional min-entropy.
\begin{lemma}\label{lemma:smooth->ordinary}
  If $\hiee{X|Y} = r$ then there exists an event $\ev'$ such that
  $\Pr(\ev') \geq 1-2\varepsilon$ and $\hmin(X|\ev',Y\!=\!y) \geq r-1$
  for every $y$ with $P_{Y \ev'}(y) > 0$.
\end{lemma}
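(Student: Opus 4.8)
The plan is to take the event $\ev$ that attains the maximum in the definition of smooth min-entropy and then \emph{prune} it a little, discarding the values of $Y$ on which conditioning on $\ev$ destroyed too much probability mass. Unrolling the definition, $\hiee{X|Y}=r$ means there is an event $\ev$ with $\Pr[\ev]\ge 1-\eps$ such that $P_{XY\ev}(x,y)/P_Y(y)\le 2^{-r}$ for all $x\in\cX$, $y\in\cY$. The main difficulty — and the reason one cannot simply take $\ev'=\ev$ — is that this bound has the \emph{un-smoothed} marginal $P_Y(y)$ in the denominator, whereas $\hmin(X|\ev',Y\!=\!y)$ is controlled by $P_{XY\ev'}(x,y)/P_{Y\ev'}(y)$, and replacing $P_Y(y)$ by $P_{Y\ev}(y)\le P_Y(y)$ moves the ratio in the wrong direction.

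First I would set $\cY_{\mathrm{bad}}\assign\{y:P_{Y\ev}(y)<\tfrac12 P_Y(y)\}$, the values of $Y$ whose mass is more than halved by conditioning on $\ev$ (here $P_{Y\ev}(y)=\sum_x P_{XY\ev}(x,y)$ is the joint probability that $\ev$ occurs and $Y\!=\!y$), and define $\ev'$ to be the event ``$\ev$ occurs and $Y\notin\cY_{\mathrm{bad}}$''. To bound $\Pr[\ev']$ I split $\Pr[\neg\ev']\le\Pr[\neg\ev]+\Pr[\ev\text{ and }Y\in\cY_{\mathrm{bad}}]$. The first term is at most $\eps$; for the second I use that every $y\in\cY_{\mathrm{bad}}$ satisfies $P_{Y\ev}(y)<P_Y(y)-P_{Y\ev}(y)$, so $\sum_{y\in\cY_{\mathrm{bad}}}P_{Y\ev}(y)\le\sum_y\bigl(P_Y(y)-P_{Y\ev}(y)\bigr)=1-\Pr[\ev]\le\eps$. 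Hence $\Pr[\ev']\ge 1-2\eps$.

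It then remains to verify the entropy bound. Any $y$ with $P_{Y\ev'}(y)>0$ necessarily lies outside $\cY_{\mathrm{bad}}$, and on the event $\{Y=y\}$ the events $\ev$ and $\ev'$ coincide, giving $P_{XY\ev'}(x,y)=P_{XY\ev}(x,y)$ and $P_{Y\ev'}(y)=P_{Y\ev}(y)\ge\tfrac12 P_Y(y)$. Therefore, for every $x$,
\[
P_{X\mid\ev',Y=y}(x)=\frac{P_{XY\ev}(x,y)}{P_{Y\ev}(y)}\le\frac{2\,P_{XY\ev}(x,y)}{P_Y(y)}\le 2\cdot 2^{-r}=2^{-(r-1)},
\]
which is exactly $\hmin(X\mid\ev',Y\!=\!y)\ge r-1$. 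I expect no real obstacle beyond the one already isolated; the only things to watch are the bookkeeping with the non-normalized ``distribution'' $P_{XY\ev}$ and the role of the factor $\tfrac12$ in the threshold (a different threshold would trade the probability loss against the min-entropy loss), but these are routine.
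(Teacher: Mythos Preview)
Your proposal is correct and is essentially the paper's proof: the event $\ev'$ you construct (intersecting $\ev$ with $\{Y\notin\cY_{\mathrm{bad}}\}$ where $\cY_{\mathrm{bad}}=\{y:P_{Y\ev}(y)<\tfrac12 P_Y(y)\}$) is exactly the paper's $\ev'$, just phrased in words rather than by specifying $P_{X\ev'|Y}$ directly. Your probability bookkeeping via $\Pr[\neg\ev']\le\Pr[\neg\ev]+\sum_{y\in\cY_{\mathrm{bad}}}P_{Y\ev}(y)\le\eps+\eps$ is in fact a bit more direct than the paper's route, which first bounds $\Pr[P_{\ev|Y}(Y)<\tfrac12]\le 2\eps$ and then backs out $\Pr[\ev']$, but the underlying idea is identical.
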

\begin{proof}
By definition of smooth min-entropy, there exists an event $\ev$ with
$\Pr(\ev) \geq 1 - \varepsilon$ and such that $\mbox{$\hmin(X\ev|Y\!=\!y)$}
\geq r$ for all $y$, and thus $P_{X\ev|Y}(x|y)$ $\leq 2^{-r}$ for all
$x$ and $y$.  Define $\ev'$ by setting for all $x$ and $y$
$$
P_{X\ev'|Y}(x|y) \assign \left\{
\begin{array}{ll}
P_{X\ev|Y}(x|y) & \text{if $P_{\ev|Y}(y) \geq \frac12$} \\
0 & \text{else}
\end{array}
\right.
$$
Then obviously for any $y$ with $P_{Y \ev'}(y) > 0$ and thus
$P_{\ev'|Y}(y) = P_{\ev|Y}(y) \geq \frac12$,
$$
P_{X|\ev'Y}(x|y) = \frac{P_{X\ev'|Y}(x|y)}{P_{\ev'|Y}(y)} \leq
\frac{2^{-r}}{P_{\ev'|Y}(y)} \leq 2^{-r+1} \, .
$$
Furthermore, 
\begin{align}
  1 &- \varepsilon \leq \Pr(\ev)  \nonumber \\[0.7ex]
  &= \Pr(\ev|P_{\ev|Y}(Y)<{\textstyle\frac12})\cdot\Pr(P_{\ev|Y}(Y)<{\textstyle\frac12}) \nonumber \\
 &\quad\; +
  \Pr(\ev|P_{\ev|Y}(Y)\geq{\textstyle\frac12})\cdot\Pr(P_{\ev|Y}(Y)\geq{\textstyle\frac12}) \label{eq:events}\\
  &\leq \frac{1}{2}\Pr(P_{\ev|Y}(Y)<{\textstyle\frac12}) +
  \Pr(P_{\ev|Y}(Y)\geq{\textstyle\frac12}) \nonumber
\end{align}
from which follows that $\Pr(P_{\ev|Y}(Y)<{\textstyle\frac12}) \leq
2\varepsilon$. Thus we can conclude that
\begin{align*}
\Pr(\ev') &\geq \Pr(\ev'|P_{\ev|Y}(Y)\geq{\textstyle\frac12})\cdot\Pr(P_{\ev|Y}(Y)\geq{\textstyle\frac12}) \\ 
&\geq \Pr(\ev|P_{\ev|Y}(Y)\geq{\textstyle\frac12})\cdot\Pr(P_{\ev|Y}(Y)\geq{\textstyle\frac12}) \\ 
&\geq 1 - \varepsilon - \frac{1}{2}\Pr(P_{\ev|Y}(Y)<{\textstyle\frac12}) \\
&\geq 1 - 2\varepsilon
\end{align*} 
where the second-last inequality follows from (\ref{eq:events}), and noting (once more) that $\Pr(\ev|P_{\ev|Y}(Y)<{\textstyle\frac12}) < \frac12$. 
\end{proof}

\subsection{Min-Entropy-Splitting Lemma} \index{min-entropy splitting lemma}
For proving reductions between variants of oblivious transfer in
Section~\ref{sec:application} and the security of \OT in the
bounded-quantum storage in Chapter~\ref{chap:12OT}, we will make use
of the following min-entropy splitting lemma. Note that if the joint
entropy of two random variables $X_0$ and $X_1$ is large, then one is
tempted to conclude that at least one of $X_0$ and $X_1$ must still
have large entropy, e.g.\ half of the original entropy. Whereas this
is indeed true for Shannon entropy, it is in general not true for
min-entropy. The following lemma, though, which first appeared in a
preliminary version of~\cite{Wullschleger07}, shows that it {\em is}
true in a randomized sense.
\begin{lemma}[Min-Entropy-Splitting Lemma]\label{lemma:ESL}
  Let $\varepsilon \geq 0$, and let $X_0,X_1$ be random variables with
  \mbox{$\hmin^{\varepsilon}(X_0 X_1) \geq \alpha$} 
Then, there exists a
  random variable $C \in \set{0,1}$ such that
  \mbox{$\hmin^{\varepsilon}(X_{1-C} C) \geq \alpha/2$}.
\end{lemma}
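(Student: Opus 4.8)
The plan is to strip off the smoothing first, then build $C$ by a one-sided threshold on $X_0$, and finish with a counting argument. Unfolding the definition of smooth min-entropy, the hypothesis $\hmin^{\varepsilon}(X_0X_1)\geq\alpha$ gives an event $\ev$ with $\Pr[\ev]\geq 1-\varepsilon$ such that the sub-normalized distribution $Q\assign P_{X_0X_1\ev}$ satisfies $Q(x_0,x_1)\leq 2^{-\alpha}$ for all $x_0,x_1$. I will re-use this same $\ev$ as the smoothing event witnessing the conclusion, so the task reduces to exhibiting a bit $C$ with $P_{X_{1-C}C\ev}(x,c)\leq 2^{-\alpha/2}$ for every $x$ and every $c\in\set{0,1}$.

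Next I would define $C$ as a deterministic function of $X_0$ via a threshold on the sub-normalized marginal $Q_{X_0}(x_0)\assign\sum_{x_1}Q(x_0,x_1)=\Pr[X_0=x_0,\ev]$: set $C\assign 1$ (so that $X_{1-C}=X_0$) whenever $Q_{X_0}(X_0)\leq 2^{-\alpha/2}$, and $C\assign 0$ (so that $X_{1-C}=X_1$) otherwise. Since $Q_{X_0}(\cdot)$ is just a fixed real-valued function of $x_0$, this is a legitimate random variable — indeed a function of $X_0$ — even though it is defined through $\ev$. Then I would check the required bound in the two branches. For $c=1$ it is immediate, $P_{X_{1-C}C\ev}(x_0,1)=\id_{\set{Q_{X_0}(x_0)\leq 2^{-\alpha/2}}}\cdot Q_{X_0}(x_0)\leq 2^{-\alpha/2}$. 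For $c=0$ one has $P_{X_{1-C}C\ev}(x_1,0)=\sum_{x_0\,:\,Q_{X_0}(x_0)>2^{-\alpha/2}}Q(x_0,x_1)$, and here the counting step does the work: since $\sum_{x_0}Q_{X_0}(x_0)=\Pr[\ev]\leq 1$, at most $2^{\alpha/2}$ indices $x_0$ can exceed the threshold $2^{-\alpha/2}$, and each contributes at most $Q(x_0,x_1)\leq 2^{-\alpha}$, so the sum is at most $2^{\alpha/2}\cdot 2^{-\alpha}=2^{-\alpha/2}$. Combining the two branches gives $P_{X_{1-C}C\ev}(x,c)\leq 2^{-\alpha/2}$ everywhere, i.e.\ $\hmin^{\varepsilon}(X_{1-C}C)\geq\alpha/2$, as desired.

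I do not anticipate a genuine obstacle; the step that needs the most care is the bookkeeping with the sub-normalized $Q$ and threading the smoothing event together with the definition of $C$ — in particular, checking that using the $\ev$-dependent marginal $Q_{X_0}$ inside the definition of $C$ creates no circularity, and that the shorthand $X_{1-C}$ is unwound in the correct direction in each branch ($C=1$ points at $X_0$, $C=0$ at $X_1$). A more pedestrian alternative would be to prove the non-smooth ($\varepsilon=0$) case first and then apply it to the conditional distribution $P_{X_0X_1|\ev}$, but carrying $\ev$ along from the start keeps the argument cleaner.
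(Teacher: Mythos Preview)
Your proof is correct and follows essentially the same idea as the paper's: define $C$ by a one-sided threshold on one marginal so that the ``light'' coordinate becomes $X_{1-C}$, and control the other case via the joint bound $\leq 2^{-\alpha}$. The only cosmetic differences are that the paper thresholds on $P_{X_1}$ (making $C$ a function of $X_1$) rather than on the $X_0$-marginal, proves the $\varepsilon=0$ case first and then remarks that the same argument works for the sub-normalized $P_{X_0X_1\ev}$, and phrases the harder branch as a weighted sum $\sum_{x_1}2^{-\alpha/2}P_{X_1}(x_1)\leq 2^{-\alpha/2}$ rather than via your counting of heavy atoms---but these amount to the same computation.
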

\begin{proof}
Below, we give the proof for $\varepsilon = 0$, i.e., for ordinary
(non-smooth) min-entropy. The general claim for smooth min-entropy
follows immediately by observing that the same argument also works for
non-normalized distributions with a total probability smaller than 1.

We extend the probability distribution $P_{X_0 X_1}$ as follows to
$P_{X_0 X_1 C}$. Let $C=1$ if $P_{X_1}(X_1) \geq 2^{-\alpha/2}$ and
$C=0$ otherwise.  We have that for all $x_1$, $P_{X_1 C}(x_1,0)$
either vanishes or is equal to $P_{X_1}(x_1)$. In any case, $P_{X_1
  C}(x_1,0) < 2^{-\alpha/2}$.
  
  On the other hand, for all $x_1$ with $P_{X_1 C}(x_1,1)>0$, we have
  that $P_{X_1 C}(x_1,1)=P_{X_1}(x_1) \geq 2^{-\alpha/2}$ and
  therefore, for all $x_0$,
\begin{equation*} 
 P_{X_0 X_1 C}(x_0,x_1,1) \leq 2^{-\alpha} =2^{-\alpha/2} \cdot
 2^{-\alpha/2} \leq 2^{-\alpha/2} P_{X_1}(x_1).
\end{equation*}
Summing over all $x_1$ with $P_{X_0 X_1 C}(x_0,x_1,1) > 0$, and thus with $P_{X_1 C}(x_1,1) > 0$, results in
$$
P_{X_0 C}(x_0,1) \leq \sum_{x_1} 2^{-\alpha/2} P_{X_1}(x_1)
\leq 2^{-\alpha/2}.
$$
This shows that $P_{X_{1-C} C}(x,c) \leq 2^{-\alpha/2}$ for all $x,c$.
\end{proof}

The corollary below follows rather straightforwardly by noting that
(for normalized as well as non-normalized distributions) $\hmin(X_0
X_1| Z) \geq \alpha$ holds exactly if $\hmin(X_0 X_1 | Z\!=\!z)
\geq \alpha$ for all $z$, applying the Min-Entropy Splitting Lemma,
and then using the chain rule, Lemma~\ref{lem:chain}.

\begin{corollary}\label{cor:ESL}
  Let $\eps \geq 0$ be given, and let $X_0,X_1,Z$ be random variables with $\hiee{X_0 X_1 | Z} \geq \alpha$. Then, there exists a binary
  random variable $C \in \set{0,1}$ such that for $\eps' > 0$,
\[
 \hie{\eps+\eps'}{X_{1-C} | Z C} \geq \alpha/2 - 1 - \log(1/\eps').
\]
\end{corollary}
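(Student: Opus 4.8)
The plan is to obtain Corollary~\ref{cor:ESL} as a routine consequence of the Min-Entropy-Splitting Lemma (Lemma~\ref{lemma:ESL}) together with the chain rule (Lemma~\ref{lem:chain}), following the recipe announced in the sentence preceding the statement. First I would unfold the definition of smooth min-entropy: the assumption $\hiee{X_0 X_1 \mid Z} \geq \alpha$ gives an event $\ev$ with $\Pr[\ev] \geq 1-\eps$ such that $P_{X_0 X_1 \ev \mid Z}(x_0,x_1\mid z) \leq 2^{-\alpha}$ for every $x_0,x_1$ and every $z$ with $P_Z(z)>0$. Read differently, for each fixed $z$ the (in general non-normalized) function $(x_0,x_1)\mapsto P_{X_0 X_1 \ev \mid Z=z}(x_0,x_1)$ has min-entropy at least $\alpha$ --- this is exactly the remark that, also for non-normalized distributions, $\hmin(X_0 X_1\mid Z)\geq\alpha$ holds iff $\hmin(X_0 X_1\mid Z\!=\!z)\geq\alpha$ for all $z$.

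Next I would apply the Min-Entropy-Splitting Lemma to each of these conditional distributions separately; this is permitted because, as observed in the proof of Lemma~\ref{lemma:ESL}, its argument goes through verbatim for non-normalized distributions of total mass at most~$1$. Concretely one sets $C:=1$ if $P_{X_1 \ev\mid Z}(X_1\mid Z)\geq 2^{-\alpha/2}$ and $C:=0$ otherwise, which is a well-defined binary random variable and a deterministic function of $(X_1,Z)$. The lemma then yields $P_{X_{1-C}\,C\,\ev\mid Z}(x,c\mid z)\leq 2^{-\alpha/2}$ for all $x,c,z$, i.e.\ $\hiee{X_{1-C}\,C\mid Z}\geq\alpha/2$, \emph{with the same smoothing event $\ev$}, so no extra smoothing parameter has been spent yet.

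Finally I would invoke the chain rule, Lemma~\ref{lem:chain}, in its form conditioned on the additional side information $Z$ (which is standard from~\cite{RW05}): applied with $X := X_{1-C}$ and $Y := C$ it gives $\hie{\eps+\eps'}{X_{1-C}\mid Z C} > \hiee{X_{1-C}\,C\mid Z} - \hmax(C\mid Z) - \log(1/\eps') \geq \alpha/2 - 1 - \log(1/\eps')$, where the ``$-1$'' is just $\hmax(C\mid Z)\leq\log|\set{0,1}|=1$, since $C$ is a bit. This is the asserted bound. I expect the only delicate point to be the bookkeeping: verifying that the smoothing event delivered by the hypothesis survives the application of the Min-Entropy-Splitting Lemma unchanged (so that $\eps'$ is the only new loss, coming from the chain rule), and that $C$ is legitimately jointly distributed with $X_0,X_1,Z$ and $\ev$; the quantitative substance lies entirely in the two lemmas being cited.
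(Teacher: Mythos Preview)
Your proposal is correct and follows exactly the approach sketched in the paper: unfold the conditional smooth min-entropy pointwise in $z$, apply Lemma~\ref{lemma:ESL} to the (possibly non-normalized) conditional distributions to obtain $C$ with $\hiee{X_{1-C}C\mid Z}\geq\alpha/2$, and then invoke the chain rule (Lemma~\ref{lem:chain}) to peel off $C$ at the cost of $1+\log(1/\eps')$. The paper gives no more detail than this outline, and your explicit construction of $C$ and bookkeeping of the smoothing event fill it in correctly.
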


\subsection{Entropy of Quantum States}
As pointed out in \cite{RK05}, R\'enyi entropy $\H_{\alpha}(\rho)$ can
also be defined for a quantum state $\rho \in \dens{\cH}$. For $\alpha
\in [0,\infty]$ and $\rho \in \dens{\cH}$, we have
\[ \H_{\alpha}(\rho) \assign \frac{1}{1-\alpha}
\log\left(\trace{\rho^{\alpha}} \right). \] 
In the limit cases $\alpha \rightarrow 0$ and $\alpha \rightarrow
\infty$, we obtain $\H_0(\rho)=\log(\rank{\rho})$ and
$\H_{\infty}(\rho) = -\log \left( \lambda_{\max}(\rho) \right)$,
where $\lambda_{\max}(\rho)$ denotes the maximum eigenvalue of $\rho$.
For $\alpha=2$, we obtain the \emph{collision entropy} $\H_2(\rho)
= -\log{\left( \sum_i \lambda_i^2\right)}$, where $\{\lambda_i\}_i$
are the eigenvalues of $\rho$.

For a classical random variable $X$ encoded in $\rho_X= \sum_x P_X(x)
\proj{x}$, it holds that that $\H_{\alpha}(\rho_X) = \H_{\alpha}(X)$.


For deriving our version of the privacy-amplification theorem in the
next section, we need the slightly more involved version of quantum
conditional min-entropy from \cite{Renner05}.
\begin{definition}[\cite{Renner05}] \label{def:qminentropy}
Let $\rho_{AB} \in \dens{\cH_A \otimes \cH_B}$ and $\sigma_B \in
\dens{\cH_B}$. The min-entropy of $\rho_{AB}$ relative to $\sigma_B$
is \index{entropy!min-}
\[ \qhmin(\rho_{AB} | \sigma_B) \assign - \log \lambda
\]
where $\lambda$ is the minimum real number such that $\lambda \cdot \id_A
\otimes \sigma_B - \rho_{AB}$ is non-negative.

The min-entropy of $\rho_{AB}$ given $\cH_B$ is
\[ \qhmin(\rho_{AB} | B) \assign \sup_{\sigma_B} \qhmin(\rho_{AB} | \sigma_B)
\]
where the supremum ranges over all $\sigma_B \in \dens{\cH_B}$.
\end{definition}

Similar to the classical case, the smooth version can be defined as follows.
\begin{definition}[\cite{Renner05}] \label{def:qsmoothminentropy}
Let $\rho_{AB} \in \dens{\cH_A \otimes \cH_B}$, $\sigma_B \in
\dens{\cH_B}$, and $\eps \geq 0$. The $\eps$-smooth min-entropy of
$\rho_{AB}$ relative to $\sigma_B$ is \index{entropy!smooth min-}
\[ \hminee(\rho_{AB} | \sigma_B) \assign \sup_{\ol{\rho}_{AB}} \qhmin(\ol{\rho}_{AB} | \sigma_B)
\]\
where the supremum ranges over the set $\cB^{\eps}(\rho_{AB})$
containing all Hermitian, non-negative operators $\ol{\rho}_{AB}$
acting on $\cH_A \otimes \cH_B$ such that $\dist{ \ol{\rho}_{AB},
\rho_{AB} } \leq 2 \eps$ and $\trace{\ol{\rho}_{AB}} \leq 1$.

The $\eps$-smooth min-entropy given $\cH_B$ is
\[ \hminee(\rho_{AB} | B) \assign \sup_{\sigma_B} \hminee(\rho_{AB} | \sigma_B)
\]
where the supremum ranges over all $\sigma_B \in \dens{\cH_B}$.
\end{definition}
To compute $\hminee(\rho_{XB} | \sigma_B)$ where $\rho_{XB}$ is a
cq-state, the supremum can be restricted to states
$\ol{\rho}_{XB} \in \cB^{\eps}(\rho_{XB})$ which are classical on
$\cH_X$ as well \cite[Remark~3.2.4]{Renner05}.

There is a chain rule for smooth min-entropy, proven in \cite[Lemma
3.2.9]{Renner05}.
\begin{lemma}[\cite{Renner05}] \label{lem:qchainrule}
  Let $\rho_{XUE} \in \dens{\cH_X \otimes \cH_U \otimes \cH_E}$,
  $\sigma_U \in \dens{\cH_U}$, and let $\sigma_E \in \dens{\cH_E}$ be
  the fully mixed state on the image of $\rho_E$, and let $\eps \geq
  0$. Then
\[ \hminee(\rho_{XUE} | \sigma_U) - \qhmax(\rho_E) \leq
  \hminee(\rho_{XUE} | \sigma_U \otimes \sigma_E).
\]
\end{lemma}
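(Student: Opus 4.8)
The plan is to reduce to the non-smooth inequality
$\qhmin(\rho_{XUE}\mid\sigma_U)-\qhmax(\rho_E)\le\qhmin(\rho_{XUE}\mid\sigma_U\otimes\sigma_E)$
and then lift it by a smoothing argument. Throughout I read $\qhmin(\rho_{XUE}\mid\sigma_U)$, per Definition~\ref{def:qminentropy}, as $-\log\lambda$ for the least $\lambda$ with $\lambda\,\id_X\otimes\sigma_U\otimes\id_E-\rho_{XUE}\ge 0$ (the $\id_E$ on $\cH_E$ plays the role of the missing reference state there). Write $\pi_E$ for the projector onto the support (image) of $\rho_E$ and $d$ for its rank, so that $\sigma_E=\pi_E/d$, $\qhmax(\rho_E)=\log d$, and $\pi_E=d\,\sigma_E$.

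\medskip\noindent\emph{Non-smooth case.} The key ingredient is the standard fact that the support of $\rho_{XUE}$ is contained in $\cH_X\otimes\cH_U\otimes\mathrm{supp}(\rho_E)$; equivalently, with $P\assign\id_X\otimes\id_U\otimes\pi_E$ one has $P\rho_{XUE}P=\rho_{XUE}$. Now put $\lambda\assign 2^{-\qhmin(\rho_{XUE}\mid\sigma_U)}$, so that $\lambda\,\id_X\otimes\sigma_U\otimes\id_E-\rho_{XUE}\ge 0$. Conjugating this inequality by $P$ and using $P\rho_{XUE}P=\rho_{XUE}$ together with $\pi_E\id_E\pi_E=\pi_E$ gives $\lambda\,\id_X\otimes\sigma_U\otimes\pi_E-\rho_{XUE}\ge 0$, i.e.\ $\lambda d\,\id_X\otimes\sigma_U\otimes\sigma_E-\rho_{XUE}\ge 0$, and therefore $\qhmin(\rho_{XUE}\mid\sigma_U\otimes\sigma_E)\ge -\log(\lambda d)=\qhmin(\rho_{XUE}\mid\sigma_U)-\qhmax(\rho_E)$.

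\medskip\noindent\emph{Smooth case.} Let $\bar\rho\in\cB^\eps(\rho_{XUE})$ attain (or approach) the supremum defining $\hminee(\rho_{XUE}\mid\sigma_U)$. I would first replace $\bar\rho$ by $P\bar\rho P$: it is still Hermitian, non-negative, of trace at most $1$; it still lies in $\cB^\eps(\rho_{XUE})$ because $\dist{P\bar\rho P,\rho_{XUE}}=\dist{P\bar\rho P,P\rho_{XUE}P}\le\dist{\bar\rho,\rho_{XUE}}$ by trace-norm contractivity of conjugation by a projector; and $\qhmin(P\bar\rho P\mid\sigma_U)\ge\qhmin(\bar\rho\mid\sigma_U)$, since conjugating $\lambda\,\id_X\otimes\sigma_U\otimes\id_E-\bar\rho\ge 0$ by $P$ and then using $\pi_E\le\id_E$ keeps the operator inequality valid with $P\bar\rho P$ in place of $\bar\rho$. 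Hence we may assume $\bar\rho=P\bar\rho P$, so $\mathrm{supp}(\bar\rho)\subseteq\cH_X\otimes\cH_U\otimes\mathrm{supp}(\rho_E)$, and the non-smooth argument applies verbatim to $\bar\rho$ with \emph{this same} $\sigma_E$, giving $\qhmin(\bar\rho\mid\sigma_U\otimes\sigma_E)\ge\qhmin(\bar\rho\mid\sigma_U)-\qhmax(\rho_E)$. Finally, $\bar\rho\in\cB^\eps(\rho_{XUE})$ makes the left-hand side at most $\hminee(\rho_{XUE}\mid\sigma_U\otimes\sigma_E)$ (Definition~\ref{def:qsmoothminentropy}), while the right-hand side equals $\hminee(\rho_{XUE}\mid\sigma_U)-\qhmax(\rho_E)$ by the choice of $\bar\rho$, and taking a limit if the supremum is only approached finishes the argument. (For $\eps=0$ this is just the non-smooth case, since $\cB^0(\rho_{XUE})=\{\rho_{XUE}\}$.)

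\medskip\noindent The step I expect to be most delicate is the compatibility of smoothing with the fixed $\sigma_E$: the statement pins $\sigma_E$ to the support of the \emph{un}smoothed $\rho_E$, so one must check carefully that conjugating the smoothing by $P$ keeps it both inside $\cB^\eps(\rho_{XUE})$ and optimal for $\hminee(\cdot\mid\sigma_U)$; everything else is routine positive-semidefinite manipulation.
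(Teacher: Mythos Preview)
The paper does not actually prove this lemma; it merely cites \cite[Lemma~3.2.9]{Renner05}. Your proposal is correct and is essentially the standard argument one finds in Renner's thesis: reduce to the non-smooth case via the operator inequality $\pi_E = d\,\sigma_E \leq d\,\id_E$ combined with $P\rho_{XUE}P=\rho_{XUE}$, and then lift to the smooth case by projecting the optimal smoothing $\bar\rho$ with $P=\id_X\otimes\id_U\otimes\pi_E$. Your handling of the delicate point---that $\sigma_E$ is tied to the support of the \emph{unsmoothed} $\rho_E$, so one must check $P\bar\rho P$ remains both in $\cB^\eps(\rho_{XUE})$ and optimal for $\hminee(\cdot\mid\sigma_U)$---is exactly right: trace-norm contractivity of $A\mapsto PAP$ (e.g.\ via the pinching $A\mapsto PAP+(I-P)A(I-P)$ being a unital CPTP map) takes care of the first, and the monotonicity $\pi_E\le\id_E$ takes care of the second.
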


The following two lemmas state that dropping a quantum register
cannot increase the (smooth) min-entropy.
\begin{lemma} \label{lem:dropquantum_notsmooth}
Let $\rho_{XUQ} \in \dens{\cH_X \otimes \cH_U \otimes {\cal
    H}_Q}$ be a ccq-state. Then,
$$\qhmin(\rho_{XUQ} | \rho_U) \geq \qhmin(\rho_{XU} | \rho_U).$$
\end{lemma}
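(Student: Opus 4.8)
The plan is to argue directly from Definition~\ref{def:qminentropy}, exploiting that $\rho_{XUQ}$ is a ccq-state. Write $\lambda := 2^{-\qhmin(\rho_{XU} \mid \rho_U)}$, so that by definition $\lambda$ is the smallest real number for which $\lambda\,\id_X\otimes\rho_U - \rho_{XU}$ is non-negative. The whole proof amounts to showing that this same $\lambda$ also ``works'' on the enlarged system, i.e.\ that $\lambda\,\id_{XQ}\otimes\rho_U - \rho_{XUQ} \geq 0$ (tensor factors reordered so that $\rho_U$ sits on register $U$). Once this is established, the minimality in the definition of $\qhmin(\rho_{XUQ}\mid\rho_U)$ forces $2^{-\qhmin(\rho_{XUQ}\mid\rho_U)} \leq \lambda = 2^{-\qhmin(\rho_{XU}\mid\rho_U)}$, which is exactly the claimed inequality $\qhmin(\rho_{XUQ}\mid\rho_U) \geq \qhmin(\rho_{XU}\mid\rho_U)$.

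The key step is the operator inequality $\rho_{XUQ} \leq \rho_{XU}\otimes\id_Q$. To see it, expand the ccq-state as $\rho_{XUQ} = \sum_{x,u} P_{XU}(x,u)\,\proj{x}\otimes\proj{u}\otimes\rho_Q^{x,u}$ with each $\rho_Q^{x,u}$ a density operator; since the eigenvalues of $\rho_Q^{x,u}$ lie in $[0,1]$ we have $\rho_Q^{x,u}\leq\id_Q$, hence blockwise $P_{XU}(x,u)\,\proj{x}\otimes\proj{u}\otimes\rho_Q^{x,u} \leq P_{XU}(x,u)\,\proj{x}\otimes\proj{u}\otimes\id_Q$. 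Summing these mutually orthogonal blocks and using $\rho_{XU} = \sum_{x,u}P_{XU}(x,u)\proj{x}\otimes\proj{u}$ (obtained by tracing out $Q$) yields $\rho_{XUQ}\leq\rho_{XU}\otimes\id_Q$.

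Finally I combine the two inequalities by a standard monotonicity observation: if $A\geq B$ and $P\geq 0$ then $A\otimes P\geq B\otimes P$, since $(A-B)\otimes P$ is a tensor product of positive operators and hence positive. Applying this with $A = \lambda\,\id_X\otimes\rho_U$, $B = \rho_{XU}$ and $P = \id_Q$ gives $\lambda\,\id_X\otimes\rho_U\otimes\id_Q \geq \rho_{XU}\otimes\id_Q \geq \rho_{XUQ}$, and $\id_X\otimes\rho_U\otimes\id_Q$ is just $\id_{XQ}\otimes\rho_U$ after reordering tensor factors. This is the desired $\lambda\,\id_{XQ}\otimes\rho_U - \rho_{XUQ}\geq 0$, completing the argument.

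I expect the only genuinely delicate point to be the necessity of the ccq-structure: the analogous statement is \emph{false} for general tripartite states (take $Q$ maximally entangled with $U$ and $\sigma_B=\rho_U$ the fully mixed state), because one cannot in general lift $\lambda\,\id\otimes\sigma_B \geq \tr_Q(\rho)$ to $\lambda\,\id\otimes\id_Q\otimes\sigma_B \geq \rho$ without block-diagonality. So the single inequality $\rho_{XUQ}\leq\rho_{XU}\otimes\id_Q$ is where classicality of $X$ and $U$ is really used; everything else is routine operator-ordering bookkeeping.
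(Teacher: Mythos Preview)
Your proof is correct and follows essentially the same approach as the paper's: both set $\lambda = 2^{-\qhmin(\rho_{XU}\mid\rho_U)}$, exploit the ccq block structure together with $\rho_Q^{x,u}\leq\id_Q$, and conclude $\lambda\,\id_{XQ}\otimes\rho_U \geq \rho_{XUQ}$. The only cosmetic difference is that the paper first reads off the scalar inequalities $\lambda\,p_u \geq p_{xu}$ and rebuilds the operator inequality block by block, whereas you package the same step as the single intermediate inequality $\rho_{XUQ}\leq\rho_{XU}\otimes\id_Q$ and then tensor $\lambda\,\id_X\otimes\rho_U\geq\rho_{XU}$ with $\id_Q$.
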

\begin{proof}
For $\lambda \assign 2^{-\qhmin(\rho_{XU} | \rho_U)}$, we have
by Definition~\ref{def:qminentropy} that $\lambda \cdot \id_X
\otimes \rho_U - \rho_{XU} \geq 0$. Using that both $X$ and $U$ are
classical, we derive that for all $x,u$, it holds $\lambda \cdot p_u -
p_{xu} \geq 0$, where $p_u$ and $p_{xu}$ are shortcuts for the
probabilities $P_U(u)$ and $P_{XU}(x,u)$.
Let the normalized conditional operator $\ol{\rho}_Q^{x,u}$ be the
quantum state conditioned on the event that $X=x$ and $U=u$,
i.e.
\[\sum_{x,u} p_{xu} \, \ol{\rho}_Q^{x,u} \otimes \proj{xu}= \rho_{QXU}.\]
Then,
\[\sum_{x,u} \lambda \cdot p_u \, \ol{\rho}_Q^{x,u} \otimes \proj{xu}
- p_{xu} \, \ol{\rho}_Q^{x,u} \otimes \proj{xu} \geq 0.\]
Because of $\ol{\rho}_Q^{x,u} \leq \id_Q$, we get
\[\sum_{x,u} \lambda \cdot p_u \, \id_Q  \otimes \proj{xu} - p_{xu} \,
\ol{\rho}_Q^{x,u} \otimes \proj{xu} \geq 0.\]
Therefore, $\lambda \cdot \id_{QX} \otimes \rho_U - \rho_{QXU} \geq 0$
holds, from which follows by definition that $\qhmin(\rho_{XUQ} | \rho_U) \geq
-\log(\lambda)$.
\end{proof}

\begin{lemma} \label{lem:dropquantum}
  Let $\rho_{XUQ} \in \dens{\cH_X \otimes \cH_U \otimes {\cal H}_Q}$
  be a ccq-state and let $\varepsilon \geq 0$. Then
$$\hminee(\rho_{XUQ} | \rho_U) \geq \hminee(\rho_{XU} | \rho_U).$$
\end{lemma}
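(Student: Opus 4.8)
The plan is to establish the smooth statement by the same positivity manipulation used for Lemma~\ref{lem:dropquantum_notsmooth}, but carried out on an (almost) optimal smoothing operator. By Definition~\ref{def:qsmoothminentropy}, $\hminee(\rho_{XU}\mid\rho_U)$ is the supremum of $\qhmin(\ol\rho_{XU}\mid\rho_U)$ over all Hermitian non-negative $\ol\rho_{XU}$ with $\trace{\ol\rho_{XU}}\leq 1$ and $\dist{\ol\rho_{XU},\rho_{XU}}\leq 2\eps$. So it suffices to show that each such $\ol\rho_{XU}$ can be turned into an operator $\ol\rho_{XUQ}\in\cB^{\eps}(\rho_{XUQ})$ with $\qhmin(\ol\rho_{XUQ}\mid\rho_U)\geq\qhmin(\ol\rho_{XU}\mid\rho_U)$, and then take the supremum over $\ol\rho_{XU}$.

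First I would introduce the quantum operation $\cE$ which, on an input living on $\cH_X\otimes\cH_U$, dephases the $XU$-register in the basis $\{\ket{x}\otimes\ket{u}\}$ and then adjoins a register $Q$ in state $\rho_Q^{x,u}$, the conditional state of $Q$ in the ccq-state $\rho_{XUQ}$ given $X\!=\!x,U\!=\!u$ (and an arbitrary density operator for those $(x,u)$ outside the support of $\rho_{XU}$). Since $\rho_{XU}$ is already diagonal in this basis, $\cE(\rho_{XU})=\rho_{XUQ}$. Setting $\ol\rho_{XUQ}\assign\cE(\ol\rho_{XU})$, membership in $\cB^{\eps}(\rho_{XUQ})$ is immediate: $\cE$ is trace-preserving, so $\trace{\ol\rho_{XUQ}}=\trace{\ol\rho_{XU}}\leq 1$, and $\dist{\ol\rho_{XUQ},\rho_{XUQ}}=\dist{\cE(\ol\rho_{XU}),\cE(\rho_{XU})}\leq\dist{\ol\rho_{XU},\rho_{XU}}\leq 2\eps$ by the non-increase of the trace distance under quantum operations (\cite{Ruskai94}; this holds for sub-normalized inputs too, being really a statement about the trace norm of Hermitian operators).

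It remains to see $\qhmin(\cE(\ol\rho_{XU})\mid\rho_U)\geq\qhmin(\ol\rho_{XU}\mid\rho_U)$, which I would obtain by factoring $\cE=\cE_Q\circ\mathcal{D}_{XU}$, with $\mathcal{D}_{XU}=\mathcal{D}_X\otimes\mathcal{D}_U$ the dephasing and $\cE_Q$ the adjoining of the $\rho_Q^{x,u}$'s. Writing $\lambda\assign 2^{-\qhmin(\ol\rho_{XU}\mid\rho_U)}$, so that $\lambda\cdot\id_X\otimes\rho_U-\ol\rho_{XU}\geq 0$, apply the positive map $\mathcal{D}_X\otimes\mathcal{D}_U$; using that $\mathcal{D}_X$ is unital and that $\rho_U$ is fixed by $\mathcal{D}_U$ (it is classical, $\rho_{XUQ}$ being a ccq-state) gives $\lambda\cdot\id_X\otimes\rho_U-\mathcal{D}_{XU}(\ol\rho_{XU})\geq 0$, i.e.\ the dephasing does not decrease $\qhmin(\,\cdot\mid\rho_U)$. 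Now $\mathcal{D}_{XU}(\ol\rho_{XU})$ is a (sub-normalized) ccq-state classical on both $X$ and $U$, so adjoining $Q$ conditioned on $(x,u)$ is exactly the situation of Lemma~\ref{lem:dropquantum_notsmooth}; its proof uses only that $X,U$ are classical, that the conditioning operator $\rho_U$ is diagonal, that $\rho_Q^{x,u}\leq\id_Q$, and linearity — none of which needs normalization — so it applies verbatim and yields $\qhmin(\cE_Q(\mathcal{D}_{XU}(\ol\rho_{XU}))\mid\rho_U)\geq\qhmin(\mathcal{D}_{XU}(\ol\rho_{XU})\mid\rho_U)$. Chaining the two inequalities and using $\cE=\cE_Q\circ\mathcal{D}_{XU}$ finishes the argument.

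I expect the only real friction to be bookkeeping rather than mathematics: keeping straight that in $\qhmin(\rho_{XUQ}\mid\rho_U)$ the conditioning register is $U$ alone (so $Q$ sits in the ``$A$''-part together with $X$, matching the phrasing of Lemma~\ref{lem:dropquantum_notsmooth}), and checking that the non-smooth dropping lemma and the trace-distance monotonicity both survive passage to sub-normalized operators so that they may be invoked on the smoothing operator. Note that, because the dephasing $\mathcal{D}_{XU}$ is absorbed into $\cE$, one need not separately appeal to \cite[Remark~3.2.4]{Renner05} to assume the smoothing operator is classical.
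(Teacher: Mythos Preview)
Your proposal is correct and follows essentially the same route as the paper: pick an optimal smoothing of $\rho_{XU}$, extend it to $XUQ$ by attaching the conditional states $\rho_Q^{x,u}$, invoke the non-smooth Lemma~\ref{lem:dropquantum_notsmooth} on this classical extension, and verify membership in $\cB^{\eps}(\rho_{XUQ})$. The only cosmetic difference is that the paper cites \cite[Remark~3.2.4]{Renner05} to assume the smoothing is already classical on $\cH_X\otimes\cH_U$ and then checks $\dist{\sigma_{XUQ},\rho_{XUQ}}=\dist{\sigma_{XU},\rho_{XU}}$ directly, whereas you absorb the dephasing into your CPTP map $\cE$ and use trace-distance monotonicity --- a slightly more self-contained packaging of the same argument.
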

\begin{proof}
  After the remark after Definition~\ref{def:qsmoothminentropy} above,
  there exists $\sigma_{XU} \in \ball{\varepsilon}(\rho_{XU})$
  classical on $\cH_X \otimes \cH_U$ such that $\hminee(\rho_{XU} |
  \rho_U) = \qhmin(\sigma_{XU} | \sigma_U)$. Because both $X$
  and $U$ are classical, we can write $\sigma_{XU} = \sum_{x,u} p_{xu}
  \proj{xu}$ and extend it to obtain $\sigma_{XUQ} \assign \sum_{x,u}
  p_{xu} \proj{xu} \otimes \ol{\rho}_Q^{x,u}$.
  Lemma~\ref{lem:dropquantum_notsmooth} from above yields $\H_{\rm
    min}(\sigma_{XU} | \sigma_U) \leq \qhmin(\sigma_{XUQ}
  | \sigma_U)$.  We have by construction that $\dist{ \sigma_{XUQ},
  \rho_{XUQ} } = \dist{ \sigma_{XU},\rho_{XU} } \leq 2 \varepsilon$.
  Therefore, $\sigma_{XUQ} \in \ball{\varepsilon}(\rho_{XUQ})$ and
  $\qhmin(\sigma_{XUQ} | \sigma_U) \leq \hminee(\rho_{XUQ}
  | \rho_U).$
\end{proof}

\section[Two-Universal Hashing and Privacy Amplification] {Two-Universal Hashing and Privacy Amplification against Quantum Adversaries} \label{sec:pa}
\subsection{History and Setting of Privacy Amplification}
\index{privacy amplification|(}
Assume two parties Alice and Bob share some information $X$ which is
only partly secure in the sense that an adversary Eve has some partial
knowledge about it. \emph{Privacy Amplification}, introduced by Bennett,
Brassard, and Robert \cite{BBR88}, is the art of transforming this
information $X$ into a highly secure key $K$ by public discussion. The
honest parties want to end up with an almost uniformly distributed key
$K$ about which Eve has only negligible information given the
communication.

A common way to achieve this is to have Alice pick a hash function $f$
at random from a two-universal class of hashing functions (see next
section for the definition), apply it to $X$ and announce it to Bob,
who applies it to $X$ as well. Due to the randomizing properties of a
two-universal function, the output $f(X)$ is close to uniformly
distributed from Eve's point of view. As shown in~\cite{BBR88} and by
Impagliazzo, Levin, Luby~\cite{ILL89} and Bennett, Brassard,
Cr\'epeau, and Maurer~\cite{BBCM95}, the classical \emph{privacy
  amplification theorem} or \emph{left-over hash lemma} (see
Corollary~\ref{thm:classicalpa} below) states that if Eve has some
classical knowledge $W$ about $X$, a secure key of length roughly the
uncertainty of Eve about $X$ (measured in terms of
min-\index{entropy!min-}entropy) can be extracted by
\index{two-universal hashing}two-universal hashing. It is pointed out
in \cite{RW05}, that the maximum amount of extractable randomness is
essentially given by the conditional smooth min-entropy
$\H_{\infty}^{\eps}(X | W)$. \index{hashing|see {two-universal hashing}}

It is interesting to investigate the case when Eve holds quantum
information about $X$. This scenario has been considered by K\"onig,
Maurer, and Renner~\cite{KMR03, RK05, Renner05} and the results
reproduced below show that two-universal hashing works just as well
against quantum as against classical adversaries.

We note that unlike in the classical case,
where many other forms of randomness extractors are known,
two-universal hashing is essentially the only way to perform privacy
amplification against quantum adversaries.\footnote{In a recent paper, K\"onig
and Terhal \cite{KT06} exhibit some extractors which work against
quantum adversaries, but the parameters are far from the classical
ones.} This tool is one of the key ingredients in all protocols
presented in this thesis. It has been widely used in other
applications as well, for example in security proofs of
quantum-key-distribution schemes by Christandl, Renner, Ekert,
Kraus, and Gisin~\cite{CRE04, KGR05, RGK05, Renner05}.

\subsection{Two-Universal Hashing}
An important tool we use is \index{two-universal hashing}\univ\ hashing.
\begin{definition} \label{def:two-universal}
A class \chf{n} of hashing functions from $\nbit$ to $\set{0,1}^{\ell}$ is 
called {\em \univ}, if for any pair $x, y\in\{0,1\}^n$ with $x \neq y$,
and $F$ uniformly chosen from \chf{n}, it holds that
\[
\P\bigl[F(x)=F(y)\bigr] \leq \frac{1}{2^{\ell}}.
\]
\end{definition}
We can also define a slightly stronger notion of \index{two-universal
hashing!strongly}two-universality as follows:
\begin{definition} \label{def:strongly-two-universal}
  A class \chf{n} of hashing functions from $\nbit$ to
  $\set{0,1}^{\ell}$ is called {\em strongly \univ}, if for any pair
  $x, y\in\{0,1\}^n$ with $x \neq y$, and $F$ uniformly chosen from
  \chf{n}, the random variables $F(x)$ and $F(y)$ are independent and
  uniformly distributed over $\set{0,1}^\ell$.
\end{definition}
Several \univ\ and strongly \univ\ classes of hashing functions are
such that evaluating and picking a function uniformly and at random in
$\chf{n}$ can be done efficiently, as pointed out by Wegman and Carter~\cite{WC77,WC79}.

\subsection{Privacy Amplification against Quantum Adversaries}
In the following, we consider the situation where a hash function
is picked randomly from $\chf{n}$ and applied to a classical value $X
\in \nbit$ which is correlated with a quantum register
$\cH_E$. Formally, starting with the cq-state $\rho_{XE} = \sum_{x \in
  \nbit} P_X(x) \, \proj{x} \otimes \rho_E^x$, we obtain
\begin{equation}  \label{eq:paterm}
\rho_{F(X) F E} = \sum_{f \in \chf{n}} \sum_{z \in
  \set{0,1}^{\ell}} \proj{z} \otimes \proj{f} \otimes \!\! \sum_{x \in
  f^{-1}(z)} \!\! P_X(x)  \, \rho_E^x.
\end{equation}
The following privacy-amplification theorem in the
presence of quantum adversaries was first derived in \cite{RK05}. The
version below is from \cite[Corollary 5.6.1]{Renner05}\footnote{Note
  that in \cite{Renner05}, the distance from uniform is defined in
  terms of the trace-norm distance which is twice the variational
  distance used in this thesis.}.
\begin{theorem}[Privacy Amplification~\cite{Renner05}]\label{thm:paoriginal}
  Let $\rho_{XB} \in \dens{\cH_X \otimes \cH_B}$ be a cq-state, where
  $X$ takes values in $\nbit$. Let $\chf{n}$ be a two-universal family
  of hash functions from $\nbit$ to $\set{0,1}^{\ell}$, and let $\eps
  \geq 0$. Then, for the ccq-state $\rho_{F(X) FB}$ defined by
  \eqref{eq:paterm}, it holds
\[ \dist{ \rho_{F(X) FB} , \I \otimes \rho_{FB} } \leq \eps + \frac12
2^{-\frac12 (\hminee(\rho_{XB} | B) - \ell)}.
\]
\end{theorem}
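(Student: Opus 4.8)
\medskip\noindent\textbf{Proof idea.}\ The plan is to follow the two-stage scheme behind Renner's quantum left-over hash lemma: first I would reduce the smooth statement to the non-smooth case $\eps=0$, and then prove the non-smooth bound by a Cauchy--Schwarz reduction to a ``collision'' quantity that two-universality keeps small. For the first stage, I would use Definition~\ref{def:qsmoothminentropy} together with the remark after it to pick a Hermitian non-negative $\ol{\rho}_{XB}$ with $\trace{\ol{\rho}_{XB}}\le 1$, classical on $\cH_X$, lying in the $\eps$-ball around $\rho_{XB}$ and (up to arbitrarily small slack) realizing $\qhmin(\ol{\rho}_{XB}\mid B)=\hminee(\rho_{XB}\mid B)$. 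The map of~\eqref{eq:paterm} sending $\rho_{XB}$ to $\rho_{F(X)FB}$---adjoin a uniform classical $F$, copy the deterministic value $F(X)$ into a fresh classical register---is trace-preserving and completely positive, and so is $\rho_{XB}\mapsto\I\otimes\rho_{FB}$ (trace out $X$, adjoin uniform $F$ and a uniform $\ell$-bit string). Hence by Ruskai's monotonicity of the trace distance~\cite{Ruskai94}, both $\dist{\rho_{F(X)FB},\ol{\rho}_{F(X)FB}}$ and $\dist{\I\otimes\rho_{FB},\I\otimes\ol{\rho}_{FB}}$ are at most $\dist{\rho_{XB},\ol{\rho}_{XB}}$, so a triangle inequality reduces the claim to the case $\eps=0$ for $\ol{\rho}$, the smoothing accounting for the additive $\eps$ term. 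Exactly as in the proof of Lemma~\ref{lemma:ESL}, this $\eps=0$ argument works verbatim for the non-normalized $\ol{\rho}$, since only $\trace{\ol{\rho}_{XB}}\le 1$ is ever used.

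For the second stage, writing $\rho_{XB}=\sum_x P_X(x)\proj{x}\otimes\rho_B^x$, the goal is to show that for every $\sigma_B\in\dens{\cH_B}$
\[
  \dist{\rho_{F(X)FB},\I\otimes\rho_{FB}}\ \le\ \tfrac12\,2^{-\frac12\big(\qhmin(\rho_{XB}\mid\sigma_B)-\ell\big)},
\]
after which taking the supremum over $\sigma_B$ gives the theorem (in the case $\eps=0$). I would write $\rho_{F(X)FB}-\I\otimes\rho_{FB}=\sum_f\frac1{|\chf{n}|}\proj{f}\otimes M^f$ with each $M^f$ block-diagonal in the hash output, so that $\dist{\rho_{F(X)FB},\I\otimes\rho_{FB}}=\frac12\,\E_f\!\big[\trace{|M^f|}\big]$ and it suffices to bound $\E_f[\trace{|M^f|}]$. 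The first step is a Cauchy--Schwarz/AM--GM passage to the Hilbert--Schmidt norm: with $N\assign\id_{F(X)}\otimes\sigma_B$ one has $\trace{|M^f|}\le\sqrt{\trace{N}}\,\big\|N^{-1/4}M^fN^{-1/4}\big\|_2$ where $\trace{N}=2^\ell\trace{\sigma_B}=2^\ell$; here one restricts to $\operatorname{supp}(\sigma_B)$, which is harmless since we may assume $\qhmin(\rho_{XB}\mid\sigma_B)$ finite, forcing every $\rho_B^x$ into that support. The second step expands $\big\|N^{-1/4}M^fN^{-1/4}\big\|_2^2$: because the diagonal blocks of $\rho_{F(X)B}$ given $F=f$ sum over the hash values to $\rho_B$, all cross terms collapse into a single $-2^{-\ell}\trace{(\sigma_B^{-1/4}\rho_B\sigma_B^{-1/4})^2}$.

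The third step takes the expectation over $F$: expanding the remaining ``collision'' sum over pairs of preimages and splitting off the diagonal, the off-diagonal pairs contribute at most $2^{-\ell}\trace{(\sigma_B^{-1/4}\rho_B\sigma_B^{-1/4})^2}$---this is precisely where two-universality (Definition~\ref{def:two-universal}) is used, via $\Pr_F[F(x)=F(x')]\le 2^{-\ell}$ for $x\ne x'$ together with positivity of $\trace{\tilde\rho^x\tilde\rho^{x'}}$ for $\tilde\rho^x\assign\sigma_B^{-1/4}\rho_B^x\sigma_B^{-1/4}$---and this term cancels exactly against the negative term from the second step, leaving $\sum_x P_X(x)^2\trace{(\tilde\rho^x)^2}$. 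The fourth step bounds this last quantity by $\lambda\assign 2^{-\qhmin(\rho_{XB}\mid\sigma_B)}$: from $\lambda\,\id_X\otimes\sigma_B\ge\rho_{XB}$ (Definition~\ref{def:qminentropy}) one reads off $P_X(x)\rho_B^x\le\lambda\sigma_B$ for every $x$; conjugating by $\sigma_B^{-1/4}$ gives $P_X(x)\tilde\rho^x\le\lambda\sigma_B^{1/2}$, and then $\trace{A^2}\le\trace{AB}$ for $0\le A\le B$ yields $P_X(x)^2\trace{(\tilde\rho^x)^2}\le\lambda\,P_X(x)\trace{\rho_B^x}$, whose sum over $x$ equals $\lambda\,\trace{\rho_{XB}}\le\lambda$. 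Combining the four steps with Jensen's inequality (Lemma~\ref{lem:jensen}) turns $\E_f[\|N^{-1/4}M^fN^{-1/4}\|_2^2]\le\lambda$ into $\E_f[\trace{|M^f|}]\le\sqrt{2^\ell}\,\sqrt{\lambda}=2^{-\frac12(\qhmin(\rho_{XB}\mid\sigma_B)-\ell)}$, which is the desired non-smooth bound.

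The hard part will lie entirely in the second stage and be operator-algebraic in nature: the $\sigma_B$-weighted Cauchy--Schwarz step and the step converting $P_X(x)\rho_B^x\le\lambda\sigma_B$ into a bound on $\trace{(\tilde\rho^x)^2}$ both require careful handling of non-commuting positive operators---including dealing with a possibly singular $\sigma_B$ by restricting to its support---and one must track the two $2^{-\ell}\trace{(\sigma_B^{-1/4}\rho_B\sigma_B^{-1/4})^2}$ contributions precisely so that their exact cancellation leaves the final estimate free of any spurious additive constant. Should a fully rigorous argument be desired, the statement can also simply be quoted as~\cite[Corollary~5.6.1]{Renner05}.
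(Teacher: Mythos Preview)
The paper does not give its own proof of this theorem; it simply quotes it as \cite[Corollary~5.6.1]{Renner05}. Your sketch is precisely Renner's two-stage proof (smoothing via the triangle inequality and monotonicity of trace distance, then the $\sigma_B$-weighted Cauchy--Schwarz / collision argument with two-universality), so your approach and the cited reference coincide. One minor point to watch in the first stage: with the $\eps$-ball convention of Definition~\ref{def:qsmoothminentropy} (radius $2\eps$ in the paper's half-trace-norm distance), a naive double application of the triangle inequality costs $2\cdot 2\eps$ rather than~$\eps$; getting the exact additive~$\eps$ requires the slightly sharper bookkeeping Renner uses, but this does not affect the structure of your argument.
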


For large parts of this thesis, slightly weaker forms of this theorem
are used. These are derived in the following.
\begin{corollary} \label{thm:pasmooth}
  Let $\rho_{XUE}$ be a ccq-state, where $X$ takes values in $\nbit$, 
  $U$ in the finite domain $\cU$ and register $E$ contains q qubits.
  Let $\chf{n}$ be a two-universal family of hash functions from
  $\nbit$ to $\set{0,1}^{\ell}$, and let $\eps \geq 0$. Then, for the
  cccq-state $\rho_{F(X) FUE}$ defined analogous to \eqref{eq:paterm}, it holds
\begin{align}
  \dist{ \rho_{F(X)FUE} , \I \otimes \rho_{F U E} } &\leq \frac{1}{2} \,
  2^{-\frac{1}{2}\big(\H_{\infty}^{\varepsilon}
(X | U)-q-\ell\big)} + \varepsilon. \label{dbound}
\end{align}
\end{corollary}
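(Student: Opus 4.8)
The plan is to derive Corollary~\ref{thm:pasmooth} from the general privacy-amplification Theorem~\ref{thm:paoriginal} by collecting the classical register $U$ and the quantum register $E$ into a single side-information register $B := UE$, and then bounding the quantum conditional smooth min-entropy of the resulting state below by the purely classical quantity $\hiee{X|U}-q$. Concretely, I would first apply Theorem~\ref{thm:paoriginal} with $\cH_B := \cH_U\otimes\cH_E$: since the cccq-state $\rho_{F(X)FUE}$ is exactly the state produced from the cq-state $\rho_{X(UE)}$ by the construction~\eqref{eq:paterm}, the theorem gives at once
\[
\dist{\rho_{F(X)FUE},\,\I\otimes\rho_{FUE}} \;\le\; \eps + \frac12\,2^{-\frac12\left(\hminee(\rho_{XUE}\mid UE)-\ell\right)},
\]
so that~\eqref{dbound} follows once we show $\hminee(\rho_{XUE}\mid UE)\ge\hiee{X|U}-q$.

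The core of the argument is this min-entropy bound, obtained by chaining three earlier facts. First, the chain rule (Lemma~\ref{lem:qchainrule}), applied with $\sigma_E$ the fully mixed state on the image of $\rho_E$ --- for which $\qhmax(\rho_E)\le q$ because $E$ consists of $q$ qubits --- together with the observation that the supremum defining $\hminee(\rho_{XUE}\mid UE)$ also ranges over product states $\sigma_U\otimes\sigma_E$, yields $\hminee(\rho_{XUE}\mid UE)\ge\hminee(\rho_{XUE}\mid U)-q$, where the $U$-conditioning is optimised over $\sigma_U$. Second, dropping the quantum register $E$ can only decrease the smooth min-entropy (Lemma~\ref{lem:dropquantum}), so $\hminee(\rho_{XUE}\mid U)\ge\hminee(\rho_{XUE}\mid\rho_U)\ge\hminee(\rho_{XU}\mid\rho_U)$. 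Third, for the purely classical cc-state $\rho_{XU}$ one reads off from Definitions~\ref{def:qminentropy} and~\ref{def:qsmoothminentropy} that $\hminee(\rho_{XU}\mid\rho_U)\ge\hiee{X|U}$: for block-diagonal $\bar\rho_{XU}$ the operator inequality $\lambda\,\id_X\otimes\rho_U-\bar\rho_{XU}\ge 0$ is equivalent to $\bar P_{XU}(x,u)\le\lambda\,P_U(u)$ for all $x,u$, hence $\qhmin(\bar\rho_{XU}\mid\rho_U)=\min_{x,u}-\log\bigl(\bar P_{XU}(x,u)/P_U(u)\bigr)$, and (restricting the smoothing ball to classical sub-distributions, as permitted by \cite[Remark~3.2.4]{Renner05}) $\cB^{\eps}(\rho_{XU})$ contains every non-normalised $P_{XU\ev}$ arising from an event $\ev$ with $\Pr[\ev]\ge 1-\eps$, which is exactly the family over which $\hiee{X|U}$ optimises. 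Combining the three gives $\hminee(\rho_{XUE}\mid UE)\ge\hiee{X|U}-q$, and substituting into the display above proves~\eqref{dbound}.

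The step I expect to require the most care is the bookkeeping of the smoothing parameter: one must check that the same $\eps$ passes unchanged through Lemmas~\ref{lem:qchainrule} and~\ref{lem:dropquantum} and through the classical identification, so that the $\eps$ appearing in Theorem~\ref{thm:paoriginal} is literally the $\eps$ in~\eqref{dbound} with no additional loss. This works out because those lemmas are stated with a fixed $\eps$ on both sides, and because Renner's smoothing ball $\cB^{\eps}$ (of variational radius $2\eps$) is at least as generous as the event-based smoothing underlying $\hiee{X|U}$, so the classical step loses nothing either. If one were willing to absorb a modest change in $\eps$, the third step could alternatively be handled via the operational characterisation of smooth min-entropy, but the clean version above is preferable.
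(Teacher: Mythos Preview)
Your proposal is correct and follows essentially the same route as the paper's proof: apply Theorem~\ref{thm:paoriginal} with $B=UE$, then lower-bound $\hminee(\rho_{XUE}\mid UE)$ via the chain rule (Lemma~\ref{lem:qchainrule}) with $\sigma_E$ the fully mixed state on the image of $\rho_E$, drop the quantum register using Lemma~\ref{lem:dropquantum}, and finally identify $\hminee(\rho_{XU}\mid\rho_U)$ with the classical $\hiee{X\mid U}$. The only cosmetic difference is that the paper goes directly through $\hminee(\rho_{XUE}\mid\rho_U\otimes\sigma_E)$ and $\hminee(\rho_{XUE}\mid\rho_U)$ with the specific marginal $\rho_U$ rather than first optimising over $\sigma_U$, and it states the final classical identification as an equality rather than an inequality; neither affects the argument.
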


Recall that by the definition of the trace-distance, we have that
if the rightmost term of (\ref{dbound}) is negligible, i.e.  say
smaller than $2^{-\lambda n}$, then this situation is $2^{-\lambda
  n}$-close to the ideal situation where $F(X)$ is perfectly uniform
and independent of $F, U$ and $\regE$. In particular, replacing $F(X)$
by an independent and uniformly distributed bit results in a common
state which essentially cannot be distinguished from the original one.

\begin{proof}
In our case, the quantum register
$B$ from Theorem~\ref{thm:paoriginal} consists of a classical part $U$ and a
quantum part $E$. Denoting by $\sigma_E$ the fully mixed state on
the image of $\rho_E$, we only need to consider the term in the
exponent to derive Theorem~\ref{thm:pasmooth} as follows
\begin{align}
 \hminee(\rho_{XUE} | UE) \nonumber &\geq \hminee(\rho_{XUE} |
  \rho_{U} \otimes \sigma_E) \nonumber\\
&\geq \hminee(\rho_{XUE} | \rho_{U}) - \qhmax(\rho_{E}) \label{eq:chainrule}\\
&\geq \hminee(\rho_{XU} | \rho_U) - \qhmax(\rho_{E}) \label{eq:lemma}\\
&= \hiee{X | U} - q. \nonumber
\end{align}
The first inequality follows by Definition~\ref{def:qsmoothminentropy}
of $\hminee$ as supremum over all $\sigma_{UE}$. Inequality
\eqref{eq:chainrule} is the chain rule for smooth min-entropy
(Lemma~\ref{lem:qchainrule}). Inequality~\eqref{eq:lemma} uses
that the smooth min-entropy cannot decrease when dropping the quantum
register which is proven in Lemma~\ref{lem:dropquantum} from the last
section. The last step follows by assumption about the quantum
register and observing that the state $\rho_{XU}$
is classical and the quantum Definition~\ref{def:qsmoothminentropy}
therefore reduces to classical smooth min-entropy.
\end{proof}

The following corollary is a direct consequence of
Corollary~\ref{thm:pasmooth}. 
In Chapter~\ref{chap:qbc}, this lemma will be useful for proving the
binding condition of our commitment scheme. Recall that for $X \in
\nbit$, $\ball{\delta n}(X)$ denotes the set of all $n$-bit strings at
\index{Hamming distance}Hamming distance at most $\delta n$ from $X$ and $\ball{\delta n}
\assign |\ball{\delta n}(X)|$ is the number of such strings.
\begin{corollary}\label{cor:guess}
  Let $\rho_{X U \regE}$ be a ccq-state, where $X$ takes values in
  $\nbit$, $U$ in the finite domain $\cU$ and register $\regE$
  contains $q$ qubits.  Let $\hat{X}$ be a guess for $X$ obtained by
  learning $U$ and measuring $\regE$, and let $\eps \geq 0$. Then, for
  all $\delta< \frac{1}{2}$ it holds that
\[ \P\bigl[ \hat{X} \in \ball{\delta n}(X) \bigr] \leq 2^{-\frac{1}{2}
  (\hiee{X|U}-q-1) + \log(\ball{\delta n})} + 2\eps \cdot \ball{\delta n}.
\]
\end{corollary}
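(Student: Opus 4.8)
The plan is to reduce Corollary~\ref{cor:guess} to Corollary~\ref{thm:pasmooth} by a clever choice of the two-universal hash family, exploiting the fact that a guess $\hat X$ landing in a small Hamming ball around $X$ is essentially a guessing event whose probability is governed by privacy amplification. First I would observe that the guessing strategy producing $\hat X$ is a quantum operation on $U$ and $\regE$, so the event $\{\hat X \in \ball{\delta n}(X)\}$ is an event defined by a measurement; I want to bound its probability by relating it to how well $F(X)$ can be distinguished from uniform for a suitable $F$.

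The key step is to choose $\chf{n}$ to be a two-universal family of hash functions from $\nbit$ to $\set{0,1}^{\ell}$ with $\ell$ set so that $2^{\ell}$ is roughly $\ball{\delta n}$ — concretely take $\ell = \lceil \log \ball{\delta n}\rceil$, so that $\ball{\delta n} \le 2^{\ell}$. Then, after the sender picks $F$ and reveals it, a guess $\hat X \in \ball{\delta n}(X)$ yields a guess for $F(X)$ that is correct with probability roughly $2^{\ell}/|{\rm range}|$… more precisely, the adversary who on top of $U,\regE$ also learns $F$ can compute $F(\hat X)$, and conditioned on $\hat X \in \ball{\delta n}(X)$ the string $F(X)$ lies in the set $F(\ball{\delta n}(X))$ of size at most $\ball{\delta n}$. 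So $\Pr[\hat X \in \ball{\delta n}(X)]$ is at most $\ball{\delta n}$ times the maximal probability with which any fixed value $z$ equals $F(X)$ from the adversary's viewpoint, which by the operational interpretation of trace distance is at most $\ball{\delta n}\cdot\bigl(2^{-\ell} + 2\,\dist{\rho_{F(X)FUE},\I\otimes\rho_{FUE}}\bigr)$. Plugging in the bound from Corollary~\ref{thm:pasmooth},
\[
\dist{\rho_{F(X)FUE},\I\otimes\rho_{FUE}} \le \tfrac12\, 2^{-\frac12(\hiee{X|U}-q-\ell)} + \eps,
\]
and using $\ball{\delta n}\le 2^{\ell}\le 2\,\ball{\delta n}$ to absorb the $2^{-\ell}$ terms, gives
\[
\Pr[\hat X \in \ball{\delta n}(X)] \le \ball{\delta n}\cdot 2^{-\ell} + \ball{\delta n}\cdot 2^{-\frac12(\hiee{X|U}-q-\ell)} + 2\eps\cdot\ball{\delta n},
\]
and since $2^{-\ell}\le 2^{-\log\ball{\delta n}} \le 1/\ball{\delta n}$ the first term is at most $1$ — which is not useful, so I must be more careful: I should instead take $\ell$ somewhat larger than $\log\ball{\delta n}$, say $\ell = \lceil\log\ball{\delta n}\rceil + \mu$, so that the ``trivial guessing'' term $\ball{\delta n}\cdot 2^{-\ell} \le 2^{-\mu}$ is genuinely small, while the privacy-amplification term becomes $\ball{\delta n}\cdot 2^{-\frac12(\hiee{X|U}-q-\log\ball{\delta n}-\mu)}$; optimizing $\mu$ against the stated bound recovers the $-\tfrac12(\hiee{X|U}-q-1)+\log(\ball{\delta n})$ exponent (the constant $1$ coming from the bookkeeping in the exponent when $\mu$ is chosen to balance). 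Re-examining, the cleanest route is to bound $\Pr[\hat X\in\ball{\delta n}(X)]$ directly by a union-type argument over the at most $\ball{\delta n}$ possible correct values together with a single application of Corollary~\ref{thm:pasmooth} with $\ell$ chosen to make the right-hand side symmetric, yielding exactly the stated inequality after collecting the $2\eps\cdot\ball{\delta n}$ additive error.

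The main obstacle I anticipate is the careful bookkeeping in the exponent: one must choose $\ell$ (equivalently the ``number of extracted bits'') so that the $2^{-\ell}$ guessing contribution and the $\ball{\delta n}\cdot 2^{-\frac12(\cdots)}$ privacy-amplification contribution combine into the single term $2^{-\frac12(\hiee{X|U}-q-1)+\log\ball{\delta n}}$, and to correctly track the factor-of-two conventions between trace distance and variational distance (flagged in the footnote to Theorem~\ref{thm:paoriginal}) so that the additive error is precisely $2\eps\cdot\ball{\delta n}$ and not, say, $4\eps\cdot\ball{\delta n}$. The conceptual content — that guessing $X$ up to Hamming distance $\delta n$ is equivalent to guessing $\log\ball{\delta n}$ fewer bits of a privacy-amplified key — is straightforward; the delicacy is purely in making the arithmetic land on the claimed constants.
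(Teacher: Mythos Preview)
Your high-level plan --- reduce to privacy amplification by designing a guessing strategy for $F(X)$ from $\hat X$ --- is exactly right, but the choice of $\ell$ is where you go wrong. The paper does \emph{not} take $\ell \approx \log \ball{\delta n}$; it takes $\ell = 1$. With a single output bit, the argument is clean and lands on the stated constants without any optimization.

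Concretely, the paper's strategy for guessing the single bit $F(X)$ from $\hat X$ and $F$ is: sample $X'$ uniformly from $\ball{\delta n}(\hat X)$ and output $F(X')$. Then $X' = X$ with probability exactly $p_{\mathrm{succ}}/\ball{\delta n}$ (where $p_{\mathrm{succ}} = \Pr[\hat X \in \ball{\delta n}(X)]$), and when $X' \neq X$ two-universality gives $\Pr[F(X') = F(X)] = \tfrac12$. Hence
\[
\Pr[F(X') = F(X)] = \frac{p_{\mathrm{succ}}}{\ball{\delta n}} + \Bigl(1 - \frac{p_{\mathrm{succ}}}{\ball{\delta n}}\Bigr)\cdot\frac12 = \frac12 + \frac{p_{\mathrm{succ}}}{2\,\ball{\delta n}}\,.
\]
On the other hand the optimal probability of guessing a binary $F(X)$ given $F,U,\regE$ is at most $\tfrac12 + \dist{\rho_{F(X)FUE},\I\otimes\rho_{FUE}}$, and Corollary~\ref{thm:pasmooth} with $\ell=1$ bounds this distance by $\tfrac12\,2^{-\frac12(\hiee{X|U}-q-1)} + \eps$. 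Combining and rearranging gives the claim directly, with no balancing parameter needed.

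Your route with $\ell \approx \log \ball{\delta n}$ (or $\log\ball{\delta n} + \mu$) is strictly worse: the privacy-amplification error in Corollary~\ref{thm:pasmooth} scales as $2^{\ell/2}$, so every extra extracted bit costs you a factor $\sqrt 2$ in the main term, while the gain from shrinking the ``trivial guessing'' contribution $2^{-\ell}$ is already fully captured at $\ell=1$ via the factor $(1 - 2^{-\ell})^{-1} = 2$. This is why your bookkeeping refused to close and why the ``optimize over $\mu$'' step would not recover the stated exponent. The missing idea is simply that one extracted bit suffices.
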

In other words, given some classical knowledge $U$ and a quantum
memory of $q$ qubits arbitrarily correlated with a classical random
variable $X$, the probability to find $\hat{X}$ at Hamming distance at
most $\delta n$ from $X$ where $nh(\delta)< \frac{1}{2}
(\hiee{X|U})-q)$ is small.
\begin{proof}
  Here is a strategy to try to bias $F(X)$ when given $\hat{X}$ and
  $F\in_R \chf{n}$: Sample $X' \in_R \ball{\delta n}(\hat{X})$ and
  output $F(X')$.  Note that, using $p_{\text{succ}}$ as a short hand
  for the probability $\P\big[ \hat{X} \in \ball{\delta n}(X) \big]$
  to be bounded,
\begin{align*}
\P\bigl[ F(X')=F(X) \bigr]
&=  \frac{p_{\text{succ}}}{\ball{\delta n}} +  
  \bigg(1- \frac{p_{\text{succ}}}{\ball{\delta n}} \bigg) \frac{1}{2} \\[1ex]
&= \frac{1}{2} + \frac{p_{\text{succ}}}{2 \cdot \ball{\delta n}},
\end{align*}
where the first equality follows from the fact that if \mbox{$X'\neq
  X$} then, as $\chf{n}$ is \univ, $\P\left[ F(X)=F(X')
\right]=\frac{1}{2}$.  Note that, given $F$ and $U$ and being allowed
to measure $\regE$, the probability of correctly guessing a binary
$F(X)$ is upper bounded by 
$\frac{1}{2}+\dist{ \rho_{F(X)F U \regE},\I\otimes\rho_{F U \regE} }$~\cite{FG99}.  In
combination with Corollary~\ref{thm:pasmooth} (with $\ell=1$) the above results in
\begin{equation*}
\frac{1}{2} + \frac{p_{\text{succ}}}{2 \cdot \ball{\delta n}} \leq \frac{1}{2}+ 
  \frac{1}{2} 2^{-\frac{1}{2}({\hiee{X|U}-q-1}) }+ \eps
\end{equation*}
and the claim follows by rearranging the terms. 
\end{proof}

\subsection{Classical Privacy Amplification}
The classical privacy-amplification theorem follows as special case
from the results above. When there is no quantum correlation, we
(almost) recover the well-known classical \emph{\index{left-over hash lemma}left-over hash lemma}
\cite{ILL89, BBCM95, HILL99}:
\begin{corollary}\label{thm:classicalpa}
  Let $X$ be a random variable over $\nbit$, and let $F$ denote the
  uniform choice of a hash function in a two-universal family of
  hash functions $\chf{n}$ mapping from $\nbit$ to $\set{0,1}^{\ell}$. Then
\[\dist{ P_{F(X) F},P_{\unif^{\ell}} P_{F} } \leq
  \frac12 2^{-\frac12(\H_2(X) - \ell)} \, .\]
\end{corollary}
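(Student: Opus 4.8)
The plan is to obtain this as the purely classical special case of the quantum privacy-amplification results already established, specifically Corollary~\ref{thm:pasmooth}, by setting the quantum register to be trivial and the side information $U$ to be trivial as well, and by taking the smoothness parameter $\eps = 0$. First I would set $q = 0$ (so register $E$ is empty), let $U$ be a constant random variable (so $\cU$ is a one-element set), and choose $\eps = 0$. With these choices the cccq-state $\rho_{F(X)FUE}$ of Corollary~\ref{thm:pasmooth} reduces to the classical distribution $P_{F(X)F}$, the product state $\I \otimes \rho_{FUE}$ reduces to $P_{\unif^{\ell}} P_F$, and the trace distance $\dist{\rho_{F(X)FUE}, \I \otimes \rho_{FUE}}$ becomes the variational distance $\dist{P_{F(X)F}, P_{\unif^{\ell}} P_F}$. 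Corollary~\ref{thm:pasmooth} then yields the bound $\frac12 \, 2^{-\frac12(\H_{\infty}^{0}(X\mid U) - 0 - \ell)} + 0 = \frac12 \, 2^{-\frac12(\hmin(X) - \ell)}$, using that $\H_\infty^0(X\mid U) = \hmin(X)$ when $U$ is constant and $\eps = 0$.

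The remaining gap is that this gives a bound in terms of $\hmin(X)$, whereas the statement asks for a bound in terms of the collision entropy $\H_2(X)$. Since $\H_2(X) \geq \hmin(X)$ always (this is part of the chain of inequalities $\H_0(X) \geq \H(X) \geq \H_2(X) \geq \H_\infty(X)$ recorded in Section~\ref{app:Renyi}), replacing $\hmin(X)$ by $\H_2(X)$ would only make the exponent larger in absolute value and hence \emph{strengthen} the bound — so this direction does not immediately close the argument; one genuinely needs the $\H_2$ version, not the $\hmin$ version. The cleanest route is therefore to appeal to the original Theorem~\ref{thm:paoriginal} rather than its corollary: with $B$ trivial, $\hminee(\rho_{XB}\mid B)$ with an appropriate smoothing becomes a smooth min-entropy of $X$ alone, and invoking the relation $\hiee{X} \geq \H_2(X) - \log(1/\eps)$ (noted in Section~\ref{sec:defsmoothrenyientropy}, following~\cite{RW05}) together with optimizing over $\eps$ recovers the collision-entropy form. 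Alternatively — and this is probably the slicker presentation — I would just cite that the classical left-over hash lemma in the $\H_2$ form is the well-known statement of~\cite{ILL89,BBCM95,HILL99}, and observe that it also follows by the smoothing relation above applied to the $\hmin$ bound we just derived, choosing the smoothing to balance the two terms.

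The main obstacle is precisely this $\hmin$-versus-$\H_2$ discrepancy: the quantum machinery is naturally phrased in terms of (smooth) min-entropy, while the classical leftover hash lemma is classically stated with collision entropy, and the two are only related up to a smoothing term. So the crux of the writeup is to decide whether to (a) re-derive directly from Theorem~\ref{thm:paoriginal} with an explicit choice of smoothing parameter that turns $\hmin^\eps$ into $\H_2$ via~\cite{RW05}, or (b) simply note that the non-smooth $\H_2$-form is classical folklore and the quantum results here subsume it. I expect option (a) to require one short optimization (choosing $\eps \approx 2^{-\Theta(\H_2(X)-\ell)}$) and option (b) to be essentially a one-line citation; I would go with (b) for brevity, since the surrounding text already cites~\cite{ILL89,BBCM95,HILL99} as the source.
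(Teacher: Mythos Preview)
You correctly identify the central issue: specializing Corollary~\ref{thm:pasmooth} (or Theorem~\ref{thm:paoriginal}) to trivial $U$, $E$, and $\eps=0$ yields a bound in terms of $\hmin(X)$, and since $\H_2(X) \geq \hmin(X)$, this does \emph{not} immediately give the stated $\H_2$-bound. The paper makes exactly this observation: it says explicitly that the corollary ``cannot immediately be derived from Theorem~\ref{thm:paoriginal} above, but rather from its proof in~\cite{Renner05},'' because that proof proceeds by first establishing the collision-entropy bound and only then relaxing to (smooth) min-entropy. So the paper does not give a self-contained proof either; it points to the underlying $\H_2$-based argument in~\cite{Renner05}.

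Where your proposal goes astray is in the proposed fixes. Your option~(a), using $\hiee{X} \geq \H_2(X) - \log(1/\eps)$ from Section~\ref{sec:defsmoothrenyientropy} and optimizing over $\eps$, does not recover the stated bound: plugging this into Theorem~\ref{thm:paoriginal} gives $\eps + \tfrac12\, 2^{-\frac12(\H_2(X)-\ell)}/\sqrt{\eps}$, and optimizing over $\eps$ yields something of order $2^{-\frac13(\H_2(X)-\ell)}$, i.e.\ a strictly worse exponent ($1/3$ instead of $1/2$). So this route genuinely loses. Your option~(b) is closer in spirit to what the paper does---cite the classical sources~\cite{ILL89,BBCM95,HILL99}---but your parenthetical that ``the quantum results here subsume it'' is backwards: the $\H_2$-form is the \emph{stronger} statement, and the quantum theorems as stated (in terms of smooth min-entropy) do not imply it without going back into the proof. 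The honest resolution, which is what the paper adopts, is to note that the $\H_2$-bound is the intermediate step inside the proof of Theorem~\ref{thm:paoriginal} in~\cite{Renner05}, not a consequence of its statement.
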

This corollary (with collision- instead of min-entropy in the exponent
on the right-hand side) cannot immediately be derived from
Theorem~\ref{thm:paoriginal} above, but rather from its proof in
\cite{Renner05}. The reason for this is that the easiest way of
proving both Theorem~\ref{thm:paoriginal} and
Corollary~\ref{thm:classicalpa} is by directly considering collision
\index{entropy!collision}entropy instead of
min-\index{entropy!min-}entropy. On the other hand, relaxing the
notion of collision entropy to smooth min-entropy gives the natural
operative meaning (see Section~\ref{sec:defsmoothrenyientropy}) and
interestingly, it only looks like we are losing something by doing
that, but in fact this achieves optimality \cite{RW05}.

\index{privacy amplification|)}

\chapter{Classical Oblivious Transfer}  \label{chap:ClassicalOT}
Most of the results presented in this chapter are published in \cite{DFSS06}.

\index{sender-security!of classical \OT|(}

\section{Introduction and Outline} \label{sec:introtoNDLF}
As already mentioned in Section~\ref{sec:cryptomodels}, 1-out-of-2
Oblivious-Transfer, \OT for short, is a two-party primitive which
allows a sender to send two bits (or, more generally, strings) $B_0$
and $B_1$ to a receiver, who is allowed to learn one of the two
according his choice $C$. Informally, it is required that the receiver
only learns $B_C$ but not $B_{1-C}$ (what we call security for the
honest sender, hence {\em sender-security}), while at the same time
the sender does not learn $C$ ({\em \index{receiver-security!of \OT}receiver-security}).
Interestingly, \OT was introduced by Wiesner around 1970 (but only
published much later~\cite{Wiesner83}) under the name of
``multiplexing'' in the context of quantum cryptography, and, inspired
by~\cite{Rabin81} where a different flavor was introduced, later
re-discovered by Even, Goldreich and Lempel~\cite{EGL82}.

\OT turned out to be very powerful as Kilian~\cite{Kilian88} showed it
to be sufficient for secure general two-party computation.  For this
reason, much effort has been put into reducing \OT to seemingly weaker
flavors of \pOT, like \RabinOT, \XOT,
etc.~\cite{Crepeau87,BC97,Cachin98,Wolf00,BCW03,CS06}.

In this chapter, we focus on a slightly modified notion of \OT, which we
call {\em Randomized} \OT, \RandOT for short, where the bits (or
strings) $B_0$ and $B_1$ are not {\em in}put by the sender, but
generated uniformly at random during the \RandOT\ and then {\em
  out}put to the sender. It is still required that the receiver only
learns the bit (or string) of his choice, $B_C$, whereas the sender
does not learn any information on $C$. It is obvious that a \RandOT\ 
can easily be turned into an ordinary \OT simply by using the
generated $B_0$ and $B_1$ to mask the actual input bits (or strings).
Furthermore, all known constructions of unconditionally secure \OT 
protocols make implicitly the detour via \RandOT.

In a first step, we observe that the sender-security condition of a
\RandOT\ of {\em bits} is equivalent to requiring the XOR $B_0 \oplus
B_1$ to be close to uniformly distributed from the receiver's point of
view. The proof is very simple, and it is kind of surprising that---to
the best of our knowledge---this has not been realized before. We then
ask and answer the question whether there is a natural generalization
of this result to \RandOT\ of {\em strings}. Note that requiring the
bit wise XOR of the two strings to be uniformly distributed is
obviously not sufficient. We show that the sender-security for \RandOT\ of
strings can be characterized in terms of {\em non-degenerate linear
  functions} (bivariate binary linear functions which non-trivially
depend on both arguments, as defined in Definition~\ref{def:linear}):
sender-security holds if and only if the result of applying any
non-degenerate linear function to the two strings is (close to)
uniformly distributed from the receiver's point of view.

We then show the usefulness of this new understanding of \OT. We
demonstrate this on the problem of reducing \OT to weaker primitives.
Concretely, we show that the reducibility of an ordinary \OT to weaker
flavors via a non-interactive reduction follows by a trivial argument
from our characterization of sender-security.  This is in sharp
contrast to the current literature: The proofs given by Brassard,
Cr\'epeau and Wolf~\cite{BC97,Wolf00,BCW03} for reducing \OT to
\XOT, \GOT\ and \BUOT\ (we refer to Section~\ref{sec:application} for
a description of these flavors of \pOT) are rather complicated and
tailored to a particular class of privacy-amplifying hash functions;
whether the reductions also work for a less restricted class is left
as an open problem~\cite[page~222]{BCW03}. And, the proof given by
Cachin~\cite{Cachin98} for reducing \OT to one execution of a
general \pUOT\ is not only complicated, but also incorrect, as we will
point out.  Thus, our characterization of the condition for
sender-security allows to simplify existing reducibility proofs and,
along the way, to solve the open problem posed in~\cite{BCW03}, as
well as to improve the reduction parameters in most cases, but it also
allows for new, respectively until now only incorrectly proven
reductions. In recent work by Wullschleger \cite{Wullschleger07}, the
analysis of these reductions is further improved.

Furthermore, we extend our result and show how our characterization of
\RandOT\ in terms of non-degenerate linear functions translates
to \onenOT. 

\vspace{2mm}
As historical side note, we note that the original motivation for
characterizing sender-security with the help of NDLFs was to prove
sender-security of the quantum protocol for \OT described in
Chapter~\ref{chap:12OT}. We point out by an example in
Section~\ref{sec:quantumdoesnotwork} at the end of this
chapter why this approach does not work.

\section{Defining \OT}\label{sec:Definition}

\subsection{Randomized \OT of Bits}

Formally capturing the intuitive understanding of the security of \OT 
is a non-trivial and subtle task.  For instance requiring the sender's
view to be independent of the receiver's choice bit $C$ is too strong
a requirement, since his input might already depend on $C$. The best
one can hope for is that his view is independent of $C$ {\em
  conditioned on his input} $B_0,B_1$. Security against a dishonest
receiver is even more subtle.
We refer to the security definition by Cr\'epeau, Savvides, Schaffner and
Wullschleger of~\cite{CSSW06}, where it is argued that this definition
is the ``right" way to define unconditionally secure \OT. In their
model, a secure \OT protocol is as good as 
an ideal \OT functionality.

In this thesis, we will mainly focus on a slight modification of \OT,
which we call {\em Randomized \OT} (although {\em sender-}randomized
\OT would be a more appropriate, but also rather lengthy name). A
Randomized \OT, or \RandOT\ for short, essentially coincides with an
ordinary \OT, except that the two bits $B_0$ and $B_1$ are not {\em
  input} by the sender but generated uniformly at random during the
protocol and {\em output} to the sender. This is formalized in
Definition~\ref{def:RandOT} below.

There are two main justifications for focusing on \RandOT. First, an
ordinary \OT can easily be constructed from a \RandOT: the sender can
use the randomly generated $B_0$ and $B_1$ to one-time-pad encrypt his
input bits for the \OT, and send the masked bits to the receiver (as
first realized by Beaver~\cite{Beaver95}). For a formal proof of this we
refer to the full version of~\cite{CSSW06}. 
And second, all information-theoretically secure constructions of \OT 
protocols we are aware of in fact do implicitly build a \RandOT and
use the above reduction to achieve \OT.

We formalize \RandOT in such a way that it 
minimizes and simplifies as much as possible the security restraints,
while at the same time remaining sufficient for \OT.

\begin{definition}[\RandOT]\label{def:RandOT}
  An $\varepsilon$-secure {\em \RandOT} is a protocol between sender
  $\S$ and receiver $\R$, with $\R$ having input $C \in \{0,1\}$
  (while $\S$ has no input), such that for any distribution of $C$,
  the following properties hold:
\begin{description}
\item[\boldmath$\varepsilon$-Correctness:] For honest $\S$ and $\R$, $\S$ has output $B_0,B_1 \in \{0,1\}$ and $\R$ has output
  $B_C$, except with probability~$\varepsilon$.\vspace{1ex}
\item[\boldmath$\varepsilon$-Receiver-security:] For honest $\R$ and any
  (dishonest) $\dS$ with output $V$,
\[\dist{ P_{C V} , P_{C} \cdot P_{V} } \leq \eps.\] 
\item[\boldmath$\varepsilon$-Sender-security:] For honest $\S$ and any
  (dishonest) $\dR$ with output $W$, there exists a binary
  random variable $D$ such that 
\[ \dist{ P_{B_{1-D} W B_D D} , P_{\unif} \cdot P_{W B_D D} } \leq \eps.
\]
\end{description}
\end{definition}
\index{correctness!of classical \RandOT}
\index{sender-security!of classical \RandOT}
\index{receiver-security!of classical \RandOT}
The condition for receiver-security simply says that $\S$ learns no
information on $C$, and sender-security requires that there exists a
choice bit $D$, supposed to be $C$, such that when given the choice
$D$ and the corresponding bit $B_D$, then the other bit, $B_{1-D}$, is
completely random from $\R$'s point of view.

We would like to point out that the definition of \RandOT
given in \cite{CSSW06} look syntactically slightly different than
our Definition~\ref{def:RandOT}.
However, it is not hard to see that they are actually equivalent. The
main difference is that the definition in~\cite{CSSW06} involves
an \index{auxiliary input}auxiliary input $Z$, which is given to the dishonest player, and
receiver- and sender-security as we define them are required to hold
{\em conditioned on $Z$} for any $Z$. Considering a {\em constant} $Z$
immediately proves one direction of the claimed equivalence, and the
other follows from the observation that if receiver- and
sender-security as we define them hold for {\em any} distribution
$P_{B_0 B_1 C}$ (respectively $P_C$), then they also hold for the
conditional distribution $P_{B_0 B_1 C|Z=z}$ (respectively
$P_{C|Z=z}$). The other difference is that in~\cite{CSSW06}, in
the condition for sender-security of \RandOT, $B_{1-D}$ is required to
be random and independent of $W$, $B_D$, $D$ {\em and $C$}. This of
course implies our sender-security condition (which is without $C$), but
it is also implied by our definition as $C$ may be part of the output
$W$.  We feel that simplifying the definitions as we do, without
changing their meaning, allows for an easier handling.

\subsection{Randomized \OT of Strings}

In a \onetwo\,{\em String}\:\pOT the sender inputs two {\em strings}
of the same length, and the receiver is allowed to learn one and only
one of the two. Formally, for any positive integer $\ell$, \lStringOT
and \RandlStringOT can be defined along the same lines as \OT and
\RandOT of {\em bits}: the binary random variables $B_0$ and $B_1$ as
well as $\unif$ in Definition~\ref{def:RandOT} are simply replaced by
random variables $S_0$ and $S_1$ and $\unif^{\ell}$ with range
$\set{0,1}^{\ell}$.

\section{Characterizing Sender-Security}\label{sec:Main}

\subsection{The Case of Bit \pOT}

It is well known and it follows from sender-security that in a
(\Rand)\:\OT the receiver $\R$ should in particular learn essentially
no information on the XOR $B_0 \oplus B_1$ of the two bits. The
following proposition shows that this is not only necessary for
sender-security but also {\em sufficient}.
\begin{theorem}\label{thm:xor}
  The condition for $\varepsilon$-sender-security for a \RandOT is
  satisfied for a particular (possibly dishonest) receiver $\dR$ with
  output $W$ if and only if
$$
\dist{ P_{(B_0 \oplus B_1) W} , P_{\unif} \cdot P_{W} } \leq \eps
\, .
$$
\end{theorem}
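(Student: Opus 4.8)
The plan is to prove the two implications separately after a common reduction. Write $X:=B_0\oplus B_1$. The starting point is the algebraic identity $B_{1-D}=B_D\oplus X$ (so $X$ is recoverable from $B_D$ and $B_{1-D}$), together with the elementary fact that for any binary random variable $S$ and any $R$,
\[\dist{P_{SR},P_\unif\cdot P_R}=\sum_r P_R(r)\,\dist{P_{S\mid R=r},P_\unif}.\]
Applying this with $(S,R)=(X,W)$ rewrites the right-hand side of the theorem as $\sum_w P_W(w)\,\dist{P_{X\mid W=w},P_\unif}$, and applying it with $(S,R)=(B_{1-D},(W,B_D,D))$ — then using $B_{1-D}=B_D\oplus X$ to replace $B_{1-D}$ by $X$ inside each conditional distance — rewrites the sender-security quantity, for a given $D$, as $\sum_{w,b,d}P_{WB_DD}(w,b,d)\,\dist{P_{X\mid W=w,B_D=b,D=d},P_\unif}$. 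So everything comes down to comparing pointwise distances of $X$ from uniform.

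For the ``only if'' direction I would argue by data processing. Suppose $\eps$-sender-security holds with witness $D$, i.e.\ $\dist{P_{B_{1-D}WB_DD},P_\unif\cdot P_{WB_DD}}\le\eps$. Consider the map $\phi\colon(b',w,b,d)\mapsto(b\oplus b',w)$. It sends $P_{B_{1-D}WB_DD}$ to $P_{XW}$, and it sends $P_\unif\cdot P_{WB_DD}$ (whose first coordinate $U$ is uniform and independent of $(W,B_D,D)$) to the law of $(B_D\oplus U,W)$, whose first coordinate is uniform and independent of $W$; that law is exactly $P_\unif\cdot P_W$. Since $\dist{\phi_*P,\phi_*Q}\le\dist{P,Q}$ for any fixed map $\phi$ (immediate from the triangle inequality), we conclude $\dist{P_{XW},P_\unif\cdot P_W}\le\eps$.

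For the ``if'' direction, suppose $\dist{P_{XW},P_\unif\cdot P_W}\le\eps$. By the reduction above it suffices, for each value $w$, to exhibit a (possibly randomised) binary $D=D(B_0,B_1)$ such that, conditioned on $W=w$, the pair $(B_D,D)$ is \emph{independent} of $X$: then $P_{X\mid W=w,B_D=b,D=d}=P_{X\mid W=w}$ for all $(b,d)$, so the sender-security quantity collapses to $\sum_w P_W(w)\,\dist{P_{X\mid W=w},P_\unif}=\dist{P_{XW},P_\unif\cdot P_W}\le\eps$. Fixing $w$ and suppressing it, this is a purely combinatorial statement: \emph{for every distribution on $(B_0,B_1)\in\set{0,1}^2$ there is a randomised $D\in\set{0,1}$ making $(B_D,D)$ independent of $B_0\oplus B_1$.} I would prove it as follows. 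Each outcome $(b_0,b_1)$ determines $X$, and revealing $B_D$ routes it into one of the four ``slots'' $(B_D,D)\in\set{0,1}^2$: an $X=1$ outcome lands in $\set{(0,0),(1,1)}$ or in $\set{(1,0),(0,1)}$, and an $X=0$ outcome lands in $\set{(0,0),(0,1)}$ or in $\set{(1,0),(1,1)}$, with the split within each such pair free. Hence the conditional law of $(B_D,D)$ given $X=1$ is forced to put mass $\alpha:=\Pr[(B_0,B_1)=(0,1)\mid X=1]$ on $\set{(0,0),(1,1)}$, and given $X=0$ it is forced to put mass $\beta:=\Pr[(B_0,B_1)=(0,0)\mid X=0]$ on $\set{(0,0),(0,1)}$, with everything else adjustable. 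It therefore suffices to find one distribution $\mu$ on $\set{0,1}^2$ with $\mu(0,0)+\mu(1,1)=\alpha$ and $\mu(0,0)+\mu(0,1)=\beta$, and then to realise it as the common conditional law of $(B_D,D)$ given $X=0$ and given $X=1$ (the degenerate cases $\alpha,\beta\in\set{0,1}$ being immediate). Such a $\mu$ exists: writing $t:=\mu(0,0)$ forces $\mu(0,1)=\beta-t$, $\mu(1,1)=\alpha-t$, $\mu(1,0)=1-\alpha-\beta+t$, all non-negative precisely when $t\in[\max(0,\alpha+\beta-1),\min(\alpha,\beta)]$, which is non-empty since $0\le\alpha,\beta\le1$.

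The main obstacle is this ``if'' direction, and inside it the construction of $D$; the ``only if'' direction and the two reductions to pointwise distances are routine. I would also flag one subtlety: to land on the \emph{exact} constant $\eps$ it is important to run the combinatorial lemma on the true distribution of $(B_0,B_1)$ rather than first replacing $P_{XW}$ by a nearby distribution in which $X$ is genuinely uniform and transporting a witness back — the latter route loses a factor $2$ in the triangle inequality, whereas the lemma, applied to \emph{any} distribution on $\set{0,1}^2$, makes the pointwise bound tight.
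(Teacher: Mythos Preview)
Your proof is correct and takes a genuinely different (and in one respect sharper) route than the paper. Both proofs fix $w$ and construct $D$ pointwise, but the paper works in the \emph{perfect} case first: it uses the equation $a+d=b+c$ coming from exact XOR-uniformity to split the $2\times2$ table $P_{B_0B_1W}(\cdot,\cdot,w)$ explicitly (one slice with constant rows, one with constant columns), and then defers the $\eps>0$ case to the error-tracking in the string theorem. Your key lemma is stronger and more structural: for \emph{any} distribution on $(B_0,B_1)$, regardless of whether the XOR is uniform, there is a randomised $D$ making $(B_D,D)$ independent of $B_0\oplus B_1$. This collapses the sender-security distance to the XOR-distance \emph{exactly}, so the $\eps$-case follows in one stroke with no loss and no separate perfect/imperfect analysis. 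The price is that your $D$ is defined via a small feasibility argument for $\mu$ rather than by an explicit table, so the paper's construction is more visual; but your independence viewpoint explains \emph{why} the two distances coincide and avoids the deferral entirely. Your data-processing argument for the ``only if'' direction is also cleaner than the paper's explicit summation.
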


Before going into the proof which is surprisingly simple, consider
the following example. Assume a candidate protocol for \RandOT and a
dishonest receiver $\dR$ which is able to output $W = 0$ if $B_0 = 0 =
B_1$, $W = 1$ if $B_0 = 1 = B_1$ and $W = 0$ or $1$ with probability
$1/2$ each in case $B_0 \neq B_1$. Then, it is easy to see that
conditioned on, say, $W = 0$, $(B_0,B_1)$ is $(0,0)$ with probability
$\frac12$, and $(0,1)$ and $(1,0)$ each with probability $\frac14$,
such that the condition on the XOR from Theorem~\ref{thm:xor} is
satisfied. On the other hand, neither $B_0$ nor $B_1$ is uniformly distributed conditioned on $W = 0$, and 
it appears as if the receiver has some
joint information on $B_0$ and $B_1$ which is forbidden by a (\Rand)
\OT. But that is not so. Indeed, the same view can be obtained when
attacking an {\em ideal} \RandOT: submit a random bit $C$ to obtain $B_{C}$
and output $W = B_{C}$. In the light of Definition~\ref{def:RandOT},
if $W = 0$ we can split the event $(B_0,B_1) = (0,0)$ into two
disjoint subsets (subevents) ${\cal E}_0$ and ${\cal E}_1$ such that
each has probability $\frac14$, and we define $D$ by setting $D =
0$ if ${\cal E}_0$ or $(B_0,B_1) = (0,1)$, and $D = 1$ if ${\cal E}_1$
or $(B_0,B_1) = (1,0)$. Then, obviously, conditioned on $D = d$, the
bit $B_{1-d}$ is uniformly distributed, even when given $B_d$. The
corresponding holds if $W = 1$.

\begin{proof}
The ``only if'' implication is well-known and straightforward. For the ``if'' implication, we first argue the perfect case where 
\smash{$P_{(B_0 \oplus B_1) W} = P_{\unif} \cdot P_{W}$}. For any value $w$ with $P_W(w)>0$, the non-normalized distribution $P_{B_0 B_1 W}(\cdot,\cdot,w)$ can be expressed as depicted in the left table of Figure~\ref{fig:xor}, where we write $a$ for $P_{B_0 B_1 W}(0,0,w)$, $b$ for $P_{B_0 B_1 W}(0,1,w)$, $c$ for $P_{B_0 B_1 W}(1,0,w)$ and $d$ for $P_{B_0 B_1 W}(1,1,w)$.
Note that $a + b + c + d = P_W(w)$ and, by assumption, $a+d = b+c$. 
Due to symmetry, we may assume that $a \leq b$. We can then define $D$ by
extending $P_{B_0 B_1 W}(\cdot,\cdot,w)$ to $P_{B_0
B_1 D W}(\cdot,\cdot,\cdot,w)$ as depicted in the right two tables in
Figure~\ref{fig:xor}: $P_{B_0 B_1 D W}(0,0,0,w) = P_{B_0 B_1 D
  W}(0,1,0,w) = a$, $P_{B_0 B_1 D W}(1,0,0,w) = P_{B_0 B_1 D
  W}(1,1,0,w) = c$ etc. Important to realize is that $P_{B_0 B_1 D W}(\cdot,\cdot,\cdot,w)$ is indeed a valid extension since by assumption $c + (b-a) = d$. 

\begin{myfigure}{H}
$$
{
\begin{array}{|c|c|}
\hline & \\[-1ex] \;\;\;a\,\;\; & \;\;\;b\,\;\; \\[1ex] \hline & \\[-1ex] c & d \\[1ex] \hline
\end{array}
\atop P_{B_0 B_1 W}(\cdot,\cdot,w) }
\qquad\qquad\qquad
{
\begin{array}{|c|c|}
\hline & \\[-1ex] \;\;\;a\;\;\; & \;\;\;a\;\;\; \\[1ex] \hline & \\[-1ex] c & c \\[1ex] \hline
\end{array}
\atop P_{B_0 B_1 D W}(\cdot,\cdot,0,w) }
\qquad
{
\begin{array}{|c|c|}
\hline & \\[-1ex] \;\;\;0\;\;\; & \,b\!-\!a\, \\[1ex] \hline & \\[-1ex] 0 & b\!-\!a \\[1ex] \hline
\end{array}
\atop P_{B_0 B_1 D W}(\cdot,\cdot,1,w) }
$$
\vspace{-3ex}
\caption{Distributions $P_{B_0 B_1 W}(\cdot,\cdot,w)$ and $P_{B_0 B_1 D W}(\cdot,\cdot,\cdot,w)$}\label{fig:xor}
\end{myfigure}

It is now obvious that $P_{B_0 B_1 D W}(\cdot,\cdot,0,w) = \frac12 P_{B_0 D W}(\cdot,0,w)$ as well as $P_{B_0 B_1 D W}(\cdot,\cdot,1,w) = \frac12 P_{B_1 D W}(\cdot,1,w)$.
This finishes 
the perfect case. 

Concerning the general case, the idea is the same as above, except
that one has to take some care in handling the error parameter
$\varepsilon \geq 0$. As this does not give any new insight, and we
anyway state and fully prove a more general result in
Theorem~\ref{thm:main}, we skip this part of the
proof.\footnote{Although the special case $\ell = 1$ in
  Theorem~\ref{thm:main} is quantitatively slightly weaker than
  Theorem~\ref{thm:xor}. } 
\end{proof}

\subsection{The Case of String \pOT}

The obvious question after the previous section is whether there is a
natural generalization of Theorem~\ref{thm:xor} to 
\lStringOT for $\ell \geq 2$.  Note that the straightforward
generalization of the XOR-condition in Theorem~\ref{thm:xor},
requiring that any receiver has no information on the bit-wise XOR of
the two strings, is clearly too weak, and does not imply
sender-security for \RandlStringOT: for instance the receiver could
know the first half of the first string and the second half of the
second string.

\subsubsection{The Characterization}

Let $\ell$ be an arbitrary positive integer. 

\begin{definition}\label{def:linear}
A 
function $\bal:\set{0,1}^{\ell}\times\set{0,1}^{\ell} \rightarrow
\set{0,1}$ is called a \index{non-degenerate linear
function}{\em non-degenerate linear function} (NDLF) if it is of the
form
$$\bal: (s_0,s_1) \mapsto \ip{a_0,s_0} \oplus \ip{a_1,s_1}$$
for two non-zero $a_0,a_1 \in \set{0,1}^{\ell}$, i.e.,
if it is linear and non-trivially depends on both input strings.
\end{definition}
Even though this is the main notion we are using, the following more relaxed notion allows to make some of our claims slightly stronger. 

\begin{definition}\label{def:balanced} 
\index{2-balanced|see {balanced function}}
  A binary function $\bal:\set{0,1}^{\ell}\times\set{0,1}^{\ell}
  \rightarrow \set{0,1}$ is called {\em 2-\index{balanced function}balanced} if for any
  $s_0,s_1 \in \set{0,1}^{\ell}$ the functions $\bal(s_0,\cdot)$ and
  $\bal(\cdot,s_1)$ are balanced in the usual sense, meaning that
  $\card{\Set{\sigma_1 \in \set{0,1}^{\ell}}{\bal(s_0,\sigma_1)\=0}} =
  2^{\ell}/2$ and $\card{\Set{\sigma_0 \in
      \set{0,1}^{\ell}}{\bal(\sigma_0,s_1)\=0}} = 2^{\ell}/2$.
\end{definition}
The following is easy to see and the proof is omitted. 
\begin{lemma}
Every non-degenerate linear function is 2-balanced. 
\end{lemma}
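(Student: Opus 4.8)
The plan is to reduce the statement to two elementary observations: a non-zero $\set{0,1}$-linear functional on $\set{0,1}^{\ell}$ takes each of the values $0$ and $1$ on exactly $2^{\ell}/2$ inputs, and XOR-ing a balanced $\set{0,1}$-valued function with a fixed bit keeps it balanced. So the whole argument is really just linear algebra over $\mathbb{F}_2$, done twice by symmetry.

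First I would fix an arbitrary $s_0 \in \set{0,1}^{\ell}$ and rewrite the induced function of the second argument as $\sigma_1 \mapsto \bal(s_0,\sigma_1) = c \oplus \ip{a_1,\sigma_1}$, where $c \assign \ip{a_0,s_0}$ is a bit that does not depend on $\sigma_1$. Next I would show that $\sigma_1 \mapsto \ip{a_1,\sigma_1}$ is balanced: since $a_1$ is non-zero, pick an index $j$ with the $j$-th coordinate of $a_1$ equal to $1$; flipping the $j$-th bit of $\sigma_1$ flips the value of $\ip{a_1,\sigma_1}$, so this involution partitions the $2^{\ell}$ strings into $2^{\ell}/2$ pairs, each contributing exactly one string to $\Set{\sigma_1}{\ip{a_1,\sigma_1}\=0}$. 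Hence $\card{\Set{\sigma_1}{\ip{a_1,\sigma_1}\=0}} = 2^{\ell}/2$. Then I would note that XOR-ing with the constant $c$ either leaves $\Set{\sigma_1}{\bal(s_0,\sigma_1)\=0}$ equal to this set (if $c=0$) or equal to its complement (if $c=1$), and both have cardinality $2^{\ell}/2$; thus $\bal(s_0,\cdot)$ is balanced.

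Since Definition~\ref{def:linear} is symmetric in the two arguments and also requires $a_0$ to be non-zero, applying the identical argument with the roles of the two coordinates exchanged shows that $\bal(\cdot,s_1)$ is balanced for every fixed $s_1$. Together these are exactly the two conditions in Definition~\ref{def:balanced}, so $\bal$ is 2-balanced. I do not anticipate a genuine obstacle here; the only points requiring a little care are making the ``XOR with a constant preserves balance'' step explicit, and observing that the two non-degeneracy hypotheses ($a_0$ non-zero and $a_1$ non-zero) are each used exactly once, one per balance condition.
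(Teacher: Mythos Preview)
Your argument is correct and complete. The paper itself omits the proof entirely (``The following is easy to see and the proof is omitted''), so there is nothing to compare against; your write-up is exactly the kind of routine verification the authors had in mind.
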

In case $\ell = 1$, the XOR is a NDLF and thus 2-balanced, and it is
the {\em only} NDLF and up to addition of a constant the only
2-balanced function.  Based on this notion of non-degenerate linear
functions, sender-security of \RandStringOT can be characterized as
follows.

\begin{theorem}\label{thm:main} 
\index{sender-security!characterization of}
  The condition of $\varepsilon$-sender-security for a \RandlStringOT
  is satisfied for a particular (possibly dishonest) receiver $\dR$
  with output $W$ if
$$
\dist{ P_{\bal(S_0,S_1) W} , P_{\unif} \cdot P_W } \leq \varepsilon/2^{2\ell+1}
$$
for every NDLF $\bal$, and, on the other hand, 
$\varepsilon$-sender-security may be satisfied only if $\dist{ P_{\bal(S_0,S_1) W} , P_{\unif}
\cdot P_W } \leq \eps$ for every NDLF~$\bal$. 
\end{theorem}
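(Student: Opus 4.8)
The statement has two halves; I would treat them separately.

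\emph{Necessity (the ``only if'').} Here the plan is a one-line reduction. Suppose $\varepsilon$-sender-security holds, witnessed by a binary random variable $D$, so $\dist{P_{S_{1-D}WS_DD},P_{\unif^{\ell}}\cdot P_{WS_DD}}\le\varepsilon$, and fix a NDLF $\bal(s_0,s_1)=\ip{a_0,s_0}\oplus\ip{a_1,s_1}$ with $a_0,a_1\neq 0$. I would apply to \emph{both} distributions above the single map $(s,w,s',d)\mapsto\bigl(\ip{a_{1-d},s}\oplus\ip{a_d,s'},\,w\bigr)$. On $P_{S_{1-D}WS_DD}$ it produces exactly $(\bal(S_0,S_1),W)$, because $\bal(S_0,S_1)=\ip{a_{1-D},S_{1-D}}\oplus\ip{a_D,S_D}$. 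On $P_{\unif^{\ell}}\cdot P_{WS_DD}$ the string in the first slot is uniform and independent of the other three, so, conditioned on any value of $(W,S_D,D)$, the bit $\ip{a_{1-D},S_{1-D}}$ is an unbiased coin (only $a_{1-D}\neq 0$, i.e. $2$-balancedness of $\bal$, is used here), whence the image is $P_{\unif}\cdot P_W$. Since variational distance cannot increase under a common function, $\dist{P_{\bal(S_0,S_1)W},P_{\unif}\cdot P_W}\le\varepsilon$. The same argument gives necessity for every $2$-balanced $\bal$, slightly strengthening the statement.

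\emph{Sufficiency (the ``if''): setup and structural reformulation.} Everything is done at a fixed value $w$ of $W$; write $Q_w(s_0,s_1)\assign P_{S_0S_1W}(s_0,s_1,w)$, a non-negative function on $\set{0,1}^{\ell}\times\set{0,1}^{\ell}$ of total mass $P_W(w)$, and let $\widehat{Q}_w(a_0,a_1)\assign 2^{-2\ell}\sum_{s_0,s_1}(-1)^{\ip{a_0,s_0}\oplus\ip{a_1,s_1}}Q_w(s_0,s_1)$ be its Fourier transform. First I would reformulate the hypothesis: a short computation shows that $\dist{P_{\bal(S_0,S_1)W},P_{\unif}\cdot P_W}$ equals $2^{2\ell-1}\sum_w|\widehat{Q}_w(a_0,a_1)|$ for the NDLF $\bal$ with coefficients $(a_0,a_1)$, so the assumption gives $\sum_w|\widehat{Q}_w(a_0,a_1)|\le\varepsilon\,2^{-4\ell}$ for all $a_0,a_1\neq 0$. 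Introducing the ``axis part''
\[
\widetilde{Q}_w(s_0,s_1)\assign 2^{-\ell}P_{S_0W}(s_0,w)+2^{-\ell}P_{S_1W}(s_1,w)-2^{-2\ell}P_W(w),
\]
one checks that $\widehat{\widetilde{Q}}_w$ agrees with $\widehat{Q}_w$ whenever $a_0=0$ or $a_1=0$ and vanishes otherwise, so $Q_w-\widetilde{Q}_w=\sum_{a_0,a_1\neq 0}(-1)^{\ip{a_0,s_0}\oplus\ip{a_1,s_1}}\widehat{Q}_w(a_0,a_1)$ and $\sum_{s_0,s_1}|Q_w-\widetilde{Q}_w|\le 2^{2\ell}\sum_{a_0,a_1\neq 0}|\widehat{Q}_w(a_0,a_1)|$. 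Summing over $w$ and over the $(2^{\ell}-1)^2$ relevant pairs yields $\dist{P_{S_0S_1W},\widetilde{P}}<\varepsilon/2$, where $\widetilde{P}(s_0,s_1,w)\assign\widetilde{Q}_w(s_0,s_1)$ has the right marginals and total mass $1$ but need not be non-negative. Note that $\widetilde{Q}_w$ is precisely the $\ell$-string analogue of the ``balanced $2\times2$ tables'' from the proof of Theorem~\ref{thm:xor}: an affine function of $s_0$ plus an affine function of $s_1$.

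\emph{Sufficiency: the choice of $D$.} For a genuinely structured slice ($Q_w=\widetilde{Q}_w$) the construction of Figure~\ref{fig:xor} goes through literally: write $\widetilde{Q}_w(s_0,s_1)=\alpha_w(s_0)+\beta_w(s_1)$ with $\alpha_w(s_0)\assign 2^{-\ell}P_{S_0W}(s_0,w)-x(w)$ and $\beta_w(s_1)\assign 2^{-\ell}P_{S_1W}(s_1,w)-2^{-2\ell}P_W(w)+x(w)$, choosing the shift $x(w)$ anywhere in the interval $\bigl[\,2^{-2\ell}P_W(w)-2^{-\ell}\min_{s_1}P_{S_1W}(s_1,w)\,,\ 2^{-\ell}\min_{s_0}P_{S_0W}(s_0,w)\,\bigr]$; this interval is non-empty exactly because $\widetilde{Q}_w\ge 0$ at a joint minimizer of the two marginals, and the endpoints make $\alpha_w,\beta_w\ge 0$. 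Extending $P_{S_0S_1W}$ to $P_{S_0S_1DW}$ by $P_{S_0S_1DW}(s_0,s_1,0,w)\assign\alpha_w(s_0)$ and $P_{S_0S_1DW}(s_0,s_1,1,w)\assign\beta_w(s_1)$ makes $S_{1-D}$ uniform given $(W,S_D,D)$, i.e. \emph{perfect} sender-security. For the actual $Q_w$ I would run the same recipe: keep the same $\alpha_w$ (with $x(w)$ picked so that $\alpha_w\ge 0$), put $\alpha_w(s_0)$ into the $D=0$ branch and $Q_w(s_0,s_1)-\alpha_w(s_0)$ into the $D=1$ branch. The $D=1$ branch equals $\beta_w(s_1)+(Q_w-\widetilde{Q}_w)(s_0,s_1)$, and since $Q_w-\widetilde{Q}_w$ has no $a_0=0$ Fourier component, subtracting its $s_0$-average changes nothing; hence the $D=1$ branch's deviation from being constant in $s_0$ is controlled by $\sum_{s_0,s_1}|Q_w-\widetilde{Q}_w|$, while the $D=0$ branch is constant in $s_1$ and costs nothing. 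Summing over $w$ and invoking $\dist{P_{S_0S_1W},\widetilde{P}}<\varepsilon/2$ gives $\dist{P_{S_{1-D}WS_DD},P_{\unif^{\ell}}\cdot P_{WS_DD}}\le\varepsilon$.

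\emph{Main obstacle.} The step I expect to cause trouble is the non-negativity bookkeeping hidden in the last sentence: the ``ideal'' object $\widetilde{P}$ is in general not a probability distribution, so the $D=1$ branch $\beta_w(s_1)+(Q_w-\widetilde{Q}_w)(s_0,s_1)$ can dip slightly below zero and must be repaired --- e.g. by clipping the negative mass (of total weight of order $\sum_{s_0,s_1}|Q_w-\widetilde{Q}_w|$) and pushing it into the $D=0$ branch, at the price of an additional error of the same order. Carrying this out so that both branches remain genuine sub-distributions at every step, while charging all the slack to $\sum_w\sum_{s_0,s_1}|Q_w-\widetilde{Q}_w|$, is the only real work; the Fourier reformulation and the splitting itself are routine once one recognizes $\widetilde{Q}_w$ as the string version of the balanced-table normal form. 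It is exactly to absorb this clipping, together with the factor $2^{2\ell}$ lost in passing from Fourier coefficients to $L^1$-mass and the $\approx 2^{2\ell}$ NDLFs being union-bounded over, that the hypothesis carries the constant $2^{2\ell+1}$ --- which is also why Theorem~\ref{thm:xor} can afford a sharper constant in the special case $\ell=1$.
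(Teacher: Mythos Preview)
Your proposal is correct, and for the ``only if'' direction it is essentially the paper's argument phrased as a data-processing inequality (the paper expands both sides explicitly and invokes the triangle inequality; both versions use only $2$-balancedness).

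For the ``if'' direction you take a genuinely different route. The paper works combinatorially: it fixes $w$, posits the extension $P_{S_0S_1DW}(\cdot,\cdot,0,w)=p_{s_0,\zero}$ and $P_{S_0S_1DW}(\cdot,\cdot,1,w)=p_{\zero,s_1}-p_{\zero,\zero}$, and must verify the resulting system $p_{s_0,\zero}+p_{\zero,s_1}=p_{\zero,\zero}+p_{s_0,s_1}$. It then shows that each such equation is a specific linear combination of the NDLF-uniformity equations, by summing over all $(a_0,a_1)$ with $\ip{a_0,s_0}=\ip{a_1,s_1}=1$ and counting multiplicities via the rank of the $3\times 2\ell$ coefficient matrix. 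Your Fourier argument replaces all of this by the observation that the NDLF hypothesis is literally a bound on the off-axis Fourier coefficients $\widehat{Q}_w(a_0,a_1)$ with $a_0,a_1\neq 0$, while the separable ``axis part'' $\widetilde{Q}_w$ is exactly the projection onto axis characters; Fourier inversion then gives the $L^1$-closeness in one line. This is cleaner and more conceptual --- it makes transparent \emph{why} NDLFs are the right class and why the $2^{2\ell}$ overhead is unavoidable --- and it is interesting in light of the paper's explicit remark that the theorem does not seem to follow from the XOR-Lemma machinery; your argument is precisely in that harmonic-analysis spirit. The paper's combinatorial derivation, in exchange, is entirely self-contained and needs no Fourier vocabulary. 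Both proofs converge on the same extension (your $\alpha_w,\beta_w$ are the paper's $p_{s_0,\zero}$ and $p_{\zero,s_1}-p_{\zero,\zero}$ up to the shift $x(w)$), and both handle the non-perfect case by small additive corrections whose total mass is bounded by the accumulated error --- the paper via per-equation defects $\delta_{s_0,s_1}$, you via clipping the negative part of the $D{=}1$ branch; neither writes this step out in full detail, and you are right that it is the only place requiring care.
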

The number of \index{non-degenerate linear function} NDLFs is
exponential in~$\ell$, namely $(2^{\ell}-1)^2$.  Nevertheless, we show
in Section~\ref{sec:application} that this characterization turns out
to be very useful. There, we will also argue that an exponential
overhead in $\ell$ in the sufficient condition is unavoidable. The
proof of Theorem~\ref{thm:main} also shows that the set of NDLFs forms
a minimal set of functions among all sets that imply sender-security.
In this sense, our characterization is tight.

At first glance, Theorem~\ref{thm:main} appears to be related to the
so-called (information-theoretic) \index{XOR-Lemma}XOR-Lemma, commonly attributed to
Vazirani~\cite{Vazirani86} and nicely explained by
Goldreich~\cite{Goldreich95}, which states that a string is close to
uniform if the XOR of the bits of any non-empty substring are. As far
as we can see, neither follows Theorem~\ref{thm:main} from the
XOR-Lemma in an obvious way nor can it be proven by modifying the
proof of the XOR-Lemma, as given in~\cite{Goldreich95}.

Furthermore, we would like to point out that Theorem 4 in~\cite{BCW03} also
provides a tool to analyze sender-security of \OT 
protocols in terms of linear functions; however, the condition that
needs to be satisfied is much stronger than for our
Theorem~\ref{thm:main}: it additionally requires that one of the two
strings is {\em a priori} uniformly distributed from the receiver's
point of view.\footnote{Concretely, it is additionally required that
  every non-trivial parity of that string is uniform, but by the
  XOR-Lemma this is equivalent to the whole string being uniform. }
This difference is crucial, because showing that one of the
two strings is uniform (conditioned on the receiver's view) is usually
technically involved and sometimes not even possible, as the example
given after Theorem~\ref{thm:xor} shows. This is also demonstrated by
the fact that the analysis in~\cite{BCW03} of the considered \OT 
protocol is tailored to one particular class of privacy-amplifying
hash functions, and it is stated as an open problem how to prove their
construction secure when a different class of hash functions is used.
The condition for Theorem~\ref{thm:main}, on the other hand, is
naturally satisfied for typical constructions of \OT protocols, as we
shall see in Section~\ref{sec:application}. As a result,
Theorem~\ref{thm:main} allows for much simpler and more elegant
security proofs for \OT protocols, and, as a by-product, allows to
solve the open problem from~\cite{BCW03}. We explain this in detail in
Section~\ref{sec:application}, and the interested reader may well jump
ahead and save the proof of Theorem~\ref{thm:main} for later.

\subsubsection{Proof of Theorem~\ref{thm:main} (``only if'' part)}
\label{app:necessary}
We start with the proof for the ``only if'' part of
Theorem~\ref{thm:main}. In fact, a slightly stronger statement is
shown, namely that $\varepsilon$-sender-security implies
$\dist{ P_{\bal(S_0,S_1) W} , P_{\unif} \cdot P_W } \leq \eps$ for any
{\em 2-balanced} function. \index{balanced function}

According to Definition~\ref{def:RandOT}, $\varepsilon$-sender-security for \RandOT is satisfied for a receiver $\R$ with output $W$ if there
exists a random variable $D$ with range $\{0,1\}$ such that
$$
\frac12 \sum_{w,d,s_0,s_1} \big| P_{S_{1-D}S_D D W}(s_{1-d},s_d,d,w) - 2^{-\ell}
P_{S_D D W}(s_d,d,w) \big| \leq \varepsilon. 
$$
In order to upper bound 
$$
  \dist{ P_{\beta(S_0,S_1)W},P_{\unif} \cdot P_{W} } = \frac12 \sum_{w,b}
  \big|P_{\beta(S_0,S_1) W}(b,w) - \frac12 P_W(w) \big| \label{toupperbound}
$$
we expand the terms on the right hand side as follows. 
\begin{align*}                   
P_{\beta(S_0,S_1) W}(b,w) &= \sum_d P_{\beta(S_0,S_1) D W} (b,d,w)\\
 &= \sum_d \!\sum_{{s_d,s_{1-d}}\atop{\beta(s_0,s_1)=b}}\!\!\!\!\! P_{S_{1-D}
  S_D D W} (s_{1-d},s_d,d,w)
\end{align*}
and
$$
P_{W}(w) = \sum_d \sum_{s_d} P_{S_D D W}(s_d,d,w) 
= \sum_d 2^{-\ell+1} \cdot \!\!\!\!\!
\sum_{{s_d,s_{1-d}}\atop{\beta(s_0,s_1)=b}}\!\!\!\!\! P_{S_D D W} (s_d,d,w) 
$$
where the last equality holds because there are $2^{\ell-1}$ values for $s_{1-d}$
such that $\beta(s_0,s_1)=b$, as $\beta$ is a 2-balanced function. 
Using those two expansions we conclude that
\begin{align*}
 \dist{ &P_{\beta(S_0,S_1) W} , P_{\unif} \cdot P_{W} } \\[0.5ex]
&\leq \frac12 \sum_{w,b} \sum_{d} \!\sum_{{s_d,s_{1-d}}\atop{\beta(s_0,s_1)=b}}\!\! \big| P_{S_{1-D}
  S_D D W} (s_{1-d},s_d,d,w) - 2^{-\ell} P_{S_D D W} (s_d,d,w)
\big|  \\
&= \frac12 \sum_{w,d,s_0,s_1}\! \big| P_{S_{1-D}S_D D W}(s_{1-d},s_d,d,w) - 2^{-\ell}
P_{S_D D W}(s_d,d,w) \big|  \,\leq\, \varepsilon. 
\end{align*}
where the first inequality follows follows from the above expansions and the triangle inequality and the last inequality is our initial assumption. \qed

The ``if'' part, which is the interesting direction, is proven below. 

\subsubsection{The Case $\ell = 2$}
We feel that in order to understand the proof of
Theorem~\ref{thm:main}, it is useful to first consider the case $\ell
= 2$. Let us focus on trying to develop a condition that is sufficient
for {\em perfect} sender-security. Fix an arbitrary output $w$, and
consider an arbitrary non-normalized probability distribution $P_{S_0
  S_1 W}(\cdot,\cdot,w)$ of $S_0$ and $S_1$ when $W = w$. This is
depicted in the left table of Figure~\ref{fig:l=2}, where we write $a$
for $P_{S_0 S_1 W}(00,00,w)$, $b$ for $P_{S_0 S_1 W}(00,01,w)$, etc.
We may assume that $a \leq b,c,d$. We now extend this distribution to
$P_{S_0 S_1 D W}(\cdot,\cdot,\cdot,w)$ similar as in the proof of
Theorem~\ref{thm:xor}. This is depicted in the two right tables in
Figure~\ref{fig:l=2}. We verify what conditions $P_{S_0 S_1
  W}(\cdot,\cdot,w)$ must satisfy such that $P_{S_0 S_1 D W}$ is
indeed a valid extension, i.e., that $P_{S_0 S_1 D
  W}(\cdot,\cdot,0,w)+P_{S_0 S_1 D W}(\cdot,\cdot,1,w)=P_{S_0 S_1
  W}(\cdot,\cdot,w)$.

\def\noex#1{\makebox[0ex]{$#1$}}
\def\m{\! - \!}

\begin{myfigure}{H}
$$
{\setlength{\arraycolsep}{2.5ex}
\begin{array}{|c|c|c|c|} \hline 
& & & \\[-1ex] 
\noex a & \noex b & \noex c & \noex d \\[0.5ex] \hline 
& & & \\[-1ex] 
\noex e & \noex f & \noex g & \noex h \\[0.5ex] \hline
& & & \\[-1ex] 
\noex i & \noex j & \noex k & \noex l \\[0.5ex] \hline
& & & \\[-1ex] 
\noex m & \noex n & \noex o & \noex p \\[0.5ex] \hline
\end{array}
\atop P_{S_0 S_1 W}(\cdot,\cdot,w) }
\qquad\qquad
{\setlength{\arraycolsep}{2.5ex}
\begin{array}{|c|c|c|c|} \hline 
& & & \\[-1ex] 
\noex a & \noex a & \noex a & \noex a \\[0.5ex] \hline 
& & & \\[-1ex] 
\noex e & \noex e & \noex e & \noex e \\[0.5ex] \hline
& & & \\[-1ex] 
\noex i & \noex i & \noex i & \noex i \\[0.5ex] \hline
& & & \\[-1ex] 
\noex m & \noex m & \noex m & \noex m \\[0.5ex] \hline
\end{array}
\atop P_{S_0 S_1 D W}(\cdot,\cdot,0,w) }
\qquad
{\setlength{\arraycolsep}{2.5ex}
\begin{array}{|c|c|c|c|} \hline 
 & & & \\[-1ex] 
\noex 0 & \noex{b \m a} & \noex{c \m a} & \noex{d \m a} \\[0.5ex] \hline 
 & & & \\[-1ex] 
\noex 0 & \noex{b \m a} & \noex{c \m a} & \noex{d \m a} \\[0.5ex] \hline 
 & & & \\[-1ex] 
\noex 0 & \noex{b \m a} & \noex{c \m a} & \noex{d \m a} \\[0.5ex] \hline 
 & & & \\[-1ex] 
\noex 0 & \noex{b \m a} & \noex{c \m a} & \noex{d \m a} \\[0.5ex] \hline 
\end{array}
\atop P_{S_0 S_1 D W}(\cdot,\cdot,1,w) }
$$
\vspace{-3ex}
\caption{Distributions $P_{S_0 S_1 W}(\cdot,\cdot,w)$ and $P_{S_0 S_1
D W}(\cdot,\cdot,\cdot,w)$}\label{fig:l=2}
\end{myfigure}

For instance, looking at the second row and second column we get
equation $e + (b-a) = f$. Altogether, we get the following system of
equations.
\begin{align*}
b+e &= a+f   & b+i &= a+j   & b+m &= a+n \\
c+e &= a+g   & c+i &= a+k   & c+m &= a+o \\
d+e &= a+h   & d+i &= a+l   & d+m &= a+p
\end{align*}
Note that if all these equations do hold for any $w$, then $P_{S_0
  S_1DW}(\cdot,\cdot,\cdot,\cdot)$ is well defined and satisfies
$P_{S_0 S_1 D W}(\cdot,\cdot,0,\cdot) = \frac14 P_{S_0 D
  W}(\cdot,0,\cdot)$ and $P_{S_0 S_1 D W}(\cdot,\cdot,1,\cdot) =
\frac14 P_{S_1 D W}(\cdot,1,\cdot)$, in other words, perfect
sender-security holds.

The idea now is to show that the above equation system is equivalent
to another equation system, in which every equation expresses that a
certain \index{non-degenerate linear function} NDLF applied to $S_0$
and $S_1$ is uniformly distributed when $W=w$, which holds by
assumption.

For example, by adding all the equations in the original system while
taking every second equation with negative sign, one gets the equation
$$
b+d+e+g+j+l+m+o = a+c+f+h+i+k+n+p \, .
$$
Define the function $\bal:\set{0,1}^2 \times \set{0,1}^2 \rightarrow \set{0,1}$ as follows. Let $\bal(s_0,s_1)$ be $0$ if the entry which corresponds to $(s_0,s_1)$ in the left table in Figure~\ref{fig:l=2} appears on the left hand side of the above equation, and else we let $\bal(s_0,s_1)$ be $1$. Then the above equation simply says that $\bal(S_0,S_1) = 0$ with the same probability as $\bal(S_0,S_1) = 1$ (when $W = w$). 
Note that it is crucial that in the above equation every variable $a$ up to $p$ occurs with multiplicity exactly 1.
By comparing the function tables, it is now easy to verify that $\bal$
coincides with the function $(s_0,s_1) \mapsto s_{02} \oplus s_{12}$, where $s_{i2}$ denotes the second coordinate of $s_i \in \set{0,1}^2$, thus is a NDLF.


One can now show (and we are going to do this below for an arbitrary $\ell$) that there are enough such equations, corresponding to NDLFs, such that these equations imply the original ones. This implies that if $\bal(S_0,S_1)$ is distributed uniformly and independently of $W$ for every NDLF $\bal$, then the original equation system is satisfied (for any~$w$), and thus $P_{S_0 S_1 D W}$ is well-defined.

\subsubsection{Proof of Theorem~\ref{thm:main} (``if'' part). }

First, we consider the perfect case: if $P_{\bal(S_0,S_1) W}$ {\em
  equals} $P_{\unif} \cdot P_{W}$ for every NDLF $\bal$, then
  sender-security for Rand\ \lStringOT holds perfectly.
\index{non-degenerate linear function}

\paragraph{\sc The Perfect Case: }

Since the case $\ell = 1$ is already settled, we assume that $\ell
\geq 2$.  We generalize the idea from the case $\ell = 2$. The main
issue will be to transform the equations guaranteed by the assumption
on the linear functions into the ones required for $P_{S_0 S_1 D
  W}(\cdot,\cdot,0,w)+P_{S_0 S_1 D W}(\cdot,\cdot,1,w)=P_{S_0 S_1
  W}(\cdot,\cdot,w)$.

Fix an arbitrary output $w$ of the receiver, and consider the
non-normalized probability distribution $P_{S_0 S_1
  W}(\cdot,\cdot,w)$.
We use the variable $p_{s_0,s_1}$ to refer to $P_{S_0 S_1
  W}(s_0,s_1,w)$, and we write $\zero$ for the all-zero string $(0,\ldots,0)
\in \set{0,1}^{\ell}$. We assume that $p_{\zero,\zero} \leq
p_{\zero,s_1}$ for any \mbox{$s_1 \in \set{0,1}^{\ell}$}; we show
later that we may do so. We extend this distribution to $P_{S_0 S_1 D
  W}(\cdot,\cdot,\cdot,w)$ by setting
\begin{equation}\label{eq:extension}
P_{S_0 S_1 D W}(s_0,s_1,0,w) = p_{s_0,\zero}
\quad\text{and}\quad
P_{S_0 S_1 D W}(s_0,s_1,1,w) = p_{\zero,s_1}-p_{\zero,\zero}
\end{equation}
for any strings $s_0,s_1 \in \set{0,1}^{\ell}$, and we collect the
equations resulting from the condition that $P_{S_0 S_1
  W}(\cdot,\cdot,w) = P_{S_0 S_1 D W}(\cdot,\cdot,0,w)+P_{S_0 S_1 D
  W}(\cdot,\cdot,1,w)$ needs to be satisfied: for any two $s_0,s_1 \in \set{0,1}^{\ell}
\setminus \set{\zero}$
\begin{equation}\label{eq:original}
p_{s_0,\zero} + p_{\zero,s_1} = p_{\zero,\zero} + p_{s_0,s_1} \, .
\end{equation}
If all these equations do hold for any $w$, then as in the case of
$\ell = 1$ or $\ell = 2$, the random variable $D$ is well defined and
$P_{S_{1-D} S_D W D} = P_{\unif^{\ell}} \cdot P_{S_D W D}$ holds,
since $P_{S_0 S_1 D W}(s_0,s_1,0,w)$ does not depend on $s_1$ and
$P_{S_0 S_1 D W}(s_0,s_1,1,w)$ not on $s_0$.


We proceed by showing that the equations provided by the assumed
uniformity of $\bal(S_0,S_1)$ for any $\bal$ imply the equations given
by (\ref{eq:original}). Consider an arbitrary pair $a_0,a_1 \in
\set{0,1}^{\ell} \setminus{\set{\zero}}$ and let $\beta$ be the
associated NDLF, i.e., such that $\bal(s_0,s_1) = \ip{a_0,s_0} \oplus
\ip{a_1,s_1}$.  By assumption, $\bal(S_0,S_1)$ is uniformly
distributed, independent of $W$. Thus, for any fixed $w$, this can be
expressed as
\begin{equation}\label{eq:linear}
\sum_{\sigma_0,\sigma_1: \atop \ip{a_0,\sigma_0} = \ip{a_1,\sigma_1}}\!\!\!\!\! p_{\sigma_0,\sigma_1}\; = \!\!\!\!\sum_{\sigma_0,\sigma_1: \atop \ip{a_0,\sigma_0} \neq \ip{a_1,\sigma_1}}\!\!\!\!\! p_{\sigma_0,\sigma_1} \, ,
\end{equation}
where both summations are over all $\sigma_0,\sigma_1 \in
\set{0,1}^{\ell}$ subject to the indicated respective properties.
Recall, that this equality holds for any pair $a_0,a_1 \in
\set{0,1}^{\ell} \setminus{\set{\zero}}$. Thus, for fixed $s_0,s_1 \in
\set{0,1}^{\ell} \setminus \set{\zero}$, if we sum over all such
pairs $a_0,a_1$ subject to $\ip{a_0,s_0} = \ip{a_1,s_1} = 1$, we get
the equation
$$
\sum_{a_0,a_1: \atop \ip{a_0,s_0} = \ip{a_1,s_1} = 1}
\sum_{\sigma_0,\sigma_1: \atop \ip{a_0,\sigma_0} =
  \ip{a_1,\sigma_1}}\!\!\!\!\! p_{\sigma_0,\sigma_1}\; =
\!\!\!\!\sum_{a_0,a_1: \atop \ip{a_0,s_0} = \ip{a_1,s_1} = 1}
\sum_{\sigma_0,\sigma_1: \atop \ip{a_0,\sigma_0} \neq
  \ip{a_1,\sigma_1}}\!\!\!\!\! p_{\sigma_0,\sigma_1} \, ,
$$
which, after re-arranging the terms of the summations, leads to
\begin{equation}\label{eq:generated}
\sum_{\sigma_0,\sigma_1}
\sum_{a_0,a_1: \atop {\ip{a_0,s_0} = \ip{a_1,s_1} = 1 \atop \ip{a_0,\sigma_0} = \ip{a_1,\sigma_1}}}\!\!\!\!\! p_{\sigma_0,\sigma_1}\; = \;
\sum_{\sigma_0,\sigma_1}
\sum_{a_0,a_1: \atop {\ip{a_0,s_0} = \ip{a_1,s_1} = 1 \atop \ip{a_0,\sigma_0} \neq \ip{a_1,\sigma_1}}}\!\!\!\!\! p_{\sigma_0,\sigma_1} \, .
\end{equation}
We will now argue that, up to a constant multiplicative
factor, equation (\ref{eq:generated}) coincides with equation
(\ref{eq:original}).

First, it is straightforward to verify that the variables
$p_{\zero,\zero}$ and $p_{s_0,s_1}$ occur only on the left hand side,
both with multiplicity $2^{2(\ell-1)}$ (the number of pairs $a_0,a_1$
such that $\ip{a_0,s_0} = \ip{a_1,s_1} = 1$), whereas $p_{s_0,\zero}$
and $p_{\zero,s_1}$ only occur on the right hand side, with the same
multiplicity $2^{2(\ell-1)}$.

Now, we argue that any other $p_{\sigma_0,\sigma_1}$ equally often
appears on the right and on the left hand side, and thus cancel out.
Note that the set of pairs $a_0,a_1$, over which the summation runs on
the left respectively the right hand side, can be understood as the
set of solutions to a binary non-homogeneous linear equations system:
$$
\left(
\begin{array}{cc}
s_0 & 0 \\
0   & s_1 \\
\sigma_0 & \sigma_1 
\end{array}
\right)
\left(
\begin{array}{c}
a_0 \\ a_1
\end{array}
\right) 
=
\left(
\begin{array}{c}
1 \\ 1 \\ 0
\end{array}
\right)
\;\text{respectively}\;
\left(
\begin{array}{c}
1 \\ 1 \\ 1
\end{array}
\right)\, .
$$
Also note that the two linear equation systems consist of three
equations and involve at least 4 variables, because $a_0,a_1 \in
\set{0,1}^{\ell}$ and $\ell \geq 2$. Therefore, using basic linear
algebra, one is tempted to conclude that they both have solutions,
and, because they have the same homogeneous part, they have the same
number of solutions, equal to the number of homogeneous solutions.
However, this is only guaranteed if the matrix defining the
homogeneous part has full rank. In our situation, this is precisely the case if
and only if $(\sigma_0,\sigma_1) \not\in
\set{(\zero,\zero),(s_0,\zero),(\zero,s_1),(s_0,s_1)}$, where those
four exceptions have already been treated above. It follows that the
equations (\ref{eq:linear}), which are guaranteed by assumption, imply
the equations (\ref{eq:original}).
 
It remains to justify the assumption that
$p_{\zero,\zero} \leq p_{\zero,s_1}$ for any $s_1$. In general, we
choose $t \in \set{0,1}^{\ell}$ such that $p_{\zero,t} \leq
p_{\zero,s_1}$ for any $s_1 \in \set{0,1}^{\ell}$, and we set $P_{S_0
  S_1 D W}(s_0,s_1,0,w) = p_{s_0,t}$ and $P_{S_0 S_1 D W}(s_0,s_1,1,w)
= p_{\zero,s_1}-p_{\zero,t}$, resulting in the equation $p_{s_0,t} +
p_{\zero,s_1} = p_{\zero,t} + p_{s_0,s_1}$ that needs to be satisfied
for $s_0 \in \set{0,1}^{\ell}\setminus\set{\zero}$ and $s_1 \in
\set{0,1}^{\ell}\setminus\set{t}$. This equality, though, can be
argued as for equation (\ref{eq:original}), which we did above, simply
by replacing $p_{\sigma_0,\sigma_1}$ on both sides of
(\ref{eq:linear}) by $p_{\sigma_0,\sigma_1 \oplus t}$ (where $\oplus$
is the bit wise XOR). We may safely do so: doing a suitable variable
substitution and using linearity of the inner product, it is easy to
see that this modified equation still expresses uniformity of
$\bal(S_0,S_1)$. This concludes the proof for the perfect case.

\paragraph{\sc The General Case: }

Now, we consider the general case where there exists some $\varepsilon
> 0$ such that $\dist{P_{\bal(S_0,S_1) W}, P_{\unif} \cdot P_{W}} \leq
2^{-2\ell-1}\varepsilon$ for any NDLF $\bal$. We use the observations
from the perfect case but additionally keep track of the ``error
term''.
\index{non-degenerate linear function}

For any $w$ with $P_W(w) > 0$ and any NDLF $\bal$, set 
$$
\varepsilon_{w,\bal} = \dist{P_{\bal(S_0,S_1)
    W}(\cdot,w),P_{\unif} \cdot P_W(w)}\, .
$$
Note that $\sum_w \varepsilon_{w,\bal} =\dist{ P_{\bal(S_0,S_1) W} ,
P_{\unif} \cdot P_{W} } \leq 2^{-2\ell-1}\varepsilon$, independent of
$\bal$.  Fix now an arbitrary $w$ with $P_W(w) > 0$. Then,
(\ref{eq:linear}) only holds up to an error of
$2\varepsilon_{w,\bal}$, where $\bal$ is the NDLF associated to
$a_0,a_1$. As a consequence, Equation~\eqref{eq:generated} only holds
up to an error of $2 \sum_{\bal}\varepsilon_{w,\bal}$ and thus
(\ref{eq:original}) holds up to an error of $ \delta_{s_0,s_1} =
\frac{2}{2^{2\ell-2}} \sum_{\bal}\varepsilon_{w,\bal} \, , $ where the
sum is over the $2^{2\ell-2}$ functions associated to the pairs
$a_0,a_1$ with $\ip{a_0,s_0} = \ip{a_1,s_1} = 1$.  Note that
$\delta_{s_0,s_1}$ depends on $w$, but the set of $\bal$'s, over which
the summation runs, does not.  Adding up over all possible $w$'s gives
$$
\sum_w \delta_{s_0,s_1} = \frac{2}{2^{2\ell-2}} \sum_w
\sum_{\bal}\varepsilon_{w,\bal} = \frac{2}{2^{2\ell-2}} \sum_{\bal}
\sum_w \varepsilon_{w,\bal} \leq 2^{-2\ell} \varepsilon \, .
$$

Since (\ref{eq:original}) only holds approximately, $P_{S_0 S_1 D W}$
as in~(\ref{eq:extension}) is not necessarily a valid extension, but
close. This can obviously be overcome by instead setting
\begin{align*}
P_{S_0 S_1 D W}(s_0,s_1,0,w) &= p_{s_0,\zero} \pm \delta'_{s_0,s_1}
\quad \text{ and }\\
P_{S_0 S_1 D W}(s_0,s_1,1,w) &= p_{\zero,s_1}-p_{\zero,\zero} \pm \delta''_{s_0,s_1}
\end{align*}
with suitably chosen $\delta'_{s_0,s_1},\delta''_{s_0,s_1} \geq 0$
with $\delta'_{s_0,s_1}+\delta''_{s_0,s_1} = \delta_{s_0,s_1}$, and
with suitably chosen signs ``$+$'' or ``$-$''.\footnote{Most of the
  time, it probably suffices to correct one of the two, say, choose
  $\delta'_{s_0,s_1} = \delta_{s_0,s_1}$ and $\delta''_{s_0,s_1} = 0$;
  however, if for instance $p_{s_0,\zero}$ and $p_{\zero,s_1} -
  p_{\zero,\zero}$ are both positive but $P_{S_0 S_1 W}(s_0,s_1,w) =
  0$, then one has to correct both. } Using that every $P_{S_0 S_1 D
  W}(s_0,s_1,0,w)$ differs from $p_{s_0,\zero}$ by at most
$\delta'_{s_0,s_1}$, it follows from a straightforward computation
that $ \dist{P_{S_{1-D} S_D D W}(\cdot,\cdot,0,w),P_{\unif} P_{S_D D
    W}(\cdot,0,w)} \leq \sum_{s_0,s_1} \delta'_{s_0,s_1} \, .  $ The
corresponding holds for $P_{S_0 S_1 D W}(\cdot,\cdot,1,w)$. It follows
that
\begin{align*}
  \dist{P_{S_{1-D} S_D W D},P_{\unif} P_{S_D W D}} \leq \sum_w
  \sum_{s_0,s_1} (\delta'_{s_0,s_1} + \delta''_{s_0,s_1}) =
  \sum_{s_0,s_1} \sum_w \delta_{s_0,s_1} \leq \varepsilon
\end{align*}  
which concludes the proof. \qed

\section{Applications}\label{sec:application}

In this section we will show the usefulness of Theorem~\ref{thm:main}
for the construction of \lStringOT, based on weaker primitives like a
noisy channel or other flavors of \pOT.  In particular, we will show
that the reducibility of \OT to any weaker flavor of \pOT follows as
a simple argument using Theorem~\ref{thm:main}.

\subsection[Reducing \lStringOT to Repetitions of Weak \OT\!\!s]{Reducing \lStringOT to Independent Repetitions of Weak \OT
  \!\!s}\label{sec:reductions}

\subsubsection{Background}
A great deal of effort has been put into constructing protocols for
\lStringOT based on physical assumptions like various models for noisy
channels~\cite{CK88,DKS99,DFMS04,CMW04} or a memory bounded
adversary~\cite{CCM98,Din01,DHRS04}, as well as into reducing
\lStringOT to (seemingly) weaker flavors of \pOT, like \RabinOT, \XOT,
\GOT and \BUOT~\cite{Crepeau87,BC97,Cachin98,Wolf00,BCW03,CS06,
  Wullschleger07}.  Note that the latter three flavors of \pOT are
weaker than \OT in that the dishonest receiver has more freedom in
choosing the sort of information he wants to get about the sender's
input bits $B_0$ and $B_1$: $B_0$, $B_1$ or $B_0 \oplus B_1$ in case
of \textsl{\onetwo\:XOR-\pOT} (which is abbreviated by \XOT),
$g(B_0,B_1)$ for an arbitrary one-bit-output function $g$ in case of
\textsl{\onetwo\: Generalized-\pOT (\GOT)}, and an arbitrary
probabilistic $Y$ with mutual information $I(B_0 B_1;Y) \leq 1$ in
case of \textsl{\onetwo\: Universal-\pOT (\BUOT)}.\footnote{As a
  matter of fact, reducibility has been proven for any bound on $I(B_0
  B_1;Y)$ strictly smaller than $2$.  Note that there is some
  confusion in the literature in what a {\em Universal} $\pOT$,
  $\pUOT$ is: In~\cite{BC97,Wolf00,BCW03}, a $\pUOT$ takes as
  input two {\em bits} and the receiver is doomed to have at least one
  bit or any other
non-trivial amount of {\em Shannon} entropy on them; we denote this 
by $\BUOT$. Whereas in~\cite{Cachin98}, a $\pUOT$ takes as input two
{\em strings} and the receiver is doomed to have some {\em R\'enyi}
entropy of order $\alpha > 1$ on them. We address this latter notion in more detail in
Section~\ref{sec:UOT}. }

All these reductions of \OT to weaker versions follow a specific
construction design, which is also at the core of the \OT protocols
based on noisy channels or a memory-bounded adversary. By repeated
independent executions of the underlying primitive, $\S$ transfers a
randomly chosen bit string $X = (X_0,X_1) \in \set{0,1}^n \times
\set{0,1}^n$ to $\R$ such that: 
\vspace{-3mm}
\begin{enumerate} \addtolength{\itemsep}{-3mm} 
\item depending on his choice bit $C$, the honest $\R$ knows either
  $X_0$ or $X_1$,
\item any $\dS$ has no information on which part of $X$ $\R$ learned,
  and 
\item any $\dR$ has some uncertainty in $X$. 
\end{enumerate}

Then, this is completed to a \RandOT by means of privacy
amplification (cf. Section~\ref{sec:pa}): $\S$ samples two functions $\hf_0$ and
$\hf_1$ from a \univ class $\UH$ of hash functions, sends them to
$\R$, and outputs $S_0 = \hf_0(X_0)$ and $S_1 = \hf_1(X_1)$, and $\R$
outputs $S_C = \hf_C(X_C)$. Finally, the \RandOT is transformed into
an ordinary \OT in the obvious way.

Correctness and receiver-security of this construction are clear, they follow
immediately from 1.\ and 2. How easy or hard it is to prove
sender-security depends heavily on the underlying primitive. In case of
\RabinOT it is rather straightforward. In case of \XOT and the other
weaker versions, this is non-trivial. The problem is that since $\R$
might know $X_0 \oplus X_1$, it is not possible to argue that there
exists $d \in \set{0,1}$ such that $\R$'s uncertainty on $X_{1-d}$ is
large when given $X_d$. This, though, would be necessary in order to
finish the proof by simply applying the privacy amplification
theorem (Corollary~\ref{thm:classicalpa}). This difficulty is overcome in
\cite{BC97,BCW03} by tailoring the proof to a particular \univ class
of hash functions, namely the class of all {\em linear} hash
functions. Whether the reduction also works for a less restricted
class of hash functions is left in~\cite{BC97,BCW03} as an open
problem, which we solve here as a side result. Using a smaller class of hash functions would allow for instance to reduce the communication complexity of the protocol. 

In \cite{CS06}, the difficulty is overcome by giving up on the
simplicity of the reduction. The cost of two-way communication
allowing for \index{interactive hashing}interactive hashing is traded for better reduction
parameters. We would like to emphasize that these parameters are
incomparable to ours, because a different reduction is used, whereas
our approach provides a \emph{better analysis} of the common
non-interactive reductions.

\subsubsection{The New Approach}

We argue that, independent of the underlying primitive,
sender-security follows as a simple consequence of
Theorem~\ref{thm:main}, in combination with a simple observation
regarding the composition of \index{non-degenerate linear
function}non-degenerate linear (respectively, more general,
2-\index{balanced function}balanced) functions with strongly \univ
hash functions, stated in Proposition~\ref{prop:balanced->u2} below.
\index{two-universal hashing!strongly} 
\index{2-universal hashing|see {two-universal hashing}}

Recall Definition~\ref{def:strongly-two-universal} of strong
two-universality.  A class $\UH$ of hash functions from $\set{0,1}^n$
to $\set{0,1}^{\ell}$ is {\em strongly \univ}, if for any
distinct $x,x' \in \set{0,1}^n$ the two random variables $\Hf(x)$ and
$\Hf(x')$ are independent and uniformly distributed over
$\set{0,1}^{\ell}$, where the random variable $\Hf$ represents the
random choice of a function in $\UH$.
\begin{proposition}\label{prop:balanced->u2}
  Let $\UH_0$ and $\UH_1$ be two classes of strongly \univ hash
  functions from $\set{0,1}^{n_0}$ respectively $\set{0,1}^{n_1}$ to
  $\set{0,1}^{\ell}$, and let
  $\bal:\set{0,1}^{\ell}\times\set{0,1}^{\ell}\rightarrow\set{0,1}$ be
  a 2-balanced function. Consider the class $\UH$ of all functions
  $\hf:\set{0,1}^{n_0}\times\set{0,1}^{n_1}\rightarrow\set{0,1}$ with
  $\hf(x_0,x_1) = \bal(\hf_0(x_0),\hf_1(x_1))$ where $\hf_0\in \UH_0$
  and $\hf_1 \in \UH_1$. Then, $\UH$ is strongly \univ.\footnote{It is
    easy to see that the claim does not hold in general for ordinary
    (as opposed to strongly) \univ classes: if $n_0 = n_1 = \ell$ and
    $\UH_0$ and $\UH_1$ both only contain the identity function
    $id:\set{0,1}^{\ell}\rightarrow\set{0,1}^{\ell}$ and thus are
    \univ, then $\UH$ consisting of the function $\hf(x_0,x_1) =
    \bal(id(x_0),id(x_1)) = \bal(x_0,x_1)$ is not \univ. }
\end{proposition}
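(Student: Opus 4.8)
The plan is to verify the defining property of strong two-universality directly: for distinct pairs $(x_0,x_1), (x_0',x_1') \in \set{0,1}^{n_0}\times\set{0,1}^{n_1}$, I must show that $\hf(x_0,x_1) = \bal(\hf_0(x_0),\hf_1(x_1))$ and $\hf(x_0',x_1') = \bal(\hf_0(x_0'),\hf_1(x_1'))$ are independent and uniformly distributed over $\set{0,1}$ when $\hf_0 \in_R \UH_0$ and $\hf_1 \in_R \UH_1$ are chosen independently. The key structural fact I would exploit is that since $(x_0,x_1) \neq (x_0',x_1')$, at least one of the two coordinates differs; without loss of generality say $x_0 \neq x_0'$. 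Then by strong two-universality of $\UH_0$, the pair $(\hf_0(x_0), \hf_0(x_0'))$ is uniformly distributed over $\set{0,1}^{\ell}\times\set{0,1}^{\ell}$, independently of $\hf_1$ (hence independently of $\hf_1(x_1)$ and $\hf_1(x_1')$, whatever they are).

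First I would fix arbitrary values $y_1 = \hf_1(x_1)$ and $y_1' = \hf_1(x_1')$ and condition on them; it then suffices to show that $(\bal(\hf_0(x_0), y_1), \bal(\hf_0(x_0'), y_1'))$ is uniform on $\set{0,1}^2$ over the random choice of $\hf_0$. Writing $U_0 = \hf_0(x_0)$ and $U_0' = \hf_0(x_0')$, these are two \emph{independent} uniform elements of $\set{0,1}^{\ell}$. Now I use that $\bal$ is $2$-balanced: for any fixed second argument $y_1$, the map $\sigma_0 \mapsto \bal(\sigma_0, y_1)$ is a balanced function $\set{0,1}^{\ell}\to\set{0,1}$, so $\bal(U_0, y_1)$ is a uniform bit; likewise $\bal(U_0', y_1')$ is a uniform bit; and since $U_0$ and $U_0'$ are independent, these two bits are independent. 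This gives the desired uniformity of $(\hf(x_0,x_1), \hf(x_0',x_1'))$ conditioned on $(y_1,y_1')$, and since the conclusion holds for every value of $(y_1,y_1')$, it holds unconditionally. The symmetric argument handles the case $x_1 \neq x_1'$ (here one would need both functions $\bal(s_0,\cdot)$ to be balanced, which is again exactly $2$-balancedness).

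I do not expect a serious obstacle here; the one point requiring a little care is the bookkeeping of the conditioning: one must make sure that after conditioning on the values of $\hf_1$ at the two points, the random variables $\hf_0(x_0), \hf_0(x_0')$ retain their joint distribution (which they do, since $\hf_0$ and $\hf_1$ are chosen independently). A second minor subtlety is the degenerate case where $x_0 = x_0'$ forces $x_1 \neq x_1'$ (and vice versa), so one genuinely needs \emph{both} halves of the $2$-balanced property, not just one — this is precisely what distinguishes $2$-balancedness from mere balancedness in one argument, and it is why the footnote's counterexample with the identity function works (the identity is not balanced in either argument). Finally, I would note that the argument never uses linearity of $\bal$, only $2$-balancedness, which is why the proposition is stated for $2$-balanced functions; the NDLF case follows since every NDLF is $2$-balanced by the lemma preceding Definition~\ref{def:balanced}.
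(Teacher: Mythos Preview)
Your proof is correct and essentially identical to the paper's: both fix distinct pairs, assume without loss of generality that one coordinate differs, condition on (or fix) the hash function in the other coordinate, use strong two-universality to get two independent uniform $\ell$-bit values, and then use $2$-balancedness to conclude that the two output bits are independent and uniform. The only cosmetic difference is that the paper conditions on $\hf_0$ and uses the randomness in $\hf_1$, whereas you condition on $\hf_1$ and use the randomness in $\hf_0$; these are symmetric.
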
 
\begin{proof}
  Fix distinct $x = (x_0,x_1)$ and $x' = (x'_0,x'_1)$ in
  $\set{0,1}^{n_0}\times\set{0,1}^{n_1}$. Assume without loss of
  generality that $x_1 \neq x'_1$. Fix $\hf_0 \in \UH_0$, and set $s_0
  = \hf_0(x_0)$ and $s'_0 = \hf_0(x'_0)$. By assumption on $\UH_1$,
  the random variables $\Hf_1(x_1)$ and $\Hf_1(x'_1)$ are independent and
  uniformly distributed over $\set{0,1}^{\ell}$, where $\Hf_1$
  represents the random choice for $\hf_1 \in \UH_1$. By the
  assumption on $\bal$, this implies that
  $\bal(\hf_0(x_0),\Hf_1(x_1))$ and $\bal(\hf_0(x'_0),\Hf_1(x'_1))$
  are independent and uniformly distributed over $\set{0,1}$. This holds
  no matter how $\hf_0$ is chosen, and thus proves the claim.
\end{proof}

Now, briefly, sender-security for a construction as sketched above can
be argued as follows: The only restriction is that $\UH$ needs to be
{\em strongly} \univ. From the independent repetitions of the
underlying weak \pOT (\RabinOT, \XOT, \GOT or \BUOT) it follows that
$\dR$ has ``high'' collision entropy in $X$. Hence, for any NDLF
$\bal$, we can apply the privacy-amplification
Theorem~\ref{thm:classicalpa} with the strongly \univ\ hash function
$\bal(\hf_0(\cdot),\hf_1(\cdot))$ and argue that
$\bal(\hf_0(X_0),\hf_1(X_1))$ is close to uniform for randomly chosen
$\hf_0$ and $\hf_1$. Sender-security then follows immediately from
Theorem~\ref{thm:main}.

We save the quantitative analysis (Theorem~\ref{thm:UOT}) for next
section, where we consider a reduction of \OT to the weakest kind of
\pOT: to {\em one} execution of a \pUOT.  Based on this, we compare in
Section~\ref{sec:comparison} the quality of the analysis of the above
reductions based on Theorem~\ref{thm:main} with the results
in~\cite{BCW03}. It turns out that our analysis is tighter for \GOT 
and \BUOT, whereas the analysis in~\cite{BCW03} is tighter for \XOT;
but in all cases, our analysis is much simpler and, we believe, more
elegant.


\subsection{Reducing \lStringOT to One Execution of \pUOT}\label{sec:UOT}

In this section, we use the definition and some elementary properties
of R\'enyi entropy introduced in Section~\ref{sec:conditionalrenyientropy}.

\subsubsection{Universal Oblivious Transfer}
Probably the weakest flavor of \pOT is the {\em Universal} \pOT 
(\pUOT) as it was introduced by Cachin in~\cite{Cachin98}, in that it gives the
receiver the most freedom in getting information on the string $X$.
Formally, for a finite set $\cX$ and parameters $\alpha > 1$
(allowing $\alpha = \infty$) and $r > 0$, an \UOT{\alpha}{r}{\cX} works
as follows: the sender inputs $x \in \cX$, and the receiver may choose
an arbitrary conditional probability distribution $P_{Y|X}$ with the
only restriction that for a uniformly distributed $X$ it must satisfy
$\H_{\alpha}(X|Y) \geq r$.
The receiver then gets as output $y$, sampled according to the
distribution $P_{Y|X}(\cdot|x)$, whereas the sender gets no
information on the receiver's choice for $P_{Y|X}$. Note that a \BUOT
is a limit case of this kind of \pUOT since ``\BUOT =
\UOT{1}{1}{\set{0,1}^2}''.

The crucial property of such an \pUOT is that the input is not restricted to two bits, but 
may be two bit-{\em strings}; this potentially allows to reduce \OT 
to {\em one} execution of a \pUOT, rather than to many independent
executions of the same primitive as for the 1-2 flavors of \pOT 
mentioned above. Indeed, following the design principle discussed in
Section~\ref{sec:reductions}, it is straightforward to come up with a
candidate protocol for \lStringOT which uses {\em one} execution
of a \UOT{\alpha}{r}{\cX} with $\cX = \set{0,1}^n \times \set{0,1}^n$.
The protocol is given in Figure~\ref{fig:otuot}, where $\UH$ is a
strongly \univ class of hash functions from $\set{0,1}^n$ to
$\set{0,1}^{\ell}$.

\begin{myfigure}{H}
 \begin{myprotocol}{\OTUOT$(c)$}
 \item $\S$ and $\R$ run \UOT{\alpha}{r}{\cX}: $\S$ inputs a random $x
   = (x_0,x_1) \in \cX = \set{0,1}^n \times \set{0,1}^n$, $\R$ inputs
   $P_{Y|X}$ with $P_{Y|X}(x_c'|(x_0',x_1')) = 1$ for any
   $(x'_0,x'_1)$, and as a result $\R$ obtains $y = x_c$.
  \item $\S$ samples independent random $\hf_0,\hf_1 \in \UH$, sends $\hf_0$ and $\hf_1$ to $\R$, and outputs $s_0 = \hf_0(x_0)$ and $s_1 = \hf_1(x_1)$. 
  \item $\R$ computes and outputs $s_c = \hf_c(y)$. 
 \end{myprotocol}
\caption{Protocol \OTUOT\ for 
\Rand\lStringOT. }\label{fig:otuot}
\end{myfigure}

In \cite{Cachin98} it is claimed that, for appropriate parameters,
protocol \OTUOT\ is a secure \Rand\: \lStringOT, respectively, the
resulting protocol for \OT is secure. However, we argue below that
the proof given is not correct and it is not obvious how to fix it.
In Theorem~\ref{thm:UOT} we then show that its security follows easily
from Theorem~\ref{thm:main}.

\subsubsection{A Flaw in the Security Proof}

In \cite{Cachin98} the security of protocol \OTUOT\ is argued as
follows.  Using rather complicated {\em spoiling-knowledge
  techniques}, it is \index{spoiling-knowledge} shown that,
conditioned on the receiver's output (which we suppress to simplify
the notation) at least one out of $\H_{\infty}(X_0)$ and
$\H_{\infty}(X_1|X_0 \= x_0)$ is ``large'' (for any~$x_0$), and,
similarly, at least one out of $\H_{\infty}(X_1)$ and
\mbox{$\H_{\infty}(X_0|X_1 \= x_1)$}.  Since collision entropy is
lower bounded by min-entropy, it then follows from the privacy
amplification theorem that at least one out of $\H(\Hf_0(X_0)|\Hf_0)$
and $\H(\Hf_1(X_1)|\Hf_1, X_0 \= x_0)$ is close to $\ell$, and
similarly,
one out of $\H(\Hf_1(X_1)|\Hf_1)$ and $\H(\Hf_0(X_0)|\Hf_0, X_1 \=
x_1)$. It is then claimed that this proves \OTUOT\ secure.

We argue that this very last implication is not correct. Indeed, what
is proven about the entropy of $\Hf_0(X_0)$ and $\Hf_1(X_1)$ does not
exclude the possibility that both entropies $\H(\Hf_0(X_0)|\Hf_0)$ and
$\H(\Hf_1(X_1)|\Hf_1)$ are maximal, but that
$\H(\Hf_0(X_0)\oplus\Hf_1(X_1)|\Hf_0,\Hf_1) = 0$. This would allow the
receiver to learn the bit wise XOR $S_0 \oplus S_1$, which is clearly
forbidden by the condition of sender-security.

Also note that the proof does not use the fact that the two functions
$\Hf_0$ and $\Hf_1$ are chosen {\em independently}. However, if they
are chosen to be the same, then the protocol is clearly insecure: if
the receiver asks for $Y = X_0 \oplus X_1$, and if $\UH$ is a class of
{\em linear} \univ hash functions, then $\dR$ obviously learns $S_0
\oplus S_1$.

\subsubsection{Reducing \lStringOT to \pUOT}

The following theorem guarantees the security of \OTUOT\ for an
appropriate choice of the parameters. The only restriction we have to
make is that $\UH$ needs to be a {\em strongly} \univ class of hash
function.

\begin{theorem}\label{thm:UOT}
  Let $\UH$ be a {\em strongly} \univ class of hash functions from
  $\set{0,1}^n$ to $\set{0,1}^{\ell}$. Then \OTUOT reduces a
  $2^{-\sp}$-secure \RandlStringOT to a perfect
  \UOT{2}{r}{\set{0,1}^{2n}} with $n \geq r \geq 4 \ell + 2 \sp + 1$.
\end{theorem}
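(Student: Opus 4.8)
The plan is to verify the three security conditions of Definition~\ref{def:RandOT} for the protocol \OTUOT\ separately. Correctness and receiver-security are immediate from the structure of the protocol: if both parties are honest, then $\R$ learns $y = x_c$ and hence computes $s_c = \hf_c(x_c)$ correctly (perfect correctness, $\varepsilon=0$ on this count), and since $\S$ inputs only a random $x$ to the \pUOT\ and sends $\hf_0,\hf_1$ regardless of anything, a dishonest $\dS$ obtains no information about $\R$'s choice $P_{Y|X}$, in particular none about $c$ (perfect receiver-security). So the entire work is in establishing sender-security against an arbitrary dishonest $\dR$ with output $W$ (which, without loss of generality, includes $\hf_0,\hf_1$ and the \pUOT-output $y$).

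For sender-security I would proceed as follows. First, since the underlying primitive is a perfect \UOT{2}{r}{\set{0,1}^{2n}} with $X = (X_0,X_1)$ uniform on $\set{0,1}^{2n}$, the defining guarantee is that $\dR$'s view $Y$ (call the relevant part $V$, a component of $W$) satisfies $\H_2(X \mid V) \geq r$, i.e.\ $\H_2(X_0 X_1 \mid W) \geq r$. Now fix an arbitrary NDLF $\bal:\set{0,1}^\ell\times\set{0,1}^\ell\to\set{0,1}$. By Proposition~\ref{prop:balanced->u2}, the family of functions $(x_0,x_1)\mapsto \bal(\hf_0(x_0),\hf_1(x_1))$ for $\hf_0,\hf_1\in\UH$ is a strongly \univ\ (hence \univ) class of hash functions from $\set{0,1}^{2n}$ to $\set{0,1}$. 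Applying the classical privacy-amplification theorem (Corollary~\ref{thm:classicalpa}) with output length $\ell'=1$ to the random variable $X=(X_0,X_1)$ conditioned on each value $w$ of $W$, and then averaging over $W$ using the expansion of variational distance over the product domain, gives
\[
\dist{ P_{\bal(\hf_0(X_0),\hf_1(X_1))\, \hf_0\hf_1\, W},\; P_{\unif}\cdot P_{\hf_0\hf_1 W} } \;\leq\; \tfrac12\, 2^{-\frac12(\H_2(X_0X_1\mid W)-1)} \;\leq\; \tfrac12\, 2^{-\frac12(r-1)}.
\]
Since $S_0=\hf_0(X_0)$, $S_1=\hf_1(X_1)$, and $\hf_0,\hf_1$ are part of $W$, the left-hand side is exactly $\dist{ P_{\bal(S_0,S_1) W},\; P_{\unif}\cdot P_W }$. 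With $r \geq 4\ell + 2\sp + 1$ we get $\tfrac12\, 2^{-\frac12(r-1)} \leq 2^{-2\ell-\sp-1} = 2^{-2\ell-1}\cdot 2^{-\sp}$. Thus the hypothesis of the ``if'' direction of Theorem~\ref{thm:main} is met with $\varepsilon = 2^{-\sp}$: for every NDLF $\bal$ we have $\dist{ P_{\bal(S_0,S_1) W},\; P_{\unif}\cdot P_W } \leq 2^{-\sp}/2^{2\ell+1}$. Theorem~\ref{thm:main} then yields $2^{-\sp}$-sender-security, completing the proof.

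The one routine point to watch is the averaging step: Corollary~\ref{thm:classicalpa} is stated for an unconditional distribution, so I would apply it to the conditional distribution $P_{X_0X_1\mid W=w}$ (whose collision entropy is at least $\H_2(X_0X_1\mid W)\geq r$ by the definition of conditional collision entropy as a minimum over $w$), obtain the bound for each $w$, and sum against $P_W(w)$; the decomposition $\dist{P,Q}=\sum_w \dist{P(\cdot,w),Q(\cdot,w)}$ for distributions on a product domain makes this clean. I do not expect a genuine obstacle here — the real content is already packaged in Proposition~\ref{prop:balanced->u2} and Theorem~\ref{thm:main} — the only mild subtlety is bookkeeping the factor $2^{-2\ell-1}$ from Theorem~\ref{thm:main} against the exponent $\tfrac12(r-1)$ from privacy amplification, which is exactly what the bound $r\geq 4\ell+2\sp+1$ is engineered to absorb (note $\tfrac12(r-1)\geq 2\ell+\sp$, and the extra $\tfrac12$ in front of the privacy-amplification term supplies the remaining $+1$).
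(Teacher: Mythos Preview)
Your proposal is correct and follows essentially the same route as the paper: for each NDLF $\bal$, combine Proposition~\ref{prop:balanced->u2} with the privacy-amplification Corollary~\ref{thm:classicalpa} (conditioned on the \pUOT-output $y$ and then averaged) to bound $\dist{P_{\bal(S_0,S_1)W},P_{\unif}P_W}$ by $2^{-\sp}/2^{2\ell+1}$, and conclude via Theorem~\ref{thm:main}. The only cosmetic point is that when you invoke privacy amplification you should condition on the \pUOT-output $y$ alone (leaving $\Hf_0,\Hf_1$ as the random hash), not on all of $W$, since $W$ already contains the hash functions; the paper makes this explicit, and your final paragraph gets it right in spirit.
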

Using the bounds from Lemma~\ref{lemma:bounds} on the different orders
of R\'enyi entropy, the reducibility of \lStringOT to
\UOT{\alpha}{r}{\cX} follows immediately for {\em any} $\alpha > 1$.

Informally, sender-security of the protocol \OTUOT\ is argued as for
the reduction of \OT to \RabinOT, \XOT etc., discussed in
Section~\ref{sec:reductions}, simply by using
Proposition~\ref{prop:balanced->u2} in combination with the privacy
amplification Theorem~\ref{thm:classicalpa}, and applying
Theorem~\ref{thm:main}.  The formal proof given below additionally
keeps track of the error term.

From this proof it also becomes clear that the exponential (in $\ell$)
overhead in Theorem~\ref{thm:main} is unavoidable. Indeed, a
sub-exponential overhead would allow $\ell$ in Theorem~\ref{thm:UOT}
to be super-linear in $n$, which of course is nonsense.

\begin{proof}
  By the definition of conditional collision entropy, we have that for
  all~$y$,
$\H_2(X|Y\=y) \geq r \geq 4 \ell + 2\sp +1$. Fix
  an arbitrary $y$ and consider any NDLF \index{non-degenerate linear function}
  $\bal:\set{0,1}^{\ell}\times\set{0,1}^{\ell}\rightarrow\set{0,1}$.
  Let $\Hf_0$ and $\Hf_1$ be the random variables that represent the
  random choices of $\hf_0$ and $\hf_1$, and set $B =
  \bal(\Hf_0(X_0),\Hf_1(X_1))$. In combination with
  Proposition~\ref{prop:balanced->u2}, privacy amplification
  (Corollary~\ref{thm:classicalpa}) guarantees that
$$
\dist{P_{B \Hf_0 \Hf_1|Y=y}, P_{\unif} P_{\Hf_0 \Hf_1|Y=y}} \leq
2^{-\frac12(\H_2(X|Y=y)+1)} \leq 2^{-\frac12(4\ell+2\sp+2)} =
2^{-2\ell-\sp-1}.
$$
It now follows that
\begin{align*}
  \dist{ P_{\bal(S_0,S_1) W} &, P_{\unif} \cdot P_{W} } = \dist{P_{B \Hf_0 \Hf_1 Y},P_{\unif} P_{\Hf_0 \Hf_1 Y}} \\
  &= \sum_y \dist{P_{B \Hf_0 \Hf_1|Y=y},P_{\unif} P_{\Hf_0 \Hf_1|Y=y}}
  P_Y(y) \leq 2^{-\sp} / 2^{2\ell+1} \, .
\end{align*}
Sender-security as claimed now follows from Theorem~\ref{thm:main}.
\end{proof}

The min-entropy splitting Lemma~\ref{lemma:ESL} and a larger (not
necessarily strongly) two-universal class of hash functions can
alternatively be used to show the security of the reduction protocol
\OTUOT\ without the use of \index{non-degenerate linear function}
NDLFs. We do this here for illustration
purposes because the same technique is used in the security proof of
\OT in the bounded-quantum-storage model in Chapter~\ref{chap:12OT}.
After the execution of a perfect \UOT{\infty}{r}{\set{0,1}^{2n}}, we
have $\hmin(X_0 X_1|Y) \geq r$ and Lemma~\ref{lemma:ESL} yields the
existence of a random variable $D \in \set{0,1}$ such that
$\hmin(X_{1-D} D | Y) \geq r/2$ and therefore also $\hmin(X_{1-D} D
  S_D | Y) \geq r/2$. By the chain rule (Lemma~\ref{lem:chain}) and
setting $\eps \assign 2^{-\sp-1}$, we get $\hiee{X_{1-D} | D S_D Y}
\geq r/2 -1 - \ell - \sp -1$. Hence to get a $2^{-\sp}$-secure
\RandlStringOT via the privacy amplification theorem
(Corollary~\ref{thm:pasmooth}), we need $r/2 - \ell - \sp -2 > 2
\kappa + \ell$ which gives slightly worse parameters than in
Theorem~\ref{thm:UOT}, namely $n \geq r \geq 4 \ell + 4 \sp +4$.

\subsection{Quantitative Comparisons To Related Work}\label{sec:comparison}
\index{entropy!splitting lemma|see {min-entropy splitting lemma}}
Subsequent to \cite{DFSS06}, Wullschleger improved the min-entropy
splitting technique \index{min-entropy splitting lemma} described in
the last paragraph. In \cite{Wullschleger07}, it is shown that the
protocol \OTUOT\ reduces a $2^{-\sp}$-secure \RandlStringOT to a
perfect \UOT{\infty}{r}{\set{0,1}^{2n}} if $n \geq r \geq 2 \ell + 6
\sp + 6 \log(3)$. So, \RandlStringOT of strings of length $\ell$
roughly half of the receivers min-entropy $r$ can be obtained, which
is asymptotically optimal for this reduction-protocol. Technically,
the result is essentially obtained by using the min-entropy splitting
approach sketched at the end of last section and a more careful case
distinction. The random variable $D \in \set{0,1}$ pointing to the
``known'' string $X_D$ is basically defined as in
Lemma~\ref{lemma:ESL}, but for the case when both $X_0,X_1$ have high
min-entropy, a new \emph{distributed} left-over hash lemma is used to show
that both $S_0$ and $S_1$ are close to uniform and therefore close to
independent (and hence, the pointer $D$ can be chosen arbitrarily in
this case).

\vspace{3mm}

\def\cl{c_{\text{len}}}
\def\ck{c_{\text{sec}}}
\def\c0{c_{\text{const}}}

In the following, we compare the simple reduction of \lStringOT to $n$
executions of \XOT, \GOT and \BUOT, respectively, using our analysis
based on Theorem~\ref{thm:main} together with the quantitative statement
given in Theorem~\ref{thm:UOT}, with the results achieved
in~\cite{BCW03}.\footnote{As mentioned earlier, these results are
  incomparable to the parameters achieved in \cite{CS06}, where
  \emph{interactive} reductions are used.} The quality of the analysis
of a reduction is given by the {\em reduction parameters} $\cl$, $\ck$
and $\c0$ such that the \lStringOT is guaranteed to be
$2^{-\sp}$-secure as long as $n \geq \cl \cdot \ell + \ck \cdot \sp +
\c0$. The smaller these constants are, the better is the analysis of
the reduction. The comparison of these parameters is given in
Figure~\ref{fig:comparison}. We focus on $\cl$ and $\ck$ since $\c0$
is not really relevant, unless very large.

\begin{myfigure}{H}
\renewcommand{\arraystretch}{1.5}
\setlength{\tabcolsep}{2ex}
\begin{tabular}{l|cc|cc|cc|}
            & \multicolumn{2}{c|}{\XOT} & \multicolumn{2}{c|}{\GOT} & \multicolumn{2}{c|}{\BUOT} \\[-1.5ex]
            & $\cl$ & $\ck$& $\cl$ & $\ck$& $\cl$ & $\ck$ \\ \hline
BCW \cite{BCW03}&  2 &  2 & 4.8 & 4.8 & 14.6 & 14.6 \\ 
this work~\cite{DFSS06}   &  4 &  2 & 4   & 3   & 13.2 & 10.0 \\ 
subsequent \cite{Wullschleger07} & 2 & 6 & 2 & 7 & 6.7 & 23.3 \\ \hline
\end{tabular}
\caption{Comparison of the reduction parameters. }\label{fig:comparison}
\vspace{2ex}
\end{myfigure}

The parameters in the first line can easily be extracted from
Theorems~5,~7 and~9 of~\cite{BCW03}, where in Theorem~9 $p_e \approx
0.19$. The parameters in the second line corresponding to the
reduction to \XOT follow immediately from Theorem~\ref{thm:UOT},
using the fact that in {\em one} execution of a \XOT, the receiver's
conditional collision entropy on the sender's two input bits is at
least~$1$.

Determining the parameters of the reductions to \GOT and \BUOT requires
a little more work. We first determine the \emph{average} conditional
\index{entropy!average conditional min-}
min-entropy $\tH_{\infty}(X|Y)$ of one instance of \GOT and \BUOT. In
the case of \GOT, $\tH_{\infty}(X|Y)$ can easily be seen to be at
least 1 (for example by inspection of Table 2 in \cite{BCW03}). For
one execution of \BUOT, the receiver's average Shannon entropy is at
least $1$. Therefore, it follows from Fano's Inequality
(Lemma~\ref{lem:fano}) that his average guessing probability is at
most $1 - p_e$ with $p_e \approx 0.19$ as above, and thus his average
conditional min-entropy is at least $-\log (1-p_e) \approx
0.3$.

We use Lemma~\ref{lemma:average} to lower bound the (regular)
conditional min-entropy $\hmin(X|Y=y)$ except with probability
$2^{-\sp-1}$ and use Theorem~\ref{thm:UOT} with security parameter
$2^{-\sp-1}$ which together yields a $2^{-\sp}$ secure \RandlStringOT.
To apply Theorem~\ref{thm:UOT}, we require $\H_2(X|Y=y) \geq
\hmin(X|Y=y) \geq 4 \ell + 2 \sp + 3$ and to obtain this by
Lemma~\ref{lemma:average}, we need $\tH_{\infty}(X|Y) \geq 4 \ell + 3
\sp + 4$.

This yields $\cl=4, \ck=3$ for \GOT and $\cl \approx 4/0.3$ and $\ck
\approx 3/0.3$ for \BUOT. The derivation of the parameters for \cite{Wullschleger07} is analogous.

\section{Extension to \onenlStringOT}
In this section we extend our characterization of sender-security of
\RandOT to \onenRandOT.  We use the following notation. For a sequence
of random variables $S_0,S_1,\ldots,S_{n-1}$ and indices $i,j \in
\set{0,\ldots,n-1}$, we denote by $\overline{S_{i,j}}$ the sequence of
variables $\Set{S_k}{k \in \set{0,\ldots,n-1} \setminus \set{i,j}}$
with all indices except $i$ and $j$. Similarly, $\ol{S_i}$ denotes all
variables but the $i$th.

\begin{definition}[\onenRandlStringOT]\label{def:onenROT}
An $\varepsilon$-secure {\em \onenRandOT} is a protocol between $\S$ and $\R$, with $\R$ having
input $C \in \{0,1,\ldots,n-1\}$  (while $\S$ has no input), such that
for any distribution of $C$, the following properties hold:
\begin{description}
\item[\boldmath$\varepsilon$-Correctness:] For honest $\S$ and $\R$, $\S$
  has output $S_0,S_1,\ldots,S_{n-1} \in \{0,1\}^{\ell}$ and $\R$ outputs
  $S_C$, except with probability $\varepsilon$.
\index{correctness!of classical Rand@of classical \onenRandlStringOT}
\item[\boldmath$\varepsilon$-Receiver-security:] If $\R$ is honest then for any
  (possibly dishonest) $\dS$ with output $V$,
\[ \dist{ P_{CV} , P_C \cdot P_V } \leq \eps . 
\] 
\index{receiver-security!of classical Randl@of classical \onenRandlStringOT}
\item[\boldmath$\varepsilon$-Sender-security:] If $\S$ is honest then for any
  (possibly dishonest) $\dR$ with output $W$, there exists a random
  variable $D$ with range $\set{0,1,\ldots,n-1}$ such that
\[ \dist{ P_{\ol{S_D} W S_D D} , P_{\unif^{\ell}}^{n-1} \cdot P_{W S_D
  D} } \leq \eps.
\] 
\index{sender-security!of classical Randl@of classical \onenRandlStringOT}
\end{description}
\end{definition}
Analogous to the \OT-case we want for sender-security that there exists
a choice $D$, such that when given the corresponding string (or bit) $S_D$ all the
other strings (or bits) look completely random from $\R$'s point of view. 

Recall that for the characterization of sender-security in the case of
\OT, it is sufficient that $P_{\bal(S_0,S_1) W} = P_{\unif} \cdot
P_{W}$ for every \index{non-degenerate linear function} NDLF $\bal$.
In a first attempt one might try to characterize the sender-security
of \onenOT using linear functions $\bal$ that non-trivially depend on
$n$ arguments. In the case of \onethreeOT of bits, the only linear
function of this kind is the XOR of the three bits, but it can be
easily verified that the requirement that $B_0 \oplus B_1 \oplus B_2$
is uniform does \emph{not} imply sender-security in the sense defined
above. Instead, as we will see below, sufficient requirements are that
the XOR of \emph{every pair of bits} is uniform {\em when given the
  value of the third}.

\index{sender-security!characterization of}
\begin{theorem}\label{thm:onen}
  The condition for $\varepsilon$-sender-security  for a \onenRandlStringOT 
  is satisfied for a particular (possibly dishonest) receiver $\dR$
  with output $W$, if 
for all $i\neq j \in \set{0,\ldots,n-1}$ 
$$
\dist{ P_{\bal(S_i,S_j) W \ol{S_{i,j}}} , P_{\unif} \cdot P_{W
    \ol{S_{i,j}}} } \leq \nu 
$$
for every NDLF $\bal$, where $\nu = \varepsilon/(2^{2\ell}n(n-1))$. 
\end{theorem}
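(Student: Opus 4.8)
The plan is to reduce to the two-party characterization, Theorem~\ref{thm:main}, applied to each pair of coordinates, and then to glue the resulting ``pairwise'' structure into a single $n$-valued pointer $D$ by a telescoping (ANOVA-style) decomposition of the receiver's conditional distribution. \textbf{Perfect case first.} Fix an output $w$ of $\dR$ with $P_W(w)>0$ and write $p_{s_0,\ldots,s_{n-1}}$ for the non-normalized table $P_{S_0\cdots S_{n-1}W}(s_0,\ldots,s_{n-1},w)$. For any pair $i\neq j$ and any fixed value $\b{v}$ of the remaining strings $\ol{S_{i,j}}$, the hypothesis (with $\nu=0$) says that for every NDLF $\bal$ the bit $\bal(S_i,S_j)$ is uniform conditioned on $W=w$ and $\ol{S_{i,j}}=\b{v}$. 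By the proof of the ``if'' part of Theorem~\ref{thm:main} (applied to this two-dimensional slice), this forces the slice $s_i,s_j\mapsto p_{\ldots,s_i,\ldots,s_j,\ldots}$ to be additively separable; equivalently, every mixed second difference $p_{\ldots s_i\ldots s_j\ldots}-p_{\ldots s_i'\ldots s_j\ldots}-p_{\ldots s_i\ldots s_j'\ldots}+p_{\ldots s_i'\ldots s_j'\ldots}$ vanishes. Since this holds for all pairs and all values of the other coordinates, $p$ has all pairwise mixed second differences zero, hence is of the form $p_{s_0,\ldots,s_{n-1}}=\sum_{d=0}^{n-1}h_d(s_d)+c$ for univariate $h_d$ and a constant $c$. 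A convenient explicit choice telescopes from a base point $\b{t}=(t_0,\ldots,t_{n-1})$: set $h_0(s_0)=p_{s_0,t_1,\ldots,t_{n-1}}$ and $h_d(s_d)=p_{t_0,\ldots,s_d,\ldots,t_{n-1}}-p_{\b{t}}$ for $d\geq 1$; the identity $\sum_d h_d(s_d)=p_{s_0,\ldots,s_{n-1}}$ follows by a short induction on the number of coordinates in which $\b{s}$ differs from $\b{t}$, using exactly the separability of the $(i,j)$-slices. Since $p$ is a non-normalized probability, $\min_{\b{s}}p_{\b{s}}\geq 0$; replacing each $h_d$ by $h_d-\min_{s_d}h_d\geq 0$ and absorbing the (nonnegative) leftover constant into $h_0$ gives a decomposition $p_{s_0,\ldots,s_{n-1}}=\sum_d q^{(w)}_d(s_d)$ with every $q^{(w)}_d\geq 0$.

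\textbf{Defining $D$.} Set $P_{S_0\cdots S_{n-1}DW}(s_0,\ldots,s_{n-1},d,w):=q^{(w)}_d(s_d)$. Summing over $d$ recovers $p$, so this is a legitimate extension of $\dR$'s state. Conditioned on $D=d$ and $W=w$, the joint distribution of $S_0,\ldots,S_{n-1}$ is proportional to $q^{(w)}_d(s_d)$ and hence depends on $s_d$ only; so, conditioned additionally on $S_d=s_d$, the remaining strings $\ol{S_d}$ are uniform on $(\set{0,1}^{\ell})^{n-1}$ and independent of all of $W,S_d,D$. This is precisely perfect sender-security, $P_{\ol{S_D}WS_DD}=P_{\unif^{\ell}}^{n-1}\cdot P_{WS_DD}$.

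\textbf{General case.} When the hypothesis holds only up to $\nu$, one repeats the argument while tracking error terms, exactly as in the general case of Theorem~\ref{thm:main}: for each fixed $w$ the failure of additive separability of an $(i,j)$-slice is controlled by $2\sum_{\bal}\varepsilon_{w,\bal}$ with $\varepsilon_{w,\bal}=\dist{P_{\bal(S_i,S_j)W\ol{S_{i,j}}}(\cdot,w),P_{\unif}\cdot P_{W\ol{S_{i,j}}}(\cdot,w)}$, and propagating these deviations through the telescoping reconstruction of $p$ over the $n$ coordinates (which touches $\le n(n-1)$ ordered pairs of indices, each carrying $<2^{2\ell}$ NDLFs) and summing over $w$ yields a non-negative approximate decomposition whose $\ell^1$-deviation from the ideal one is at most $\varepsilon$ provided $\nu=\varepsilon/(2^{2\ell}n(n-1))$. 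Defining $D$ from this approximate decomposition and applying the triangle inequality gives $\dist{P_{\ol{S_D}WS_DD},P_{\unif^{\ell}}^{n-1}\cdot P_{WS_DD}}\leq\varepsilon$.

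\textbf{Expected main obstacle.} The conceptual core (reduction to pairwise separability plus the telescoping identity) is routine once one recognizes the ANOVA structure; the real work is the error bookkeeping in the general case---choosing the approximate non-negative decomposition consistently across $w$ and verifying that the combinatorial loss is exactly the claimed $2^{2\ell}n(n-1)$---and the accompanying subtlety that the base-point/minimality choices needed for non-negativity (the analogue of ``we may assume $p_{\zero,\zero}\le p_{\zero,s_1}$'' in Theorem~\ref{thm:main}, now iterated over all coordinates) must be made without inflating the error bound.
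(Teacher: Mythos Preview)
Your perfect-case argument is correct and is essentially the paper's argument recast in cleaner language: the paper's key identity~(\ref{eq:ssum}), $p_{\bs}=\sum_{\b{t}\in\Sone}p_{\b{t}}-(n-1)p_{\zero,\ldots,\zero}$, \emph{is} additive separability with base point $\zero$, and the paper's explicit extension $P_{S_0\cdots S_{n-1}DW}(\bs,d,w)=p_{\zero,\ldots,s_d,\ldots,\zero}-[d\geq 1]\cdot p_{\zero,\ldots,\zero}$ is exactly your $h_d$. The paper derives separability by an induction on $n$ (combining the $(n{-}1)$-case conditioned on one coordinate with the $n{=}2$ case on the two remaining ones, equations~(\ref{eq:texample})--(\ref{eq:twocase})), whereas you argue directly via vanishing pairwise mixed second differences plus telescoping; the two routes are equivalent, yours arguably more transparent.

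The general case, however, has a gap that you have underestimated. Your telescoping uses, for each $\bs$, mixed second differences in pairs $(j,d)$ evaluated at \emph{specific} values of the remaining $n{-}2$ coordinates---namely a mixture of $s$-components (indices already moved) and the fixed base point $\b{t}$ (indices not yet touched). But the hypothesis only bounds the \emph{average} over $\ol{S_{i,j}}$ of the NDLF non-uniformity; a single slice at a fixed value $\b{v}$ can be far worse than $\nu$. Hence ``propagating the deviations through the telescoping'' and summing over $\b{s}$ and $w$ does not straightforwardly collapse to a multiple of $\nu$: the base-point coordinates $t_{d+1},\ldots,t_{n-1}$ are \emph{fixed}, not summed over, so you are not averaging where the hypothesis lets you. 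The paper confronts exactly this and resolves it by a randomization trick: replace each $S_i$ by $\tilde S_i=S_i\oplus R_i$ with independent uniform $R_i$'s, and fold the $R_i$'s into $W$. This forces the $\tilde S_i$ to be independent uniform (so conditioning on the base point carries weight exactly $2^{-(n-2)\ell}$) and, crucially, upgrades the average-over-$\ol{S_{i,j}}$ hypothesis to a worst-case-over-$\ol{\tilde S_{i,j}}$ one; after that, the error bookkeeping (which the paper carries out via the same induction that drives the perfect case) yields the $n(n{-}1)2^{2\ell}$ factor. Your proposal needs either this trick or an equivalent device; without it the claimed bound does not follow from your sketch.
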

\begin{proof}
We first consider and prove the perfect case. 

\paragraph{\sc The Perfect Case: }

Like in the proof of Theorem~\ref{thm:main}, we fix an output $w$ of the
receiver and consider the non-normalized probability distribution
$P_{S_0 \ldots S_{n-1}W}(\cdot,\ldots,\cdot,w)$. We use the variable
$p_{s_0,\ldots,s_{n-1}}$ to refer to the value $P_{S_0 \ldots S_{n-1}
  W}(s_0,\ldots,s_{n-1},w)$ and $\zero$ for the all-zero string
$(0,\ldots,0) \in \set{0,1}^{\ell}$. We use bold font to
denote a collection of strings $\bs \assign
(s_0,s_1,\ldots,s_{n-1}) \in \set{0,1}^{\ell n}$, and we write 
$\overline{\bs_i}$ for $(s_0,\ldots,s_{i-1},s_{i+1},\ldots,s_{n-1})$, 
the collection $\bs$ without $s_i$. Finally, for a collection 
$\b{t} = (t_0,\ldots,t_{k-1}) \in \set{0,1}^{\ell k}$ of arbitrary
size $k$, we define sets of indices with one (respectively two) non-zero
substrings:
\begin{align*}
  \Sone[\b{t}] &\assign
  \Set{(\zero,\ldots,\zero,t_i,\zero,\ldots,\zero)}{i \in
    \set{0,\ldots,k-1}}\\ 
  \Stwo[\b{t}] &\assign
  \Set{(\zero,\ldots,\zero,t_i,\zero,\ldots,\zero,t_j,\zero,\ldots,\zero)}{i
    < j \in \set{0,\ldots,k-1}}
\end{align*}
where the $t_i$ (and $t_j$) are at $i$th (and $j$th) position. 
%
As in the proof of Theorem~\ref{thm:main}, we assume for the clarity
of exposition that for all $i \in \set{0,\ldots,n-1}$ and $s_i \in
\set{0,1}^{\ell}$, it holds that $p_{\zero,\ldots,\zero} \leq
p_{\zero,\ldots,\zero,s_i,\zero,\ldots,\zero}$ (where $s_i$ is at
position~$i$). For symmetry reasons, the general case can be handled
along the same lines.

We extend the distribution $P_{S_0 \ldots
  S_{n-1}W}(\cdot,\ldots,\cdot,w)$ similarly to \eqref{eq:extension}:
for every $\bs \in \set{0,1}^{\ell n}$, we set
\begin{align*}
P_{S_0 \ldots S_{n-1}DW}(s_0,\ldots,s_{n-1},0,w) &\assign p_{s_0,\zero,\ldots,\zero}, \\
P_{S_0 \ldots S_{n-1}DW}(s_0,\ldots,s_{n-1},1,w) &\assign
p_{\zero,s_1,\zero,\ldots,\zero} - p_{\zero,\ldots,\zero}, \\
& \;\;\,\vdots \\
P_{S_0 \ldots S_{n-1}DW}(s_0,\ldots,s_{n-1},n-2,w) &\assign p_{\zero,\ldots,s_{n-2},\zero}- p_{\zero,\ldots,\zero}, \\
P_{S_0 \ldots S_{n-1}DW}(s_0,\ldots,s_{n-1},n-1,w) &\assign p_{\zero,\ldots,\zero,s_{n-1}}- p_{\zero,\ldots,\zero}.
\end{align*}
In order to show that this is a valid extension, we have to show that
for every $\bs \in \set{0,1}^{\ell n}$
\begin{equation} \label{eq:ssum}
p_{\bs}= \sum_{\b{t} \in \Sone} \!\! p_{\b{t}} - (n-1)p_{\zero,\ldots,\zero}.
\end{equation}
If this holds, then the random variable $D$ is well defined, and  
the $\ol{S_D}$ are uniformly distributed given $D,S_D$ and $W$.

We now show that \eqref{eq:ssum} follows from the assumed uniformity
property that $P_{\bal(S_i,S_j) W | \ol{S_{i,j}} \!=\!\ol{\bs_{i,j}}} =
P_{\unif} \cdot P_{W | \ol{S_{i,j}}\!=\!\ol{\bs_{i,j}}}$ for every
non-degenerate linear function $\bal$ and any $i \neq j$. This is done
by induction on $n$. The case $n = 2$ is covered by the proof of
Theorem~\ref{thm:main}, and by induction assumption we may assume that
it also holds for $n-1$.  Let us fix some $\bs \in \set{0,1}^{\ell n}$
and $i \in \set{0,\ldots,n-1}$. It is easy to see that the assumed
uniformity property on $S_0,\ldots,S_{n-1},W$ implies the
corresponding uniformity property on $\overline{S_i},W$ when
conditioning on $S_i = s_i$, and therefore, by induction assumption
and ``multiplying out the conditioning",
\begin{equation}
p_{\bs} = \sum_{\b{t}} p_{\b{t}} -
(n-2)p_{\zero,\ldots,\zero,s_i,\zero,\ldots,\zero} \,. \label{eq:texample}
\end{equation}
where the sum is over all $\b{t} \in \set{0,1}^{\ell n}$ with $t_i =
s_i$ and $\overline{\b{t}_i} \in {\cal S}_1(\overline{\b{s}_i})$.
Summing all the equations over $i \in
\set{0,\ldots,n-1}$ yields
\begin{equation} \label{eq:firststep}
n \cdot p_{\bs} = 2 \!\! \sum_{\b{t} \in \Stwo} \!\! p_{\b{t}} - (n-2) \!\!
\sum_{\b{t} \in \Sone} \!\! p_{\b{t}} \,.
\end{equation}

By a similar reasoning we can also derive
from the case $n=2$ that equations of type \eqref{eq:original} hold
conditioned on the event that all but two of the $S_i$'s are zero. More
formally, we have that for all $i < j \in \set{0,\ldots,n-1}$,
\begin{equation}\label{eq:zeros}
p_{\zero,\ldots,\zero,s_i,\zero,\ldots,\zero,s_j,\zero,\ldots,\zero} =
p_{\zero,\ldots,\zero,s_i,\zero,\ldots,\zero} +
p_{\zero,\ldots,\zero,s_j,\zero,\ldots,\zero} - p_{\zero,\ldots,\zero}.
\end{equation}
Summing these equations over all $i < j \in \set{0,\ldots,n-1}$
yields
\begin{equation} \label{eq:twocase}
\sum_{\b{t} \in \Stwo} p_{\b{t}} = (n-1) \!\! \sum_{\b{t} \in \Sone} \!\! p_{\b{t}} - {n \choose 2} p_{\zero,\ldots,\zero}
\end{equation}
We conclude by substituting (\ref{eq:twocase}) into
(\ref{eq:firststep}) as follows
\begin{align*}
n \cdot p_{\bs} &= 2 \!\! \sum_{\b{t} \in \Stwo} \!\! p_{\b{t}} - (n-2) \!\! \sum_{\b{t} \in \Sone}
\!\! p_{\b{t}}\\
&= 2 \left( (n-1) \!\! \sum_{\b{t} \in \Sone} \!\! p_{\b{t}} - {n \choose 2}
  p_{\zero,\ldots,\zero} \right) - (n-2) \!\! \sum_{\b{t} \in \Sone} \!\! p_{\b{t}}\\
&= n \!\! \sum_{\b{t} \in \Sone} \!\! p_{\b{t}} - n (n-1) p_{\zero,\ldots,\zero},
\end{align*}
which is equation~(\ref{eq:ssum}) after dividing by $n$, and thus finishes
the induction step and the claim for $\varepsilon = 0$.

\paragraph{\sc The General Case: }

For the non-zero error case, we follow the above argument, but keep
track of the error.  For technical reasons, we assume that the $S_i$'s
are independent and uniformly distributed, and we assume that the
assumed uniformity property with respect to \index{non-degenerate linear function}
NDLFs holds conditioned on
$\overline{S_{i,j}}=\overline{\b{s}_{ij}}$ for {\em any}
$\overline{\b{s}_{ij}}$, not just on average, i.e., $ \dist{ P_{\bal(S_i,S_j) 
W | \ol{S_{i,j}} = \ol{\b{s}_{ij}}} , P_{\unif} \cdot P_{W |
  \ol{S_{i,j}}=\ol{\b{s}_{ij}} } } \leq \nu$ for any
$\overline{\b{s}_{ij}} \in \set{0,1}^{\ell(n-2)}$.  We show at the end
of the proof how to argue in general.  Write
$$
\delta_{\b{s}} = \big|\sum_{\b{t} \in \Sone} \!\! p_{\b{t}} - (n-1)p_{\zero,\ldots,\zero} - p_{\bs}\big|
$$ 
such that~\eqref{eq:ssum} holds up to the error $\delta_{\b{s}}$. Note that $\delta_{\bs}$ depends on $w$; we also write $\delta_{\bs}(w)$ to make this dependency explicit. We will argue, following the induction proof, that 
$$
\sum_{w,\bs} \delta_{\bs}(w) \leq n(n-1)\cdot2^{2\ell} \cdot \nu = \varepsilon \, .
$$ 
The proof can then be completed analogue to the proof of Theorem~\ref{thm:main} by ``correcting" the values for $P_{S_0 \ldots S_{n-1}DW}$'s appropriately. 

By the proof of Theorem~\ref{thm:main}, the claimed inequality holds in case $n = 2$. For the induction step, note that by induction assumption,~\eqref{eq:texample} holds up to $\delta_{\overline{\b{s}_i}}(w) P_{S_i}(s_i)$ where 
$$
\sum_{w,\overline{\b{s}_i}} \delta_{\overline{\b{s}_i}}(w) \leq (n-1)(n-2)\cdot2^{2\ell} \cdot \nu \, .
$$
Furthermore, from the case $n=2$ it follows that Equation~\eqref{eq:zeros} holds up to $\delta_{s_i,s_j}(w) P_{\overline{S_{ij}}}(\zero\cdots\zero)$, where 
$$
\sum_{w,s_i,s_j} \delta_{s_i,s_j}(w) \leq 2^{2\ell+1}\cdot\nu
$$ 
and, by the additional assumption posed on the $S_i$'s, $P_{\overline{S_{ij}}}(\zero\cdots\zero) = 2^{-(n-2)\ell}
$.
It follows that~\eqref{eq:ssum} holds up to 
$$
\delta_{\bs} = \frac{1}{n}\Big(\sum_i\delta_{\overline{\b{s}_i}}P_{S_i}(s_i)+2\sum_{i<j}\delta_{s_i,s_j}P_{\overline{S_{ij}}}(\zero\cdots\zero)\Big)
$$ 
such that 
\begin{align*}
\sum_{w,\bs} \delta_{\bs}(w) &= \frac{1}{n}\Big( \sum_i \sum_{w,\overline{\bs_i}}\delta_{\overline{\b{s}_i}}(w) \sum_{s_i}P_{S_i}(s_i) + 2\sum_{i<j} \sum_{\overline{\bs_{ij}}}\sum_{w,s_i,s_j}\delta_{s_i,s_j}(w) P_{\overline{S_{ij}}}(\zero\cdots\zero) \Big) \\
&\leq (n-1)(n-2)\cdot2^{2\ell} \cdot \nu + (n-1) \cdot 2^{(n-2)\ell} \cdot 2^{2\ell+1} \cdot 2^{-(n-2)\ell} \cdot \nu \\[1ex]
&= \big( (n-1)(n-2) \cdot2^{2\ell} + 2\cdot (n-1) \cdot 2^{2\ell} \big) \cdot \nu \\[1ex]
&\leq n(n-1) \cdot 2^{2\ell} \cdot \nu = \varepsilon \, .
\end{align*}

It remains to argue the case where the $S_i$'s are not independent
uniformly distributed and/or the assumed uniformity property holds
only on average over the $\overline{\b{s}_{ij}}$'s.  We first argue
that we may indeed assume without loss of generality that the $S_i$'s
are random: We consider $\tilde{S}_0,\ldots,\tilde{S}_{n-1},\tilde{W}$
defined as $\tilde{S}_i = S_i \oplus R_i$ and $\tilde{W} =
[W,R_0,\ldots,R_{n-1}]$ for independent and uniformly distributed
$R_i$'s in $\set{0,1}^{\ell}$.  It is easy to see that the assumed
uniformity condition with respect to \index{non-degenerate linear function}
NDLFs on $S_0,\ldots,S_{n-1},W$
implies the corresponding uniformity condition on
$\tilde{S}_0,\ldots,\tilde{S}_{n-1},\tilde{W}$ with the same ``error"
$\nu$, and it is obvious that the $\tilde{S_i}$'s are independent and
uniformly distributed.  Furthermore, it is easy to see that
$\varepsilon$-sender-security for
$\tilde{S}_0,\ldots,\tilde{S}_{n-1},\tilde{W}$ implies
$\varepsilon$-sender-security for $S_0,\ldots,S_{n-1},W$ with the same
$\varepsilon$. Thus it suffices to prove the claim for the case of
random $S_i$'s.

Finally, in order to reason that we may assume that the uniformity
property holds conditioned on every $\overline{\b{s}_{ij}}$, where we
now may already assume that the $S_i$'s are random due to the above
observation, we again consider
$\tilde{S}_0,\ldots,\tilde{S}_{n-1},\tilde{W}$ defined as above. It is
not hard to verify that due to this randomization and since the
$S_i$'s are random, the average near-uniformity of $\bal(S_i,S_j)$
translates to a ``worst-case" near-uniformity of
$\bal(\tilde{S}_i,\tilde{S}_j)$ with the same $\nu$.
\end{proof}

\section{\OT in a Quantum Setting} \label{sec:quantumdoesnotwork}
As briefly mentioned in the introductory
Section~\ref{sec:introtoNDLF}, the results of this chapter were
originally motivated by the idea of using them to prove
sender-security in the bounded-quantum-storage model of the
\OT-protocol presented later in Chapter~\ref{chap:12OT}.  For this
protocol, we can use a quantum \index{uncertainty relation}
uncertainty relation to show a lower bound on the min-entropy of the
$n$-bit string $X$ transmitted by the sender using a quantum encoding.

If we had a quantum version of Theorem~\ref{thm:main} at hand, we
could use privacy amplification against quantum adversaries
(Theorem~\ref{thm:pasmooth}) to prove sender-security against
quantum-memory-bounded receivers. Unfortunately, the example below
shows that such a quantum version of Theorem~\ref{thm:main} cannot
exist.

In the case of a dishonest quantum receiver \dR, the final state of a quantum
protocol for \RandOT is given by the ccq-state $\rho_{S_0 S_1 \dR}$.
The condition for $\varepsilon$-sender-security given in
Definition~\ref{def:Rl12OT} requires the existence of a random
variable $D \in \set{0,1}$ such that
$$
  \dist{ \rho_{S_{1-D} S_D D \dR} , \id \otimes \rho_{S_D
      D \dR} } \leq \eps \, .
$$
This coincides with the classical Definition~\ref{def:RandOT},
except that the dishonest receiver's output is a quantum state, and
closeness is measured in terms of the trace-norm distance.

A quantum analogue of Theorem~\ref{thm:main} would state that this
condition is fulfilled if for every \index{non-degenerate linear function}
NDLF $\bal$,
$$
  \dist{ \rho_{\bal(S_0,S_1) \dR} , \id \otimes \rho_{\dR} } \leq
  \eps' \, 
$$
where $\eps'$ is comparable to the classical parameter $\eps/2^{2
\ell+1}$.

Consider now the following example for \OT of bits $B_0,B_1$. We define the
ccq-state $\rho_{B_0 B_1 \dR}$ as follows: Let
\[ \begin{split}
\rho_{B_0 B_1 \dR} \assign \frac{1}{4} \bigl( &\proj{00} \otimes
  \proj{0} \, + \, \proj{11} \otimes \proj{1} \\ 
+ &\proj{01} \otimes \proj{+} \, + \, \proj{10}
  \otimes \proj{-} \bigr), 
\end{split} \]
where $\proj{+}$ and $\proj{-}$ are the projectors onto the states $\ket{+} \assign
\ket{0}_\times = \frac{\ket{0}+\ket{1}}{\sqrt{2}}$ and $\ket{-}
\assign \ket{1}_\times= \frac{\ket{0}-\ket{1}}{\sqrt{2}}$.

For this state, it is clear that the XOR $B_0 \oplus B_1$ is perfectly
hidden from the dishonest receiver holding $\rho_{\dR}$, i.e.
\[
\dist{ \rho_{(B_0 \oplus B_1) \dR} , \id \otimes \rho_{\dR} } = 0 \, .
\]
On the other hand, $\dR$ can determine the bit of his choice by
measuring in the Breitbart basis $\set{\cos(\pi/8) \ket{0} +
  \sin(\pi/8) \ket{1}, \, \sin(\pi/8) \ket{0} - \cos(\pi/8) \ket{1}}$
if he is interested in the first bit, or by measuring in the Breitbart
basis rotated by 45 degrees if he wants to obtain the second bit. It
is easy to see that such a measurement succeeds in yielding the
correct bit with probability $\cos(\pi/8)^2 \approx 0.85$. This
precludes the existence of a pointer variable $D \in \set{0,1}$ such
that perfect sender-security in the sense of
Definition~\ref{def:Rl12OT} holds.

It is unclear how that difficulty can be overcome, but it is clear
from the simple example above, that a statement like in
Theorem~\ref{thm:main} with comparable parameters cannot hold.
Therefore, the alternative approach via the 
\index{min-entropy splitting lemma}entropy-splitting
Lemma~\ref{lemma:ESL} (outlined at the end of Section~\ref{sec:UOT})
will be taken in Chapter~\ref{chap:12OT} to show sender-security.

\index{sender-security!of classical \OT|)}



\clearemptydoublepage
\chapter{Quantum Uncertainty Relations} \label{chap:uncertrelations} 
Quantum \index{uncertainty relation }uncertainty relations are the
fundamental tool for the security analysis of protocols in the
\index{bounded-quantum-storage model}bonded-quantum-storage model
presented later in this thesis. We start off with some preliminary
tools in Section~\ref{uncert:sec:prelim} and proceed to the history of
uncertainty relations in Section~\ref{sec:uncerthistory}. Then, we
derive new high-order entropic uncertainty relations for two
(Section~\ref{sec:twounbiasedbases}) and more
(Section~\ref{sec:moreunbiasedbases}) mutually unbiased bases. In the
last Section~\ref{sec:morerelation}, we investigate the situation
where for each qubit, a basis is picked independently at random from a
set of bases.

The results in this chapter are based on joint work with Damg{\aa}rd,
Fehr, Salvail and Renner which appeared in \cite{DFSS08journal,DFRSS07}.

\section{Preliminaries} \label{uncert:sec:prelim}
\subsection{Operators and Norms} \label{sec:norms}
For a linear operator $A$ on the complex Hilbert space $\cH$, we
define the \emph{\index{operator norm}operator norm}
\[
\| A \| \assign \sup_{\braket{x}{x}=1} \|Ax\|
\]
for the \index{Euclidian norm}Euclidian norm $\|x\| \assign \sqrt{\braket{x}{x}}$ of the vector $\ket{x} \in \cH$. When $A$
is \index{Hermitian}Hermitian, i.e.~the complex conjugate transpose $H^*$ and $H$ coincide, we have
\[
\| A \| = \lambda_{\max}(A) \assign \max\{|\lambda_j| : \lambda_j
  \mbox{ an eigenvalue of } A \}.
\]
From an equivalent definition of the norm $\|A\|=
\!\!\!\sup\limits_{\braket{y}{y}=\braket{x}{x}=1}
\!\!\!|\bra{y}A\ket{x}|$, it is easy to see that $\|A^*\|=\|A\|$. For
two Hermitian matrices $A$ and $B$, we have that $\| AB \| =
\|(AB)^*\|=\|B^*A^*\|=\|BA\|$.  The operator norm is \emph{unitarily
  invariant}, i.e. for all unitary $U,V$, $\|A\|=\|UAV\|$ holds. It
is easy to show that
\[ \left\| \begin{pmatrix} A & 0 \\ 0 & B \end{pmatrix} \right\| =\max\left\{\|A\|,\|B\|\right\}.
\]

\begin{lemma} \label{lem:ineq}
Let $X$, $Y$ be any two $n \times n$ matrices such that the products $XY$ and $YX$
are Hermitian. Then, we have
\[ \|XY\| = \|YX\| \] 
\end{lemma}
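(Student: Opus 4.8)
The plan is to reduce the claim to the well-known fact that, for two square matrices of the same size, $XY$ and $YX$ have the same spectrum, and then to exploit that the operator norm of a Hermitian matrix equals the maximal modulus of its eigenvalues, as recorded in the excerpt.

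First I would isolate the purely linear-algebraic fact that $XY$ and $YX$ have identical characteristic polynomials, and hence the same eigenvalues with the same algebraic multiplicities. This holds for \emph{arbitrary} $n\times n$ matrices $X,Y$, with no Hermiticity assumption. A short justification: if $X$ is invertible then $YX = X^{-1}(XY)X$, so $XY$ and $YX$ are similar and therefore share a characteristic polynomial; for general $X$ one observes that, for each fixed $Y$ and each fixed $\lambda$, the map $X \mapsto \det(\lambda \id - XY) - \det(\lambda \id - YX)$ is a polynomial in the entries of $X$ that vanishes on the dense set $\{X : \det X \neq 0\}$, hence vanishes identically; since $Y$ and $\lambda$ were arbitrary, the two characteristic polynomials agree. (Equivalently, one may use the standard block-determinant identity expressing both $\det(\lambda\id - XY)$ and $\det(\lambda\id - YX)$ as the determinant of a single $2n\times 2n$ block matrix.)

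Next I would invoke the hypothesis of the lemma. Since $XY$ and $YX$ are Hermitian, they are diagonalizable with real eigenvalues, and the formula $\| A \| = \lambda_{\max}(A) = \max\{|\lambda_j| : \lambda_j \text{ an eigenvalue of } A\}$ from the preliminaries applies to each. As the two eigenvalue multisets coincide by the previous step, we obtain
\[
\| XY \| \;=\; \max_j |\lambda_j(XY)| \;=\; \max_j |\lambda_j(YX)| \;=\; \| YX \|,
\]
which is the assertion. (The degenerate case where all eigenvalues vanish is automatically covered: a Hermitian matrix with zero spectrum is the zero matrix, so then $XY = 0 = YX$; but in any case this situation is already subsumed by the displayed identity.)

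The only step that needs any attention is the equality of characteristic polynomials of $XY$ and $YX$, but this is entirely standard, so I do not anticipate a genuine obstacle; the rest is an immediate consequence of the norm formula already established in Section~\ref{sec:norms}.
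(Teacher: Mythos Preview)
Your proposal is correct and follows essentially the same approach as the paper: the paper's proof also reduces to the fact that $XY$ and $YX$ have the same eigenvalues (citing Bhatia for this), and then applies $\|A\| = \lambda_{\max}(A)$ for Hermitian $A$. The only difference is that you supply a short argument for the equality of spectra where the paper gives a reference.
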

\begin{proof}
For any two $n \times n$ matrices $X$ and $Y$, $XY$ and $YX$ have the
same eigenvalues, see e.g.\ \cite[Exercise I.3.7]{Bhatia97}. Therefore,
$\| XY \| = \lambda_{\max}(XY) = \lambda_{\max}(YX) = \|YX\|$.
\end{proof}
                                
A linear operator $P$ such that $P^2=P$ and $P^*=P$ is called an
\emph{\index{orthogonal projector}orthogonal projector}.
\index{projector!orthogonal|see {orthogonal projector}}
\begin{proposition} \label{prop:norm2}
Let $A$ and $B$ be two orthogonal projectors. Then it holds that $\|A+B\| \leq 1 + \|AB\|$.
\end{proposition}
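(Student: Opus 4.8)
The plan is to reduce the inequality to how the product $AB$ acts on one well-chosen vector. Since $A$ and $B$ are orthogonal projectors, $A+B$ is Hermitian and positive semidefinite, so $\|A+B\| = \lambda_{\max}(A+B)$; write $\lambda \assign \lambda_{\max}(A+B)$. If $\lambda \le 1$ there is nothing to prove, since then $\|A+B\| = \lambda \le 1 \le 1 + \|AB\|$. So I would assume $\lambda > 1$ and fix a unit vector $\ket{x}$ with $(A+B)\ket{x} = \lambda\ket{x}$. Setting $\ket{u} \assign A\ket{x}$ and $\ket{v} \assign B\ket{x}$ gives $\ket{u}+\ket{v} = \lambda\ket{x}$, while $A\ket{u}=\ket{u}$ and $B\ket{v}=\ket{v}$ because $A^2=A$ and $B^2=B$.

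The first step is to extract two identities: applying $A$ to $\ket{u}+\ket{v}=\lambda\ket{x}$ and using $A\ket{u}=\ket{u}$ yields $\ket{u}+A\ket{v}=\lambda\ket{u}$, i.e.\ $A\ket{v} = (\lambda-1)\ket{u}$, and symmetrically $B\ket{u} = (\lambda-1)\ket{v}$. The second, slightly more delicate step is to show $\braket{u}{u} = \braket{v}{v}$. Computing $\bra{u}A\ket{v}$ in two ways — as $\bra{u}\big(A\ket{v}\big) = (\lambda-1)\braket{u}{u}$, and, pushing $A$ onto the bra using $A^*=A$ and $A\ket{u}=\ket{u}$, as $\braket{u}{v}$ — gives $\braket{u}{v} = (\lambda-1)\braket{u}{u}$, which in particular is real. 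The analogous computation of $\bra{v}B\ket{u}$ gives $\braket{v}{u} = (\lambda-1)\braket{v}{v}$. Since $\braket{v}{u} = \overline{\braket{u}{v}} = (\lambda-1)\braket{u}{u}$ and $\lambda \ne 1$, we conclude $\braket{u}{u} = \braket{v}{v}$; and since $\braket{u}{u}+\braket{v}{v} = \bra{x}A\ket{x}+\bra{x}B\ket{x} = \bra{x}(A+B)\ket{x} = \lambda$, both equal $\lambda/2 > 0$, so $\ket{v}\neq 0$.

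The conclusion is then immediate: because $B\ket{v}=\ket{v}$ we have $AB\ket{v} = A\ket{v} = (\lambda-1)\ket{u}$, hence $\|AB\ket{v}\|^2 = (\lambda-1)^2\braket{u}{u} = (\lambda-1)^2\braket{v}{v}$, so that $\|AB\|^2 \ge \|AB\ket{v}\|^2/\braket{v}{v} = (\lambda-1)^2$, i.e.\ $\|AB\| \ge \lambda-1$, which is the claim $\|A+B\| = \lambda \le 1+\|AB\|$. The main obstacle — and the only real subtlety — is the equality $\braket{u}{u}=\braket{v}{v}$: without it the obvious estimates are too weak (squaring $A+B$ gives only $\lambda(\lambda-1)\le 2\|AB\|$, and $AB\ket{u} = (\lambda-1)^2\ket{u}$ gives only $\|AB\|\ge(\lambda-1)^2$, both of which fall short since $\|AB\|\le 1$ here). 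It is precisely by feeding the $B$-fixed vector $\ket{v}$, rather than $\ket{u}$, into $AB$ that one upgrades the exponent from $(\lambda-1)^2$ to $(\lambda-1)$.
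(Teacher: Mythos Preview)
Your argument is correct. The eigenvector analysis is clean: once you have $A\ket{v}=(\lambda-1)\ket{u}$ and $B\ket{u}=(\lambda-1)\ket{v}$, the equality $\braket{u}{u}=\braket{v}{v}$ is exactly what lets you feed the normalised $\ket{v}$ into $AB$ and read off $\|AB\|\geq\lambda-1$. The only implicit assumption worth flagging is that you need a unit eigenvector achieving $\lambda_{\max}(A+B)$, which is fine in the finite-dimensional setting the paper works in.

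The paper takes a different route, a block-matrix trick due to Kittaneh: set
\[
X=\begin{pmatrix}A&B\\0&0\end{pmatrix},\qquad Y=\begin{pmatrix}A&0\\B&0\end{pmatrix},
\]
so that $XY=\begin{pmatrix}A+B&0\\0&0\end{pmatrix}$ and $YX=\begin{pmatrix}A&AB\\BA&B\end{pmatrix}$. Since $XY$ and $YX$ are Hermitian and share eigenvalues, $\|A+B\|=\|XY\|=\|YX\|$, and the triangle inequality on the decomposition $YX=\begin{pmatrix}A&0\\0&B\end{pmatrix}+\begin{pmatrix}0&AB\\BA&0\end{pmatrix}$ gives $\|A+B\|\leq 1+\|AB\|$. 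Your approach is more elementary and self-contained; the paper's approach, however, generalises immediately to sums of $M+1$ projectors (Proposition~\ref{prop:morebases}) by enlarging the blocks, whereas extending your eigenvector computation to that case would be considerably messier.
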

\begin{proof}
We adapt a technique by Kittaneh \cite{Kittaneh97} to
our case. Define two $2 \times 2$-\index{block matrix}block matrices $X$ and $Y$ as follows
\[
X \assign \begin{pmatrix} A & B \\ 0 & 0 \end{pmatrix}
\quad \mbox{ and } \quad
Y \assign \begin{pmatrix} A & 0 \\ B & 0 \end{pmatrix} \, .
\]
Using $A^2=A$ and $B^2=B$, we compute
\[
XY \assign \begin{pmatrix} A+B & 0 \\ 0 & 0 \end{pmatrix}
\mbox{ and }
YX \assign \begin{pmatrix} A & AB \\ BA & B \end{pmatrix}
= \begin{pmatrix}  A & 0 \\ 0 & B \end{pmatrix}
+
\begin{pmatrix} 0 & AB \\ BA & 0 \end{pmatrix} \, .
\]
As $A$ and $B$ are \index{Hermitian}Hermitian, so are $A+B$, $AB$, $BA$, $XY$ and $YX$ as well.
We use Lemma~\ref{lem:ineq} and the triangle inequality to obtain
\[
\left\| \begin{pmatrix} A+B & 0 \\ 0 & 0 \end{pmatrix} \right\| 
= \left\| \begin{pmatrix} A & AB \\ BA & B \end{pmatrix} \right\| 
\leq \left\| \begin{pmatrix} A & 0 \\ 0 & B \end{pmatrix}  \right\| 
  + \left\| \begin{pmatrix} 0 & AB \\ BA & 0 \end{pmatrix} \right\|.
\]
Using the unitary invariance of the operator norm to permute the
columns in the rightmost matrix and the facts that $\|A\|=\|B\|=1$ as
well as $\|AB\|=\|BA\|$, we conclude that
\[
\| A+B \| \leq 1 + \|AB\|.
\]
\end{proof}

A nice feature of this block-matrix technique is that it generalizes
easily to more projectors.
\begin{proposition} \label{prop:morebases}
For orthogonal projectors $A_0, A_1, A_2, \ldots, A_M$, it holds that 
\begin{equation} \label{eq:multiproj} 
\left\| \sum_{i=0}^M A_i \right\| \leq 1 + M \cdot \max_{0\leq i< j
  \leq M}
\|A_iA_j\|.
\end{equation}
\end{proposition}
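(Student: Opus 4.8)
The plan is to generalize the block-matrix trick from the two-projector case (Proposition \ref{prop:norm2}). First I would assemble the projectors into two large block matrices whose products land on either the sum $\sum_i A_i$ or on a block-diagonal-plus-off-diagonal decomposition. Concretely, working with $(M+1)\times(M+1)$ block matrices, set
\[
X \assign \begin{pmatrix} A_0 & A_1 & \cdots & A_M \\ 0 & 0 & \cdots & 0 \\ \vdots & & & \vdots \\ 0 & 0 & \cdots & 0 \end{pmatrix}
\qquad\text{and}\qquad
Y \assign \begin{pmatrix} A_0 & 0 & \cdots & 0 \\ A_1 & 0 & \cdots & 0 \\ \vdots & & & \vdots \\ A_M & 0 & \cdots & 0 \end{pmatrix}.
\]
Using $A_i^2 = A_i$ one computes $XY = \operatorname{diag}\big(\sum_{i=0}^M A_i,\,0,\ldots,0\big)$, so $\|XY\| = \|\sum_i A_i\|$. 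On the other side, $YX$ has $(i,j)$-block equal to $A_i A_j$; its diagonal blocks are the $A_i$ (since $A_i^2=A_i$) and its off-diagonal blocks are $A_i A_j$ for $i\neq j$.

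Next I would split $YX = \Delta + E$, where $\Delta$ is the block-diagonal part $\operatorname{diag}(A_0,\ldots,A_M)$ and $E$ collects all off-diagonal blocks $A_iA_j$. Since each $A_i$ is Hermitian and $XY$ is visibly Hermitian, $YX$ is Hermitian too (it has the same nonzero eigenvalues as $XY$ by Lemma \ref{lem:ineq}, or directly $(YX)^* = X^*Y^* = XY$ reversed — better to argue Hermiticity of $\Delta$ and $E$ separately: $\Delta$ is Hermitian since each $A_i$ is, hence $E = YX-\Delta$ is Hermitian as well, given $YX$ is). Lemma \ref{lem:ineq} gives $\|XY\| = \|YX\|$, and the triangle inequality yields
\[
\Big\| \sum_{i=0}^M A_i \Big\| = \|YX\| \leq \|\Delta\| + \|E\|.
\]
We have $\|\Delta\| = \max_i \|A_i\| \le 1$ by the block-diagonal norm formula quoted before Lemma \ref{lem:ineq} (each projector has norm at most $1$).

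The remaining — and I expect the only mildly delicate — step is bounding $\|E\|$. Here $E$ is a self-adjoint block matrix with zero diagonal blocks and $(i,j)$-block $A_iA_j$ for $i\neq j$. I would bound $\|E\|$ by $M \cdot \max_{i<j}\|A_iA_j\|$. One clean way: write $E = \sum_{0\le i<j\le M} E^{(i,j)}$, where $E^{(i,j)}$ is the block matrix whose only nonzero blocks are the $(i,j)$-block $A_iA_j$ and the $(j,i)$-block $A_jA_i = (A_iA_j)^*$; each $E^{(i,j)}$ has operator norm exactly $\|A_iA_j\|$ (it is unitarily equivalent, after permuting to bring indices $i,j$ to the front, to a $2\times 2$ anti-diagonal block with entry $A_iA_j$, whose norm is $\|A_iA_j\|$ since $\|\begin{psmallmatrix}0 & C\\ C^* & 0\end{psmallmatrix}\| = \|C\|$). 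Then crude triangle inequality over the $\binom{M+1}{2}$ terms would give a factor $\binom{M+1}{2}$, which is too weak. To get the sharp factor $M$, I would instead group the off-diagonal blocks by Latin-square / edge-coloring of the complete graph $K_{M+1}$: the edges of $K_{M+1}$ decompose into $M$ perfect matchings (if $M+1$ even) or $M$ near-perfect matchings (if $M+1$ odd). Within a single matching, the corresponding $E$-summand is block-diagonal up to a permutation — a direct sum of $2\times 2$ anti-diagonal blocks $\begin{psmallmatrix}0 & A_iA_j\\ A_jA_i & 0\end{psmallmatrix}$ — so its norm is $\max$ over that matching of $\|A_iA_j\| \le \max_{i<j}\|A_iA_j\|$. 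Summing over the $M$ matchings via the triangle inequality gives $\|E\| \le M\max_{i<j}\|A_iA_j\|$, and combining with $\|\Delta\|\le 1$ finishes the proof. The main obstacle is precisely this last combinatorial bookkeeping: getting the optimal constant $M$ rather than a quadratic-in-$M$ constant, which is where the edge-coloring of $K_{M+1}$ into $M$ matchings is essential.
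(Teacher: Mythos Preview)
Your block-matrix setup with $X$, $Y$, $XY$, $YX$ and the split $YX=\Delta+E$ is exactly the paper's approach, and everything up to and including $\|\Delta\|\le 1$ is fine. The gap is in your bound on $\|E\|$. Your edge-colouring argument for $K_{M+1}$ is wrong when $M$ is even: if $M+1$ is odd then $K_{M+1}$ has chromatic index $M+1$, not $M$ (each vertex has degree $M$, and no perfect matching exists, so a colour class covers at most $M/2$ edges; $M$ classes cover at most $M^2/2 < \binom{M+1}{2}$ edges). So in that case your argument only yields $1+(M+1)\max_{i<j}\|A_iA_j\|$, which is strictly weaker than the statement.

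The paper sidesteps this parity obstruction by \emph{not} pairing the blocks $(i,j)$ and $(j,i)$. Instead it decomposes the off-diagonal part of $YX$ into the $M$ cyclic-shift matrices $D_1,\ldots,D_M$, where $D_k$ carries the block $A_iA_{i+k \bmod (M+1)}$ at position $(i,\,i+k \bmod (M+1))$ and zeros elsewhere. Each $D_k$ has exactly one nonzero block per row and per column, so a column permutation (unitary invariance of $\|\cdot\|$) turns it into a block-diagonal matrix, whence $\|D_k\|=\max_i\|A_iA_{i+k}\|\le \max_{i<j}\|A_iA_j\|$ (using $\|A_iA_j\|=\|A_jA_i\|$). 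Summing the $M$ pieces gives the factor $M$ uniformly in the parity of $M$. Note that the individual $D_k$ are not Hermitian, but that does not matter: Hermiticity is only used once, to invoke Lemma~\ref{lem:ineq} for $\|XY\|=\|YX\|$; the subsequent triangle inequality and the block-diagonal norm formula need no self-adjointness. Your insistence on Hermitian summands is precisely what forced the undirected edge-colouring and created the parity problem.
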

\begin{proof}
Defining
\[
X \assign \begin{pmatrix} A_0 & A_1 & \cdots & A_M
    \\ 0 & 0 & \cdots & 0 \\ \vdots & \vdots& & \vdots \\ 0 & 0 &
    \cdots & 0\end{pmatrix}
\quad \mbox{ and } \quad
Y \assign \begin{pmatrix} A_0 & 0 & \cdots & 0 \\
    A_1 & 0 & \cdots & 0 \\  \vdots & \vdots& & \vdots \\ A_M & 0 & \cdots & 0 \end{pmatrix}
\]
yields
\begin{align*}
XY &= \begin{pmatrix} A_0 +A_1 + \ldots + A_M & 0 & \cdots & 0
    \\ 0 & 0 & \cdots & 0 \\ \vdots & \vdots& & \vdots \\ 0 & 0 &
    \cdots & 0\end{pmatrix} \quad \mbox{ and}\\
YX &= \begin{pmatrix} A_0 & A_0 A_1 & \cdots &
    A_0 A_M \\  A_1 A_0 & A_1 & \cdots & A_1 A_M \\  \vdots & \vdots&
    \ddots& \vdots \\ A_M A_0 & A_M A_1 & \cdots & A_M \end{pmatrix}
\end{align*}
The matrix $YX$ can be additively decomposed into $M+1$ matrices
according to the following pattern
\[
YX= \begin{pmatrix} * &  &  &
     & \\   & * & &  &   \\   &  & \ddots & &  \\  &  & & * &  \\
    &  &  & & * \end{pmatrix}
+ \begin{pmatrix} 0 & * &  &
     & \\   & 0 & &  &   \\   &  & \ddots & \ddots
    &  \\  &  & & 0 & * \\
    * &  &  & & 0 \end{pmatrix}
+\;\ldots\;
+ \begin{pmatrix} 0 & &  &
     & * \\  * & 0 & &  &   \\   &\ddots  & \! \ddots &  &  \\  
    &  & & 0 &  \\
    &  &  & * & 0 \end{pmatrix}
\]
where the asterisk stand for entries of $YX$ and for $i=0,\ldots,M$
the $i$th asterisk-pattern after the diagonal pattern is obtained by
$i$ cyclic shifts of the columns of the diagonal pattern. Entries
without asterisk are zero.

As in the proof of Proposition~\ref{prop:norm2}, $XY$ and
$YX$ are \index{Hermitian}Hermitian and we use Lemma~\ref{lem:ineq}, the triangle
inequality, the unitary invariance of the operator norm and the facts
that for all $i \neq j : \|A_i\|=1$, $\|A_i A_j\|=\|A_j A_i\|$ to obtain the desired
statement \eqref{eq:multiproj}.
\end{proof}

\subsection{Azuma's Inequality}\label{sec:Azuma}
\index{Azuma's inequality} 
As we will exclusively use the concentration result at the end of this
section, we only give an informal definition of martingales. We refer
to \cite{AS00} or \cite{MR95} for a more detailed treatment.
\begin{definition} \index{martingale sequence} 
  A sequence of real random variables $X_0, X_1, \ldots$ is a
  martingale sequence, if for all $i=1,2,\ldots$, it holds $\E[X_i
  | X_0, \ldots, X_{i-1} ] = X_{i-1}$.
\end{definition}
\begin{theorem}[Azuma's inequality \cite{Azuma67}] \label{thm:azuma}
Let $X_0, X_1, \ldots$ be a martingale sequence such that for each
$k$, $|X_k - X_{k-1}| \leq c_k$, where $c_k$ may depend on $k$. Then,
for all $t \geq 0$ and any $\tau > 0$,
\[ \Pr[ X_t - X_0  \geq \tau ] \leq \exp \left(
  -\frac{\tau^2}{2 \sum_{k=1}^t c_k^2} \right) \, .
\]
\end{theorem}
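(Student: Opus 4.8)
The plan is to run the standard exponential-moment (Chernoff-type) argument, adapted to martingales. First I would pass to the martingale differences $Y_k \assign X_k - X_{k-1}$ for $k = 1,\ldots,t$, so that the defining property of a martingale sequence gives $\E[Y_k \mid X_0,\ldots,X_{k-1}] = 0$, while the hypothesis gives $|Y_k| \leq c_k$. Since $X_t - X_0 = \sum_{k=1}^t Y_k$, it suffices to bound the upper tail of this sum. For any $\lambda > 0$, Markov's inequality (Lemma~\ref{lem:markov}) applied to the non-negative random variable $e^{\lambda (X_t - X_0)}$ gives $\Pr[X_t - X_0 \geq \tau] \leq e^{-\lambda \tau}\,\E\bigl[e^{\lambda(X_t - X_0)}\bigr]$, so the task reduces to controlling this exponential moment.

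The key step, which I expect to be the main technical obstacle, is a conditional version of Hoeffding's lemma: if $Y$ satisfies $\E[Y \mid \mathcal{G}] = 0$ and $|Y| \leq c$, then $\E[e^{\lambda Y} \mid \mathcal{G}] \leq e^{\lambda^2 c^2 / 2}$. I would prove this from convexity of $y \mapsto e^{\lambda y}$ on $[-c,c]$, which yields the pointwise bound $e^{\lambda y} \leq \frac{c-y}{2c}e^{-\lambda c} + \frac{c+y}{2c}e^{\lambda c}$; taking the conditional expectation and using $\E[Y \mid \mathcal{G}] = 0$ collapses the right-hand side to $\cosh(\lambda c)$, and a term-by-term comparison of power series gives $\cosh(\lambda c) \leq e^{\lambda^2 c^2 / 2}$.

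With this lemma in hand I would peel off the factors of the exponential moment one at a time using the tower property, conditioning on $X_0,\ldots,X_{t-1}$: $\E\bigl[e^{\lambda\sum_{k=1}^t Y_k}\bigr] = \E\bigl[e^{\lambda\sum_{k=1}^{t-1} Y_k}\,\E[e^{\lambda Y_t}\mid X_0,\ldots,X_{t-1}]\bigr] \leq e^{\lambda^2 c_t^2/2}\,\E\bigl[e^{\lambda\sum_{k=1}^{t-1} Y_k}\bigr]$, and iterating down to $k=1$ gives $\E\bigl[e^{\lambda(X_t - X_0)}\bigr] \leq \exp\bigl(\tfrac{\lambda^2}{2}\sum_{k=1}^t c_k^2\bigr)$. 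Combined with the Markov bound this yields $\Pr[X_t - X_0 \geq \tau] \leq \exp\bigl(-\lambda\tau + \tfrac{\lambda^2}{2}\sum_{k=1}^t c_k^2\bigr)$, valid for every $\lambda > 0$.

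Finally I would optimize the free parameter: writing $S \assign \sum_{k=1}^t c_k^2$, the exponent $-\lambda\tau + \tfrac{\lambda^2}{2}S$ is minimized at $\lambda = \tau/S$, where it equals $-\tau^2/(2S)$; substituting gives exactly $\Pr[X_t - X_0 \geq \tau] \leq \exp\bigl(-\tau^2/(2\sum_{k=1}^t c_k^2)\bigr)$. The only point of care is the degenerate case $S = 0$, where all $c_k = 0$ forces each $Y_k$ to vanish almost surely and the inequality holds trivially (interpreting the right-hand side as $0$ for $\tau > 0$); otherwise the optimization over $\lambda$ is legitimate and the proof is complete.
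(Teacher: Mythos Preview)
Your argument is correct and is exactly the standard Chernoff--Hoeffding proof of Azuma's inequality. However, the paper does not actually prove this theorem: it is stated as a cited result from~\cite{Azuma67}, with the reader referred to~\cite{AS00} or~\cite{MR95} for background on martingales, and no proof is given in the text. So there is nothing to compare against; your write-up would serve perfectly well as the omitted proof.
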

The theorem is often stated as two-sided bound with absolute values:
\[ \Pr\big[ |X_t - X_0|  \geq \tau \big] \leq 2 \exp \left(
  -\frac{\tau^2}{2 \sum_{k=1}^t c_k^2} \right),
\]
but the one-sided version fits our purposes better.

\begin{definition} \index{martingale difference sequence}
A sequence of real-valued random variables $R_1,\ldots,R_n$ is called
a \emph{martingale difference sequence} if for every $i$ and every
$r_1,\ldots,r_{i-1} \in \reals$: 
\mbox{$
\E[R_i | R_1\!=\!r_1,\ldots,R_{i-1}\!=\!r_{i-1}] =0
$}.
\end{definition}
Note that for an arbitrary sequence of real random variables $S_0, S_1, \ldots
\in \reals$, defining $R_n \assign \sum_{i=1}^n S_i - \E[S_i | S^{i-1}]$ (with $R_0 \assign 0$) yields
a martingale difference sequence $R_0, R_1, \ldots$.

The following lemma follows directly from Azuma's Theorem~\ref{thm:azuma}.
\begin{corollary}\label{cor:azuma} \index{Azuma's inequality}
Let $R_1, \ldots, R_n$ be a martingale difference sequence such that
$|R_i| \leq c$ for every $1 \leq i \leq n$. Then, for any  $\lambda > 0$,
\[ \Pr\left[ \sum_i R_i \geq \lambda n \right] \leq
\exp{\bigl(-\frac{\lambda^2 n}{2 c^2}\bigr)}. \]
\end{corollary}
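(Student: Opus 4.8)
The plan is to reduce Corollary~\ref{cor:azuma} directly to Azuma's inequality (Theorem~\ref{thm:azuma}) by passing from the martingale difference sequence to its sequence of partial sums. Concretely, set $X_0 \assign 0$ and $X_k \assign \sum_{i=1}^k R_i$ for $1 \le k \le n$. First I would check that $X_0, X_1, \ldots, X_n$ is a martingale sequence in the sense of the definition preceding Theorem~\ref{thm:azuma}. For $i \ge 1$ the random variables $X_0,\ldots,X_{i-1}$ and $R_1,\ldots,R_{i-1}$ generate the same information (since $R_j = X_j - X_{j-1}$ and $X_j = \sum_{l\le j} R_l$), so by linearity of conditional expectation and the martingale difference property $\E[R_i \mid R_1,\ldots,R_{i-1}] = 0$ we get
\[
\E[X_i \mid X_0,\ldots,X_{i-1}] = X_{i-1} + \E[R_i \mid R_1,\ldots,R_{i-1}] = X_{i-1},
\]
as required.

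Next, the bounded-differences hypothesis transfers immediately: for each $k$, $|X_k - X_{k-1}| = |R_k| \le c$, so we may take $c_k = c$ for all $k$ in Theorem~\ref{thm:azuma}. Applying Azuma's inequality with $t = n$ and $\tau = \lambda n > 0$, and using $\sum_{k=1}^n c_k^2 = n c^2$, yields
\[
\Pr\Big[ \sum_{i=1}^n R_i \ge \lambda n \Big] = \Pr[X_n - X_0 \ge \lambda n] \le \exp\!\left( -\frac{(\lambda n)^2}{2 n c^2} \right) = \exp\!\left( -\frac{\lambda^2 n}{2 c^2} \right),
\]
which is exactly the claimed bound. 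This completes the argument.

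There is no real obstacle here; the only point deserving a line of care is the identification of the conditioning $\sigma$-algebras so that the martingale-difference condition (stated in terms of $R_1,\ldots,R_{i-1}$) matches the martingale condition (stated in terms of $X_0,\ldots,X_{i-1}$) used by Theorem~\ref{thm:azuma}. Everything else is a direct substitution of parameters. If one wanted the two-sided version, one would simply apply the same argument to $-R_1,\ldots,-R_n$ (which is again a martingale difference sequence with the same bound $c$) and add the two tail probabilities, but the one-sided form stated is all that is needed in the sequel.
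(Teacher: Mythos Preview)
Your proof is correct and follows essentially the same approach as the paper: define $X_0 = 0$, $X_k = \sum_{i=1}^k R_i$, and apply Theorem~\ref{thm:azuma} with $\tau = \lambda n$ and $c_k = c$. You even add the small justification that conditioning on $X_0,\ldots,X_{i-1}$ is equivalent to conditioning on $R_1,\ldots,R_{i-1}$, which the paper leaves implicit.
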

\begin{proof}
Set $\tau \assign \lambda n$, $X_0 \assign 0$, and for $n \geq 1$, $X_n \assign
\sum_{i=1}^n R_i$ in Theorem~\ref{thm:azuma}.
\end{proof}

\subsection{Mathematical Tools}
The following two purely analytical lemmas will be used to bound some error terms.
\begin{lemma} \label{lem:delta}
  For any $0 < x < 1/e$ such that $y \assign x \log(1/x) < 1/4$,
  it holds that $x > \frac{y}{4 \log(1/y)}$.
\end{lemma}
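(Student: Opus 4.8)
The plan is to turn the claimed bound into an equivalent one-variable inequality and then dispatch it with a standard logarithmic estimate. First, observe that the hypothesis $y < 1/4$ gives $\log(1/y) > 2 > 0$, so multiplying the claimed inequality $x > y/\bigl(4\log(1/y)\bigr)$ through by the positive quantity $4\log(1/y)$ is an equivalence: the claim is the same as $4x\log(1/y) > y$. Substituting $y \assign x\log(1/x)$ and cancelling the positive factor $x$, it becomes $4\log(1/y) > \log(1/x)$.

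Next I would introduce $u \assign \log(1/x)$; since $x < 1/e$ we have $u > 1$, so $\log u > 0$, and, using $y = x\log(1/x)$, $\log(1/y) = \log(1/x) - \log\log(1/x) = u - \log u$. With this substitution the target inequality $4\log(1/y) > \log(1/x)$ reads $4(u - \log u) > u$, i.e. $3u > 4\log u$.

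It then remains to show $3u > 4\log u$ for all $u > 0$ (in particular for $u>1$). I would do this without calculus, via the elementary fact that $\ln t \le t/e$ for every $t > 0$ (the function $t \mapsto t/e - \ln t$ is nonnegative, with minimum $0$ at $t = e$): this yields $4\log u = (4/\ln 2)\ln u \le \bigl(4/(e\ln 2)\bigr)u < 3u$, the last step being the numerical inequality $4 < 3e\ln 2 \approx 5.65$. Chaining the equivalences back gives the lemma.

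The only delicate point — and the ``main obstacle'', such as it is — is bookkeeping on domains and signs: one must check that $x < 1/e$ and $y < 1/4$ are exactly what makes every logarithm in the chain ($\log(1/x)$, $\log\log(1/x) = \log u$, and $\log(1/y)$) well-defined and the denominator $4\log(1/y)$ strictly positive, so that ``multiply both sides by the denominator'' genuinely preserves the inequality rather than reversing it. Once that is in place the computation is short; as an alternative to $\ln t \le t/e$ one could instead show by differentiation that $g(u) \assign 3u - 4\log u$ has a positive global minimum, but the $\ln t \le t/e$ route avoids any derivatives.
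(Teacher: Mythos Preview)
Your proof is correct and considerably more elementary than the paper's. The paper treats $f(x)=x\log(1/x)$ as a bijection on $(0,1/e)$, sets $g(y)=y/(4\log(1/y))$, and argues $f^{-1}(y)>g(y)$ by comparing derivatives: it computes $1/\frac{d}{dy}f^{-1}(y)=\log(1/x)-\log e$ and bounds $1/\frac{d}{dy}g(y)>2\log(1/y)$, eventually reducing to showing $h(z)=z-2\log z+\log e>0$ for all $z>0$, which in turn is done by locating the global minimum via another derivative. Your route bypasses all of this: you clear the denominator (legitimate since $y<1/4$ forces $\log(1/y)>2$), substitute $u=\log(1/x)$ so that $\log(1/y)=u-\log u$, and land directly on $3u>4\log u$, which falls out of $\ln t\le t/e$ and the numerical check $4<3e\ln 2$. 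Both arguments ultimately reduce to a one-variable log inequality, but yours gets there by pure algebra, while the paper takes a detour through inverse-function calculus; your approach is shorter and requires less machinery, at the cost of the slight opacity of the substitution step.
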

\begin{proof}
  Define the function $x \mapsto f(x) = x \log(1/x)$. It holds that
  $f'(x) = \frac{d}{dx}f(x) = \log(1/x)-\log e$, which shows that $f$
  is bijective in the interval $(0,1/e)$, and thus the inverse
  function $f^{-1}(y)$ is well defined for $y \in (0,\log(e)/e)$,
  which contains the interval $(0,1/4)$. We are going to show that
  $f^{-1}(y) > g(y)$ for all $y \in (0,1/4)$, where $g(y) = \frac{y}{4
    \log(1/y)}$. Since both $f^{-1}(y)$ and $g(y)$ converge to 0 for
  $y \rightarrow 0$, it suffices to show that $\frac{d}{dy} f^{-1}(y)
  > \frac{d}{dy} g(y)$; respectively, we will compare their
  reciprocals. For any $x \in (0,1/e)$ such that $y = f(x) = x
  \log(1/x) < 1/4$
$$
\frac{1}{\frac{d}{dy} f^{-1}(y)} = f'(f^{-1}(y)) = \log(1/x)-\log(e)
$$
and
$$
\frac{d}{dy} g(y) = \frac{1}{4} \bigg( \frac{1}{\log(1/y)} + \frac{1}{\ln(2) \log(1/y)^2} \bigg)
$$
such that 
\begin{align*}
\frac{1}{\frac{d}{dy} g(y)} &= 4 \, \frac{\ln(2)\log(1/y)^2}{\ln(2) \log(1/y) + 1} 
= 4 \, \frac{\log(1/y)}{1+\frac{1}{\ln(2) \log(1/y)}}\\
& > 2 \log\Big(\frac{1}{y}\Big) 
= 2 \log\Big(\frac{1}{x \log(1/x)}\Big) \\[0.7ex]
&= 2\big(\log(1/x) - \log\log(1/x)\big)
\end{align*} 
where for the inequality we are using that $y < 1/4$ so that $\ln(2) \log(1/y) > 2\ln(2) = \ln(4) > 1$. 
Defining the function
$$
h(z) \assign z - 2\log(z) + \log(e)
$$
and showing that $h(z) > 0$ for all $z>0$ finishes the proof, as then
$$
0 < h\big(\log(1/x)\big) \leq  \frac{1}{\frac{d}{dy} g(y)} - \frac{1}{\frac{d}{dy} f^{-1}(y)}
$$
which was to be shown. 
For this last claim, note that $h(z) \rightarrow \infty$ for $z
\rightarrow 0$ and for $z \rightarrow \infty$, and thus the global
minimum is at $z_0$ with $h'(z_0) = 0$. $h'(z) = 1 - 2/(\ln(2)z)$ and
thus $z_0 = 2/\ln(2) = 2\log(e)$, and hence the minimum of $h(z)$ equals
$h(z_0) = 3 \log(e) - 2\log\big(2\log(e)\big)$, which turns out to be positive. 
\end{proof}

\begin{lemma} \label{lem:epsilon}
 For any $0 < x < 1/4$, it holds that $\exp(-\frac{x^2 }{32
 (2-\log(x))^2}) < 2^{-x^4/32}$ .
\end{lemma}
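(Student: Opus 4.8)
The plan is to take logarithms to remove the exponentials and reduce the claim to an elementary one-variable estimate. Writing the right-hand side as $2^{-x^4/32} = \exp\!\bigl(-\tfrac{x^4}{32}\ln 2\bigr)$ and using that $\exp$ is increasing, the asserted inequality is equivalent to $\tfrac{x^2}{32(2-\log x)^2} > \tfrac{x^4}{32}\ln 2$, i.e., after multiplying by $32$ and dividing by $x^2>0$, to
\[
x^2\,(2-\log x)^2 < \frac{1}{\ln 2}\,.
\]
Since $\log x = \log_2 x < 0$ for $0<x<1$, both factors $x$ and $2-\log x$ are positive, so this is in turn equivalent to $g(x) < 1/\sqrt{\ln 2}$, where $g(x) \assign x\,(2-\log x) = 2x - x\log x$.

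Next I would check that $g$ is strictly increasing on $(0,\tfrac14]$. Differentiating (using $\tfrac{d}{dx}\log x = \tfrac{1}{x\ln 2}$) gives $g'(x) = 2 - \log x - \tfrac{1}{\ln 2}$. On the interval $0<x\le 1/4$ we have $\log x \le \log(1/4) = -2$, hence $-\log x \ge 2$ and therefore $g'(x) \ge 4 - \tfrac{1}{\ln 2} > 0$. Consequently $g$ is strictly increasing on $(0,\tfrac14]$, so for every $0<x<1/4$,
\[
g(x) < g\!\bigl(\tfrac14\bigr) = \tfrac14\bigl(2 - \log\tfrac14\bigr) = \tfrac14(2+2) = 1\,.
\]

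Finally, since $\ln 2 < 1$ we have $1/\sqrt{\ln 2} > 1$, so $g(x) < 1 < 1/\sqrt{\ln 2}$, which is exactly the reduced inequality displayed above; unwinding the equivalences yields the claim. There is no genuine obstacle here — this is a routine calculus estimate — and the only points requiring care are bookkeeping of the inequality directions when multiplying through by the negative quantities arising from the $\exp(-\,\cdot\,)$ terms, and remembering that $\log$ denotes the base-$2$ logarithm, which is what makes $2-\log x$ automatically positive on the relevant range.
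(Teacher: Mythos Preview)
Your proof is correct and follows essentially the same route as the paper: both reduce the claim to showing that $x^2(2-\log x)^2$ stays below $1$ on $(0,\tfrac14)$ and verify this by checking monotonicity via the derivative, with the slack coming from $\log e>1$ (equivalently $\ln 2<1$). The only cosmetic difference is that you differentiate $g(x)=x(2-\log x)$ while the paper differentiates $f=g^2$ directly.
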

\begin{proof}
  Note that $\exp(-\frac{x^2 }{32(2-\log(x))^2}) = 2^{-
    \frac{\log(e)}{32} \frac{x^2}{(2-\log(x))^2}}$. Therefore, it
  suffices to show that $x^4 \leq \frac{x^2}{(2-\log(x))^2}$ or
  equivalently that the function $x \mapsto f(x) \assign x^2
  (2-\log(x))^2$ is smaller than 1 for $0<x<1/4$. It holds that
  $f(0)=0$ and $f(1/4)=1$ and it is easy to see that $f$ is a
  continuous increasing function, e.g. by verifying that for the first
  derivative
\[ \frac{d}{dx}f(x)=2x \left(2-\log(x) \right) \left(2-\log(x) - \frac{1}{\ln(2)} \right) 
    >0  \] holds for $0<x<1/4$.
\end{proof}

\section{History and Previous Work} \label{sec:uncerthistory}
\subsection{Mutually Unbiased Bases}
\index{mutually unbiased bases}
\begin{definition}[Mutually Unbiased Bases (MUBs)]
Two orthonormal bases $\Ba{0} \assign \set{\ket{a_i}}_{i=1}^N$ and $\Ba{1}
  \assign \set{\ket{b_j}}_{j=1}^N$ of the complex Hilbert space
  $\cH_N$ of dimension $N\assign 2^n$ are called \emph{mutually unbiased} if
\[ \forall i, j \in \set{1, \ldots, N} : \left|
  \braket{a_i}{b_j} \right| = \frac{1}{\sqrt{N}}=2^{-n/2}.
\]
More $\Ba{0}, \Ba{1}, \ldots, \Ba{M}$ bases of this space
$\cH_N$ are called \emph{mutually unbiased}, if every pair of
them is mutually unbiased.
\end{definition}

Wiesner showed in 1970 in one of the first articles about quantum
cryptography \cite{Wiesner83} that there are at least $m$ mutually
unbiased bases in a Hilbert space of dimension $2^{(m-1)!/2}$. Later,
optimal constructions of $N+1$ mutually unbiased bases in a Hilbert
space of dimension $N$ were shown by Ivanovi\'c when $N$ is prime
\cite{Ivanovic81} and by Wootters and Fields for $N$ a prime power
\cite{WF89} (in particular, for $N=2^n$ in the case of $n$ qubits). A
nice construction based on the \index{stabilizer formalism}stabilizer
formalism can be found in
the article by Lawrence, Brukner, and Zeilinger \cite{LBZ02}. It
turned out to be an intriguing question to determine the maximal
number of mutually unbiased bases in other dimensions, already the
case $N=6$ is still open \cite{Englert03}.

For a density matrix $\rho$ describing the state of $n$ qubits, let
$Q_\rho^0(\cdot), Q_\rho^1(\cdot), \ldots, Q_\rho^M(\cdot)$ be the
probability distributions over $n$-bit strings when measuring $\rho$
in bases $\Ba{0}, \Ba{1}, \ldots, \Ba{M}$, respectively. For instance,
for basis $\Ba{0} = \set{\ket{a_i}}_{i=1}^N$ and basis $\Ba{1}
\set{\ket{b_j}}_{j=1}^N$, we have $Q_\rho^0(i) =
\bra{a_i}\rho\ket{a_i}$ and $Q_\rho^1(j) = \bra{b_j}\rho\ket{b_j}$. We
leave out the state $\rho$ in the subscript when it is clear from the
context.

\subsection{Uncertainty Relations Using Shannon Entropy}
\index{uncertainty relation|(}
The history of uncertainty relations starts with Heisenberg who showed
that the outcomes of two non-commuting observables applied to a
quantum state are not easy to predict simultaneously
\cite{Heisenberg27}. However, Heisenberg only speaks about the
variance of the measurement results, and his result was shown to have
several shortcomings by Deutsch \cite{Deutsch83} and Hilgevood and Uffink
\cite{HU88}. More general forms of uncertainty relations were proposed
by Bialynicki-Birula and Mycielski in \cite{BM75} and by Deutsch
\cite{Deutsch83} to resolve these problems.  The new relations were
called {\em entropic uncertainty relations}, because they are
expressed using \index{entropy!Shannon}Shannon entropy instead of the statistical variance.

For mutually unbiased bases, \index{Deutsch's relation}Deutsch's relation reads
\[
\H(Q^0) + \H(Q^1) \geq -2 \log{ \frac12 (1+\frac{1}{\sqrt{N}}) }.
\]
A much stronger bound was first conjectured by Kraus~\cite{Kraus87} and later proved
by \index{Maassen and Uffink's relation}Maassen and Uffink~\cite{MU88}
\begin{equation} \label{eq:maassenuffink}
\H(Q^0) + \H(Q^1) \geq \log{N}=n. 
\end{equation}
Intuitively, these bounds assure that if you know the outcome of
measuring $\rho$ in basis $\Ba{0}$ pretty well, you have large
uncertainty when measuring in the other basis $\Ba{1}$. 

Note that for entropic bounds using \emph{Shannon entropy}, it is
sufficient to state them for pure states. They then automatically hold
for mixed state by \index{concave function}concavity.
\begin{lemma}
  If $\H(Q_{\ket{\varphi}}^0) + \H(Q_{\ket{\varphi}}^1) \geq k$ holds
  for all pure states $\ket{\varphi} \in \cH$, then $\H(Q_{\rho}^0) +
  \H(Q_{\rho}^1) \geq k$ holds for all (possibly mixed) states $\rho \in \dens{\cH}$.
\end{lemma}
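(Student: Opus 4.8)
The plan is to reduce the mixed-state case to the pure-state hypothesis via a spectral (or any convex) decomposition of $\rho$ together with the concavity of Shannon entropy established in Lemma~\ref{lem:concavity}. First I would write $\rho = \sum_i p_i \proj{\varphi_i}$ as a convex combination of pure states (for instance its eigenvalue decomposition), with $p_i \geq 0$ and $\sum_i p_i = 1$. The key observation is that the map $\rho \mapsto Q_\rho^b$ is affine for each basis $\Ba{b}$: for a basis vector $\ket{x}_b$ of $\Ba{b}$ one has $Q_\rho^b(x) = \bra{x}_b \rho \ket{x}_b = \sum_i p_i \bra{x}_b \proj{\varphi_i} \ket{x}_b = \sum_i p_i\, Q_{\ket{\varphi_i}}^b(x)$, so $Q_\rho^b = \sum_i p_i\, Q_{\ket{\varphi_i}}^b$ as probability distributions over $n$-bit strings.

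Next I would apply concavity of Shannon entropy (the $\alpha\to 1$ case of Lemma~\ref{lem:concavity}, as spelled out in the discussion following it) to each of the two distributions, obtaining
\[
\H(Q_\rho^b) = \H\Big(\sum_i p_i\, Q_{\ket{\varphi_i}}^b\Big) \geq \sum_i p_i\, \H(Q_{\ket{\varphi_i}}^b)
\qquad\text{for } b \in \{0,1\}.
\]
Adding the two inequalities and using the pure-state hypothesis $\H(Q_{\ket{\varphi_i}}^0) + \H(Q_{\ket{\varphi_i}}^1) \geq k$ for every $i$ gives
\[
\H(Q_\rho^0) + \H(Q_\rho^1) \;\geq\; \sum_i p_i\Big(\H(Q_{\ket{\varphi_i}}^0) + \H(Q_{\ket{\varphi_i}}^1)\Big) \;\geq\; \sum_i p_i\, k \;=\; k,
\]
which is the claim.

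There is essentially no obstacle here: the only two ingredients are the affinity of measurement statistics in the state (an immediate consequence of the trace/inner-product formula for measurement probabilities) and concavity of Shannon entropy, both of which are available. The one point worth stating explicitly in the write-up is that the decomposition of $\rho$ into pure states need not be unique, but the argument works for any such convex decomposition, so no canonical choice is required; using the eigendecomposition is merely convenient. This is also why the lemma is phrased only for Shannon-type (concave) entropic functionals — the same reduction fails for higher-order Rényi entropies, which are not concave, as the example after Lemma~\ref{lem:concavity} shows.
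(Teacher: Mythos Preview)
Your proposal is correct and essentially identical to the paper's own proof: the paper also takes the spectral decomposition $\rho = \sum_x \lambda_x \proj{\varphi_x}$, uses affinity of the measurement statistics to write $Q_\rho^i = \sum_x \lambda_x Q_{\ket{\varphi_x}}^i$, and then applies concavity of Shannon entropy (Lemma~\ref{lem:concavity}) together with the pure-state hypothesis. Your additional remarks about the non-uniqueness of the decomposition and the failure for higher-order R\'enyi entropies are accurate and go slightly beyond what the paper spells out.
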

\begin{proof}
Let $\rho=\sum_x \lambda_x \proj{\varphi_x}$ the spectral composition
of a mixed state. We then have for $i=0,1$ that $Q_{\rho}^i=\sum_x
\lambda_x Q_{\ket{\varphi_x}}^i$ and therefore by concavity of the
Shannon entropy (Lemma~\ref{lem:concavity})
\[
\H(Q_{\rho}^0) + \H(Q_{\rho}^1) \geq \sum_x \lambda_x \left( \H(Q_{\ket{\varphi_x}}^0)
+ \H(Q_{\ket{\varphi_x}}^1) \right) \geq k.\]
\end{proof}

Although a bound on Shannon \index{entropy!Shannon}entropy can be
helpful in some cases, it is usually not good enough in cryptographic
applications.  The main tool to reduce the adversary's
information---\index{privacy amplification}privacy amplification by two-universal
hashing---requires a bound on the adversary's min-\index{entropy!min-}entropy (in fact
collision entropy), see Section~\ref{sec:pa}. As $H(Q) \geq
H_\alpha(Q)$ for $\alpha > 1$, higher-order entropic bounds are
generally weaker, but imply bounds for Shannon entropy as well.

\subsection{Higher-Order Entropic Uncertainty Relations}
Different results are known for
\emph{complete sets} of $N+1$ mutually unbiased bases of
$\cH_N$. All of them are based on the following surprising
geometrical result by Larsen.
\begin{theorem}[\cite{Larsen90}] \label{thm:larsen}
 Let $Q_\rho^0, \ldots, Q_\rho^N$ be the $N+1$ distributions obtained by
 measuring state $\rho$ in mutually unbiased bases $\Ba{0},
 \ldots, \Ba{N}$ of the Hilbert space $\cH_N$. Then, 
 \begin{equation} \label{eq:larsen}
    \sum_{i=0}^{N} \pi_2(Q_\rho^i) = 1 + \tr(\rho^2),
  \end{equation}
  where $\pi_2(Q)=\sum_x Q(x)^2$ denotes the \index{collision probability}
collision probability of a distribution $Q$ (cf.
  Definition~\ref{def:ordersum}).
\end{theorem}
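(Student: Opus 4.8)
The plan is to exploit the linear-algebraic structure that a \emph{complete} family of $N+1$ mutually unbiased bases imposes on the space of operators on $\cH_N$. For each basis $\Ba{k}=\{\ket{e^k_j}\}_{j=1}^N$ ($k=0,\dots,N$) write $P^k_j\assign\proj{e^k_j}$ and let $A^k_j\assign P^k_j-\tfrac1N\id$ be its traceless part. First I would record the elementary trace computations: within one basis $\tr(A^k_iA^k_j)=\delta_{ij}-\tfrac1N$ (using $\tr(P^k_iP^k_j)=\delta_{ij}$), across two bases $\tr(A^k_iA^l_j)=0$ for $k\neq l$ (using the MUB property $\tr(P^k_iP^l_j)=|\braket{e^k_i}{e^l_j}|^2=\tfrac1N$), and $\sum_j A^k_j=\id-\id=0$. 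Reading off the Gram matrix $I_N-\tfrac1N J_N$ (eigenvalue $1$ with multiplicity $N-1$, eigenvalue $0$ with multiplicity $1$), one sees that the $A^k_j$ span a subspace $V_k$ of the real vector space of traceless Hermitian operators with $\dim V_k=N-1$, that $V_k\perp V_l$ for $k\neq l$, and—crucially—that since $(N+1)(N-1)=N^2-1$ equals the dimension of the whole traceless Hermitian space, $\bigoplus_{k=0}^N V_k$ is precisely that space.

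Second, I would observe that inside each basis $\{A^k_j\}_{j=1}^N$ is a Parseval (normalized) tight frame for $V_k$: the synthesis map $T\colon e_j\mapsto A^k_j$ has $T^*T$ equal to the Gram matrix above, so the frame operator $TT^*$ on $V_k$ has all nonzero eigenvalues equal to $1$ and hence is $\mathrm{id}_{V_k}$. Consequently $\sum_j\tr(A^k_jB)^2=\tr(B^2)$ for every Hermitian $B\in V_k$. (This small fact might be worth isolating as a lemma.)

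Now decompose $\rho=\tfrac1N\id+\tilde\rho$ with $\tilde\rho$ traceless Hermitian, so $\tr(\rho^2)=\tfrac1N+\tr(\tilde\rho^2)$, and let $\tilde\rho_k$ be the orthogonal projection of $\tilde\rho$ onto $V_k$; by the spanning property $\tr(\tilde\rho^2)=\sum_k\tr(\tilde\rho_k^2)$. The measurement probabilities satisfy $Q_\rho^k(j)=\tr(P^k_j\rho)=\tfrac1N+\tr(A^k_j\tilde\rho)$, and $\tr(A^k_j\tilde\rho)=\tr(A^k_j\tilde\rho_k)$ since $A^k_j\in V_k$. Expanding the square, using $\sum_j\tr(A^k_j\tilde\rho)=\tr\big((\sum_j A^k_j)\tilde\rho\big)=0$ and the Parseval identity,
\begin{equation*}
\pi_2(Q_\rho^k)=\sum_j\Big(\tfrac1N+\tr(A^k_j\tilde\rho)\Big)^2=\tfrac1N+\sum_j\tr(A^k_j\tilde\rho_k)^2=\tfrac1N+\tr(\tilde\rho_k^2).
\end{equation*}
Summing over $k=0,\dots,N$ gives $\sum_k\pi_2(Q_\rho^k)=\tfrac{N+1}{N}+\tr(\tilde\rho^2)=1+\big(\tfrac1N+\tr(\tilde\rho^2)\big)=1+\tr(\rho^2)$, which is \eqref{eq:larsen}.

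The only genuinely non-trivial input, and hence the main obstacle, is the structural claim in the first paragraph: that the traceless projector differences of a \emph{complete} set of MUBs are trace-orthogonal across bases and that the subspaces $V_k$ exactly tile the $(N^2-1)$-dimensional traceless operator space. This is exactly the place where completeness ($N+1$ bases, not fewer) is needed; the tight-frame step and the final summation are then short computations. I would therefore spend most of the care on making the dimension count and the orthogonality of the $V_k$ fully rigorous, treating Hermitian operators as a real inner-product space with $\langle X,Y\rangle\assign\tr(XY)$.
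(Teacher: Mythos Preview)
Your proof is correct. Note, however, that the paper does not supply its own proof of this theorem: it is quoted as an external result of Larsen~\cite{Larsen90} and used as a black box, so there is no ``paper's proof'' to compare against.

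On the substance, your argument is the standard and clean one: the key structural observation is that for a complete set of $N+1$ MUBs the traceless parts $A^k_j=P^k_j-\tfrac1N\id$ form mutually orthogonal blocks $V_0,\dots,V_N$ (in the Hilbert--Schmidt inner product) that exactly tile the $(N^2-1)$-dimensional space of traceless Hermitian operators, and that within each block they form a Parseval tight frame. From there the computation $\pi_2(Q_\rho^k)=\tfrac1N+\tr(\tilde\rho_k^2)$ and the summation are routine. All the trace identities you record are correct, including the Gram-matrix eigenvalue analysis and the dimension count $(N+1)(N-1)=N^2-1$. The only place to be slightly more explicit is the Parseval step: you might spell out that the frame operator $S=TT^*$ on $V_k$ inherits the nonzero spectrum of $T^*T$, hence $S=\mathrm{id}_{V_k}$, so $\sum_j\langle A^k_j,B\rangle^2=\langle B,SB\rangle=\tr(B^2)$ for $B\in V_k$. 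Otherwise nothing is missing.
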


For a pure state $\rho = \proj{\psi}$, $\tr(\rho^2)=1$ holds and the
right hand side of~\eqref{eq:larsen} equals 2. In this case, using
that $x \mapsto -\log(x)$ is a convex function,
S{\'a}nchez-Ruiz~\cite{Ruiz95} applies \index{Jensen's inequality}Jensen's inequality
(Lemma~\ref{lem:jensen}) to derive the following lower-bound on the
sum of the collision entropies
\begin{align*}
  \sum_{i=0}^N \H_2(Q^i) &= \sum_{i=0}^N -\log(\pi_2(Q^i))\\
&\geq -(N+1) \log\left(\frac{\sum_{i=0}^N \pi_2(Q^i)}{N+1}\right) = (N+1)
  \log \left( \frac{N+1}{2} \right).
\end{align*}
Because of the lack of convexity of higher-order R\'enyi entropy, we
cannot immediately extend an uncertainty relation for pure states to
mixed states. On the other hand, the following lemma shows that
uncertainty relations based on upper bounds of high-order \emph{probability
sums} for pure states also hold for mixed states and therefore
translate to entropy lower bounds for mixed states. 
\begin{lemma} \label{lem:sumbound}
Let $\alpha \in (1,\infty]$. If $\sum_{i=0}^M
\pi_\alpha(Q^i_{\ket{\varphi}}) \leq c$ for all pure states
$\ket{\varphi}$, then for all mixed states $\rho$,
$$\sum_{i=0}^M \H_\alpha(Q^i_\rho) \geq
(M+1) \log\left(\frac{M+1}{c} \right).$$
Equality holds for a state $\rho$ for which
$\pi_\alpha(Q^i_\rho)=\frac{c}{M+1}$ for all $i$.
\end{lemma}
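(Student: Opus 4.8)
The statement has two parts: a lower bound on the sum of Rényi entropies for arbitrary mixed states, assuming the probability-sum bound holds for all pure states, and a tightness claim. For the first part I would proceed in two steps: (i) lift the hypothesis $\sum_{i=0}^M \pi_\alpha(Q^i_{\ket{\varphi}}) \leq c$ from pure states to all mixed states, and (ii) convert the mixed-state probability-sum bound into an entropy bound via concavity of $x \mapsto -\log x$ and Jensen's inequality, exactly as S\'anchez-Ruiz does in the displayed computation just before the lemma.

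\emph{Step (i): lifting to mixed states.} This is the crux. Write the spectral decomposition $\rho = \sum_x \lambda_x \proj{\varphi_x}$ with $\lambda_x \geq 0$, $\sum_x \lambda_x = 1$. For each basis $\Ba{i}$, measuring $\rho$ gives $Q^i_\rho = \sum_x \lambda_x Q^i_{\ket{\varphi_x}}$, i.e.\ $Q^i_\rho$ is a convex combination (mixture) of the pure-state distributions. Now I need that $\pi_\alpha$ of a mixture is at most the corresponding mixture of the $\pi_\alpha$'s --- but that is exactly Lemma~\ref{lem:concavity} applied to $\H_\alpha$ for $\alpha \le 1$, which is the \emph{wrong} range. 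For $\alpha > 1$ Rényi entropy is not concave, so $\pi_\alpha$ is not convex either, and this naive approach fails. The right tool is convexity of the map $P \mapsto \pi_\alpha(P) = \sum_x P(x)^\alpha$ as a function on the probability simplex: since $t \mapsto t^\alpha$ is convex on $[0,\infty)$ for $\alpha > 1$, each summand $P \mapsto P(x)^\alpha$ is convex, hence $\pi_\alpha$ itself is convex. Therefore
\[
\pi_\alpha(Q^i_\rho) = \pi_\alpha\Big(\sum_x \lambda_x Q^i_{\ket{\varphi_x}}\Big) \leq \sum_x \lambda_x \, \pi_\alpha(Q^i_{\ket{\varphi_x}}).
\]
Summing over $i$ and swapping the order of summation,
\[
\sum_{i=0}^M \pi_\alpha(Q^i_\rho) \leq \sum_x \lambda_x \sum_{i=0}^M \pi_\alpha(Q^i_{\ket{\varphi_x}}) \leq \sum_x \lambda_x \cdot c = c,
\]
using the pure-state hypothesis on each $\ket{\varphi_x}$. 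For the limiting case $\alpha = \infty$, $\pi_\infty(P) = \max_x P(x)$ is a maximum of linear functions hence convex, so the same argument goes through verbatim.

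\emph{Step (ii): probability-sum to entropy.} With $\sum_{i=0}^M \pi_\alpha(Q^i_\rho) \leq c$ established for the given $\rho$, apply Jensen's inequality (Lemma~\ref{lem:jensen}) to the convex function $t \mapsto -\log t$ with uniform weights $1/(M+1)$:
\[
\frac{1}{M+1}\sum_{i=0}^M \H_\alpha(Q^i_\rho) = \frac{1}{M+1}\sum_{i=0}^M \frac{1}{\alpha-1}\log\frac{1}{\pi_\alpha(Q^i_\rho)} \geq \frac{1}{\alpha-1}\log\frac{1}{\frac{1}{M+1}\sum_i \pi_\alpha(Q^i_\rho)} \geq \frac{1}{\alpha-1}\log\frac{M+1}{c}.
\]
Wait --- $\H_\alpha = \frac{1}{1-\alpha}\log \pi_\alpha = \frac{1}{\alpha-1}\log(1/\pi_\alpha)$, and since $\alpha > 1$ this is an increasing function of $-\log \pi_\alpha$; strictly, one applies Jensen to $-\log$ directly on the $\pi_\alpha(Q^i_\rho)$ and then multiplies by the positive constant $1/(\alpha-1)$. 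Multiplying through by $M+1$ gives $\sum_{i=0}^M \H_\alpha(Q^i_\rho) \geq (M+1)\log\frac{M+1}{c}$, as claimed. For $\alpha = \infty$ one uses $\H_\infty = -\log \pi_\infty$ directly and the same Jensen step.

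\emph{Tightness.} If $\rho$ is such that $\pi_\alpha(Q^i_\rho) = c/(M+1)$ for every $i$, then every term in the Jensen step is equal, so Jensen holds with equality (Lemma~\ref{lem:jensen} records equality for equal arguments), and $\sum_{i=0}^M \H_\alpha(Q^i_\rho) = (M+1)\log\frac{M+1}{c}$. I would state this as the final line of the proof. The main obstacle, as flagged above, is recognizing that the relevant convexity property for $\alpha > 1$ is convexity of $\pi_\alpha$ (equivalently, of $t \mapsto t^\alpha$ componentwise), \emph{not} concavity of $\H_\alpha$ --- the paper has just emphasized via an explicit counterexample that $\H_\alpha$ fails to be concave for $\alpha > 1$, which is precisely why the lemma is phrased in terms of probability sums rather than entropies, and the proof must exploit that phrasing.
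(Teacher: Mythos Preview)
Your approach is essentially identical to the paper's: convexity of $P \mapsto \pi_\alpha(P)$ (from convexity of $t \mapsto t^\alpha$ for $\alpha>1$) lifts the pure-state bound to mixed states, then Jensen on $-\log$ gives the entropy inequality, with equality when all $\pi_\alpha(Q^i_\rho)$ coincide. The paper's proof likewise writes $\H_\alpha(Q^i_\rho) = -\log\pi_\alpha(Q^i_\rho)$ without the $1/(\alpha-1)$ factor you flag, so your hesitation there matches an imprecision in the paper itself (the statement and proof are exact for $\alpha\in\{2,\infty\}$, which are the only cases invoked later).
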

\begin{proof}
As $x \mapsto x^{\alpha}$ is convex for $\alpha >1$,
$\pi_\alpha(\cdot)$ is a convex functional. Therefore, for a mixed state
$\rho = \sum_x \lambda_x \proj{\varphi_x}$, we have $Q^i_\rho=\sum_x
\lambda_x Q^i_{\ket{\varphi_x}}$ and
\[
\sum_{i=0}^M \pi_{\alpha}(Q^i_{\rho}) \leq \sum_{i=0}^M \sum_x
\lambda_x \pi_\alpha(Q^i_{\ket{\varphi_x}}) \leq \sum_x \lambda_x
\sum_{i=0}^M \pi_\alpha(Q^i_{\ket{\varphi_x}}) \leq c.
\]
Just as above follows by \index{Jensen's inequality}Jensen's
 inequality (Lemma~\ref{lem:jensen}) that
\begin{align*}
  \sum_{i=0}^M \H_\alpha(Q_\rho^i) &= \sum_{i=0}^M
 -\log(\pi_\alpha(Q^i_\rho))\\
&\geq -(M+1)  \log\left(\frac{\sum_{i=0}^M \pi_\alpha(Q^i_\rho)}{M+1}\right) \geq (M+1)
  \log \left( \frac{M+1}{c} \right).
\end{align*}
Jensen's inequality is tight if the values $\pi_\alpha(Q^i_\rho)$ are
all equal.
\end{proof}

For incomplete sets of bases $\Ba{0},\ldots,\Ba{M}$ with $1 \leq M
\leq N$, the current state-of-the-art bound was independently obtained
by Damg{\aa}rd, Salvail and Pedersen~\cite{DPS04} and
Azarchs~\cite{Azarchs04} by subtracting the minimal amount of
collision probability ($1/N$) in the bases not included in the sum:
\begin{equation}
\sum_{i=0}^{M} \pi_2(Q^i_{\ket{\varphi}}) \leq 2 - \frac{(N + 1 - (M+1))}{N} 
  = \frac{N + M}{N}. \label{eq:sumcollision}
\end{equation}
By Lemma~\ref{lem:sumbound}, this yields
\begin{equation}
\sum_{i=0}^{M} \H_2(Q^i_\rho) \geq (M+1) \log \left( \frac{N (M+1)}{N+M} \right).    
\end{equation}

As mentioned above, all lower bounds on the collision entropy from
this section imply bounds on the Shannon \index{entropy!min-}
entropy because $H(Q) \geq H_2(Q)$, but do not tell us anything about
the min-entropy $H_\infty(Q)$.  In the rest of this chapter, we derive
entropic uncertainty relations involving \emph{min-entropy}.

Uncertainty relations in terms of R\'enyi entropy have also been studied in a
different context by Bialynicki-Birula \cite{Bialynicki06}.

\section{Two Mutually Unbiased Bases}\label{sec:twounbiasedbases}
In this section, we consider the situation where a $n$-qubit state is
measured in one out of two \index{mutually unbiased bases}mutually unbiased bases of
$\cH_{2^n}$. Without loss of generality, we assume these two bases to
be the $n$-fold tensor product of the computational basis $+^{\otimes
  n}$ and of the diagonal basis $\times^{\otimes n}$, in this section
simply called $+$- and $\times$-basis. 

We show that two distributions obtained by measuring in two mutually
unbiased bases cannot \emph{both} be ``very far from uniform''. One
way to characterize non-uniformity of a distribution is to identify a
subset of outcomes that has much higher probability than for a uniform
choice. Intuitively, the theorem below says that such sets cannot be
found simultaneously for \emph{both} measurements. 

\index{uncertainty relation!two mutually unbiased bases}
\begin{theorem} \label{thm:hadamard}
  Let $\rho$ be an arbitrary state of $n$ qubits, and let $\Qp(\cdot)$ and
  $\Qt(\cdot)$ be the respective distributions of the outcome when
  $\rho$ is measured in the $+$-basis respectively the $\times$-basis.
  Then, for any two sets $L^+ \subset \set{0,1}^n$ and $L^{\times}
  \subset \set{0,1}^n$ it holds that
\[ \Qp(L^+)+\Qt(L^{\times}) \leq 1 + 2^{-n/2} \sqrt{|L^+|
   |L^{\times}|}. \]
\end{theorem}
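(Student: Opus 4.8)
The plan is to recast the probabilistic inequality as a statement about operator norms and then invoke Proposition~\ref{prop:norm2}. First I would introduce the two operators
$$A \assign \sum_{x \in L^+} \ket{x}_+\bra{x}_+ \qquad\text{and}\qquad B \assign \sum_{y \in L^{\times}} \ket{y}_{\times}\bra{y}_{\times}\, .$$
Each is a sum of rank-one projectors onto mutually orthonormal vectors, hence a genuine orthogonal projector, so Proposition~\ref{prop:norm2} applies to the pair $A,B$. By the definition of the distribution obtained from a projective measurement, $\Qp(L^+) = \tr(A\rho)$ and $\Qt(L^{\times}) = \tr(B\rho)$. Since $A+B$ is Hermitian and non-negative and $\rho$ is a density operator, $\tr\bigl((A+B)\rho\bigr) \le \lambda_{\max}(A+B) = \|A+B\|$ (diagonalize $A+B$ and use $\sum_i \bra{e_i}\rho\ket{e_i} = 1$). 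Therefore $\Qp(L^+) + \Qt(L^{\times}) \le \|A+B\| \le 1 + \|AB\|$, and it only remains to show $\|AB\| \le 2^{-n/2}\sqrt{|L^+|\,|L^{\times}|}$.

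For this I would bound the operator norm by the Hilbert--Schmidt norm: for any matrix $M$ one has $\|M\|^2 = \lambda_{\max}(MM^*) \le \tr(MM^*)$ since all eigenvalues of $MM^*$ are non-negative. Applied to $M = AB$, and using $A^*=A$, $B^*=B$ so that $(AB)^* = BA$, this gives $\|AB\|^2 \le \tr(ABBA)$. Now $B^2 = B$ yields $ABBA = ABA$, and cyclicity of the trace together with $A^2 = A$ yields $\tr(ABA) = \tr(A^2 B) = \tr(AB)$. Finally,
$$\tr(AB) = \sum_{x \in L^+}\sum_{y \in L^{\times}} \bigl|\bra{x}_+\ket{y}_{\times}\bigr|^2 = |L^+|\,|L^{\times}|\cdot 2^{-n}\, ,$$
where the last equality is exactly the mutual unbiasedness of the $+$- and $\times$-bases, i.e.\ $\bigl|\bra{x}_+\ket{y}_{\times}\bigr| = 2^{-n/2}$ for all $x,y \in \set{0,1}^n$. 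Taking square roots gives $\|AB\| \le 2^{-n/2}\sqrt{|L^+|\,|L^{\times}|}$, and combining with the previous paragraph proves the claim.

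The only step with any real content is the estimate on $\|AB\|$: one must notice that it suffices to control the Hilbert--Schmidt norm and that $\tr\bigl((AB)(AB)^*\bigr)$ collapses to $\tr(AB)$ by the idempotency of the projectors, after which mutual unbiasedness does all the work. Everything else is routine bookkeeping — checking that $A,B$ are orthogonal projectors so Proposition~\ref{prop:norm2} is applicable, and the elementary bound $\tr(M\rho)\le\|M\|$ for Hermitian non-negative $M$. It is also worth noting in passing that the estimate $\|A+B\|\le 1+\|AB\|$ from Proposition~\ref{prop:norm2} is what makes the argument tight, and that the whole proof goes through verbatim for any pair of mutually unbiased bases of $\cH_{2^n}$, not just $+$ and $\times$.
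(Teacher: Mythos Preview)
Your proof is correct and follows the same overall strategy as the paper: define the projectors $A$ and $B$, reduce $\Qp(L^+)+\Qt(L^{\times})$ to $\tr\bigl((A+B)\rho\bigr)\le\|A+B\|$, and apply Proposition~\ref{prop:norm2} to get $1+\|AB\|$. The only genuine difference is in how $\|AB\|$ is bounded. The paper estimates $\|AB\ket{\psi}\|$ directly for an arbitrary $\ket{\psi}$ expanded in the $\times$-basis, using Pythagoras, the triangle inequality, and Cauchy--Schwarz. You instead pass to the Hilbert--Schmidt norm, observe that $\tr\bigl((AB)(AB)^*\bigr)=\tr(ABBA)=\tr(ABA)=\tr(AB)$ by idempotency and cyclicity, and then read off $\tr(AB)=|L^+|\,|L^{\times}|\,2^{-n}$ from mutual unbiasedness.

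Your route is arguably cleaner: it avoids any explicit vector computation and makes transparent that the only input is $|{}_+\!\braket{x}{y}_\times|^2=2^{-n}$, so the generalization to arbitrary pairs of mutually unbiased bases is immediate. The paper's hands-on estimate has the minor advantage of not invoking the (easy) inequality $\|M\|^2\le\tr(MM^*)$, but otherwise both arguments yield exactly the same bound with comparable effort.
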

\begin{proof}
We define the two orthogonal projectors
\[
A \assign \sum_{x \in L^+} \proj{x} \quad \mbox{and} \quad B \assign \sum_{y \in
  L^{\times}} H^{\otimes n}\proj{y}H^{\otimes n}.
\]
Using the spectral decomposition of $\rho = \sum_w \lambda_w
\proj{\varphi_w}$, we have
\begin{align*}
\Qp(L^+)+\Qt(L^{\times}) &= \trace{A\rho}  + \trace{B\rho}\\
 &= \sum_w
\lambda_w \left( \trace{A \proj{\varphi_w}} + \trace{B
    \proj{\varphi_w}} \right)\\
&= \sum_w \lambda_w \left( \bra{\varphi_w}A\ket{\varphi_w} +
  \bra{\varphi_w}B\ket{\varphi_w} \right)\\
&= \sum_w \lambda_w \bra{\varphi_w}(A+B)\ket{\varphi_w}\\
&\leq \|A+B\| \leq 1 + \|AB\|,
\end{align*}
where the last line is Proposition~\ref{prop:norm2}.
To conclude, we show that $\|AB\| \leq 2^{-n/2} \sqrt{|L^+|
  |L^{\times}|}$. Note that an arbitrary state $\ket{\psi} = \sum_z
\lambda_z H^{\otimes n} \ket{z}$ can be expressed with coordinates
$\lambda_z$ in the diagonal basis. Then, with the sums over $x$ and $y$ understood as over $x \in L^+$ and $y \in L^{\times}$, respectively, 
\begin{align*}
\big\| AB &\ket{\psi} \big\| = \bigg\| \sum_{x,y} \proj{x} H^{\otimes
  n}\proj{y}H^{\otimes n}  
\ket{\psi}
\bigg\| 
= 2^{-n/2} \bigg\| \sum_{x,y} \ket{x}\bra{y}H^{\otimes n}  
\ket{\psi}
\bigg\|
  \\
&= 2^{-n/2} \bigg\| \sum_{x} \ket{x} \bigg\| \cdot \bigg|  \sum_{y} \lambda_y
  \bigg| 
\leq 2^{-n/2} \sqrt{|L^+|} \sum_{y} |\lambda_y| 
\leq 2^{-n/2} \sqrt{|L^+| |L^{\times}|},
\end{align*}
The second equality holds since $\bra{x}H^{\otimes n}\ket{y}=2^{-n/2}$
are mutually unbiased, the first inequality follows from Pythagoras
and the triangle inequality, and the last inequality follows from
\index{Cauchy-Schwarz}Cauchy-Schwarz (Lemma~\ref{lem:cauchyschwarz}).  
This implies $\|AB\| \leq 2^{-n/2} \sqrt{|L^+|
  |L^{\times}|}$ and finishes the proof.
\end{proof}

This theorem yields a meaningful bound as long as $|L^+| \cdot
|L^{\times}| < 2^n$, for instance if $L^+$ and
$L^{\times}$ both contain less than $2^{n/2}$ elements. The relation
is tight in the sense that for the Hadamard-invariant state
$$\ket{\varphi} = \left(\ket{0}^{\otimes n} + (H\ket{0})^{\otimes n}
\right)/\sqrt{2(1+2^{-n/2})}$$
and $L^+ = L^{\times} = \set{0^n}$, it
is straightforward to verify that $\Qp(L^+) = \Qt(L^{\times}) = (1 +
2^{-n/2})/2$ and therefore $\Qp(L^+) + \Qt(L^{\times}) = 1 +
2^{-n/2}$. Another state that achieves equality (for $n$ even) is $\ket{\varphi} =
\ket{0}^{\otimes n/2} \otimes ( H \ket{0} )^{\otimes n/2}$ with $L^+ =
\set{0^{n/2}x | x \in \set{0,1}^{n/2}}$ and $L^{\times} = \set{x
  0^{n/2} | x \in \set{0,1}^{n/2}}$. We get that $\Qp(L^+) =
\Qt(L^{\times}) = 1$ and thus $\Qp(L^+) + \Qt(L^{\times}) = 2 = 1 + 2^{-n/2}
\sqrt{2^n}$.

If for $r \in
\{+,\times \}$, $L^{r}$ contains only the $n$-bit string with the
maximal probability of $Q^{r}$, we obtain a known tight relation (see (9) in \cite{MU88}).
\begin{corollary}\label{cor:pmax}
Let $q_{\infty}^+$ and $q_{\infty}^{\times}$ be the maximal
probabilities of the distributions $Q^+$ and $Q^{\times}$ from
above. It then holds that $q_{\infty}^+ + q_{\infty}^{\times} \leq
1+c$ and therefore also $q_{\infty}^+ \cdot q_{\infty}^{\times} \leq
\frac{1}{4} (1+c)^2$ where $c=2^{-n/2}$.
\end{corollary}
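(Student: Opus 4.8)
\textbf{Proof plan for Corollary~\ref{cor:pmax}.} The plan is to specialize Theorem~\ref{thm:hadamard} to singleton sets. For $r \in \{+,\times\}$, let $x^r_{\max}$ denote an $n$-bit string at which the distribution $Q^r$ attains its maximum, so that $Q^r(\{x^r_{\max}\}) = q_{\infty}^r$. First I would apply Theorem~\ref{thm:hadamard} with the choices $L^+ \assign \{x^+_{\max}\}$ and $L^{\times} \assign \{x^{\times}_{\max}\}$. Then $|L^+| = |L^{\times}| = 1$, hence $2^{-n/2}\sqrt{|L^+||L^{\times}|} = 2^{-n/2} = c$, and the theorem immediately gives
\[
q_{\infty}^+ + q_{\infty}^{\times} = \Qp(L^+) + \Qt(L^{\times}) \leq 1 + c \, .
\]

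For the second inequality, I would simply combine this additive bound with the arithmetic-geometric mean inequality: for any non-negative reals $a,b$ we have $ab \leq \big(\tfrac{a+b}{2}\big)^2$, so
\[
q_{\infty}^+ \cdot q_{\infty}^{\times} \leq \left( \frac{q_{\infty}^+ + q_{\infty}^{\times}}{2} \right)^{\!2} \leq \left( \frac{1+c}{2} \right)^{\!2} = \frac14 (1+c)^2 \, ,
\]
where the second step uses $q_{\infty}^+ + q_{\infty}^{\times} \leq 1+c$ together with the fact that $t \mapsto (t/2)^2$ is monotone increasing on the relevant range.

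There is no real obstacle here — the statement is a direct consequence of Theorem~\ref{thm:hadamard}, and the only points worth recording are the reduction to singleton index sets (which kills the combinatorial factor $\sqrt{|L^+||L^{\times}|}$) and the elementary AM-GM step. If one wanted to be slightly more careful, one could note that the choice of $x^r_{\max}$ is well defined even when the maximizer is not unique, since any maximizer works and the value $q_{\infty}^r$ is the same; this does not affect the argument.
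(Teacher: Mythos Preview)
Your proposal is correct and matches the paper's approach exactly: the paper introduces the corollary by saying that if $L^r$ contains only the $n$-bit string with the maximal probability of $Q^r$, one obtains the stated relation from Theorem~\ref{thm:hadamard}. The product bound via AM-GM is the natural elementary step implicit in the ``and therefore also'' of the statement.
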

Equality is achieved for the same state $\ket{\varphi} = \left(\ket{0}^{\otimes n} + (H\ket{0})^{\otimes n}
\right)/\sqrt{2(1+2^{-n/2})}$ as above. 

Using Lemma~\ref{lem:sumbound}, the following corollary is obtained.
\begin{corollary}\label{cor:pmax2} For all quantum states $\rho$ of
  $n$ qubits, it holds that
\[\H_\infty(Q_\rho^+) + \H_\infty(Q_\rho^\times) \geq 2 (1-\log(1+2^{-n/2})).
\]
There exists a quantum state achieving equality.
\end{corollary}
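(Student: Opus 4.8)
The plan is to deduce this directly from Corollary~\ref{cor:pmax} via Lemma~\ref{lem:sumbound}, so that essentially no new work is needed. First I would recall that for any distribution $Q$ over $\set{0,1}^n$ one has $\H_\infty(Q) = -\log \pi_\infty(Q)$, where $\pi_\infty(Q) = \max_x Q(x)$; thus the maximal probabilities $q_\infty^+$ and $q_\infty^\times$ appearing in Corollary~\ref{cor:pmax} are exactly $\pi_\infty(Q_\rho^+)$ and $\pi_\infty(Q_\rho^\times)$. Applying Corollary~\ref{cor:pmax} to a pure state then reads $\pi_\infty(Q_{\ket\varphi}^+) + \pi_\infty(Q_{\ket\varphi}^\times) \leq 1 + 2^{-n/2}$ for every pure $n$-qubit state $\ket\varphi$, i.e.\ the hypothesis of Lemma~\ref{lem:sumbound} is met with $\alpha = \infty$, $M = 1$, and the lemma's constant equal to $C \assign 1 + 2^{-n/2}$.

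Next I would invoke Lemma~\ref{lem:sumbound} itself: since the probability-sum bound holds for all pure states, it upgrades to a min-entropy bound for every (possibly mixed) $\rho$, namely
\[
\H_\infty(Q_\rho^+) + \H_\infty(Q_\rho^\times) \;\geq\; (M+1)\log\frac{M+1}{C} \;=\; 2\log\frac{2}{1+2^{-n/2}} \;=\; 2\bigl(1 - \log(1+2^{-n/2})\bigr),
\]
using $\log 2 = 1$. This is precisely the claimed inequality.

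For the tightness claim I would take the Hadamard-invariant state $\ket\varphi = \bigl(\ket0^{\otimes n} + (H\ket0)^{\otimes n}\bigr)/\sqrt{2(1+2^{-n/2})}$ already used after Theorem~\ref{thm:hadamard}, for which that same computation gives $Q_{\ket\varphi}^+(0^n) = Q_{\ket\varphi}^\times(0^n) = (1+2^{-n/2})/2$. Combined with Corollary~\ref{cor:pmax} — the sum $q_\infty^+ + q_\infty^\times$ is at most $1+2^{-n/2}$, while each summand is already at least $(1+2^{-n/2})/2$ — this forces $\pi_\infty(Q_{\ket\varphi}^+) = \pi_\infty(Q_{\ket\varphi}^\times) = (1+2^{-n/2})/2$, which is exactly the equality condition stated in Lemma~\ref{lem:sumbound}. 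Hence both sides of the corollary coincide for this state.

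The only mild subtlety — the ``hard part'', such as it is — is checking that $0^n$ is genuinely the \emph{most likely} outcome in each basis for the extremal state, and not merely an outcome of probability $(1+2^{-n/2})/2$; but as just observed this is forced by the sum bound of Corollary~\ref{cor:pmax}, since a strictly larger maximum in one basis would push the maximum in the other below $(1+2^{-n/2})/2$, contradicting $Q_{\ket\varphi}^\times(0^n) = (1+2^{-n/2})/2$. Everything else is a two-line deduction from results already in hand.
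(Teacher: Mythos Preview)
Your proposal is correct and matches the paper's approach exactly: the paper simply states ``Using Lemma~\ref{lem:sumbound}, the following corollary is obtained,'' and your argument spells out precisely this application (with $\alpha=\infty$, $M=1$, $c=1+2^{-n/2}$) together with the same extremal state already identified after Theorem~\ref{thm:hadamard}. Your careful verification that $0^n$ really is the argmax in both bases is a nice touch that the paper leaves implicit.
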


The following corollary plays the crucial role in the security proofs of
protocols in the \index{bounded-quantum-storage model}
bounded-quantum-storage model presented in the
following chapters of this thesis.
\begin{corollary} \label{cor:hadamard}
  Let $R$ be a random variable over $\set{+,\times}$, and let $X$ be
  the outcome when $\rho$ is measured in basis $R$, such that
  $P_{X| R}(x|r) = Q^r(x)$. Then, for any $\lambda < \frac12$ there
  exists $\sp>0$ and an event $\ev$ such that
$$
\P[\ev | R\!=\!+] + \P[\ev | R\!=\!\times] \geq 1 - 2^{-\sp n}
$$
and thus $\P[\ev] \geq \frac12 - 2^{-\sp n}$ in case $R$ is uniform, and such that 
$$
\H_{\infty}(X | R\!=\!r,\ev) \geq \lambda n 
$$
for $r \in \set{+,\times}$ with $P_{R| \ev}(r) > 0$. 
\end{corollary}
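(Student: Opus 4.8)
The plan is to let $\ev$ be, essentially, the event that the measured string avoids the ``heavy'' outcomes of whichever basis was used. The naive choice $\ev=\set{X\notin L^R}$ -- with $L^r$ the set of strings whose probability under $Q^r$ exceeds a threshold -- does \emph{not} work: if almost all of the $+$-measurement weight happened to sit on the few strings in $L^+$, then conditioning on $X\notin L^+$ would renormalise by a tiny factor and leave a distribution with useless min-entropy. The fix is to admit a basis into $\ev$ only when its light part still carries a non-negligible fraction of the probability, and to invoke Theorem~\ref{thm:hadamard} to argue that at most one of the two bases can fail this test, so that restricting $\ev$ in this way costs only a negligible amount in the probability bound. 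This last step is the only real obstacle.

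\emph{Setup.} Fix $\lambda<\tfrac12$ and put $\mu\assign\tfrac12(\tfrac12-\lambda)>0$ and $\lambda'\assign\lambda+\mu=\tfrac14+\tfrac{\lambda}{2}$, so that $\lambda<\lambda'<\tfrac12$ and also $\tfrac12-\lambda'=\mu$. For $r\in\set{+,\times}$ define the heavy set $L^r\assign\Set{x\in\nbit}{Q^r(x)>2^{-\lambda' n}}$; since $Q^r$ is a probability distribution, $|L^r|<2^{\lambda' n}$. Applying Theorem~\ref{thm:hadamard} to $L^+$ and $L^{\times}$ and using $\sqrt{|L^+|\,|L^{\times}|}<2^{\lambda' n}$ together with $2^{-n/2}\cdot 2^{\lambda' n}=2^{-\mu n}$ gives $Q^+(L^+)+Q^{\times}(L^{\times})<1+2^{-\mu n}$, whence, writing $\ol{L^r}\assign\nbit\setminus L^r$,
\[ Q^+(\ol{L^+})+Q^{\times}(\ol{L^{\times}})\;=\;2-\bigl(Q^+(L^+)+Q^{\times}(L^{\times})\bigr)\;>\;1-2^{-\mu n}. \]

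\emph{The event and the min-entropy bound.} Call $r\in\set{+,\times}$ \emph{good} if $Q^r(\ol{L^r})\ge 2^{-\mu n}$, let $G$ be the set of good bases, and set $\ev\assign\bigcup_{r\in G}\set{R=r,\ X\in\ol{L^r}}$, i.e.\ the event that $R$ is good and the outcome $X$ is light for $R$. For $r\in G$ we have $\P[\ev\mid R=r]=Q^r(\ol{L^r})\ge 2^{-\mu n}>0$, and conditioned on $R=r$ and $\ev$ the outcome is distributed as $P_{X|R=r,\ev}(x)=Q^r(x)/Q^r(\ol{L^r})$ with support in $\ol{L^r}$, so that
\[ \max_x P_{X|R=r,\ev}(x)\;\le\;\frac{2^{-\lambda' n}}{Q^r(\ol{L^r})}\;\le\;2^{-(\lambda'-\mu)n}\;=\;2^{-\lambda n}, \]
i.e.\ $\H_\infty(X\mid R=r,\ev)\ge\lambda n$. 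For $r\notin G$ we have $\P[\ev\mid R=r]=0$, hence $\P[R=r,\ev]=0$ and $P_{R\mid\ev}(r)=0$; so the bases for which the corollary makes no min-entropy claim are exactly these.

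\emph{The probability bound and the uniform case.} It remains to lower bound $\P[\ev\mid R=+]+\P[\ev\mid R=\times]=\sum_{r\in G}Q^r(\ol{L^r})$. If both bases were bad, the last display in the Setup would force $1-2^{-\mu n}<2\cdot 2^{-\mu n}$, impossible once $3\cdot 2^{-\mu n}\le1$; hence for $n$ above a constant depending only on $\lambda$ we have $G\neq\emptyset$. If $G=\set{+,\times}$ then the sum equals $Q^+(\ol{L^+})+Q^{\times}(\ol{L^{\times}})>1-2^{-\mu n}$; if exactly one basis, say $\times$, is good, then $Q^+(\ol{L^+})<2^{-\mu n}$ and the sum equals $Q^{\times}(\ol{L^{\times}})>1-2\cdot 2^{-\mu n}$. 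In both cases, choosing $\sp\assign\mu/2=\tfrac18-\tfrac{\lambda}{4}>0$ yields $\P[\ev\mid R=+]+\P[\ev\mid R=\times]\ge 1-2^{-\sp n}$ for all sufficiently large $n$ (the exponent may be shrunk further to absorb the finitely many small $n$). Finally, for uniform $R$ we get $\P[\ev]=\tfrac12\bigl(\P[\ev\mid R=+]+\P[\ev\mid R=\times]\bigr)\ge\tfrac12-2^{-\sp n}>0$, which in particular makes all the conditionings above well defined. The delicate point throughout is the one flagged at the outset: Theorem~\ref{thm:hadamard} is precisely what prevents both bases from concentrating on their heavy sets simultaneously, and hence what keeps the loss incurred by discarding a ``bad'' basis down to the negligible term $2^{-\sp n}$.
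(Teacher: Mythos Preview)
Your proof is correct and follows essentially the same approach as the paper: define light/heavy sets via a threshold $2^{-(\lambda+\text{slack})n}$, apply Theorem~\ref{thm:hadamard} to bound the total heavy mass, let $\ev$ be the event that the outcome is light for the chosen basis, and exclude a basis from $\ev$ when its light mass is negligible so that the renormalisation does not spoil the min-entropy. The only cosmetic differences are that the paper first defines $\ev=\{X\in S^R\}$ and then redefines it in the edge case, whereas you build the ``good basis'' distinction into $\ev$ from the start, and you are slightly more explicit about the factor~$2$ lost when one basis is excluded (absorbing it via $\sp=\mu/2$).
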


\begin{proof}
Choose $\sp > 0$ such that $\lambda + 2 \sp < \frac12$, and 
define
\begin{align*}
S^+ \assign \big\{ x \in \nbit &: \Qp(x) \leq  2^{-(\lambda + \sp )n}
\big\}\; \mbox{ and}\\
S^{\times} \assign \big\{ z \in \nbit &: \Qt(z) \leq  2^{-(\lambda
  +\sp )n} \big\}
\end{align*} 
to be the sets of strings with small probabilities and denote by $L^+
\assign \ol{S}^+$ and $L^{\times} \assign \ol{S}^{\times}$ their
complements\footnote{Here's the mnemonic: $S$ for the strings with
  \emph{S}mall probabilities, $L$ for \emph{L}arge.}. Note that for
all $x \in L^+$, we have that $\Qp(x) > 2^{-(\lambda + \sp )n}$ and
therefore $|L^+| < 2^{(\lambda + \sp)n}$. Analogously, we have
$|L^{\times}| < 2^{(\lambda + \sp)n}$. For ease of notation, we
abbreviate the probabilities that strings with small probabilities
occur with $\qp \assign \Qp(S^+)$ and $\qt \assign \Qt(S^{\times})$.
It follows immediately from the choice of $\sp$ and
Theorem~\ref{thm:hadamard} that $$\qp+\qt \geq 1 - 2^{-n/2} \cdot
2^{(\lambda + \sp)n} \geq 1- 2^{-\sp n} \, .$$

We define $\ev$ to be the event $X \in S^R$.  Then $\P[\ev|R\!=\!+]
= \P[X\in S^+|R\!=\!+] = \qp$ and similarly $\P[\ev|R\!=\!\times] =
\qt$, and thus the first claim follows immediately. Furthermore, if
$R$ is uniformly distributed, then
\begin{align*}
\P[\ev] &= \P[\ev|R\!=\!+] P_R(+) + \P[\ev|R\!=\!\times] P_R(\times)\\
 &= \frac{1}{2} (\qp + \qt) \geq \frac{1}{2} - 2^{-\sp n}/2 \geq
 \frac{1}{2} - 2^{-\sp n}.
\end{align*}
Regarding the second claim, in case $R=+$, we have
\begin{align*}
  \H_{\infty}(X|R\!=\!+, \ev) 
&= -\log\left(\max_{x \in S^+} \frac{\Qp(x)}{\qp}\right) \nonumber\\
  &\geq -\log\left(\frac{2^{-(\lambda + \sp)n}}{\qp}\right) = \lambda
  n + \sp n + \log(\qp). 
\end{align*}
Thus, if $\qp \geq 2^{-\sp n}$ then indeed $\H_{\infty}(X|R\!=\!+, X
\in S^+) \geq \lambda n$. The corresponding holds for the case $R =
\times$.

Finally, if $\qp < 2^{-\sp n}$ (or similarly \mbox{$\qt < 2^{-\sp
    n}$}) then instead of the above, we define $\ev$ as the {\em
  empty event} if $R = +$ and as the event $X \in S^{\times}$ if $R =
\times$. It follows that $\P[\ev|R\!=\!+] = 0$ and
$\P[\ev|R\!=\!\times] = \qt \geq 1 - 2^{-\sp n}$, as well as
$\H_{\infty}(X|R\!=\!\times, \ev) = \H_{\infty}(X|R\!=\!\times, X \in
S^{\times}) \geq \lambda n + \sp n + \log(\qt) \geq \lambda n$ (for
$n$ large enough), both by the bound on $\qp+\qt$ and on~$\qp$,
whereas $P_{R | \ev}(+) = 0$.
\end{proof}

\section{More Mutually Unbiased Bases}
\label{sec:moreunbiasedbases}
In this section, we generalize the uncertainty relation derived in
Section~\ref{sec:twounbiasedbases} to more than two mutually unbiased
bases. Such uncertainty relations over more than two, but not all mutually
unbiased bases in terms of  min-entropy may be of independent
interest, see the discussion at the end of Section~\ref{sec:uncerthistory}.

\index{uncertainty relation!more mutually unbiased bases}
\begin{theorem} \label{thm:mub}
  Let the density matrix $\rho$ describe the state of $n$ qubits
  and let $\Ba{0}, \Ba{1}, \ldots, \Ba{M}$ be mutually
  unbiased bases of $\cH_{2^n}$. Let $Q^0(\cdot), Q^1(\cdot), \ldots,
  Q^M(\cdot)$ be the distributions of the outcome when $\rho$ is
  measured in bases $\Ba{0}, \Ba{1}, \ldots, \Ba{M}$, respectively.
  Then, for any sets $L^0, L^1, \ldots, L^M \subset \nbit$, it holds that
\begin{align*}
\sum_{i=0}^M& Q^i(L^i)
\leq \: 1 + M \cdot 2^{-n/2} \max_{0 \leq i<j \leq M} \sqrt{|L^i| |L^j|}.
\end{align*}
\end{theorem}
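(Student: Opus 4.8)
The plan is to mimic exactly the proof of Theorem~\ref{thm:hadamard} (the two-bases case), but using the block-matrix generalization Proposition~\ref{prop:morebases} in place of Proposition~\ref{prop:norm2}. So the structure is: reduce the sum of measurement probabilities to the operator norm of a sum of projectors, bound that norm by $1 + M \cdot \max_{i<j} \|A_i A_j\|$, and finally bound each $\|A_i A_j\|$ by $2^{-n/2}\sqrt{|L^i||L^j|}$.

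First I would define, for each $i \in \{0,1,\ldots,M\}$, the orthogonal projector $A_i \assign \sum_{x \in L^i} U_i \proj{x} U_i^*$, where $U_i$ is the unitary mapping the computational basis to the basis $\Ba{i}$ (so that $A_i$ projects onto the span of the basis vectors of $\Ba{i}$ indexed by $L^i$). Then $Q^i(L^i) = \trace{A_i \rho}$. Using the spectral decomposition $\rho = \sum_w \lambda_w \proj{\varphi_w}$ with $\lambda_w \geq 0$ and $\sum_w \lambda_w = 1$, exactly as in the proof of Theorem~\ref{thm:hadamard}, I get
\[
\sum_{i=0}^M Q^i(L^i) = \sum_w \lambda_w \bra{\varphi_w}\Bigl(\sum_{i=0}^M A_i\Bigr)\ket{\varphi_w} \leq \Bigl\| \sum_{i=0}^M A_i \Bigr\| \leq 1 + M \cdot \max_{0 \leq i<j \leq M} \|A_i A_j\|,
\]
where the last inequality is Proposition~\ref{prop:morebases}.

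The remaining step is to show $\|A_i A_j\| \leq 2^{-n/2}\sqrt{|L^i||L^j|}$ for each pair $i \neq j$. This is the same computation as the final display in the proof of Theorem~\ref{thm:hadamard}, with $H^{\otimes n}$ replaced by $U_j^* U_i$ (or the appropriate change-of-basis unitary between $\Ba{i}$ and $\Ba{j}$): writing an arbitrary state in the $\Ba{j}$-basis as $\ket{\psi} = \sum_z \mu_z U_j\ket{z}$, one expands $\|A_i A_j \ket{\psi}\|$, uses that $|\bra{a}b\rangle| = 2^{-n/2}$ for $\ket{a} \in \Ba{i}$, $\ket{b} \in \Ba{j}$ (the mutual-unbiasedness hypothesis), then Pythagoras and the triangle inequality give a factor $\sqrt{|L^i|}$, and Cauchy--Schwarz (Lemma~\ref{lem:cauchyschwarz}) gives $\sum_{y \in L^j}|\mu_y| \leq \sqrt{|L^j|}$ (using $\sum_z |\mu_z|^2 = 1$). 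Combining yields the claimed bound on $\|A_i A_j\|$, and substituting into the chain above completes the proof.

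I do not expect any serious obstacle here: the two-bases proof already contains every idea, and Proposition~\ref{prop:morebases} is precisely the tool that makes the generalization go through mechanically. The only point requiring a little care is bookkeeping the change-of-basis unitaries $U_i$ consistently (so that the inner products $\bra{x}U_i^* U_j\ket{y}$ really do have modulus $2^{-n/2}$ by mutual unbiasedness), and noting that $\|A_i A_j\|$ is symmetric in $i,j$ by Lemma~\ref{lem:ineq}, so taking the max over $i<j$ loses nothing.
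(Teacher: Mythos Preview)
Your proposal is correct and matches the paper's own proof essentially verbatim: the paper's proof of Theorem~\ref{thm:mub} consists of a single sentence stating that it is analogous to the proof of Theorem~\ref{thm:hadamard} with Proposition~\ref{prop:morebases} replacing Proposition~\ref{prop:norm2}, which is precisely what you do.
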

\begin{proof}
  Except of using Proposition~\ref{prop:morebases} instead of
  Proposition~\ref{prop:norm2}, the proof is analogous to the one of
  Theorem~\ref{thm:hadamard}.
\end{proof}

As in Corollary~\ref{cor:pmax2}, we derive an uncertainty relation about
the sum of the min-entropies of up to $2^{n/2}$ distributions.
\begin{corollary} \label{cor:generalhmax}
  For an $\varepsilon>0$, let $0< M < 2^{\frac{n}{2}-\varepsilon n}$.
  For $i=0,\ldots,M$, let $\H_{\infty}(Q^i)$ be the min-entropies of the
  distributions $Q^i$ from the theorem above. Then,
\[ \sum_{i=0}^M \H_{\infty}(Q^i) \geq (M+1) \big(\log (M+1) - \negl{n}\big). \]
\end{corollary}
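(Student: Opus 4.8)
The plan is to follow the exact route by which Corollary~\ref{cor:pmax2} was obtained from Theorem~\ref{thm:hadamard}, now using Theorem~\ref{thm:mub} with every set $L^i$ taken to be a singleton. Concretely, for each $i \in \set{0,\ldots,M}$ I would pick $x_i \in \nbit$ achieving the maximal probability of $Q^i$, i.e.\ $Q^i(x_i) = \max_x Q^i(x) = 2^{-\H_{\infty}(Q^i)}$, and set $L^i \assign \set{x_i}$. Since $|L^i| = 1$ for all $i$ and $M \geq 1$ (the hypothesis is $0 < M$), we have $\max_{0 \leq i<j \leq M} \sqrt{|L^i||L^j|} = 1$, so Theorem~\ref{thm:mub} applied directly to the (possibly mixed) state $\rho$ gives
\[
\sum_{i=0}^M 2^{-\H_{\infty}(Q^i)} \;=\; \sum_{i=0}^M Q^i(L^i) \;\leq\; 1 + M \cdot 2^{-n/2}.
\]
Note that, unlike for the collision-entropy relations, no separate pure-to-mixed argument is needed here since Theorem~\ref{thm:mub} already holds for arbitrary $\rho$.

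Next I would apply convexity of $x \mapsto -\log x$ and Jensen's inequality (Lemma~\ref{lem:jensen}), exactly as in the proof of Lemma~\ref{lem:sumbound}, to turn the probability-sum bound into an entropy-sum bound:
\[
\sum_{i=0}^M \H_{\infty}(Q^i) \;=\; \sum_{i=0}^M -\log\!\bigl(2^{-\H_{\infty}(Q^i)}\bigr) \;\geq\; -(M+1)\log\!\left(\frac{\sum_{i=0}^M 2^{-\H_{\infty}(Q^i)}}{M+1}\right) \;\geq\; (M+1)\log\!\left(\frac{M+1}{1+M\cdot 2^{-n/2}}\right),
\]
so that $\sum_{i=0}^M \H_{\infty}(Q^i) \geq (M+1)\bigl(\log(M+1) - \log(1+M\cdot 2^{-n/2})\bigr)$.

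Finally, I would verify that the error term $\log(1+M\cdot 2^{-n/2})$ is negligible in $n$ under the hypothesis $M < 2^{n/2-\varepsilon n}$. Indeed then $M\cdot 2^{-n/2} < 2^{-\varepsilon n}$, and using $\log(1+t) \leq t\log e$ for $t \geq 0$ one gets $\log(1+M\cdot 2^{-n/2}) \leq 2^{-\varepsilon n}\log e = \negl{n}$. Substituting this into the previous display yields the claimed bound.

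There is essentially no hard step in this argument; it is a one-page adaptation of the $M=1$ case. The only points requiring a little care are that the hypothesis $0 < M$ is exactly what makes the maximum over pairs $i < j$ well defined (and equal to $1$ for singletons), and that it is the strict gain $\varepsilon > 0$ in the bound on $M$ that makes the error term exponentially small in $n$ rather than merely $o(1)$; without it the bound would degrade.
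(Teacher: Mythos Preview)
Your proof is correct and follows essentially the same route as the paper: take each $L^i$ to be the singleton maximizing $Q^i$, apply Theorem~\ref{thm:mub} to bound $\sum_i q_\infty^i \leq 1 + M\cdot 2^{-n/2}$, and then use Jensen (the content of Lemma~\ref{lem:sumbound}) together with the hypothesis $M < 2^{n/2-\varepsilon n}$ to conclude. The paper simply cites Lemma~\ref{lem:sumbound} rather than spelling out the Jensen step, and absorbs $M\cdot 2^{-n/2}$ into $\negl{n}$ without the explicit $\log(1+t)\leq t\log e$ estimate you give.
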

\begin{proof}
  For $i=0,\ldots,M$, we denote by $q_{\infty}^i$ the maximal
  probability of $Q^i$ and let $L^i$ be the set containing only the
  $n$-bit string $x$ with this maximal probability $q_{\infty}^i$.
  Theorem~\ref{thm:mub} together with the assumption about $M$ assures
  $\sum_{i=0}^M q_{\infty}^i \leq 1 + \negl{n}$. By
  Lemma~\ref{lem:sumbound} follows
\begin{align*}
\sum_{i=0}^M \H_{\infty}(Q^i) &\geq (M+1) \big(\log(M+1) - \negl{n}\big).
\end{align*}
\end{proof}

\section{Independent Bases for Each Subsystem} \label{sec:morerelation}
So far, we have focused on the case of an $n$-qubit state $\rho \in
\dens{\cH_{2^n}}$ measured in two or more mutually unbiased bases of
$\cH_{2^n}$. In this section, we investigate the case when each of
the $n$ qubits is measured in an individual basis, picked
independently and uniformly from $\set{+,\times}$, i.e. $\rho$ is
measured in basis \mbox{$\Theta \in_R \set{+,\times}^n$}.

More generally, our result holds for a state $\rho \in \cH_d^{\otimes
  n}$ of $n$ quantum systems---each $d$-dimensional---which are
measured in an individual basis, picked independently and uniformly
from a set $\cB$ of basis of $\cH_d$, see Theorem~\ref{thm:genrel}.

\subsection{A Classical Tool}
We start our derivation with a classical information-theoretic tool
which itself might be of independent interest.
\begin{theorem}\label{thm:hmin}
Let $Z_1,\ldots, Z_n$ be $n$ random variables (not necessarily
independent) over alphabet $\cZ$. If there exists a real number
$h>0$ such that for all $1\leq i\leq n$ and $z_1,\ldots, z_{i-1} \in \cZ$:
\begin{equation*} 
\H(Z_i | Z_1=z_1,\ldots,Z_{i-1}=z_{i-1})\geq h,
\end{equation*}
then for any $0<\lambda<\frac12$ 
\[
\hiee{Z_1,\ldots,Z_n} \geq (h-2\lambda) n,
\]
where $\varepsilon =
\exp{\bigl(-\frac{\lambda^2 n}{32\log(|\mathcal{Z}|/\lambda)^2}\bigr)}$.
\end{theorem}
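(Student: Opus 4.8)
The idea is to express $-\log$ of the joint probability as a sum of conditional ``surprisal'' terms, truncate those terms so that a martingale tail bound applies, and then invoke Azuma's inequality in the form of Corollary~\ref{cor:azuma}. Write $Z^{i}\assign(Z_1,\dots,Z_i)$ (with $Z^0$ empty). For $z^n=(z_1,\dots,z_n)$ in the support of $P_{Z^n}$ the chain rule for probabilities gives
\[
-\log P_{Z^n}(z^n)\;=\;\sum_{i=1}^n S_i,\qquad S_i\assign -\log P_{Z_i\mid Z^{i-1}}(z_i\mid z^{i-1}),
\]
and the hypothesis of the theorem is exactly $\E\bigl[S_i\mid Z^{i-1}=z^{i-1}\bigr]=\H(Z_i\mid Z^{i-1}=z^{i-1})\ge h$ for every prefix $z^{i-1}$. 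If the $S_i$ were uniformly bounded, applying Corollary~\ref{cor:azuma} to the martingale difference sequence $\E[S_i\mid Z^{i-1}]-S_i$ would at once show that $\sum_i S_i$ stays close to $hn$; the one genuine obstacle is that $S_i$ is unbounded, being large whenever $P_{Z_i\mid Z^{i-1}}(z_i\mid z^{i-1})$ is tiny.

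To get around this I would fix the truncation level $t\assign 2\log(|\cZ|/\lambda)$ and set $\bar S_i\assign\min(S_i,t)$, so that $0\le\bar S_i\le t$ and $\sum_i\bar S_i\le\sum_i S_i$. The crucial estimate is that truncation costs at most $\lambda$ in expectation at each step: since $p\mapsto p\log(1/p)$ is increasing on $(0,1/e)$ and $2^{-t}<1/e$, each of the at most $|\cZ|$ values $z_i$ with $P_{Z_i\mid Z^{i-1}}(z_i\mid z^{i-1})<2^{-t}$ contributes at most $t\,2^{-t}$ to $\E[S_i-\bar S_i\mid Z^{i-1}=z^{i-1}]$, so
\[
\E\bigl[S_i-\bar S_i\mid Z^{i-1}=z^{i-1}\bigr]\;\le\;|\cZ|\,t\,2^{-t}\;=\;\frac{2\lambda^2\log(|\cZ|/\lambda)}{|\cZ|}\;\le\;\lambda,
\]
the last inequality holding because $|\cZ|\ge 2$ and $\lambda<\tfrac12$ force $u\assign|\cZ|/\lambda>4$, and $2\log u\le u$ for $u\ge4$. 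Hence $\E[\bar S_i\mid Z^{i-1}=z^{i-1}]\ge h-\lambda$ for every prefix. I expect this calibration of $t$ --- simultaneously controlling the per-step loss and producing the stated $\varepsilon$ --- to be the only delicate point of the argument.

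Finally, $R_i\assign \E[\bar S_i\mid Z^{i-1}]-\bar S_i$ is a martingale difference sequence (each $R_j$ with $j<i$ is a function of $Z^{i-1}$, and $\E[R_i\mid Z^{i-1}=z^{i-1}]=0$) with $|R_i|\le t$. Corollary~\ref{cor:azuma}, applied with $c=t$ and deviation parameter $\lambda/2$, gives
\[
\Pr\!\Bigl[\sum_{i=1}^n R_i\ge\tfrac{\lambda}{2}n\Bigr]\;\le\;\exp\!\Bigl(-\tfrac{(\lambda/2)^2 n}{2t^2}\Bigr)\;=\;\exp\!\Bigl(-\tfrac{\lambda^2 n}{32\log(|\cZ|/\lambda)^2}\Bigr)\;=\;\varepsilon .
\]
On the complementary event $\mathcal{G}$, which has probability at least $1-\varepsilon$, one gets $\sum_i\bar S_i>\sum_i\E[\bar S_i\mid Z^{i-1}]-\tfrac{\lambda}{2}n\ge(h-\lambda)n-\tfrac{\lambda}{2}n\ge(h-2\lambda)n$, hence $-\log P_{Z^n}(z^n)=\sum_i S_i\ge\sum_i\bar S_i\ge(h-2\lambda)n$ for all $z^n\in\mathcal{G}$. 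Taking $\mathcal{E}$ to be the (possibly larger) event $\set{z^n: P_{Z^n}(z^n)\le 2^{-(h-2\lambda)n}}\supseteq\mathcal{G}$, we have $\Pr[\mathcal{E}]\ge 1-\varepsilon$ and $P_{Z^n\mathcal{E}}(z^n)\le 2^{-(h-2\lambda)n}$ for every $z^n$, so the definition of smooth min-entropy (Section~\ref{sec:defsmoothrenyientropy}, with trivial conditioning) yields $\hiee{Z_1,\dots,Z_n}\ge(h-2\lambda)n$, as claimed. Lemmas~\ref{lem:delta} and~\ref{lem:epsilon} are not needed for this proof but serve to rewrite $\varepsilon$ in a cleaner closed form in the corollaries that follow.
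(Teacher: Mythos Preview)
Your proof is correct and follows the same overall strategy as the paper's: write $-\log P_{Z^n}$ via the chain rule as a sum of conditional surprisals, truncate to get bounded increments, and apply Azuma's inequality (Corollary~\ref{cor:azuma}). The one methodological difference is in how the truncation level is chosen: the paper fixes it implicitly by requiring $|\cZ|\,\delta\log(1/\delta)=\lambda$ and then invokes Lemma~\ref{lem:delta} to extract the explicit bound $\log(1/\delta)<4\log(|\cZ|/\lambda)$, whereas you set $t=2\log(|\cZ|/\lambda)$ up front and verify the per-step loss $\le\lambda$ by the elementary inequality $2\log u\le u$ for $u\ge 4$. This is a genuine (if minor) simplification, and your remark that Lemma~\ref{lem:delta} is not needed for this argument is accurate; the paper's use of that lemma is an artifact of its implicit choice of threshold.
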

If the $Z_i$'s are independent and have Shannon
\index{entropy!Shannon}entropy at least $h$,
it is known (see Lemma~\ref{lem:asymptshannon}) that the smooth min-entropy of
$Z_1,\ldots,Z_n$ is at least $n h$ for large enough~$n$. Informally,
Theorem~\ref{thm:hmin} guarantees that when the independence-condition
is relaxed to a lower bound on the Shannon entropy of $Z_i$ \emph{given
  any previous history}, then we still have (almost) $n h$ bits of
min-entropy except with negligible probability~$\varepsilon$. \index{entropy!min-}

The proof idea is to use \index{Azuma's inequality}Azuma's inequality in the form of
Corollary~\ref{cor:azuma} for cleverly chosen $R_i$'s. The main trick
is that for a random variable $Z$ over $\cZ$, we can define another
random variable $S \assign \log P_Z(Z)$ over $\reals$ with expected
value $\E[S] = \sum_{z \in \cZ} P_Z(z) \cdot \log P_Z(z) = \H(Z)$
equal to the Shannon entropy of $Z$, which allows us to make the
connection with the assumption about the Shannon entropy.

\begin{proof}
Recall that the superscript means $\prei[i]{Z} \assign (Z_1,\ldots,Z_i)$ for any $i \in
\set{1,\ldots,n}$, and similarly for other sequences.  We want to show
that $$\Pr\big[P_{\prei[n]{Z}}(\prei[n]{Z}) \geq
  2^{-(h-2\lambda)n}\big] \leq \varepsilon$$ for $\varepsilon$ as
claimed in Theorem~\ref{thm:hmin}.  This means that
$P_{\prei[n]{Z}}(\prei[n]{z})$ is smaller than $2^{-(h-2\lambda)n}$
except with probability at most $\varepsilon$ (over the choice of
$\prei[n]{z}$), and therefore implies the claim
\mbox{$H_{\infty}^{\varepsilon}(Z^n) \geq (h-2\lambda)n$}
by the definition of smooth min-entropy from Section~\ref{sec:defsmoothrenyientropy}.
Note that
$P_{\prei[n]{Z}}(\prei[n]{Z}) \geq 2^{-(h-2\lambda)n}$ is equivalent to
\begin{equation}\label{eq:bound1}
  \sum_{i=1}^n \Big( \log\big( P_{Z_i | \prei{Z}}(Z_i | \prei{Z}) \big) + h \Big) \geq 2 \lambda n
\end{equation}
which is of suitable form to apply \index{Azuma's inequality}Azuma's
 inequality (Corollary~\ref{cor:azuma}). 

Consider first an arbitrary sequence $S_1,\ldots,S_n$ of real-valued
random variables. We assume the $S_i$'s to be either all positive or
all negative. Define a new sequence $R_1,\ldots,R_n$ of random
variables by putting $R_i := S_i - \E[S_i | \prei{S}]$. It is
straightforward to verify that $\E[R_i | \prei{R}] = 0$, i.e.,
$R_1,\ldots,R_n$ forms a martingale difference sequence. Thus if for any $i$,
$|S_i| \leq c$ for some $c$, and thus $|R_i| \leq c$,
Azuma's inequality guarantees that
\begin{equation}\label{eq:bound2}
\Pr\left[\sum_{i=1}^n \Big(S_i - \E\big[S_i | \prei{S}\big]\Big) \geq \lambda n\right] \leq \exp\left(-\frac{\lambda^2 n}{2 c^2}\right) \, .
\end{equation}
We now put $S_i := \log P_{Z_i | \prei{Z}}(Z_i | \prei{Z})$ for
$i=1,\ldots,n$. Note that $S_1,\ldots,S_n \leq 0$. It is easy to see that
the bound on the conditional entropy of $Z_i$ from
Theorem~\ref{thm:hmin} implies that $\E[S_i | \prei{S}] \leq -h$.  Indeed,
for any $\prei{z} \in \cZ^{i-1}$, we have $\E\big[\log
P_{Z_i | \prei{Z}}(Z_i | \prei{Z}) | \prei{Z}\!=\!\prei{z}\big] = -
\H(Z_i | \prei{Z}\!=\!\prei{z}) \leq - h$, and thus for any subset $\ev$
of $\cZ^{i-1}$, and in particular for the set of $\prei{z}$'s
which map to a given $\prei{s}$, it holds that
\begin{align} \nonumber
\E\big[S_i | \prei{Z}\!\in\!\ev\big] &=
\sum_{\prei{z} \in \ev}\!\! P_{\prei{Z} | \prei{Z}\in\ev}(\prei{z}) \cdot
  \E\big[\log P_{Z_i | \prei{Z}}(Z_i | \prei{Z}) |
  \prei{Z}\!=\!\prei{z}\big]\\
 &\leq - h \, . \label{eq:bound3}
\end{align}
As a consequence, the bound on the probability of (\ref{eq:bound2}) in
particular bounds the probability of the event (\ref{eq:bound1}), even
with $\lambda n$ instead of $2 \lambda n$. A problem though is that we
have no upper bound $c$ on the $|S_i|$'s.  Because of that, we now
consider a modified sequence $\tilde{S}_1,\ldots,\tilde{S}_n$ defined
by $\tilde{S}_i := \log P_{Z_i | \prei{Z}}(Z_i | \prei{Z})$ if
$P_{Z_i | \prei{Z}}(Z_i | \prei{Z}) \geq \delta$ and
$\tilde{S}_i := 0$ otherwise, where $\delta > 0$ will be determined
later.  This gives us a bound like (\ref{eq:bound2}) but with an
explicit $c$, namely $c = \log(1/\delta)$. Below, we will argue that
$\E\big[\tilde{S}_i | \prei{\tilde{S}}\big]-\E\big[S_i |
\prei{\tilde{S}}\big] \leq \lambda$ by the right choice of $\delta$;
the claim then follows from observing that
\begin{align*}
\tilde{S}_i - \E\big[\tilde{S}_i | \prei{\tilde{S}}\big] &\geq S_i -
\E\big[\tilde{S}_i | \prei{\tilde{S}}\big]\\ 
&\geq S_i - \E\big[S_i | \prei{\tilde{S}}\big] - \lambda\\
&\geq S_i + h - \lambda,
\end{align*}
where the last inequality follows from (\ref{eq:bound3}).  Regarding
the claim $\E\big[\tilde{S}_i | \prei{\tilde{S}}\big]-\E\big[S_i
| \prei{\tilde{S}}\big] \leq \lambda$, using a similar argument as
for (\ref{eq:bound3}), it suffices to show that $\E\big[\tilde{S}_i
| \prei{\tilde{Z}}\!=\!\prei{z}\big]-\E\big[S_i |
\prei{\tilde{Z}}\!=\!\prei{z}\big] \leq \lambda$ for any $\prei{z}$:
\begin{align*}
\E&\big[\tilde{S}_i | \prei{\tilde{Z}}\!=\!\prei{z}\big]-
\E\big[S_i | \prei{\tilde{Z}}\!=\!\prei{z}\big]\\
  &= - \sum_{z_i} P_{Z_i |\prei{Z}}(z_i | \prei{z})
  \log(P_{Z_i | \prei{Z}}(z_i | \prei{z}))\\
  &\leq |\mathcal{Z}| \delta \log ( 1/ \delta)
\end{align*}
where the summation is over all $z_i \in \cZ$ with $P_{Z_i | \prei{Z}}(z_i | \prei{z}) <
    \delta$, and 
where the inequality holds as long as $\delta \leq 1/e$, as can
easily be verified. Thus, we let $0<\delta<1/e$ be such that
$|\mathcal{Z}| \delta \log(1/\delta) = \lambda$.  Using the mathematical
Lemma~\ref{lem:delta}, we have that $\delta > \frac{\lambda /
  |\mathcal{Z}|}{4 \log{(|\mathcal{Z}| / \lambda})}$ and derive
that $c^2 = \log(1/\delta)^2 = \lambda^2/(\delta|\cZ|)^2 < 16
\log(|\mathcal{Z}|/\lambda)^2$, which gives us the claimed bound
$\varepsilon$ on the probability.
\end{proof}

\subsection{Quantum Uncertainty Relations}
We now state and prove the new entropic uncertainty relation
in its most general form. A special case will then be introduced
(Corollary~\ref{cor:uncertainty})
and used in the security analysis of the \OT-protocols we consider
in Chapter~\ref{chap:12OT}. 
\begin{definition}\label{def:aeub}
  Let $\bset$ be a finite set of orthonormal bases in the $d$-dimensional
  Hilbert space~$\cH_d$.  We call $h \geq 0$ an {\em average
    entropic uncertainty bound} for $\bset$ if every state in $\cH_d$ satisfies $\frac{1}{|\bset|} \sum_{\vartheta \in \bset}\H(P_{\vartheta}) \geq h$, where $P_{\vartheta}$ is the
  distribution obtained by measuring the state in basis $\vartheta$.
\end{definition}
\index{average entropic uncertainty bound}
Note that by the convexity of the Shannon \index{entropy!Shannon}entropy $\H$, a lower bound
for all \emph{pure} states in $\cH_d$ suffices to imply the bound
for all (possibly mixed) states.

\index{uncertainty relation!individual bases}
\begin{theorem}\label{thm:genrel}
  Let $\bset$ be a set of orthonormal bases in $\cH_d$ with an
  average entropic uncertainty bound $h$, and let 
  $\rho \in \dens{\cH^{\otimes n}_d}$ be an arbitrary quantum state.
Let $\Theta = (\Theta_1,\ldots,\Theta_n)$ be uniformly distributed
over $\bset^n$ and let $X = (X_1,\ldots,X_n)$ be the outcome when
measuring $\rho$ in basis $\Theta$, distributed over
\mbox{$\set{0,\ldots, d-1}^n$}.  Then for any $0 < \lambda < \frac12$
$$
\hie{\varepsilon}{X | \Theta} \geq \left(h-
  2\lambda \right)n 
$$
with 
$\varepsilon = \exp \!\left( - \frac{\lambda^2 n}{32
    \left(\log(|\bset|\cdot d / \lambda) \right)^2} \right)$.
\end{theorem}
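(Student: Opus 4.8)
The plan is to reduce the statement to the classical concentration tool Theorem~\ref{thm:hmin}, applied to the ``augmented'' sequence of random variables $Z_i \assign (\Theta_i,X_i)$, which takes values in the alphabet $\cZ \assign \bset\times\set{0,\ldots,d-1}$ of size $|\bset|\cdot d$. The idea is that all the genuinely quantum content of the theorem is already contained in the average entropic uncertainty bound of Definition~\ref{def:aeub}; what remains is classical. First I would record that measuring $\rho$ in basis $\theta = (\theta_1,\ldots,\theta_n)$ can be carried out subsystem by subsystem: having measured subsystems $1,\ldots,i-1$ in bases $\theta^{i-1}$ and observed outcomes $x^{i-1}$, the $i$-th subsystem is left in some (renormalized) reduced state $\rho_i^{(x^{i-1},\theta^{i-1})} \in \dens{\cH_d}$, and, conditioned on $\Theta^i = \theta^i$ and $X^{i-1}=x^{i-1}$, the outcome $X_i$ is distributed exactly as the result of measuring $\rho_i^{(x^{i-1},\theta^{i-1})}$ in basis $\theta_i$.

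Second, I would verify the hypothesis of Theorem~\ref{thm:hmin} for $Z_1,\ldots,Z_n$. Fix $i$ and a history $z^{i-1}=(\theta^{i-1},x^{i-1})$ of positive probability. Since $\Theta_i$ is uniform over $\bset$ and independent of the past, the chain rule for Shannon entropy gives
\begin{align*}
\H(Z_i \mid Z^{i-1}=z^{i-1})
&= \H(\Theta_i) + \H(X_i \mid \Theta_i,\, Z^{i-1}=z^{i-1})\\
&= \log|\bset| + \frac{1}{|\bset|}\sum_{\vartheta\in\bset}\H(P_\vartheta)
\;\geq\; \log|\bset| + h,
\end{align*}
where $P_\vartheta$ is the distribution obtained by measuring the single-subsystem state $\rho_i^{(x^{i-1},\theta^{i-1})}$ in basis $\vartheta$, and the inequality is precisely the average entropic uncertainty bound applied to that state. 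Hence Theorem~\ref{thm:hmin}, with its entropy parameter set to $\log|\bset|+h$ and with $|\cZ|=|\bset| d$, yields $\hie{\varepsilon}{Z_1,\ldots,Z_n} \geq (\log|\bset|+h-2\lambda)\,n$ for $\varepsilon = \exp\bigl(-\lambda^2 n/(32\log(|\bset| d/\lambda)^2)\bigr)$, which is exactly the $\varepsilon$ appearing in the statement.

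Third, I would convert $\hie{\varepsilon}{Z_1,\ldots,Z_n} = \hie{\varepsilon}{\Theta X}$ into the conditional quantity $\hie{\varepsilon}{X\mid\Theta}$, using that $\Theta$ is uniform over $\bset^n$, so $P_\Theta(\theta)=|\bset|^{-n}$. Unwinding the definition of smooth min-entropy: for any smoothing event $\ev$ one has $P_{\Theta X\ev}(\theta,x)/P_\Theta(\theta) = |\bset|^{n}\,P_{\Theta X\ev}(\theta,x)$, hence $-\log\bigl(P_{\Theta X\ev}(\theta,x)/P_\Theta(\theta)\bigr) = -\log P_{\Theta X\ev}(\theta,x) - n\log|\bset|$; maximizing over events $\ev$ with $\Pr[\ev]\geq 1-\varepsilon$ and minimizing over $(\theta,x)$ then gives $\hie{\varepsilon}{X\mid\Theta} = \hie{\varepsilon}{\Theta X} - n\log|\bset|$. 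Substituting the bound from the previous paragraph yields $\hie{\varepsilon}{X\mid\Theta} \geq (h-2\lambda)\,n$, as claimed. (Alternatively one could invoke the chain rule, Lemma~\ref{lem:chain}, with $\hmax(\Theta)=n\log|\bset|$, but that introduces an unnecessary additive slack and a slightly worse smoothness parameter.)

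The step I expect to require the most care is the first one: making the sequential-measurement picture and the conditioning on partial outcomes $x^{i-1}$ fully rigorous, so that the Shannon-entropy chain rule in the second paragraph applies verbatim and the average entropic uncertainty bound can legitimately be invoked on each conditional single-subsystem state $\rho_i^{(x^{i-1},\theta^{i-1})}$. Everything after that is a black-box appeal to Theorem~\ref{thm:hmin} plus elementary bookkeeping for smooth min-entropy; in particular no quantum ingredient beyond Definition~\ref{def:aeub} is needed, and the error bound $\varepsilon$ is inherited verbatim from Theorem~\ref{thm:hmin} once the alphabet is identified as $\bset\times\set{0,\ldots,d-1}$.
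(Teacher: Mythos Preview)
Your proposal is correct and follows essentially the same approach as the paper: define $Z_i=(\Theta_i,X_i)$, use the average entropic uncertainty bound together with uniformity of $\Theta_i$ to lower-bound $\H(Z_i\mid Z^{i-1}=z^{i-1})$ by $\log|\bset|+h$, apply Theorem~\ref{thm:hmin}, and then strip off $n\log|\bset|$ using uniformity of $\Theta$. The paper's proof is slightly terser about the sequential-measurement picture you spell out in your first paragraph, and writes the final step as the inequality $\hie{\varepsilon}{X\mid\Theta}\geq\hie{\varepsilon}{X\Theta}-n\log|\bset|$ rather than the equality you (correctly) observe, but the argument is the same.
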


\begin{proof}
Define $Z_i \assign (X_i,\Theta_i)$ and $\prei[i]{Z} \assign (Z_1,\ldots,Z_i)$. Let 
$\prei[i-1]{z}\in \bset^{i-1}$ be arbitrary. Then
\begin{align*} 
\H(Z_i | \prei[i-1]{Z}\!=\!\prei[i-1]{z}) &= \H(X_i | \Theta_i, \prei[i-1]{Z}\!=\!\prei[i-1]{z})
+ \H(\Theta_i | \prei[i-1]{Z}\!=\!\prei[i-1]{z}) \geq h + \log{|\bset|},
\end{align*}
where the inequality follows from the fact that $\Theta_i$ is chosen
uniformly at random and from the definition of $h$. Note that $h$
lower bounds the average entropy for any system in $\cH_d$, 
and thus in particular for the $i$th subsystem 
of $\rho$, with all
previous $d$-dimensional subsystems measured.  
Theorem~\ref{thm:hmin} thus implies that $\hie{\varepsilon}{X \Theta} \geq (h+\log|{\cal B}| - 2\lambda) n$ for any $0 < \lambda < \frac12$ and for $\eps$ as claimed. We conclude that 
\begin{align*}
\hie{\varepsilon}{X\mid \Theta}
&\geq \hie{\varepsilon}{X \Theta} - n\log|{\cal B}| \geq (h - 2 \lambda)n \enspace ,
\end{align*}
where the first inequality follows from the equality 
$$
P_{X\ev|\Theta}(x|\theta) = P_{X \Theta \ev}(x, \theta)/P_{\Theta}(\theta) = |{\cal B}|^n \cdot P_{X \Theta \ev}(x, \theta)
$$
for all $x$ and $\theta$ and any event $\ev$, and from the definition of (conditional) smooth entropy.
\end{proof}


For the special case where $\bset=\{+,\times\}$ is the set of BB84
\index{BB84 coding scheme}bases, we can use the uncertainty relation of Maassen and
Uffink~\cite{MU88} (see Equation~\eqref{eq:maassenuffink}) which, using our
terminology, states that \index{Maassen and Uffink's relation}
$\bset$ has average entropic uncertainty bound $h =
\frac12$.  Theorem~\ref{thm:genrel} together with
Lemma~\ref{lem:epsilon} then immediately gives the
following corollary.

\index{uncertainty relation!individual bases}
\begin{corollary}\label{cor:uncertainty}
  Let $\rho \in \dens{\cH_2^{\otimes n}}$
  be an arbitrary $n$-qubit quantum state. Let $\Theta =
  (\Theta_1,\ldots,\Theta_n)$ be uniformly distributed over
  $\set{+,\times}^n$ and $X = (X_1,\ldots,X_n)$ be the outcome
  when measuring $\rho$ in basis $\Theta$. Then for any $0 <
  \lambda < \frac{1}{4}$
$$
\hie{\varepsilon}{X | \Theta} \geq \left(\textstyle\frac{1}{2}
  - 2\lambda \right)n 
$$
where $\varepsilon = 2^{-\frac{\lambda^4}{32}n }$.
\end{corollary}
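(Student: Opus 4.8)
The statement is the specialization of Theorem~\ref{thm:genrel} to the two BB84 bases, so the plan is to invoke that theorem and then simplify the error term using Lemma~\ref{lem:epsilon}. Concretely, I would instantiate Theorem~\ref{thm:genrel} with $d = 2$ and $\bset = \set{+,\times}$, so that $\rho \in \dens{\cH_2^{\otimes n}}$, $\Theta$ is uniform over $\set{+,\times}^n$, and $X$ is the measurement outcome. The only input needed is an average entropic uncertainty bound $h$ for $\bset$: by the Maassen--Uffink relation~\eqref{eq:maassenuffink}, for any single-qubit state $\H(P_+) + \H(P_\times) \geq 1$, hence $\frac12\big(\H(P_+)+\H(P_\times)\big) \geq \frac12$, i.e.\ $h = \frac12$ is an average entropic uncertainty bound in the sense of Definition~\ref{def:aeub}. (Here the reduction to pure states is the remark after Definition~\ref{def:aeub}, using convexity of~$\H$.)

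Plugging $h = \tfrac12$, $d = 2$, $|\bset| = 2$ into Theorem~\ref{thm:genrel} yields, for every $0 < \lambda < \tfrac12$,
\[
\hie{\varepsilon'}{X \mid \Theta} \geq \Big(\tfrac12 - 2\lambda\Big)n,
\qquad
\varepsilon' = \exp\!\left( -\frac{\lambda^2 n}{32\,(\log(4/\lambda))^2} \right).
\]
Now I would rewrite the exponent using $\log 4 = 2$, so that $\log(4/\lambda) = 2 - \log\lambda$, giving $\varepsilon' = \exp\!\big(-\lambda^2 n / (32(2-\log\lambda)^2)\big)$. Restricting further to $0 < \lambda < \tfrac14$ (which is exactly the range hypothesized in the corollary) lets me apply the analytic estimate of Lemma~\ref{lem:epsilon} with $x = \lambda$, which gives $\exp\!\big(-\lambda^2/(32(2-\log\lambda)^2)\big) < 2^{-\lambda^4/32}$, hence $\varepsilon' < 2^{-\lambda^4 n/32} =: \varepsilon$.

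Finally I would use monotonicity of smooth min-entropy in the smoothing parameter: since the maximization in the definition of $\hiee{\cdot}$ ranges over all events of probability at least $1-\varepsilon$, a larger $\varepsilon$ can only increase the value, so $\varepsilon' \le \varepsilon$ implies $\hie{\varepsilon}{X\mid\Theta} \geq \hie{\varepsilon'}{X\mid\Theta} \geq (\tfrac12 - 2\lambda)n$, which is the claim. There is no real obstacle here: the argument is essentially bookkeeping, and the only mildly delicate point is matching the error term of Theorem~\ref{thm:genrel} to the clean form $2^{-\lambda^4 n/32}$, which is precisely what Lemma~\ref{lem:epsilon} is designed to handle — and why the hypothesis is $\lambda < \tfrac14$ rather than the weaker $\lambda < \tfrac12$ available in the general theorem.
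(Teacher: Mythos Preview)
Your proposal is correct and follows exactly the paper's approach: instantiate Theorem~\ref{thm:genrel} with $d=2$, $\bset=\{+,\times\}$, $h=\tfrac12$ from Maassen--Uffink~\eqref{eq:maassenuffink}, and then apply Lemma~\ref{lem:epsilon} to simplify the error term. The monotonicity-in-$\varepsilon$ step you spell out is implicit in the paper's one-line derivation.
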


Maassen and Uffink's relation being optimal means there exists a
quantum state $\rho$|namely the product state of eigenstates of the
subsystems, e.g.~$\rho=\proj{0}^{\otimes n}$|for which $\H(X |
\Theta) = \frac{n}{2}$. On the other hand, we have shown that
$(\frac12 - \lambda)n \leq \hiee{X | \Theta}$ for $\lambda>0$
arbitrarily close to $0$. For the product state $\rho$, the $X$ are
independent and we know from Lemma~\ref{lem:asymptshannon} that
$\hiee{X | \Theta}$ approaches $\H(X | \Theta) = \frac{n}{2}$.
It follows that the relation cannot be significantly improved even
when considering R\'enyi entropy of order $1 < \alpha < \infty$. 

Another tight corollary is obtained if we consider the set of
measurements $\bset=\{+,\times,\circlearrowleft\}$ (see Section~\ref{sec:qit} 
for the definition of the circular basis $\circlearrowleft$). In \cite{Ruiz93},
S\'anchez-Ruiz shows that for this $\bset$, the average entropic
uncertainty bound
\begin{equation} \label{eq:hthreebases}
h=\frac{2}{3}
\end{equation}
is optimal. It implies that
$\hie{\varepsilon}{X| \Theta} \gtrsim \H(X| \Theta) =
\frac{2n}{3}$ for negligible $\varepsilon$. 

\subsection{The Overall Average Entropic Uncertainty Bound}\label{sec:uncertbound}
\index{average entropic uncertainty bound!overall}
In the this section, we compute the average uncertainty bound for the
set of \emph{all bases} of a $d$-dimensional Hilbert space. Let
$\cU(d)$ be the set of unitaries on $\cH_d$.  Moreover, let $d U$ be
the normalized \index{Haar measure}Haar measure on $\cU(d)$, i.e.,
\[
  \int_{\cU(d)} f(V U) d U 
= 
  \int_{\cU(d)} f(U V) d U 
= 
  \int_{\cU(d)} f(U) d U \ ,
\]
for any $V \in \cU(d)$ and any integrable function $f$, and $\int_{\cU(d)} dU = 1$. (Note that the
normalized Haar measure $d U$ exists and is unique.)

\def\all{\text{\rm all}}

Let $\{\omega_1, \ldots, \omega_d\}$ be a fixed orthonormal basis of
$\cH_d$, and let $\bset_{\all} = \{\vartheta_U\}_{U \in \cU(d)}$ be the
family of bases $\vartheta_U = \{U \omega_1, \ldots, U \omega_d\}$
with $U \in \cU(d)$. The set $\bset_{\all}$ consist of {\em all}
orthonormal basis of $\cH_d$. We generalize Definition~\ref{def:aeub},
the average entropic uncertainty bound for a finite set of bases, to
the {\em infinite} set $\bset_{\all}$.
\begin{definition}
We call $h_d$ an {\em overall average entropic uncertainty bound} in $\cH_d$ if every state in $\cH_d$ satisfies
  \[
     \int_{\cU(d)} \H(P_{\vartheta_U}) d U \geq h_d \ ,
  \]
  where $P_{\vartheta_U}$ is the distribution obtained by measuring the state in basis $\vartheta_U \in \bset_{\all} $. 
  \end{definition}
  \index{average entropic uncertainty bound!overall}

\begin{proposition} \label{prop:comph} 
For any positive integer $d$, 
  \[
  h_d = \left( \sum_{i=2}^d \frac{1}{i} \right) / \ln(2)
  \]
  is the overall average entropic uncertainty bound in $\cH_d$. It is
  attained for any pure state in $\cH_d$.
\end{proposition}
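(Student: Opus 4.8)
The claim is an \emph{equality}: we must show that the overall average entropic uncertainty bound in $\cH_d$ equals $h_d = \bigl(\sum_{i=2}^d \tfrac1i\bigr)/\ln 2$, and that this value is attained by every pure state. By the convexity of Shannon entropy (noted right before the Proposition), it suffices to work with pure states throughout: a lower bound for all pure states automatically extends to mixed states, and since the bound is attained for pure states it is the best possible. So the plan is to fix an arbitrary pure state $\ket{\psi}\in\cH_d$ and compute $\int_{\cU(d)} \H(P_{\vartheta_U})\,dU$ \emph{exactly}, showing it is independent of $\ket\psi$ and equal to $h_d$.

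First I would unfold the definitions. Measuring $\ket\psi$ in the basis $\vartheta_U=\{U\omega_1,\dots,U\omega_d\}$ gives outcome probabilities $p_i(U) = |\langle \omega_i | U^* \psi\rangle|^2$. By the invariance of the Haar measure, $U^*\psi$ ranges over the unit sphere of $\cH_d$ with the uniform (unitarily invariant) distribution as $U$ ranges over $\cU(d)$; equivalently, setting $\ket\phi = U^*\ket\psi$, the vector $(\langle\omega_1|\phi\rangle,\dots,\langle\omega_d|\phi\rangle)$ is a uniformly random unit vector in $\mathbb{C}^d$. Hence $\int_{\cU(d)} \H(P_{\vartheta_U})\,dU = \E\bigl[\H(p_1,\dots,p_d)\bigr] = -\sum_{i=1}^d \E[p_i \log p_i]$, where $(p_1,\dots,p_d)$ is distributed as the squared moduli of the coordinates of a Haar-random unit vector in $\mathbb{C}^d$. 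This distribution is exactly the \emph{symmetric Dirichlet distribution} $\mathrm{Dir}(1,1,\dots,1)$ on the probability simplex, i.e. the uniform distribution on the simplex. By symmetry all $d$ terms are equal, so the integral equals $-d\,\E[p_1 \ln p_1]/\ln 2$, and the whole problem reduces to computing the single expectation $\E[p_1 \ln p_1]$ where $p_1 \sim \mathrm{Beta}(1,d-1)$ (the marginal of the flat Dirichlet).

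The main computational step is therefore this Beta-integral. With density $f(t) = (d-1)(1-t)^{d-2}$ on $[0,1]$, I would compute $\E[p_1 \ln p_1] = (d-1)\int_0^1 t\ln t\,(1-t)^{d-2}\,dt$. This is a standard Beta-function derivative: using $\int_0^1 t^{a-1}(1-t)^{b-1}\,dt = B(a,b)$ and differentiating with respect to $a$ at $a=2$, $b=d-1$, one gets $\int_0^1 t\ln t (1-t)^{d-2}dt = B(2,d-1)\bigl(\psi(2) - \psi(d+1)\bigr)$, where $\psi$ is the digamma function. Since $B(2,d-1) = \frac{1!\,(d-2)!}{d!} = \frac{1}{d(d-1)}$ and $\psi(2)-\psi(d+1) = -\sum_{k=2}^{d}\tfrac1k$ (telescoping the recurrence $\psi(x+1)=\psi(x)+1/x$), we obtain $(d-1)\cdot\frac{1}{d(d-1)}\cdot\bigl(-\sum_{k=2}^d \tfrac1k\bigr) = -\frac1d\sum_{k=2}^d\tfrac1k$. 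Multiplying by $-d/\ln 2$ yields $\bigl(\sum_{k=2}^d\tfrac1k\bigr)/\ln 2 = h_d$, as claimed. Because this holds for \emph{every} pure state $\ket\psi$ with equality, $h_d$ is simultaneously a valid overall average entropic uncertainty bound (the integral is always $\geq h_d$ — in fact $=h_d$ — for pure states, hence by concavity $\geq h_d$ for mixed states) and it is attained, so it is tight.

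The step I expect to be the main obstacle — or at least the one requiring the most care — is justifying the reduction to the flat Dirichlet distribution: one must argue cleanly that pushing the Haar measure on $\cU(d)$ forward via $U \mapsto (|\langle\omega_i|U^*\psi\rangle|^2)_i$ gives the uniform distribution on the simplex, independently of $\ket\psi$. This follows from transitivity of the unitary group on the unit sphere together with the known fact that the uniform measure on the complex unit sphere $S^{2d-1}$, projected to squared-modulus coordinates, is flat Dirichlet; I would either cite this or give the short argument via a change of variables from i.i.d.\ complex Gaussians (whose squared moduli are i.i.d.\ $\mathrm{Exp}(1)$, and normalized exponentials give flat Dirichlet). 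Everything after that is a routine special-function identity. An alternative route avoiding digamma entirely is to integrate $-\int_0^1 t\ln t\,(1-t)^{d-2}dt$ by parts repeatedly, or to use the identity $\E[p_1\ln p_1]$ over the simplex directly via the known entropy of the uniform Dirichlet; but the digamma computation above is the cleanest to present.
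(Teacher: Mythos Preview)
Your proposal is correct and follows essentially the same route as the paper: reduce to pure states, use Haar invariance so that the squared-modulus vector is uniform on the simplex, observe that each coordinate has the $\mathrm{Beta}(1,d-1)$ density $(d-1)(1-t)^{d-2}$, and evaluate the resulting integral. The only cosmetic differences are that the paper phrases the distributional step via the real sphere $S_{2d}$ and cites a projection result of Harremo\"es--Vignat rather than naming the flat Dirichlet, and it leaves the final integral to ``standard calculus'' where you spell it out with the digamma identity.
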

The proposition follows immediately from Formula~(14) in~\cite{JRW94}
for a pure state, i.e. $(\lambda_1,\ldots,\lambda_n)=(1,0,\ldots,0)$.
The result was originally shown by S\'ykora~\cite{Sykora74} and by
Jones~\cite{Jones91}, another proof can be found in the appendix of
an article by Jozsa, Robb, and Wootters~\cite{JRW94}. An elementary
proof suggested by Harremo{\"es} based on recent results by
Harremo{\"e}s and Vignat~\cite{HV06} is given below.
\begin{proof}
  Let $\ket{\varphi}$ be a pure state in $\cH_d$. For the probability
  distribution $P_{\vartheta_U}=(p_1,\ldots,p_d)$ holds $p_i =
  |\bra{\varphi}U\ket{\omega_i}|^2$. We want to compute the integral
\[ \int_{\cU(d)} -\sum_{i=1}^d p_i \log(p_i) \, d U =
-\sum_{i=1}^d \int_{\cU(d)} |\bra{\varphi}U\ket{\omega_i}|^2
\log(|\bra{\varphi}U\ket{\omega_i}|^2) \, d U.
\]
 Note that by the invariance of the \index{Haar measure}Haar measure, all summands on the
 right-hand side are equal and it suffices to compute
\begin{equation} \label{eq:summand}
-d \int_{\cU(d)} |\bra{\varphi}U\ket{e_1}|^2
 \log(|\bra{\varphi}U\ket{e_1}|^2) \, d U,
\end{equation}
where $\ket{e_1}$ is the first vector in the computational basis,
i.e.~$|\bra{\varphi}U\ket{e_1}|^2$ is the length of the projection onto the first
coordinate of $U^*\ket{\varphi}$.

The Haar measure over $\cU(d)$ is the uniform distribution over the
$d$-dimensional complex sphere which can be seen as the uniform
distribution over the $2d$-dimensional real sphere $S_{2d} =
\set{(X,Y) \in \mathbb{R}^{2d} | \sum_{i=1}^{2d} X_i^2 +Y_i^2= 1}$
where the complex coordinates are given by
$(X_1+iY_1,\ldots,X_d+iY_d)$. Setting $Z_i=X_i^2+Y_i^2$ and
$Z=(Z_1,\ldots,Z_d)$ and using a result from \cite{HV06} about the
projection of the uniform distribution over $S_{2d}$ to the first
coordinate, we obtain that the density of $Z_1$ is
$f(z)=(d-1)(1-z)^{d-2} dz$ for $z \in [0,1]$. Therefore,
\eqref{eq:summand} equals
\[ -d \int_0^1 z \log(z) \cdot (d-1)(1-z)^{d-2} dz = \left(
\sum_{i=2}^d \frac{1}{i} \right) / \ln(2),
\]
where the evaluation of this integral follows from standard
calculus. By convexity of the Shannon entropy, the bound also holds
for mixed states and the claim follows.
\end{proof}



The following table gives some numerical values of $h_d$ for small
values of $d$.
\begin{center}
\begin{tabular}{c|cccc}
  $d$                    & $2$    & $4$    & $8$    & $16$ \\
  \hline
  $h_d$                  & $0.72$ & $1.56$ & $2.48$ & $3.43$ \\
  $\frac{h_d}{\log(d)}$ & $0.72$ & $0.78$ & $0.83$ & $0.86$ 
\end{tabular}
\end{center}

It is well-known that the harmonic series in
Proposition~\ref{prop:comph} diverges in the same way as $\log(d)$
and therefore, $\frac{h_d}{\log(d)}$ goes to 1 for large dimensions
$d$.

\index{uncertainty relation|)}

\chapter[\RabinOT in the Bounded-Quantum-Storage Model]{\RabinOT in the Bounded-Quantum-Storage
  Model} \label{chap:RabinOT}
\index{oblivious transfer!\RabinOT|(}
In this chapter, we present an efficient protocol for Rabin Oblivious Transfer
which is secure in the bounded-quantum-storage model. It first
appeared in \cite{DFSS05}, a journal version of this paper is in preparation
\cite{DFSS08journal}.

\section{The Definition} \label{sec:def-rabin-obliv-transf}
A protocol for Rabin Oblivious Transfer (\RabinOT) between sender
Alice and receiver Bob allows for Alice to send a bit $b$ through an
erasure \index{erasure channel} channel to Bob. Each transmission
delivers $b$ or an erasure with probability $\frac12$.  Intuitively, a
protocol for \RabinOT is secure if
\begin{itemize}
\item the sender Alice gets no information on whether $b$ was
                  received or not, no matter what she does, and
\item the receiver Bob gets no information about $b$ with
  probability at least~$\frac{1}{2}$, 
   no matter what he does. 
\end{itemize}
In this chapter, we are considering quantum protocols for \RabinOT. This
means that while the inputs and outputs of the honest senders are classical,
described by random variables, the protocol may contain quantum
computation and quantum communication, and the view of a dishonest
player is quantum, and is thus described by a 
quantum state. \index{quantum protocol}

Any such (two-party) protocol is specified by a family
$\{(\A_n,\B_n)\}_{n>0}$ of pairs of interactive quantum circuits (i.e.
interacting through a quantum channel). Each pair is indexed by a
security parameter $n>0$, where $\A_n$ and $\B_n$ denote the circuits
for sender Alice and receiver Bob, respectively.  In order to simplify
the notation, we often omit the index $n$, leaving the dependency on it
implicit.


For the formal definition of the security requirements of a \RabinOT
protocol, let us fix the following notation. Let $B$ denote the binary
random variable describing \A's input bit $b$, and let $A$ and $Y$
denote the binary random variables describing \B's two output bits,
where the meaning is that $A$ indicates whether the bit was received
or not. 
Furthermore, for a dishonest sender~\dA, the final state of a fixed
candidate protocol for \RandOT can be described by the ccq-state
\smash{$\rho_{AY\dA}$} where (by slight abuse of notation) we also denote by
\smash{$\dA$} the quantum register that the sender outputs. Its state may
depend on $A$ and $Y$. Similarly, for a dishonest receiver \dB, we
have the cq-state $\rho_{B \dB}$.

\begin{definition}\label{def:ROT}
  A two-party (quantum) protocol $(\A,\B)$ is a \emph{$\eps$-secure \RabinOT} if the following holds:
\begin{description}
\item[\boldmath$\varepsilon$-Correctness:] For honest \A\ and \B, 
$$\P[B=Y|A=1] \geq 1 - \eps \, .$$
\item[\boldmath$\varepsilon$-Receiver-security:] For honest \B\ and any dishonest \dA\ there
  exists\footnote{Recall from Section~\ref{sec:qit}: Given a cq-state
    $\rho_{X \regE}$, by saying that there exists a random
    variable $Y$ such that $\rho_{XY\regE}$ satisfies some condition,
    we mean that $\rho_{X \regE}$ can be understood as $\rho_{X\regE}
    = \tr_Y(\rho_{XY\regE})$ for a ccq-state $\rho_{XY\regE}$ that
    satisfies the required condition.} a binary random variable $B'$
  such that
\[ \P[B'=Y|A=1] \geq 1 - \eps, \quad\mbox{ and }\quad \dist{\rho_{AB'\dA}, \I \otimes \rho_{B'\dA} } \leq \eps \, . \]
\item[\boldmath$\varepsilon$-Sender-security:] For any \dB\ there exists an event $\ev$ with
  $\P[\ev] \geq \frac{1}{2} - \eps$ such that
\[
\dist{ \rho_{B\dB|\ev}, \rho_B \otimes \rho_{\dB|\ev} } \leq \eps \, .
\]
\end{description}
\index{correctness!of quantum \RabinOT}
\index{receiver-security!of quantum \RabinOT}
\index{sender-security!of quantum \RabinOT}
If any of the above holds for $\eps=0$,  
then the corresponding property is said to hold {\bf perfectly}. 
If one of the properties only holds with respect to a restricted class
$\mathfrak{S}$ of \dA's respectively $\mathfrak{R}$ of \dB's, then this property
is said to hold (and the protocol is said to be secure) {\bf against}
$\mathfrak{S}$ respectively~$\mathfrak{R}$.
\end{definition}

Receiver-security requires that the joint quantum state is essentially
the same as when the dishonest sender chooses a bit $B'$ according to
some distribution and a (possibly dependent) quantum state, and gives
$B'$ to an ideal functionality which passes it on to the receiver with
probability $\frac12$. Sender-security requires that the joint quantum
state is essentially the same as when the dishonest receiver gets the
sender's bit $B$ with probability $\frac12$ and prepares some state
that may depend on $B$ in case he receives it, and prepares some state
that does not depend on $B$ otherwise. In other words, security
requires that the dishonest party cannot do more than when attacking
an ideal functionality.  From such a strong security guarantee we
expect nice composition behavior, for instance like
in~\cite{CSSW06}.

Note that the original definition given in~\cite{DFSS05} does not
guarantee that the distribution of the input bit is determined at the
end the execution of \RabinOT. This is a strictly weaker definition
and does not fully capture what is expected from a \RabinOT: it is
easy to see that if the dishonest sender can still influence his input
bit after the execution of the protocol, then known schemes based on
\RabinOT, like bit commitments, are not secure anymore. The security
definition given here is in the spirit of the security definition
from~\cite{DFRSS07} for 1-2~OT, described in the next
Chapter~\ref{chap:12OT}.

\section{The Protocol}\label{sec:otprot}
We present a quantum protocol for \RabinOT that will be shown
perfectly correct and perfectly receiver-secure (against any sender)
and statistically sender-secure against any quantum-memory-bounded
receiver. Our protocol exhibits some similarity with quantum conjugate
coding introduced by Wiesner~\cite{Wiesner83}. \index{conjugate coding}

\begin{myfigure}{h}
 \begin{myprotocol}{\qot$(b)$}
      \item $\A$ picks $x\in_R\{0,1\}^n$, and $r\in_R\{+,\times \}$
        and sends $\ket{\psi} \assign \ket{x}_r$ 
            to $\B$ (i.e. the string $x$ encoded in basis $r$).
      \item $\B$ picks $r'\in_R\{+,\times \}$ and measures all qubits
       of $\ket{\psi}$ in basis $r'$. Let $x'\in\{0,1\}^n$ be the
       result.
      \item $\A$ announces $r$, $\hf\in_R \chf{n}$, and $e \assign b\oplus \hf(x)$.\label{bound}
      \item $\B$ outputs $a \assign 1$ and $y \assign e\oplus \hf(x')$ if $r'=r$ and else $a \assign 0$ and $y \assign 0$.
 \end{myprotocol}
\caption{Quantum Protocol for \RabinOT}\label{fig:ot}
\end{myfigure}

The protocol given in~Figure~\ref{fig:ot} is very simple: $\A$ picks
$x\in_R\{0,1\}^n$ and sends to
$\B$ $n$ qubits in state either $\ket{x}_+$ 
or $\ket{x}_{\times}$ 
each chosen with probability~$\frac{1}{2}$.  $\B$ then measures all
received qubits either in the rectilinear or in the diagonal basis.
\index{basis!rectilinear} \index{basis!diagonal} With probability
\smash{$\frac{1}{2}$}, $\B$ picked the right basis and gets $x$, while
any \dB\ that is forced to measure part of the state (due to a memory
bound) can only have full information on $x$ in case the $+$-basis was
used {\em or} in case the $\times$-basis was used (but not in both
cases). \index{privacy amplification}Privacy amplification based on
any \univ\ class of hashing functions $\chf{n}$ is then used to
eliminate partial information (as explained in Section~\ref{sec:pa}).
For simplicity, we focus on the case where the output size of the
family $\chf{n}$ is just one bit, i.e. $\ell=1$, but all results of
this chapter can easily be extended to \RabinlStringOT of $\ell$-bit
strings, by using an output size $\ell > 1$ and adjusting the memory
bounds accordingly, see Section~\ref{sec:extensionrot}.

In order to avoid \index{abort}aborting, we specify that if a dishonest \dA\ 
refuses to participate, or sends data in incorrect format, then \B\ 
samples its output bits $a$ and $y$ both at random in $\set{0,1}$.

We first consider receiver-security. 
\index{receiver-security!of quantum \RabinOT}
\begin{proposition}\label{prop:sec:receiverprivate}
\qot\ is perfectly receiver-secure.
\end{proposition}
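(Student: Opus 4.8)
The plan is to show that whatever a dishonest sender \dA\ can do in \qot\ she could also do by attacking the ideal \RabinOT\ functionality, by exhibiting a suitable "input bit" $B'$. The whole argument rests on one structural feature of the protocol: the honest receiver \B\ never sends a message to \A. Consequently everything in \dA's view — her final quantum memory, the $n$-qubit register $Q$ she transmits, and the classical announcements $r\in\{+,\times\}$, $\hf\in\chf{n}$, $e\in\{0,1\}$ — is produced independently of \B's private uniform coin $r'\in_R\{+,\times\}$. Since the honest \B\ sets $A=1$ exactly when $r'=r$, the bit $A=\id_{\{r'=r\}}$ equals $1$ with probability $\tfrac12$ conditioned on \emph{any} fixing of $(r,\hf,e)$ and of the registers $Q$ and \dA; hence $A$ is uniform and independent of the pair $(B',\dA)$ for any $B'$ that does not depend on $r'$. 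This already delivers the independence-and-uniformity requirement $\rho_{AB'\dA}=\I\otimes\rho_{B'\dA}$, provided $B'$ is chosen $r'$-independently.

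Next I would define the witness $B'$ as ``the bit \B\ decodes assuming the announced basis is the one used'': let $\tilde x\in\{0,1\}^n$ be the outcome of measuring the received register $Q$ in basis $r$, and set $B'\assign e\oplus\hf(\tilde x)$. This $B'$ is a function of \dA's announcements and of a measurement of $Q$ alone, so it does not involve $r'$ and the previous paragraph applies. For correctness of the witness, observe that on the event $A=1$ we have $r'=r$, so \B's actual measurement of $Q$ (in basis $r'$) is exactly the measurement defining $\tilde x$; hence $\tilde x=x'$ and $B'=e\oplus\hf(x')=Y$, which gives $\P[B'=Y\mid A=1]=1$.

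The one point that needs care — the main, though mild, obstacle — is that on the event $A=0$ the honest \B\ physically measures $Q$ in basis $r'\neq r$, so ``measure $Q$ in basis $r$'' is not literally realised there. I would resolve this by a delayed-measurement argument: since \B\ never reports her outcome to \A, her measurement of $Q$ may without loss of generality be postponed to after step~\ref{bound}, and on the branch $A=0$ its outcome is discarded and unused (because $Y=0$ there), so it may be replaced by the measurement-map defining $B'$ (measure $Q$ in basis $r$, XOR the result under $\hf$ with $e$, keep \dA) without changing $\rho_{AY\dA}$; discarding a measurement outcome is the identity on the measured register at the level of partial traces, so $\rho_\dA$ is unaffected. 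Thus on both branches $(B',\dA)$ is obtained from the same $r'$-independent prior on $(\dA,r,\hf,e,Q)$ via the same channel, whence $\rho_{B'\dA\mid A=0}=\rho_{B'\dA\mid A=1}$; together with $\P[A=0]=\P[A=1]=\tfrac12$ this gives $\rho_{AB'\dA}=\I\otimes\rho_{B'\dA}$. Combining these facts establishes perfect receiver-security ($\eps=0$).
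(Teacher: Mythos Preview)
Your proposal is correct and follows essentially the same approach as the paper: both define $B'$ via a delayed measurement of the received register in the \emph{announced} basis $r$ (rather than \B's private $r'$), argue that on $A=0$ the basis change is harmless because the outcome is discarded, and conclude from the absence of communication from \B\ to \dA\ that $A=\id_{\{r'=r\}}$ is uniform and independent of $(B',\dA)$. The paper phrases this as replacing the honest memory-bounded \B\ by an unbounded \B\ who waits for $r$ before measuring, but this is exactly your delayed-measurement argument.
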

It is obvious that no information about whether $\B$ has received the
bit is leaked to any sender \smash{$\dA$}, since $\B$ does not send
anything. However, one needs to show the existence of a random
variable $B'$ as required by receiver-security.
\begin{proof}
  Recall, the quantum state \smash{$\rho_{AY\dA}$} is defined by the
  experiment where the dishonest sender $\dA$ interacts with the
  honest memory-bounded $\B$. Consider a modification of the
  experiment where we allow $\B$ to be {\em unbounded} in memory and
  where $\B$ waits to receive $r$ and then measures all qubits in
  basis~$r$. Let $X'$ be the resulting string. Nevertheless, $\B$
  picks $r' \in_R \set{+,\times}$ at random and outputs $(A,Y) =
  (0,0)$ if $r' \neq r$ and $(A,Y) = (1,e \oplus f(X'))$ if $r' = r$.
  Since the only difference between the two experiments is {\em when}
  $\B$ measures the qubits and {\em in what basis} $\B$ measures them
  when $r \neq r'$, in which case his final output is independent of
  the measurement outcome, the two experiments result in the same
  $\rho_{AY\dA}$. However, in the modified experiment we can choose
  $B'$ to be $e \oplus f(X')$, such that by construction $B' = Y$ if
  $A=1$ and $A$ is uniformly distributed, independent of anything, and
  thus $\rho_{AB'\dA} = \I \otimes \rho_{B'\dA}$.
\end{proof}

As we shall see in Section~\ref{sec:otsecurity}, the security of the
\qot\ protocol against receivers with bounded-size quantum memory
holds as long as the bound applies before Step~\ref{bound} is reached.
An equivalent protocol is obtained by purifying the sender's actions.
Although \qot\ is easy to implement, the purified or EPR-based version
depicted in Figure~\ref{fig:eprot} is easier to prove secure. This
technique was pioneered by Ekert~\cite{Ekert91} in the scenario of
quantum key distribution. A similar approach was taken in the
Shor-Preskill proof of security for the BB84 quantum-key-distribution
scheme~\cite{SP00}. \index{purification} \index{EPR-based version|see {purification}}

\begin{myfigure}{h}
\begin{myprotocol}{\eprqot$(b)$}
\item $\A$ prepares $n$ EPR pairs each in state
  $\ket{\Omega}=\frac{1}{\sqrt{2}}(\ket{00}+\ket{11})$ and sends one
  half of each pair to $\B$ and keeps the other halves.\label{rec}
\item $\B$ picks $r'\in_R\{+,\times \}$ and measures all received qubits
      in basis $r'$. Let $x'\in\{0,1\}^n$ be the result.
\item $\A$ picks $r\in_R\{+,\times \}$, and measures all kept
      qubits in basis $r$. Let $x\in\{0,1\}^n$ be the outcome.  $\A$
      announces $r$, $\hf\in_R \chf{n}$, and $e \assign b\oplus \hf(x)$.\label{it:measure}
\item $\B$ outputs $a \assign 1$ and $y \assign e \oplus \hf(x')$ if $r'=r$ and else $a \assign 0$ and $y \assign 0$.
\end{myprotocol}
\caption{Protocol for EPR-based \RabinOT}\label{fig:eprot}
\end{myfigure}

Notice that while \qot\ requires no quantum memory for honest players,
quantum memory for $\A$ seems to be required in \eprqot. The following
Lemma shows the strict security equivalence between \qot\ and \eprqot.

\begin{lemma}\label{lem:seqequiv}
  \qot\ is $\eps$-sender-secure if and only if \eprqot\ is.
\end{lemma}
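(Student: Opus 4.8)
\textbf{Proof plan for Lemma~\ref{lem:seqequiv}.}

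The plan is to show that the two protocols induce the \emph{same} joint state (of the honest sender's classical outputs and the dishonest receiver's quantum register) at the end of the interaction, so that in particular the sender-security condition holds for one exactly when it holds for the other. The only difference between \qot\ and \eprqot\ lies entirely on the sender's side: in \qot\ the sender picks $x \in_R \nbit$ and $r \in_R \set{+,\times}$ and sends $\ket{x}_r$, whereas in \eprqot\ the sender sends half of each of $n$ EPR pairs and later measures her halves in basis $r$ to obtain $x$. The receiver's actions are identical in both protocols, and crucially the receiver's memory bound is imposed \emph{before} Step~\ref{bound} (resp.\ Step~\ref{it:measure}), i.e.\ before $r$ and $e$ are announced; so whatever the dishonest \dB\ does, she does it to the qubits she received before the sender has made any choice that distinguishes the two protocols.

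First I would recall the standard EPR-sending fact: sending half of $\ket{\Omega}^{\otimes n}$ and then measuring the kept halves in basis $r$ is equivalent — in terms of the joint distribution of $x$ and the state handed to the receiver — to first picking $x \in_R \nbit$ uniformly and then sending $\ket{x}_r$. Concretely, for a fixed basis $r$, measuring one half of $\ket{\Omega}$ in basis $r$ yields a uniformly random bit and collapses the other half to the corresponding basis state; tensoring over the $n$ pairs, conditioned on the outcome $x$ (which is uniform over $\nbit$), the receiver's half is exactly $\ket{x}_r$. Moreover, since in \eprqot\ the sender's measurement in Step~\ref{it:measure} is performed \emph{after} the qubits have been sent to \dB\ and \dB\ has already acted (her memory bound and any measurement/operation happen no later than the announcement of $r$), we may — without changing anything observable — commute the sender's measurement to the very beginning of the protocol. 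After this commutation, \eprqot\ becomes syntactically identical to \qot: the sender has a uniform $x$ and a uniform $r$, sends $\ket{x}_r$, and the rest proceeds verbatim.

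Hence for every dishonest receiver strategy \dB, the ccq-state $\rho_{B\dB}$ produced in \qot\ equals the one produced in \eprqot\ (the random variable $B$ is the sender's honest input bit $b$ in both cases, and $\dB$'s register is identically distributed). Since $\eps$-sender-security (Definition~\ref{def:ROT}) is a condition purely on $\rho_{B\dB}$ — namely the existence of an event $\ev$ with $\P[\ev]\geq\frac12-\eps$ and $\dist{\rho_{B\dB|\ev},\rho_B\otimes\rho_{\dB|\ev}}\leq\eps$ — the condition holds for \qot\ iff it holds for \eprqot, which is exactly the claim. The main (and only real) subtlety to be careful about is the justification that the sender's measurement in \eprqot\ can be deferred/advanced past \dB's actions without affecting the joint state: this uses that the memory bound and all of \dB's operations are fixed to occur before $r$ is announced, so the sender's local measurement on a separate register commutes with everything \dB\ does. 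I would state this as a one-line appeal to the fact that operations on disjoint subsystems commute, together with the observation that no information flows from the sender's kept qubits to \dB\ before Step~\ref{it:measure}.
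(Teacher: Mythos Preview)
Your proposal is correct and follows essentially the same approach as the paper: both argue that the sender's choices of $r$, $\hf$ and her measurements act on a register disjoint from \dB's and occur before any information about them is revealed, hence they commute with \dB's actions and can be shifted to Step~1, turning \eprqot\ into \qot. The paper states this in two sentences; your write-up is more detailed (spelling out the EPR-measurement equivalence and the disjoint-subsystem justification) but the underlying argument is identical.
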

\begin{proof}
The proof follows easily after observing that $\A$'s
choices of $r$ and $\hf$, together with 
the measurements all commute with $\dB$'s actions.
Therefore, they can be performed right after Step 1
with no change for $\dB$'s view. Modifying \eprqot\
that way results in \qot.
\end{proof}

Note that for a dishonest receiver it is not only irrelevant whether he tries to attack \qot\ or \eprqot, but in fact there is no difference in the two protocols from his point of view. 

\section{Modeling Dishonest Receivers} \label{sec:modeldishonestreceiversrabin}
We model dishonest receivers in \qot, respectively \eprqot, under the assumption that 
the maximum size of their quantum storage is bounded.
These adversaries are only required to have bounded quantum storage
when they reach Step~\ref{bound} in (\epr-)\qot. 
Before (and after) that, \index{bounded-quantum-storage model}
the adversary can store and carry out quantum computations involving any number
of qubits. Apart from the restriction on the size of the quantum
memory available to the adversary, no other assumption is made. In 
particular, the adversary is not assumed to be computationally
bounded and the size of its classical memory is not restricted. 
\begin{definition}\label{boundedstorage}
The set $\mathfrak{R}_{\gamma}$ denotes all
possible quantum dishonest receivers $\{\dB_n\}_{n>0}$ in \qot\ or \eprqot\ 
where for each $n>0$, $\dB_n$ has quantum memory of size 
at most $\gamma n$ when Step \ref{bound} is reached.  
\end{definition}
\index{dishonest receiver!of \RabinOT}
In general, the adversary $\dB$ is allowed to perform any quantum
computation compressing the $n$ qubits received from $\A$ into a
quantum register $M$ of size at most $\gamma n$ when Step \ref{bound}
is reached. More precisely, the \index{compression}compression function is implemented by
some unitary transform $T$ acting upon the quantum state received and
an \index{ancilla}ancilla register of arbitrary size (initially in the state
$\ket{0}$). The compression is performed by a measurement that we
assume in the computational basis without loss of generality.  Before
starting Step \ref{bound}, the adversary first applies a unitary
transform~$T$:
\begin{eqnarray*}
2^{-n/2} \!\! \sum_{x\in\{0,1\}^n}\ket{x}\otimes T\ket{x}\ket{0}
  \mapsto  2^{-n/2} \!\! \sum_{x\in\{0,1\}^n} \ket{x} \otimes
 \sum_{y}\alpha_{x,y}\ket{\varphi_{x,y}}^{M}\ket{y}^{Y}, 
\end{eqnarray*}
where for all $x$, $\sum_y |\alpha_{x,y}|^2=1$.
Then, a measurement in the computational basis is applied
to register $Y$ providing classical outcome $y$. The result
is a quantum state in register $M$ of size $\gamma n$ qubits.
Ignoring the value of $y$ to ease the notation,
the re-normalized state of the system in its most general
form when Step~\ref{bound} in \eprqot\ is reached is thus of the form
\[ \ket{\psi}= 
\sum_{x\in\{0,1\}^n} \alpha_x \ket{x}\otimes\ket{\varphi_{x}}^M,
\]
where $\sum_{x} |\alpha_x|^2=1$. We will prove security for any such state $\ket{\psi}$ and thus conditioned on any value $y$ that may be observed. It is therefore safe to leave the dependency on $y$ implicit.

\section{Security Against Dishonest Receivers}\label{sec:otsecurity}
In this section, we use the uncertainty relation derived in
Section~\ref{sec:twounbiasedbases} to show that \eprqot\ is secure
against any dishonest receiver having access to a quantum storage
device of size strictly smaller than half the number of qubits
received at Step \ref{rec}.

\index{sender-security!of quantum \RabinOT}
\begin{theorem}\label{thm:privacyeprqot}
For all $\gamma < \frac12$, \qot\ is $\eps$-secure for a negligible (in $n$) $\eps$ 
against $\mathfrak{R}_{\gamma}$.
\end{theorem}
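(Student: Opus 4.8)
~\textbf{Proof plan.}
The plan is to combine the security equivalence between \qot\ and \eprqot\ (Lemma~\ref{lem:seqequiv}) with the uncertainty relation from Section~\ref{sec:twounbiasedbases} (in the form of Corollary~\ref{cor:hadamard}) and the quantum privacy-amplification bound (Corollary~\ref{thm:pasmooth}). Correctness and receiver-security are already established (Proposition~\ref{prop:sec:receiverprivate}) and hold perfectly, so the only thing to prove is $\eps$-sender-security against $\mathfrak{R}_{\gamma}$ for a negligible $\eps$. By Lemma~\ref{lem:seqequiv} it suffices to work with \eprqot. First I would fix an arbitrary dishonest receiver $\dB\in\mathfrak{R}_\gamma$ and, using the analysis in Section~\ref{sec:modeldishonestreceiversrabin}, reduce to the situation where, just before Step~\ref{bound}, the joint state of \A's $n$ halves and $\dB$'s compressed register $M$ (of at most $\gamma n$ qubits) is a pure state $\ket{\psi}=\sum_x \alpha_x\ket{x}\otimes\ket{\varphi_x}^M$; it is enough to prove the bound for every such $\ket{\psi}$ (hence conditioned on any measurement outcome $y$).

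Next I would set up the right random variables. Let $R\in_R\{+,\times\}$ be \A's basis choice in Step~\ref{it:measure}, and let $X$ be the outcome of measuring \A's halves of $\ket{\psi}$ in basis $R$; write $\rho$ for the reduced state on \A's $n$ qubits, so that $P_{X|R}(\cdot\,|\,r)=Q^r(\cdot)$ is exactly the measurement distribution appearing in Corollary~\ref{cor:hadamard}. Fix any $\lambda$ with $\gamma<\lambda<\tfrac12$ (possible since $\gamma<\tfrac12$) and apply Corollary~\ref{cor:hadamard}: there is $\sp>0$ and an event $\ev$ with $\P[\ev\mid R\!=\!+]+\P[\ev\mid R\!=\!\times]\ge 1-2^{-\sp n}$, hence $\P[\ev]\ge\tfrac12-2^{-\sp n}$, and $\H_\infty(X\mid R\!=\!r,\ev)\ge\lambda n$ for every $r$ with $P_{R|\ev}(r)>0$. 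This $\ev$ is the event required by the sender-security definition. Now I would invoke Corollary~\ref{thm:pasmooth} (with $\eps=0$, $\ell=1$, the classical side information $U$ playing the role of $R$ together with the publicly announced hash function $\hf$, and the $q\le\gamma n$ qubits of $M$ as the quantum register $E$): conditioned on $\ev$ and on $R\!=\!r$, the bit $\hf(X)$ is within trace distance $\tfrac12\,2^{-\frac12(\lambda n-\gamma n-1)}$ of uniform given $\hf$, $R$, and $M$. Since $\lambda>\gamma$, the exponent $\tfrac12(\lambda-\gamma)n$ is linear in $n$, so this quantity is negligible; averaging over $r$ (with weights $P_{R|\ev}(r)$) keeps it negligible. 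Because $\B$ (viewed by $\dB$) sees exactly $R=r$, $\hf$, $M$, and $e=b\oplus\hf(X)$, and $b$ is independent of the protocol transcript, the masking bit $\hf(X)$ being near-uniform and near-independent of $(\hf,R,M)$ conditioned on $\ev$ gives precisely $\dist{\rho_{B\dB\mid\ev},\rho_B\otimes\rho_{\dB\mid\ev}}\le\eps'$ for a negligible $\eps'$; I would fold the $2^{-\sp n}$ from the probability bound on $\ev$ and the privacy-amplification term into a single negligible $\eps$, finishing the proof.

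The main obstacle is bookkeeping rather than a deep new idea: I must be careful about \emph{when} the memory bound is invoked (only at Step~\ref{bound}), so that the uncertainty relation is applied to the reduced state on \A's side \emph{after} $\dB$'s compression but \emph{before} $r$ is announced — this is exactly what makes the $n/2$ threshold tight, and it is why the EPR-based reformulation is used. I also need to check that the roles in Corollary~\ref{thm:pasmooth} are assigned correctly: the quantity $\hiee{X\mid U}$ there must be lower-bounded by $\H_\infty(X\mid R\!=\!r,\ev)\ge\lambda n$ after conditioning on $\ev$, which uses that conditioning on the public $\hf$ (chosen independently of $X$) does not decrease the min-entropy, and that the definition reduces to the conditional min-entropy of the classical $X$ given classical $R$ on the event $\ev$. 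One should also note that the state $\ket{\psi}$ may be taken pure without loss of generality and that security for every $\ket{\psi}$ (equivalently, conditioned on every classical outcome $y$ of $\dB$'s compressing measurement) implies security unconditionally, since sender-security is an averaged statement over $\dB$'s classical outputs. Finally, the "abort" convention (if \dA\ misbehaves) plays no role here since we are bounding \emph{receiver} cheating; I would remark on this only in passing. With these points handled, the chain \textbf{Lemma~\ref{lem:seqequiv}} $\to$ \textbf{Corollary~\ref{cor:hadamard}} $\to$ \textbf{Corollary~\ref{thm:pasmooth}} yields the claim.
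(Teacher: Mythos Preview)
Your proposal is correct and follows essentially the same route as the paper: reduce to \eprqot\ via Lemma~\ref{lem:seqequiv} and Proposition~\ref{prop:sec:receiverprivate}, apply Corollary~\ref{cor:hadamard} with some $\lambda\in(\gamma,\tfrac12)$ to obtain the event~$\ev$ and the min-entropy bound, and then invoke Corollary~\ref{thm:pasmooth} (with $\ell=1$, $q=\gamma n$, smoothing parameter~$0$) to conclude that $F(X)$ is negligibly close to uniform given $\dB$'s view conditioned on~$\ev$. The only cosmetic difference is that the paper applies privacy amplification after fixing $R=r$ (so $U$ is taken constant), whereas you fold $R$ into $U$; since you also condition on $R=r$ and $\ev$ before invoking the bound, this comes to the same thing.
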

\begin{proof}
  After Lemmas~\ref{lem:seqequiv} and~\ref{prop:sec:receiverprivate},
  it remains to show that \eprqot\ is $\eps$-sender-secure against
  $\mathfrak{R}_{\gamma}$.  Since \mbox{$\gamma < \frac{1}{2}$}, we
  can find $\sp > 0$ with \mbox{$\gamma + \sp < \frac{1}{2}$}.
  Consider a dishonest receiver \smash{$\dB$} in \eprqot\ with quantum
  memory of size~$\gamma n$.  Let $R$ and $X$ denote the random
  variables describing the basis $r$ and the outcome $x$ of \A's
  measurement (in basis~$r$) in Step~\ref{it:measure} of \eprqot,
  respectively.  We implicitly understand the distribution of $X$
  given $R$ to be conditioned on the classical outcome $y$ of the
  measurement \smash{$\dB$} performed when the memory bound applies,
  as described in Section~\ref{sec:modeldishonestreceiversrabin}; the
  following analysis works no matter what $y$ is.
  Corollary~\ref{cor:hadamard} with $\lambda = \gamma + \sp$ implies
  the existence of $\eps$ negligible in $n$ and an event $\ev$ such
  that $\P[\ev] \geq \frac12 - \eps$ and such that
  $\H_{\infty}(X|R\!=\!r,\ev) \geq \gamma n + \sp n$ for any relevant
  $r$.  Note that by construction, the random variables $X$ and $R$,
  and thus also the event $\ev$, are independent of the sender's input
  bit $B$, and hence $\rho_{B|\ev} = \rho_B$.  It remains to show that
  $\dist{ \rho_{B\dB|\ev},\rho_{B|\ev}\!\otimes\!\rho_{\dB|\ev} } \leq
  \eps$.  As the bit $B$ is masked by the output of the
  two-universal hash function $F(X)$ in Step 4 of \eprqot\ (where the
  random variable $F$ represents the random choice for~$f$), it
  suffices to show that $F(X)$ is close to uniform and essentially
  independent from $\dB$'s view, conditioned on $\ev$.  But this is
  guaranteed by the above bound on $\H_{\infty}(X|R\!=\!r,\ev)$ and by
  the \index{privacy amplification} privacy-amplification theorem
  (Corollary~\ref{thm:pasmooth} with $\eps \assign 0, \ell \assign 1,
  q \assign \gamma n$ and $U$ constant).
\end{proof}

\section{On the Necessity of Privacy Amplification}
\index{privacy amplification!randomized}
In this section, we show that randomized privacy amplification is
needed for protocol \qot\ to be secure.
For instance, it is tempting to believe that the sender could use the
XOR $\bigoplus_i x_i$ in order to mask the bit $b$, rather than $f(x)$
for a randomly sampled $f \in \chf{n}$. This would reduce the
communication complexity as well as the number of random coins needed.
However, we argue in this section that this is not secure (against an
adversary as we model it). Indeed, somewhat surprisingly, this variant
can be broken by a dishonest receiver that has {\em no quantum memory
 at all} (but that can do coherent measurements on pairs of qubits) in
the case $n$ is even. For odd $n$, the dishonest receiver needs to
store \emph{a single qubit}.

Clearly, a dishonest receiver can break the modified scheme \qot\ and
learn the bit $b$ with probability $1$ if he can compute $\bigoplus_i
x_i$ with probability~$1$. Note that, using the equivalence between
\qot\ and \eprqot, $x_i$ can be understood as the outcome of the
\index{basis!computational} \index{basis!diagonal}
measurement in either the $+$- or the $\times$-basis, performed by the
sender on one part of an EPR pair while the other is handed over
to the receiver.  The following proposition shows that indeed the
receiver can learn $\bigoplus_i x_i$ by a suitable measurement of his
parts of the \index{EPR pair}EPR pairs. Concretely, he measures the qubits he receives
pair-wise by a suitable measurement which allows him to learn the XOR
of the two corresponding $x_i$'s, no matter what the basis is (and he
needs to store one single qubit in case $n$ is odd). This obviously
allows him to learn the XOR of all $x_i$'s in all cases.

\begin{proposition}
  Consider two EPR pairs, i.e., $\ket{\psi} = \frac{1}{2} \sum_{x}
  \ket{x}^S\ket{x}^R$ where $x$ ranges over $\set{0,1}^2$.  Let $r \in
  \set{+,\times}$, and let $x_1$ and $x_2$ be the result when
  measuring the two qubits in register $S$ in basis $r$.  There exists
  a fixed measurement for register $R$ so that the outcome together
  with $r$ uniquely determines $x_1 \oplus x_2$.
\end{proposition}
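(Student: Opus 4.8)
The plan is to exhibit an explicit measurement on register $R$ (the receiver's two qubits) whose outcome, together with the basis $r$ used by the sender, pins down $x_1 \oplus x_2$. The key observation is the EPR correlation: when the sender measures his two qubits in basis $r$ and obtains $(x_1,x_2)$, the receiver's post-measurement state is the corresponding basis state $\ket{x_1}_r \otimes \ket{x_2}_r$ in register $R$ (this follows because $\ket{\psi}$ is the product of two maximally entangled pairs and $\frac{1}{\sqrt2}\sum_x \ket{x}^S\ket{x}^R = \frac{1}{\sqrt2}\sum_x \ket{x}_r^S\ket{x}_r^R$ for any real orthonormal basis, in particular for $r \in \set{+,\times}$). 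So the receiver's problem reduces to the following purely state-discrimination question: given one of the four states $\ket{x_1}_r\ket{x_2}_r$ with $r$ unknown, find a single measurement that reveals $x_1\oplus x_2$.

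First I would note that the four ``$+$''-states $\ket{00}_+, \ket{01}_+, \ket{10}_+, \ket{11}_+$ and the four ``$\times$''-states span the same $4$-dimensional space but the relevant information $x_1\oplus x_2$ lives in a cleaner subspace. Concretely, I would work in the Bell basis of register $R$: write the four Bell states $\ket{\Phi^{\pm}} = \frac{1}{\sqrt2}(\ket{00}\pm\ket{11})$ and $\ket{\Psi^{\pm}} = \frac{1}{\sqrt2}(\ket{01}\pm\ket{10})$. A short computation shows that in the computational basis, $\ket{0}_+\ket{0}_+$ and $\ket{1}_+\ket{1}_+$ (the $+$-states with $x_1\oplus x_2 = 0$) lie in $\mathrm{span}\{\ket{\Phi^+},\ket{\Phi^-}\}$, while $\ket{0}_+\ket{1}_+$ and $\ket{1}_+\ket{0}_+$ (with $x_1\oplus x_2 = 1$) lie in $\mathrm{span}\{\ket{\Psi^+},\ket{\Psi^-}\}$. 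The main step is then to check that the $\times$-basis states behave the same way with respect to this bipartition: expanding $\ket{0}_\times\ket{0}_\times$, $\ket{1}_\times\ket{1}_\times$ one finds they again lie in $\mathrm{span}\{\ket{\Phi^+},\ket{\Psi^+}\}$ — wait, here is where care is needed — so I would instead verify that the correct invariant subspace pairing is: $x_1\oplus x_2 = 0$ states (in either basis) are supported on $\{\ket{\Phi^+},\ket{\Psi^+}\}$ and $x_1\oplus x_2 = 1$ states on $\{\ket{\Phi^-},\ket{\Psi^-}\}$, since the operator $Z\otimes Z$ (which measures $x_1\oplus x_2$ in the $+$-basis) and $X\otimes X$ (which measures it in the $\times$-basis) commute and are simultaneously diagonalized by the Bell basis, with $\ket{\Phi^+}$ having eigenvalue $+1$ under both, $\ket{\Psi^-}$ having $+1$ under $Z\otimes Z$ and ... — the clean statement is that $Z^{\otimes 2}$ and $X^{\otimes 2}$ share the eigenbasis $\{\ket{\Phi^\pm},\ket{\Psi^\pm}\}$, and the measurement of the commuting pair $\{Z^{\otimes 2}, X^{\otimes 2}\}$, i.e. the Bell measurement, has the property that its outcome determines both $Z^{\otimes 2}$-eigenvalue and $X^{\otimes 2}$-eigenvalue.

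So the actual argument I would write is: perform a Bell measurement on register $R$, obtaining one of the four outcomes $\Phi^+,\Phi^-,\Psi^+,\Psi^-$. If the sender used $r = +$, then the post-measurement state of $R$ is an eigenstate of $Z\otimes Z$ with eigenvalue $(-1)^{x_1\oplus x_2}$; reading off from the Bell outcome which $Z^{\otimes 2}$-eigenvalue it carries gives $x_1\oplus x_2$. If the sender used $r = \times$, the post-measurement state is an eigenstate of $X\otimes X$ with eigenvalue $(-1)^{x_1\oplus x_2}$, and the same Bell outcome determines the $X^{\otimes 2}$-eigenvalue, hence $x_1\oplus x_2$. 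Either way, the pair (Bell outcome, $r$) determines $x_1\oplus x_2$ — and crucially the measurement is fixed in advance, not depending on $r$. The main obstacle is purely bookkeeping: getting the Bell-state/eigenvalue table exactly right so that one checks $\langle\Phi^\pm| Z^{\otimes 2}|\Phi^\pm\rangle = +1$, $\langle\Psi^\pm|Z^{\otimes 2}|\Psi^\pm\rangle = -1$ and $\langle\Phi^+|X^{\otimes 2}|\Phi^+\rangle = +1$, $\langle\Phi^-|X^{\otimes 2}|\Phi^-\rangle = -1$, $\langle\Psi^+|X^{\otimes 2}|\Psi^+\rangle = +1$, $\langle\Psi^-|X^{\otimes 2}|\Psi^-\rangle = -1$, so that the map (outcome, $r$) $\mapsto$ eigenvalue is well-defined. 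Once those six inner products are tabulated the claim is immediate; I would present the table and conclude. Finally I would remark that for odd $n$ the receiver applies this pairwise Bell measurement to $\lfloor n/2\rfloor$ pairs and stores the one leftover qubit until $r$ is announced, then measures it in basis $r$; XORing all the recovered bits yields $\bigoplus_i x_i$, which unmasks $b$.
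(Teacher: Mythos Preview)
Your approach is correct and essentially identical to the paper's: both perform the Bell measurement on register $R$ and use the outcome together with $r$ to recover $x_1\oplus x_2$. The paper argues this by explicitly expanding each Bell state in both the $+$- and $\times$-bases and reading off the XOR by inspection, whereas you phrase it via the commuting observables $Z^{\otimes 2}$ and $X^{\otimes 2}$ being simultaneously diagonalized by the Bell basis---same content, slightly different bookkeeping.
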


\begin{proof}
\index{Bell measurement} \index{Bell basis}
The measurement that does the job is the {\em Bell measurement}, i.e., the measurement in the Bell basis $\set{\ket{\Phi^+},\ket{\Psi^+},\ket{\Phi^-},\ket{\Psi^-}}$. Recall, 
\begin{align*}
\ket{\Phi^+} &= \frac{1}{\sqrt{2}} \big(\ket{00}_+ + \ket{11}_+\big) =  \frac{1}{\sqrt{2}} \big(\ket{00}_{\times} + \ket{11}_{\times}\big) \\
\ket{\Psi^+} &= \frac{1}{\sqrt{2}} \big(\ket{01}_+ + \ket{10}_+\big) =  \frac{1}{\sqrt{2}} \big(\ket{00}_{\times} - \ket{11}_{\times}\big) \\
\ket{\Phi^-} &= \frac{1}{\sqrt{2}} \big(\ket{00}_+ - \ket{11}_+\big) =  \frac{1}{\sqrt{2}} \big(\ket{01}_{\times} + \ket{10}_{\times}\big) \\
\ket{\Psi^-} &= \frac{1}{\sqrt{2}} \big(\ket{01}_+ - \ket{10}_+\big) =  \frac{1}{\sqrt{2}} \big(\ket{10}_{\times} - \ket{01}_{\times}\big)  \, .
\end{align*}
Due to the special form of the Bell basis, when register $R$ is
measured and, as a consequence, one of the four Bell states is
observed, the state in register $S$ collapses to that {\em same} Bell
state. Indeed, when doing the basis transformation, all cross-products
cancel each other out. It now follows by inspection that knowledge of
the Bell state and the basis $r$ allows to predict the XOR of the two
bits observed when measuring the Bell state in basis $r$. For
instance, for the Bell state $\ket{\Psi^+}$, the XOR is $1$ if $r = +$
and it is $0$ if $r = \times$.
\end{proof}

Note that from the proof above, one can see that the receiver's attack,
respectively his measurement on each pair of qubits, can be understood
as teleporting one of the two entangled qubits from the receiver to
the sender using the other as EPR pair. However, the receiver does not send
the outcome of his measurement to the sender, but keeps it in order to
predict the XOR. \index{teleportation}

Clearly, the same strategy also works against any fixed linear
function. Therefore, the only hope for doing deterministic privacy
amplification is by using a non-linear function. However, it has been
shown recently by Ballester, Wehner, and Winter \cite{BWW06}, that
also this approach is doomed to fail in our scenario, because the
outcome of \emph{any fixed Boolean function} can be perfectly predicted by a
dishonest receiver who can store a single qubit and later learns the
correct basis $r \in \set{+,\times}$. \index{privacy amplification!deterministic}


\section{Weakening the Assumptions}\label{sec:weakass}
Observe that \qot\ requires error-free quantum communication, in that
a transmitted bit $b$, that is encoded by the sender and measured by
the receiver using the same basis, is always received as $b$. In
addition, it also requires a perfect quantum source which on request
produces {\em one and only one} qubit in the right state, e.g.\ {\em
  one} photon with the right polarization.  Indeed, in case of noisy
quantum communication, an honest receiver in \qot\ is likely to
receive an incorrect bit, and the sender-security of \qot\ is
vulnerable to imperfect sources that once in a while transmit more
than one qubit in the same state: a malicious receiver $\dB$ can
easily determine the basis $r \in \{+,\times \}$ and measure all the
following qubits in the right basis.  However, current technology only
allows to approximate the behavior of single-photon sources and of
noise-free quantum communication. It would be preferable to find a
variant of \qot\ that allows to weaken the technological requirements
put upon the honest parties. \index{single-photon source}
\index{imperfect source} \index{multi-qubit emission}

In this section, we present such a protocol based on BB84 states
\cite{BB84}, \BBqot\ (see Figure~\ref{fig:BB84ot}). The security proof
follows essentially by adapting the security analysis of \qot\ in a
rather straightforward way, as will be discussed later.
\index{BB84 coding scheme}

\subsection{Weak Quantum Model}
\index{weak quantum model} \index{$(\phi,\eta)$-weak quantum model|see
{weak quantum model}}
Let us consider a quantum channel with an error probability $\phi <
\frac{1}{2}$, i.e., $\phi$ denotes the probability that a transmitted
bit $b$, that is encoded by the sender and measured by the receiver
using the same basis, is received as $1-b$. In order not to have the
security rely on any level of noise, we assume 
the error probability to be zero when considering a {\em dishonest} receiver.
Also, let us consider a quantum source which produces two or more
qubits (in the same state), rather than just one, with probability
$\eta < 1 - \phi$.
We call this the \emph{$(\phi,\eta)$-weak quantum model}. By adjusting the
parameters, this model can also cope with dark counts and empty
pulses, see Section~\ref{sec:moreimperfect}.
\index{dark count} \index{empty pulse}

In order to deal with noisy quantum communication, we need to do
error-correction without giving the adversary too much information. 
\remove{ 
For this, we use {\em secure sketches}, as introduced in~\cite{DRS04}.
A $(\ell,m,\phi)$-secure sketch\footnote{Note that our definition of a
  secure sketch differs slightly from the one given in~\cite{DRS04}. }
is a randomized function $S:\{0,1\}^{\ell} \rightarrow \{0,1\}^*$ such
that (1) for any $w \in \{0,1\}^{\ell}$ and for $w'$ received from $w$
by flipping each bit (independently) with probability $\phi$, the
string $w$ can be recovered from $w'$ and $S(w)$ except with
negligible probability (in $\ell$), and (2) for all random variables
$W$ over $\{0,1\}^{\ell}$, the ``average min-entropy'' of $W$ given
$S(W)$ is at least $\H_{\infty}(W) - m$.  We would like to point out
that the notion of average min-entropy used in \cite{DRS04} and here differs
slightly from the standard notion $\H_{\infty}(W|S(W))$, but it implies
that for any $\Delta > 0$, the probability that $S(W)$ takes on a
value $y$ such that $\H_{\infty}(W|S(W)=y) \geq \H_{\infty}(W)-m-\Delta$
is at least $1-2^{-\Delta}$ (which is sufficient for our purpose).
}
Techniques to solve this problem are known as {\em information
  reconciliation} (as introduced for instance by Brassard and 
Salvail~\cite{BS93}) or as {\em secure sketches} introduced by Dodis,
Reyzin, Smith~\cite{DRS04}.  Let $x \in \{0,1\}^\ell$ be an
arbitrary string, and let $x' \in\{0,1\}^\ell$ be the result of
flipping every bit in $x$ (independently) with probability $\phi$. It
is well known that learning the syndrome $S(x)$ of $x$, with respect
to a suitable efficiently-decodable linear error-correcting code $C$
of length $\ell$, allows to recover $x$ from $x'$, except with negligible probability
in~$\ell$ (see,e.g.,~\cite{Maurer91,Crepeau97,DRS04}).  Furthermore,
it is known from coding theory that, for large enough $\ell$, such a
code can be chosen with rate $R$ arbitrarily close to but smaller than
$1 - h(\phi)$, i.e., such that the syndrome length $s$ is bounded by
$s < (h(\phi) + \varepsilon) \ell$ where $\varepsilon > 0$ (see
e.g.~\cite{Crepeau97} or the full version of~\cite{DRS04} and the
references therein). \index{error correction} \index{syndrome} \index{information reconciliation} \index{secure sketch} \index{error probability} \index{code}  

Regarding the loss of information, we can use the
\index{privacy amplification} privacy-amplification statement in form of
Corollary~\ref{thm:pasmooth} with $\eps \assign 0$ and constant $U$ in
a similar way as before, just by appending the classical syndrome
$S(x)$ (of length $s$) to the quantum register $\regE$, which results
in
%
\begin{align}  \label{ddbound}
  \dist{ \rho_{F(X) F S(X) \regE},\I \otimes \rho_{F S(X) \regE} }
&\leq \frac{1}{2} 2^{-\frac{1}{2}({\H_{\infty}(X)-q-s-1})}.
\end{align}

Consider the protocol \BBqot\ shown in Figure~\ref{fig:BB84ot} in the
$(\phi,\eta)$-weak quantum model. The protocol uses an efficiently
decodable linear code $C_{\ell}$, parametrized in $\ell \in
\naturals$, with codeword length $\ell$,
rate $R = 1 - h(\phi) - \varepsilon$ for some small $\varepsilon > 0$,
and being able to correct errors occurring with probability $\phi$
(except with negligible probability). Let $S_{\ell}$ be the
corresponding syndrome function.  Like before, the memory bound in
\BBqot\ applies before Step~\ref{BBbound}.

\begin{myfigure}{h}
\begin{myprotocol}{\BBqot$(b)$}
 \item $\A$ picks $x \in_R \nbit$ and $\theta \in_R \{+,\times \}^n$
 and sends $x_i$ in the corresponding bases $\ket{x_1}_{\theta_1}, \ldots, \ket{x_n}_{\theta_n}$ to~$\B$.
\item $\B$ picks $r'\in_R\{+,\times \}$ and measures all qubits in
    basis $r'$. Let $x'\in\{0,1\}^n$ be the result.
\item $\A$ picks $r \in_R \{+,\times \}$, sets $I \assign \Set{i}{\theta_i \!=\!\set{+,\times}_{[r]}}$ and $\ell \assign |I|$, and announces $r$, $I$, $syn \assign S_{\ell}(x|_{I})$, $\hf\in_R \chf{\ell}$, and $e \assign b\oplus \hf(x|_{I})$.\label{BBbound}
\item $\B$ recovers $x|_{I}$ from $x'|_{I}$ and $syn$, and outputs
  $a \assign 1$ and $b' \assign e \oplus \hf(x|_{I})$ if $r'=r$ and else $a
  \assign 0$ and $b' \assign 0$.
\end{myprotocol}
\caption{Protocol for the BB84 version of \RabinOT}\label{fig:BB84ot}
\end{myfigure}

By the above mentioned properties of the code $C_{\ell}$, it is
obvious that $\B$ receives the correct bit $b$ if $r'=r$, except with
negligible probability. \index{correctness!of quantum \RabinOT} (The
error probability is negligible in $\ell$, but by Chernoff's
\index{Chernoff's inequality} inequality (Lemma~\ref{lem:chernoff}),
$\ell$ is linear in $n$ except with negligible probability.)  Also,
since there is no communication from $\B$ to $\A$, a dishonest sender
$\dA$ cannot learn whether $\B$ received the bit. In fact, \BBqot\ can
\index{receiver-security!of quantum \RabinOT} \index{purification}
be shown perfectly receiver-secure in the same way as in
Proposition~\ref{prop:sec:receiverprivate}.  Similar as for protocol
\qot, in order to argue about sender-security we compare \BBqot\ with
a purified version shown in Figure~\ref{fig:BB84eprot}.  \BBeprqot\ 
runs in the $(\phi,0)$-weak quantum model, and the imperfectness of
the quantum source assumed in \BBqot\ is simulated by $\A$ in
\BBeprqot\ so that there is no difference from $\B$'s point of view.
\index{weak quantum model}

\begin{myfigure}{h}
\begin{myprotocol}{\BBeprqot$(b)$}
\item $\A$ prepares $n$ EPR pairs each in state
  $\ket{\Omega}=\frac{1}{\sqrt{2}}(\ket{00}+\ket{11})$. Additionally,
  $\A$ initializes $I'_{+}:= \emptyset$ and $I'_{\times}:= \emptyset$.
  For every $i \in \{1,\ldots,n\}$, $\A$ does the following. With
  probability $1-\eta$, $\A$ sends one half of the $i$-th pair to $\B$
  and keeps the other half. While with probability $\eta$, $\A$ picks
  $\theta_i \in_R \set{+,\times}$, replaces $I'_{\theta_i}$ by
  $I'_{\theta_i} \cup \{i\}$ and sends two or more qubits in the same
  state $\ket{x_i}_{\theta_i}$ to $\B$ where $x_i \in_R \{0,1\}$.
\item $\B$ picks $r'\in_R\{+,\times \}$ and measures all received
  qubits in basis $r'$. Let $x'\in\{0,1\}^n$ be the result.
\item\label{it:abort} $\A$ picks a random index set $J \subset_R
  \{1,\ldots,n\} \setminus (I'_{+}\cup I'_{\times})$. Then, it picks $r \in_R \{+,\times \}$, sets $I \assign J \cup I'_r$ and $\ell \assign |I|$, and for each $i \in J$ it measures the corresponding qubit
  in basis $r$. Let $x_i$ be the corresponding outcome, and let
  $x|_{I}$ be the collection of all $x_i$'s with $i \in
  I$. $\A$ announces $r$, $I$, $syn =
  S_{\ell}(x|_{I})$, $\hf\in_R \chf{\ell}$, and $e = b\oplus
  \hf(x|_{I})$.
\item $\B$ recovers $x|_{I}$ from $x'|_{I}$ and
  $syn$, and outputs $a \assign 1$ and $b' \assign e \oplus \hf(x|_{I})$, if
  $r'=r$ and else $a \assign 0$ and $b' \assign 0$.
\end{myprotocol}
\caption{Protocol for EPR-based \RabinOT, BB84 version}\label{fig:BB84eprot}
\end{myfigure}
The security equivalence between \BBqot\ (in the $(\phi,\eta)$-weak
quantum model) and \BBeprqot\ (in the $(\phi,0)$-weak quantum model)
follows along the same lines as in Section~\ref{sec:otprot}. 

\index{sender-security!of quantum \RabinOT}
\begin{theorem}\label{thm:BBqotsec}
  In the $(\phi,\eta)$-weak quantum model, \BBqot\ is $\eps$-secure
  with $\eps$ negligible in $n$ against $\mathfrak{R}_{\gamma}$ for
  any $\gamma < \frac{1-\eta}{4} - \frac{h(\phi)}{2}$ and $n$ large
  enough.
\end{theorem}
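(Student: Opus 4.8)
The plan is to mimic the structure of the proof for the noise-free protocol \qot\ (Theorem~\ref{thm:privacyeprqot}): invoke the security equivalence between \BBqot\ and its purified version \BBeprqot, establish perfect receiver-security exactly as in Proposition~\ref{prop:sec:receiverprivate}, and then concentrate on sender-security against $\mathfrak{R}_\gamma$ using the uncertainty relation of Corollary~\ref{cor:hadamard} together with privacy amplification in the form~\eqref{ddbound}. Correctness and receiver-security are routine and have already been remarked upon in the text, so the bulk of the argument is sender-security, which I would carry out conditioned on a fixed view of $\dB$ after the memory bound applies (so the state in the receiver's register is a fixed state on at most $\gamma n$ qubits, as in Section~\ref{sec:modeldishonestreceiversrabin} adapted to the BB84 case).

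First I would set up the accounting of the index set $I$. Since we work in the $(\phi,0)$-weak quantum model on the purified side, $\A$ holds genuine EPR halves except for the $\eta$-fraction that were multi-qubit emissions (collected in $I'_+\cup I'_\times$), and $\A$ then measures the qubits indexed by $J$ in a uniformly random basis $r$. By Chernoff (Lemma~\ref{lem:chernoff}), the number of ``good'' positions $|J|$ is at least $(1-\eta-o(1))n$ except with negligible probability, and the subset of those measured in the chosen basis has size about $|J|/2 \geq (\frac{1-\eta}{2}-o(1))n =: m$ except with negligible probability. Let $X$ denote the string of $\A$'s outcomes on these $m$ positions and $R$ the basis. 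Crucially, on the multi-qubit positions $I'_r$ the adversary may have full information, but these contribute nothing to the min-entropy bound and are simply absorbed into $\regE$; the honest receiver still recovers $x|_I$ correctly with overwhelming probability because $r'=r$ and the syndrome corrects the noise. The key min-entropy statement is then obtained by applying Corollary~\ref{cor:hadamard} to the $m$-qubit substate (the reduced state of $\A$'s EPR halves on the $J\cap\{i:\theta_i=[+,\times]_{[r]}\}$ positions, which is maximally mixed, entangled with $\dB$'s register): choosing $\lambda$ slightly above $\gamma$ we get an event $\ev$ with $\P[\ev]\geq\frac12-\negl{n}$, independent of $b$, such that $\hmin(X\mid R\!=\!r,\ev)\geq \lambda m$ for the relevant $r$.

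Then I would feed this into privacy amplification. Conditioned on $\ev$, the quantum side-information of $\dB$ consists of the $\gamma n$-qubit memory register plus the classical syndrome $syn=S_\ell(x|_I)$ of length $s<(h(\phi)+\varepsilon)\ell$, with $\ell\leq n$. Inequality~\eqref{ddbound} with $q=\gamma n$, $s<(h(\phi)+\varepsilon)n$ and $\hmin(X\mid R\!=\!r,\ev)\geq\lambda m\geq\lambda\frac{1-\eta}{2}n$ (up to lower-order terms) shows that $F(X)$ is negligibly close to a uniform bit independent of $(F,syn,\regE)$, hence $e=b\oplus F(X)$ hides $b$: $\dist{\rho_{B\dB|\ev},\rho_B\otimes\rho_{\dB|\ev}}\leq\negl{n}$. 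The exponent is positive precisely when $\lambda\frac{1-\eta}{2}>\gamma+h(\phi)$, i.e., since $\lambda$ can be taken arbitrarily close to $1$ while keeping $\gamma+\sp<$ the relevant threshold, this works out to the stated bound $\gamma<\frac{1-\eta}{4}-\frac{h(\phi)}{2}$ after choosing the slack parameters $\varepsilon,\sp$ small enough. Collecting the $\negl{n}$ terms from the Chernoff events, from Corollary~\ref{cor:hadamard}, and from~\eqref{ddbound} gives the claimed negligible $\eps$.

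\textbf{Main obstacle.} The delicate point is the bookkeeping of all the error contributions and of the index sets: one must verify that the multi-qubit positions $I'_r$ (where the adversary may cheat freely) do not spoil the min-entropy bound — they are harmless only because we apply Corollary~\ref{cor:hadamard} to the \emph{remaining} $m\approx\frac{1-\eta}{2}n$ genuinely-EPR positions and treat everything else as part of $\regE$ — and that the syndrome leakage $s\approx h(\phi)\ell$ is subtracted correctly against the right length $\ell$ rather than $n$. Threading the Chernoff estimate so that $\ell$ (and hence $m$) is linear in $n$ except with negligible probability, and checking that the final exponent $\tfrac12(\lambda m - \gamma n - s - 1)$ is linear and positive under the stated constraint, is the part that requires care; the rest is a direct transcription of the proof of Theorem~\ref{thm:privacyeprqot}.
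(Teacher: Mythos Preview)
Your overall approach is exactly the paper's: pass to \BBeprqot, restrict attention to the genuine EPR positions, invoke Corollary~\ref{cor:hadamard} there, and finish with~\eqref{ddbound}. However, two pieces of bookkeeping are off, and they do not cancel the way you assert. First, Corollary~\ref{cor:hadamard} only gives $\hmin(X\mid R\!=\!r,\ev)\geq\lambda m$ for $\lambda<\tfrac12$; the parameter $\lambda$ cannot be ``taken arbitrarily close to~$1$''. Second, the syndrome length is $s<(h(\phi)+\varepsilon)\ell$ with $\ell=|I|\approx n/2$, so $s\approx h(\phi)n/2$, not $h(\phi)n$. Your displayed inequality $\lambda\tfrac{1-\eta}{2}>\gamma+h(\phi)$ with $\lambda\to 1$ would yield $\gamma<\tfrac{1-\eta}{2}-h(\phi)$, which is \emph{twice} the claimed bound and hence false in general. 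With both corrections ($\lambda<\tfrac12$ and $s\approx h(\phi)n/2$) the condition becomes $\tfrac12\cdot\tfrac{1-\eta}{2}>\gamma+\tfrac{h(\phi)}{2}$, which is the stated bound.

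There is also a minor misreading of \BBeprqot: the index set $J$ is already a random \emph{subset} (each EPR position included with probability $\tfrac12$) chosen \emph{before} the basis $r$, so $|J|\approx(1-\eta)n/2$ directly, and all of $J$ is measured in basis $r$; there is no further intersection with $\{i:\theta_i=[+,\times]_{[r]}\}$. This matters because Corollary~\ref{cor:hadamard} requires a \emph{fixed} state measured in a \emph{random} basis, which is precisely what you get on the $|J|$ qubits of $\A$'s EPR halves. The paper makes this explicit by setting $n'=|J|$, rescaling $\gamma'=2\gamma/(1-\eta)$ so that the memory is $\gamma' n'$ qubits, and then applying Corollary~\ref{cor:hadamard} with $\lambda=\gamma'+\kappa<\tfrac12$; the two constraints $\kappa(1-\eta)>h(\phi)$ and $\kappa<\tfrac12-\gamma'$ are simultaneously satisfiable exactly when $\gamma<\tfrac{1-\eta}{4}-\tfrac{h(\phi)}{2}$.
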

\begin{sketch}
  It remains to show that \BBeprqot\ is sender-secure against
  $\mathfrak{R}_{\gamma}$ (in the $(\phi,0)$-weak quantum model).  The
  reasoning goes analogous to the proof of
  Theorem~\ref{thm:privacyeprqot}, except that we restrict our
  attention to those $i$'s which are in $J$.  By Chernoff's inequality
  (Lemma~\ref{lem:chernoff}), $\ell$ lies within $(1\pm\varepsilon)n/2$ and $|J|$
  within $(1-\eta\pm\varepsilon)n/2$ except with negligible
  probability. In order to make the proof easier to read, we assume
  that $\ell = n/2$ and $|J| = (1-\eta)n/2$, and we also treat the
  $\varepsilon$ occurring in the rate of the code $C_{\ell}$ as zero.
  For the full proof, we simply need to carry the $\varepsilon$'s
  along, and then choose them small enough at the end of the proof.

Write $n' = |J| = (1-\eta)n/2$, and let $\gamma'$ be such that $\gamma n =
  \gamma' n'$, i.e., $\gamma' = 2\gamma/(1-\eta)$. 
Assume $\kappa > 0$ such that $\gamma' + \kappa < \frac12$, where we make sure later that such $\kappa$ exists. 
It then follows from Corollary~\ref{cor:hadamard} that there exists an 
event $\ev$ such that $\P[\ev] \geq \frac12 - \negl{n'} = \frac12 - \negl{n}$ and 
$$
\H_{\infty}\big(X|_J \big| R\!=\!r,\ev\big) \geq (\gamma' +
 \kappa) n' 
 = \gamma n + \kappa (1-\eta)n/2 \, .
$$
By Inequality~(\ref{ddbound}), it remains to argue that this is larger than $q + s = \gamma n + h(\phi) n/2$, i.e., 
$$
\kappa (1-\eta) > h(\phi) \, ,
$$ 
where $\kappa$ has to satisfy 
$$
\kappa < \frac12 -
\gamma' = \frac12 - 2\gamma/(1-\eta) \, .
$$
This can obviously be achieved (by choosing $\kappa$ appropriately) if and only if
the claimed bound on $\gamma$ holds. 
\end{sketch}

\section{\RabinOT of Strings} \label{sec:extensionrot}
In this chapter, we only considered \RabinOT of one bit per invocation.
Our technique can easily be extended to deal with \RabinlStringOT of
$\ell$-bit strings, essentially by using a class of two-universal
functions \index{two-universal hashing} with range $\set{0,1}^{\ell
  n}$ rather than $\set{0,1}$, for some $\ell$ with $\gamma + \ell <
\frac{1}{2}$ (respectively $<\frac{1-\eta}{4}-\frac{h(\phi)}{2}$ for
\BBqot).


\index{oblivious transfer!\RabinOT|)}


\chapter[\OT in the Bounded-Quantum-Storage Model]
{\OT in the Bounded-Quantum-Storage Model}\label{chap:12OT}
\index{oblivious transfer!\OT|(}
In the last chapter, we have shown how to construct \RabinOT securely
in the \index{bounded-quantum-storage model}bounded-quantum-storage
model. Although other flavors of \pOT
can be constructed from \RabinOT using standard reductions, a more
direct approach gives a better ratio between storage-bound and
communication-complexity.

In this chapter, we present an efficient protocol for \onetwo\
Oblivious Transfer secure in the bounded-quantum-storage model. The protocol
is very close to Wiesner original \index{conjugate coding}''conjugate-coding'' protocol
\cite{Wiesner83} from the early 70's. The uncertainty relation from
Section~\ref{sec:morerelation} will be extensively used for proving the
security. 

The results of this section appeared in~\cite{DFRSS07}.

\section{The Definition} 
In \lStringOT, the sender Alice sends two $\ell$-bit strings $S_0,
S_1$ to the receiver Bob in such a way that Bob can choose which
string he wants to receive, but does not learn anything about the
other. Alice does not get to know which string Bob has chosen. As
explained in Chapter~\ref{chap:ClassicalOT}, the common way to build
\lStringOT is by constructing a protocol for
\mbox{(Sender-)}Randomized \lStringOT, which then can easily be
converted into an ordinary \lStringOT. \RandlStringOT essentially
coincides with ordinary \lStringOT, except that the two strings $S_0$
and $S_1$ are not \emph{input} by the sender but generated uniformly
at random during the protocol and \emph{output} to the sender.



For the formal definition of the security requirements for a quantum
protocol for \RandlStringOT, we translate the classical
Definition~\ref{def:RandOT} to the quantum setting using a similar
notation as for the definition of \RabinOT in
Section~\ref{sec:def-rabin-obliv-transf}: Let $C$ denote the binary
random variable describing receiver \R's choice bit, let $S_0, S_1$
denote the $\ell$-bit long random variables describing sender $\S$'s
output strings, and let $Y$ denote the $\ell$-bit long random variable
describing $\R$'s output string (supposed to be $S_C$).  Furthermore,
for a fixed candidate protocol for \RandlStringOT, and for a fixed
input distribution for $C$, the overall quantum state in case of a
dishonest sender $\dS$ is given by the ccq-state $\rho_{C Y \dS}$.
Analogously, in the case of a dishonest receiver $\dR$, we have the
ccq-state $\rho_{S_0 S_1 \dR}$.


\begin{definition}[\RandlStringOT] \label{def:Rl12OT}
  An $\varepsilon$-secure {\em \RandlStringOT} is a quantum protocol
  between $\S$ and $\R$, with $\R$ having input $C \in \{0,1\}$ while
  $\S$ has no input, such that for any distribution of $C$, the
  following holds:
\begin{description}
\item[\boldmath$\eps$-Correctness:] If\ \S\ and\ \R\ follow the
  protocol, then $\S$ gets output strings $S_0,S_1 \in \{0,1\}^{\ell}$ and
  $\R$ gets $Y = S_C$ except with probability~$\varepsilon$.
\index{correctness!of quantum \RandlStringOT}
\item[\boldmath$\varepsilon$-Receiver-security:] If $\R$ is honest, then
  for any $\dS$, there exist\footnote{Recall from
    Section~\ref{sec:qit}: Given a cq-state $\rho_{X \regE}$,
    by saying that there exists a random variable $Y$ such that
    $\rho_{XY\regE}$ satisfies some condition, we mean that $\rho_{X
      \regE}$ can be understood as $\rho_{X\regE} =
    \tr_Y(\rho_{XY\regE})$ for a ccq-state $\rho_{XY\regE}$ that
    satisfies the required condition.} random variables $S'_0$ and
  $S'_1$ such that $\Pr\big[Y=S'_C\big] \geq 1-\varepsilon$ and
$$
  \dist{ \rho_{C S'_0 S'_1 \dS} , \rho_C \otimes \rho_{S'_0 S'_1 \dS}
  } \leq \eps \, .
$$
\item[\boldmath$\varepsilon$-Sender-security:] If $\S$ is honest, then for
  any $\dR$, there exists a random variable $D \in \set{0,1}$ such that
$$
  \dist{ \rho_{S_{1-D} S_{D} D \dR} , \id \otimes \rho_{S_{D}
      D \dR} } \leq \eps \, .
$$
\end{description}
\index{sender-security!of quantum \RandlStringOT}
\index{receiver-security!of quantum \RandlStringOT}
If any of the above holds for $\varepsilon = 0$, 
then the corresponding property is said to hold {\em perfectly}. 
If one of the properties only holds with respect to a restricted class
$\mathfrak{S}$ of \dS's respectively $\mathfrak{R}$ of \dR's, then this property
is said to hold and the protocol is said to be secure {\em against}
$\mathfrak{S}$ respectively~$\mathfrak{R}$.
\end{definition}

Receiver-security, as defined here, implies that whatever a dishonest
sender does is as good as the following: generate the ccq-state
$\rho_{S_0' S'_1 \dS}$ independently of $C$, let $\R$ know $S'_C$, and
output $\rho_{\dS}$.  On the other hand, sender-security implies that
whatever a dishonest receiver does is as good as the following:
generate the ccq-state $\rho_{S_{D} D \dR}$ arbitrarily, let $\S$
know $S_{D}$ and an independent uniformly distributed $S_{1-D}$, and
output $\rho_{\dR}$.  In other words, a protocol satisfying
Definition~\ref{def:Rl12OT} is a secure implementation of the natural
\RandlStringOT ideal functionality, except that it allows a dishonest
sender to influence the distribution of $S_0$ and $S_1$, and the
dishonest receiver to influence the distribution of the string of his
choice. This is in particular good enough for constructing a standard
\lStringOT in the straightforward way.

We would like to point out the importance of requiring the existence
of $S_0'$ and $S_1'$ in the formulation of receiver-security in a
quantum setting: requiring only that the sender learns no information
on $C$, as is sufficient in the classical setting
(see~e.g.~\cite{CSSW06}), does not prevent a dishonest sender from
obtaining $S_0,S_1$ by a suitable measurement {\em after} the
execution of the protocol in such a way that he can choose $S_0 \oplus
S_1$ at will, and $S_C$ is the string the receiver has obtained in the
protocol. This would for instance make the straightforward
construction of a \index{bit commitment}bit commitment
\footnote{The committer sends two random bits of parity
  equal to the bit he wants to commit to, the verifier chooses to
  receive at random one of those bits.} based on
\OT[2] insecure.

\section{The Protocol}\label{sec:otprot12}
We present a quantum protocol for \RandlStringOT that will be shown
perfectly receiver-secure against any sender and statistically
sender-secure against any quantum-memory-bounded receiver. The first
two steps of the protocol are identical to Wiesner's ``conjugate
coding'' \index{conjugate coding} protocol~\cite{Wiesner83} from circa 1970 for
\emph{``transmitting two messages either but not both of which may be
  received''}.

The simple protocol is described in Figure~\ref{fig:Randlqot}.  
The sender $\S$ sends random \index{BB84 coding scheme}BB84 states to the receiver $\R$, who
measures all received qubits according to his choice bit $C$. $\S$
then picks randomly two functions from a fixed 
\index{two-universal hashing} \univ class of hash
functions $\chf{n}$ from $\set{0,1}^n$ to $\set{0,1}^{\ell}$, where $\ell$
is to be determined later, and applies them to the bits encoded in the
$+$-basis respectively the bits encoded in $\times$-basis to obtain the
output strings $S_0$ and $S_1$. Note that we may apply a function $\hf
\in \chf{n}$ to a $n'$-bit string with $n' < n$ by padding it with
zeros\footnote{Recall the notation for padding $x \pad_I$ introduced
  in Section~\ref{sec:notation}.}
(which does not decrease its entropy).
$\S$ announces the encoding bases and the hash
functions to the receiver who then can compute $S_C$. Intuitively, a
dishonest receiver who cannot store all the qubits until the right
bases are announced will measure some qubits in the wrong basis and thus cannot learn both strings simultaneously.  

\begin{myfigure}{h}
\begin{myprotocol}[Let $c$ be $\R$'s choice bit.]{\Randlqot}
\item $\S$ picks $x \in_R \nbit$ and $\theta \in_R \{+,\times \}^n$ 
and sends $\ket{x_1}_{\theta_1},
   \ket{x_2}_{\theta_2}, \ldots, \ket{x_n}_{\theta_n}$ to $\R$. 
\item $\R$ measures all qubits in basis $[+,\times]_{c}$. 
Let $x' \in \{0,1\}^n$ be the result.
\item $\S$ picks two hash functions $\hf_0,\hf_1 \in_R \chf{n}$,
  announces $\theta$ and $\hf_0, \hf_1$ to $\R$, and outputs $s_0
  \assign \hf_0(x \pad_{I_0})$ and $s_1 \assign \hf_1(x \pad_{I_1})$ where
  $I_b \assign \Set{i}{\theta_i \!=\! [+,\times]_b}$. \label{bound12}
\item $\R$ outputs $s_{c} = \hf_{c}(x' \pad_{I_{c}})$.
\end{myprotocol}
\caption{Quantum Protocol for \boldmath\RandlStringOT.}\label{fig:Randlqot}
\end{myfigure}

We would like to stress that although protocol description and
analysis are designed for an ideal setting with perfect noiseless
quantum communication and with perfect sources and detectors, all our
results can easily be extended to a more realistic noisy setting along the
same lines as in the previous Chapter~\ref{chap:RabinOT}.
\index{weak quantum model}

It is clear by the non-interactivity of \Randlqot\ that a
dishonest sender cannot learn anything about the receiver's choice
bit. Below, we show \Randlqot\ perfectly receiver-secure
according to Definition~\ref{def:Rl12OT}.
\begin{proposition}\label{prop:sec:oblivious}
\Randlqot\ is perfectly receiver-secure.
\end{proposition}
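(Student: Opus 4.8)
The plan is to prove receiver-security of \Randlqot\ directly from Definition~\ref{def:Rl12OT}, following closely the structure of the proof of Proposition~\ref{prop:sec:receiverprivate} for \qot. Since the protocol is non-interactive with $\R$ sending nothing to $\S$, it is intuitively clear that a dishonest $\dS$ learns nothing about the choice bit $C$; the only real work is to exhibit the two random variables $S_0'$ and $S_1'$ required by the definition and to argue that the joint state $\rho_{C\dS}$ is exactly a product $\rho_C\otimes\rho_{\dS}$ (so $\eps=0$).

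First I would set up the purified/deferred-measurement picture on $\R$'s side. In the real protocol $\R$ picks basis $[+,\times]_c$, measures all incoming qubits, stores $x'$, and later uses $\theta, \hf_0,\hf_1$ to compute $s_c=\hf_c(x'\pad_{I_c})$. I would argue that this is indistinguishable, from $\dS$'s point of view, from a modified experiment in which an unbounded honest receiver first receives $\theta$ from $\dS$ in Step~\ref{bound12}, and only \emph{then} measures the qubits: he measures the qubits in $I_0$ in the $+$-basis and the qubits in $I_1$ in the $\times$-basis, obtaining a full string $\hat x$, and sets $S_0'\assign\hf_0(\hat x\pad_{I_0})$, $S_1'\assign\hf_1(\hat x\pad_{I_1})$, then outputs $Y\assign S_c'$. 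The only difference between the two experiments is \emph{when} and \emph{in which bases} $\R$ measures the qubits not in $I_c$; since in the real protocol those measurement outcomes are discarded (they do not affect $s_c$), the final state $\rho_{C\dS}$ and the value $Y$ delivered are identically distributed in both experiments. (Here one uses, as in Proposition~\ref{prop:sec:receiverprivate}, that measurements in different bases on disjoint qubits commute with $\dS$'s actions, and that a measurement whose outcome is thrown away is equivalent to no measurement.)

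In the modified experiment the conclusion is immediate: by construction $Y = S_c' = S_C'$ with probability $1$, so $\Pr[Y=S_C']\geq 1-\eps$ holds with $\eps=0$; and since $\dS$ sends nothing depends on $c$ — $\dS$'s register and $\theta,\hf_0,\hf_1$ are generated before $\R$ does anything with $c$, and $S_0',S_1'$ are functions of $\hat x,\hf_0,\hf_1$ which are all independent of $C$ — we get $\rho_{CS_0'S_1'\dS}=\rho_C\otimes\rho_{S_0'S_1'\dS}$, hence $\dist{\rho_{CS_0'S_1'\dS},\rho_C\otimes\rho_{S_0'S_1'\dS}}=0$. This is exactly perfect receiver-security. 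As in protocol \qot, I would also recall the abort convention (if $\dS$ misbehaves or sends ill-formatted data, $\R$ outputs a default) so that $\R$'s output is always well defined; this does not affect the argument since the default is again independent of $c$.

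The main (and only) subtlety to get right is the equivalence-of-experiments step: one must be careful that deferring $\R$'s measurement really produces the \emph{same} state $\rho_{C\dS}$ — i.e. that $\dS$, who might send entangled qubits and keep a quantum register correlated with them, cannot detect the difference. The clean way to phrase this is that $\dS$'s actions act on $\dS$'s register and the channel, while $\R$'s measurement acts only on the qubits he receives; the two commute, so measuring immediately versus after receiving the classical message $\theta$ yields the same reduced state on $(C,\dS)$, and the discarded outcomes can be traced out harmlessly. I expect everything else to be routine bookkeeping, and the proposition can be stated with $\eps=0$ uniformly.
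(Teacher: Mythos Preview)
Your proposal is correct and follows essentially the same approach as the paper: both construct $S_0',S_1'$ via a modified experiment where an unbounded receiver defers measurement until after learning $\theta$, measures qubit $i$ in basis $\theta_i$, and sets $S_b'=\hf_b(\hat x\pad_{I_b})$; both then argue that the only difference from the real protocol lies in the timing and basis of measurements on positions in $I_{1-C}$, whose outcomes are discarded anyway, so the joint state with $\dS$ is unchanged. The paper phrases the last step as verifying $\hat\rho_{\hat C\hat Y\dS}=\rho_{CY\dS}$ between the two experiments, but this is exactly your equivalence-of-experiments argument.
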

\begin{proof}
  Recall that the ccq-state $\rho_{C Y \dS}$ is defined by the experiment
  where $\dS$ interacts with the honest memory-bounded $\R$. We now
  define (in a new Hilbert space) the ccccq-state $\hat{\rho}_{\hat{C}
    \hat{Y} \hat{S}'_0 \hat{S}'_1 \dS}$ by a slightly different
  experiment: We let $\dS$ interact with a receiver with {\em
    unbounded} quantum memory, which waits to receive $\theta$ and
  then measures the $i$-th qubit in basis $\theta_i$ for
  $i=1,\ldots,n$. Let $X$ be the resulting string, and define
  $\hat{S}'_0 = f_0(X \pad_{I_0})$ and $\hat{S}'_1 = f_1(X
  \pad_{I_1})$.  Finally, sample $\hat{C}$ according to $P_C$ and set
  $\hat{Y} = \hat{S}'_C$.  It follows by construction that
  \smash{$\Pr\!\big[\hat{Y} \!\neq\! \hat{S}'_{\hat{C}}\big] = 0$} and
  $\hat{\rho}_{\hat{C}}$ is independent of $\hat{\rho}_{\hat{S}'_0
      \hat{S}'_1 \dS}$.  It remains to argue that $\hat{\rho}_{\hat{C}
      \hat{Y} \dS} = \rho_{C Y \dS}$, so that corresponding $S'_0$ and
    $S'_1$ also exist in the original experiment. But this is
    obviously satisfied since the only difference between the two
    experiments is when and in what basis the qubits at position $i
    \in I_{1-C}$ are measured, which, once $C$ is fixed, cannot
    influence $\rho_{Y \dS}$ respectively $\hat{\rho}_{\hat{Y} \dS}$.
\end{proof}


\section{Security Against Dishonest Receivers} \label{sec:modeldishonestreceivers}
As in Section~\ref{sec:modeldishonestreceiversrabin}, we model
\index{dishonest receiver!in \Randlqot}dishonest receivers 
in \Randlqot\ under the assumption that the
maximum size of their quantum storage is bounded.  Such adversaries
are only required to have bounded quantum storage when Step
\ref{bound} in \Randlqot\ is reached. Before and after that, the
adversary can store and carry out arbitrary quantum computations
involving any number of qubits. Apart from the restriction on the size
of the quantum memory available to the adversary, no other assumption
is made. In particular, the adversary is not assumed to be
computationally bounded and the size of its classical memory is not
restricted.

\begin{definition}\label{boundedstorage12}
  The set $\mathfrak{R}_{\gamma}$ denotes all possible quantum
  dishonest receivers $\dR$ in \Randlqot\ which
  have quantum memory of size at most $\gamma n$ when Step~\ref{bound12}
  is reached.
\end{definition}

First, we consider a purified version of \Randlqot, \eprRandlqot\ in
Figure~\ref{fig:eprRandlqot}, where $\S$ prepares an EPR pair
\smash{$\ket{\Phi} = \frac{1}{\sqrt{2}}(\ket{00}+\ket{11})$} instead
of $\ket{x_i}_{\theta_i}$ and sends one part to the receiver while
keeping the other. Only when Step~\ref{step:afterbound} is reached and $\dR$'s
quantum memory is bound to $\gamma n$ qubits, $\S$ measures her qubits
in basis $\theta \in_R \set{+,\times}^n$. It is easy to see that for
any $\dR$, \eprRandlqot\ is equivalent to the original \Randlqot, and
it suffices to prove sender-security for the former.  Indeed, $\S$'s
choices of $\theta$ and $\hf_0,\hf_1$, together with the measurements
all commute with $\R$'s actions.  Therefore, they can be performed
right after Step 1 with no change for $\R$'s view. Modifying
\eprRandlqot\ that way results in \Randlqot.

\begin{myfigure}{h}
\begin{myprotocol}{\eprRandlqot}
\item $\S$ prepares $n$ EPR pairs each in state 
      $\ket{\Omega}=\frac{1}{\sqrt{2}}(\ket{00}+\ket{11})$ 
and sends one half of each pair to $\R$ and keeps the other
  halves.
\item $\R$ measures all qubits in basis $[+,\times]_{c}$. Let $x' \in
  \{0,1\}^n$ be the result. 
\item $\S$ picks random $\theta \in_R \{+,\times \}^n$, and she
  measures the $i$th qubit in basis $\theta_i$. Let $x\in\{0,1\}^n$ be
  the outcome. $\S$ picks two hash functions $\hf_0, \hf_1 \in_R \chf{n}$,
  announces $\theta$ and $\hf_0, \hf_1$ to $\R$ and outputs $s_0
  \assign \hf_0(x \pad_{I_0})$ and $s_1 \assign \hf_1(x \pad_{I_1})$ where
  $I_b \assign \Set{i}{\theta_i \!=\! [+,\times]_b}$. \label{step:afterbound}
\item $\R$ outputs $s_{c} = \hf_{c}(x' \pad_{I_{c}})$.
\end{myprotocol}
\caption{Protocol for EPR-based \boldmath\RandlStringOT.}\label{fig:eprRandlqot}
\end{myfigure}

\begin{theorem}\label{thm:OT}
  \Randqot$^{\ell}$ is $\varepsilon$-secure against
  $\mathfrak{R_\gamma}$ for a negligible (in $n$) $\varepsilon$ if
  there exists $\delta>0$ such that $\gamma n \leq n/4 - 2 \ell - \delta n$.
\end{theorem}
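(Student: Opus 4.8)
The plan is to check the three requirements of Definition~\ref{def:Rl12OT} one at a time. Correctness I would dispose of first, as it is immediate: when both parties are honest, for every $i\in I_c$ the receiver measures in exactly the basis the sender used, so $x'\pad_{I_c}=x\pad_{I_c}$ and hence $s_c=\hf_c(x'\pad_{I_c})=\hf_c(x\pad_{I_c})$, i.e.\ $Y=S_C$ with certainty in the ideal setting (the noisy case being handled exactly as in Chapter~\ref{chap:RabinOT}). Perfect receiver-security is precisely Proposition~\ref{prop:sec:oblivious}. Hence all the content of the theorem lies in sender-security against $\mathfrak{R}_\gamma$, and by the equivalence of \Randlqot\ and \eprRandlqot\ noted above it is enough to prove it for the purified protocol \eprRandlqot.

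For that, I would fix $\dR\in\mathfrak{R}_\gamma$ and analyse \eprRandlqot\ from the moment Step~\ref{step:afterbound} is reached: there the joint state is some $\rho_{SE}$ with $\S$ holding $n$ qubits and the receiver's register $E$ having at most $q:=\gamma n$ qubits, while the classical data $\theta,\hf_0,\hf_1$ and the receiver's earlier operations are determined by how $\rho_{SE}$ was produced and stay available to $\dR$ afterwards. When $\S$ measures her qubits in $\Theta\in_R\set{+,\times}^n$, obtaining $X\in\nbit$, the marginal $P_{X\Theta}$ is precisely what one gets by measuring $\rho_S=\tr_E(\rho_{SE})$ in basis $\Theta$, so the independent-basis uncertainty relation applies. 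Concretely, Corollary~\ref{cor:uncertainty} gives, for every $0<\lambda<\tfrac14$,
\[
\hie{\eps}{X\mid\Theta}\ \geq\ \bigl(\tfrac12-2\lambda\bigr)n\,,\qquad \eps:=2^{-\lambda^4 n/32}\,,
\]
and since $(X_0,X_1)$ with $X_b:=X\pad_{I_b}$ determines $X$ given $\Theta$ and conversely, also $\hie{\eps}{X_0 X_1\mid\Theta}\geq(\tfrac12-2\lambda)n$.

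Next I would run the min-entropy-splitting / quantum-privacy-amplification argument sketched at the end of Section~\ref{sec:UOT}, postponing $E$ to the very last step. By Corollary~\ref{cor:ESL} with $Z:=\Theta$ there is a binary $D$, a function of $(X_1,\Theta)$, with $\hie{\eps+\eps'}{X_{1-D}\mid\Theta D}\geq(\tfrac14-\lambda)n-1-\log(1/\eps')$ for any $\eps'>0$. Since $(\hf_0,\hf_1)$ is independent of $(X,\Theta,D)$, further conditioning on $\hf_D$ changes nothing, and conditioning on the $\ell$-bit string $S_D=\hf_D(X_D\pad_{I_D})$ costs at most $\ell+\log(1/\eps'')$ by the chain rule (Lemma~\ref{lem:chain}) together with monotonicity of smooth min-entropy; thus, with $U:=(\Theta,D,\hf_D,S_D)$,
\[
\hie{\eps+\eps'+\eps''}{X_{1-D}\mid U}\ \geq\ \bigl(\tfrac14-\lambda\bigr)n-\ell-1-\log(1/\eps')-\log(1/\eps'')\,.
\]
I would then invoke privacy amplification against quantum adversaries (Corollary~\ref{thm:pasmooth}) with classical value $X_{1-D}\in\nbit$, classical side information $U$, quantum register $E$ of $q=\gamma n$ qubits, and hash function $F:=\hf_{1-D}$ --- legitimate because, conditioned on the value of $D$, $\hf_{1-D}$ is a fresh uniform member of $\chf{n}$ independent of $X_{1-D},U,E$, so the corollary is applied for each value of $D$ and averaged --- and use $\hf_{1-D}(X_{1-D})=S_{1-D}$ to obtain
\[
\dist{\rho_{S_{1-D}\,\hf_{1-D}\,U\,E}\,,\ \I\otimes\rho_{\hf_{1-D}\,U\,E}}\ \leq\ \tfrac12\,2^{-\frac12\bigl(\hie{\eps+\eps'+\eps''}{X_{1-D}\mid U}-\gamma n-\ell\bigr)}+\eps+\eps'+\eps''\,.
\]
Plugging in the entropy bound, the exponent is $-\tfrac12\bigl((\tfrac14-\lambda-\gamma)n-2\ell-1-\log(1/\eps')-\log(1/\eps'')\bigr)$; the hypothesis $\gamma n\leq n/4-2\ell-\delta n$ gives $(\tfrac14-\gamma)n-2\ell\geq\delta n$, so taking $\lambda:=\delta/2$ (which is $<\tfrac14$ since we may assume $\delta<\tfrac12$) and $\eps'=\eps'':=2^{-\delta n/8}$ makes the whole bracket at least $\delta n/4-1$ and the right-hand side at most $\tfrac1{\sqrt2}2^{-\delta n/8}+2^{-\delta^4 n/512}+2^{1-\delta n/8}$, which is negligible in $n$. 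Finally, since $\dR$'s complete final state consists of $E$ together with $\theta=\Theta$ and $(\hf_0,\hf_1)=(\hf_D,\hf_{1-D})$, the triple $(S_D,D,\dR)$ carries exactly the information $(\hf_{1-D},U,E)$, so the last display is precisely $\dist{\rho_{S_{1-D}S_D D\,\dR}\,,\,\id\otimes\rho_{S_D D\,\dR}}\leq\negl{n}$, which is $\eps$-sender-security against $\mathfrak{R}_\gamma$ with $\eps:=\negl{n}$.

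I expect the main obstacle to be the bookkeeping that keeps the classical side information $U=(\Theta,D,\hf_D,S_D)$ --- on which the classical min-entropy-splitting lemma and the classical chain rule act --- cleanly separated from the quantum register $E$, which is brought in only at the privacy-amplification step; in particular one must verify that $D$ may legitimately be taken to be a function of the classical outcome $X_1$ (so that Corollary~\ref{cor:ESL} applies as stated) and that $\hf_{1-D}$ still behaves as a freshly sampled two-universal hash function once $D$ is fixed, and one must track how the smoothing parameters combine through the splitting, the chain rule and the final averaging over $D$. The substantive ingredient that makes everything fit is the uncertainty relation of Section~\ref{sec:morerelation}: it supplies the $\approx n/2$ bits of smooth min-entropy in $X$ which, halved by min-entropy splitting and then decreased by $\gamma n$ (the memory bound) and by $2\ell$ (the length of $S_D$ plus the length of the extracted string $S_{1-D}$), leaves a gap linear in $n$ exactly when $\gamma n<n/4-2\ell$.
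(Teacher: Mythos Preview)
Your proposal is correct and follows essentially the same route as the paper: reduce to \eprRandlqot, apply the uncertainty relation (Corollary~\ref{cor:uncertainty}) to get $\hie{\eps}{X_0X_1\mid\Theta}\gtrsim n/2$, use min-entropy splitting to produce $D$, peel off $\hf_D$ and $S_D$ via independence and the chain rule, and finish with privacy amplification against the $\gamma n$-qubit register. The only cosmetic differences are that the paper invokes Lemma~\ref{lemma:ESL} directly (keeping $D$ on the left until the single chain-rule step that removes $D S_D$ together) whereas you use Corollary~\ref{cor:ESL} and a second chain-rule application with an extra smoothing parameter~$\eps''$; and you make explicit the point that $\hf_{1-D}$ remains a fresh two-universal hash once $D$ is fixed, which the paper leaves implicit.
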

The proof has the same structure as the security-proof for the
reduction \OTUOT\ described at the end of Section~\ref{sec:UOT}. The
\index{uncertainty relation}uncertainty relation
 from Section~\ref{sec:morerelation} lower bounds
the dishonest receiver's (smooth) min-entropy about the sender's $X$.
Hence, we have an (imperfect) \UOT{\infty}{\frac{n}{2}}{\set{0,1}^n}
from which we get an ordinary \RandlStringOT via the 
\index{min-entropy splitting lemma} min-entropy
splitting lemma and \index{privacy amplification}privacy amplification
 against quantum adversaries.
\begin{proof}
  Consider the ccq-state $\rho_{X \Theta \dR}$ in \eprRandlqot\ after
  $\dR$ has measured all but $\gamma n$ of his qubits, where $X$
  describes the outcome of the sender measuring her part of the state
  in random basis $\Theta$. Also, let $\Hf_0$ and $\Hf_1$ be the
  random variables that describe the random and independent choices of
  $\hf_0,\hf_1 \in \chf{n}$.  Finally, let $X_b$ be $X_b =
  X\pad_{\Set{i}{\Theta_i = [+,\times]_b}}$ (padded with zeros so it
  makes sense to apply $\Hf_b$).
  
  Choose $\lambda, \kappa$ all positive, but small enough
  such that (for large enough $n$) $$\gamma n \leq (1/4 - \lambda - \lambda' - \kappa) n -
  1 -2\ell .$$ From the uncertainty relation
  (Corollary~\ref{cor:uncertainty}), we know that
  $\hie{\eps}{X_0 X_1 | \Theta} \geq (1/2 - 2\lambda )n$ for $\eps$ exponentially small in $n$.
  Therefore, by the Min-Entropy Splitting Lemma~\ref{lemma:ESL}, there
  exists a binary random variable $D$ such that
\[\hie{\eps}{X_{1-D} D | \Theta} \geq (1/4 - \lambda)n 
.\] We denote by the
random variables $F_0,F_1$ Alice's choices of hash functions. It is
clear that we can condition (for free) on the independent $F_{D}$. We
write $S_{D} = F_{D}(X_{D})$, set $\eps'=2^{-\lambda' n}$, and use the
chain rule (Lemma~\ref{lem:chain}) to condition on $D,S_D$ as well.
\begin{align*}
  \hie{\eps+\eps'}{X_{1-D} &| \Theta F_D D S_{D}}\\
  & \geq \hie{\eps}{X_{1-D} D S_{D} | \Theta F_{D} } -
  \hmax(D S_{D} | \Theta F_{D} ) - \lambda'n \\
&\geq (1/4 -\lambda - \lambda')n - 1 -\ell \\
&\geq \gamma n +\ell +\kappa n,
\end{align*}
by the choice of $\lambda, \lambda', \kappa$.

\hspace{5mm}
We can now apply \index{privacy amplification}privacy amplification
 in form of Corollary~\ref{thm:pasmooth}
to obtain
\begin{align*}
\dist{ &\rho_{S_{1-D} F_{1-D}  \Theta F_{D} D S_D \dR}, 
\id \otimes \rho_{F_{1-D}  \Theta F_{D}  D S_D \dR} } \\
&\quad \leq \frac12 2^{-\frac12 \left( \hie{\eps+\eps'}{ X_{1-D} |
 \Theta S_{D} F_{D}D} - \gamma n - \ell \right)} + (\eps+\eps')\\
&\quad \leq \frac12 2^{-\frac12 \kappa n} + \eps + \eps',
\end{align*} 
which is negligible. This shows $\eps$-sender-security according to
Definition~\ref{def:Rl12OT}.
\end{proof}

\section{Extensions}
\subsection{\lStringOT with Longer Strings}
It is possible to extend recent techniques by Wullschleger
\cite{Wullschleger07} described in Section~\ref{sec:comparison} to the
quantum case and hence, the security of \Randqot$^{\ell}$ can be
proven against $\mathfrak{R_\gamma}$ if there exists $\delta>0$ such
that $\gamma n \leq n/4 - \ell -\delta n$.

\subsection{Weakening the Assumptions} \label{sec:weakmodel12ot}
As described in Section~\ref{sec:weakass} for \RabinOT, we can extend
protocol \Randqot to work in the $(\phi,\eta)$-weak quantum model.
\index{weak quantum model} \index{error correction} To enable the
receiver to recover from errors in the transmission, the sender \S\ 
additionally sends error-correcting information in
Step~\ref{step:afterbound}. The players agree beforehand on an
efficiently decodable error-correcting code of length $n/2$ with
syndrome length $s$ roughly $h(\phi)n/2$ as in
Section~\ref{sec:weakass}. Then, \S\ sends along the two syndromes of
$S(x|_{I_0})$ and $S(x|_{I_1})$ (where the $x|_{I_b}$ are padded with
0s or truncated to length $n/2$). It can be argued as for \RabinOT
that this will reduce the min-entropy by the length $s$ of the
syndrome and hence, we can show 
\index{sender-security!of quantum \Randqot} sender-security
 of this protocol
against the class of receivers $\mathfrak{R_\gamma}$ with $\gamma$
such that there exists $\delta >0$ with
\[ \gamma n \leq \left( \frac{1-\eta}{4} - \frac{h(\phi)}{2} \right)n - 2 \ell -
\delta n \, .
\]

\subsection{Reversing the Quantum Communication}
In order to illustrate the versatility of our security analysis, we
show that the proofs carry easily over to a protocol where the
direction of the quantum communication is reversed. In the protocol
described in Figure~\ref{fig:RandlqotReversed}, the receiver \R\ of
the \RandOT sends $n$ qubits, encoded in the basis determined by his
choice bit.  The sender of the \RandOT $\S$ measures them in a random
basis.  The players then proceed as in \Randqot.
\begin{myfigure}{h}
\begin{myprotocol}[Let $c$ be $\R$'s choice bit.]{\Randlqot}
\item $\R$ picks $x' \in \set{0,1}^n$ at random and sends
  $\ket{x'}_{\theta'}$ to $\R$ where $\theta'=[+,\times]_{c}$.
\item $\S$ picks $\theta \in_R \{+,\times \}^n$ 
and measures the received qubits in basis $\theta$. Let $x \in
\set{0,1}^n$ be the result.
\item $\S$ picks two hash functions $\hf_0,\hf_1 \in_R \chf{n}$,
  announces $\theta$ and $\hf_0, \hf_1$ to $\R$, and outputs $s_0
  \assign \hf_0(x \pad_{I_0})$ and $s_1 \assign \hf_1(x \pad_{I_1})$ where
  $I_b \assign \Set{i}{\theta_i \!=\! [+,\times]_b}$. 
\item $\R$ outputs $s_{c} = \hf_{c}(x' \pad_{I_{c}})$.
\end{myprotocol}
\caption{\Randlqot\ with Reversed Quantum Communication.}\label{fig:RandlqotReversed}
\end{myfigure}

It is clear by construction that the protocol is perfectly
correct. $\eps$-Sender-security against dishonest receivers in $\mathfrak{R}_\gamma$ can
be argued as in Theorem~\ref{thm:OT} above by observing that the
uncertainty relation applies to any $n$-qubit state of the honest
sender which is measured in a random basis and about which the
dishonest receiver holds at most $\gamma n$ qubits of information.

For the security of an honest receiver against a dishonest sender, we
can show the existence of the two input strings as in
Proposition~\ref{prop:sec:oblivious} above by letting the sender
interact with an unbounded receiver. In an error-free model, it
further holds that the sender cannot infer the basis in which the
qubits are encoded and therefore does not learn any information about
the receiver's choice bit. However, in a more realistic setting with
multi-pulse emissions, this coding scheme with reversed communication
is highly insecure, as a malicious sender can determine the encoding
basis from a multi-pulse qubit. The same problem occurred for the
\RabinOT-protocol \qot\ from the last chapter.
\index{weak quantum model} \index{reversed quantum communication}
\index{multi-qubit emission}

\index{oblivious transfer!\OT|)}


\chapter{Quantum Bit Commitment} \label{chap:qbc}
\index{bit commitment|(}
This chapter is about quantum Bit Commitment (\BC) schemes. In \BC, a
committer $\C$ commits himself to a choice of a bit $b \in \set{0,1}$
by exchanging information with a verifier $\V$. We want that $\V$ does
not learn $b$ (we say the commitment is \emph{hiding}), yet $\C$ can later
choose to reveal $b$ in a convincing way, i.e., only the value fixed at
commitment time will be accepted by $\V$ (we say the commitment is
\emph{binding}).

In the next section, we present a \BC scheme from a committer $\C$ with
bounded quantum memory to an unbounded receiver $\V$. The scheme is
peculiar since in order to commit to a bit, the committer does not
send anything. During the committing stage, information only goes from
$\V$ to $\C$.  Therefore, there is no way for the verifier to get
information about the committed bit, i.e.~the scheme is perfectly
hiding.

In Section~\ref{sec:defbinding}, we define two notions of the binding
property and show our scheme secure against quantum-memory-bounded
committer in both of these senses. Similar techniques as in the two previous
chapters for the analysis of the oblivious-transfer protocols are
used.

The results in this chapter appeared in \cite{DFSS05, DFRSS07}.

\section{The Protocol} \label{sec:commprotocol}
The protocol is given in Figure~\ref{fig:comm}. Intuitively, a
commitment to a bit $b$ is made by measuring random BB84-states in
basis $\{+,\times\}_{[b]}$.
  
\begin{myfigure}{h}
\begin{myprotocol}{\comm$(b)$}
\item $\V$ picks $x \in_R \nbit$ and $\theta \in_R \{+,\times \}^n$ and
  sends $x_i$ in the corresponding bases $\ket{x_1}_{\theta_1},
  \ket{x_2}_{\theta_2}, \ldots, \ket{x_n}_{\theta_n}$ to~$\C$.
 \item $\C$ commits to the bit $b$ by measuring all qubits in basis
   $\{+,\times \}_{[b]}$. Let $x' \in \nbit$ be the result.
 \item\label{step:open} To open the commitment, $\C$ sends $b$ and $x'$ to $\V$.
 \item $\V$ verifies that $x_i= x_i'$ for those $i$ where $\theta_i =
   \{+,\times \}_{[b]}$. $\V$ accepts if and only if this is the case.
\end{myprotocol}
\caption{Protocol for quantum bit commitment}\label{fig:comm}
\end{myfigure}

As for the oblivious-transfer protocols in the two previous chapters, we present an
equivalent EPR-version of the protocol that is easier to analyze (see
Figure~\ref{fig:eprcomm}). \index{purification}
\begin{myfigure}{h}
\begin{myprotocol}{\eprcomm$(b)$}
\item $\V$ prepares $n$ EPR pairs each in state
  $\ket{\Omega}=\frac{1}{\sqrt{2}}(\ket{00}+\ket{11})$. $\V$ sends one half of each pair to $\C$ and keeps the other halves.
\item $\C$ commits to the bit $b$ by measuring all received qubits in basis
  $\{+,\times \}_{[b]}$. Let $x' \in \nbit$ be the result.
\item To open the commitment, $\C$ sends $b$ and $x'$ to $\V$.\label{step:epropen}
\item $\V$ measures all his qubits in basis $\{+,\times \}_{[b]}$ and
  obtains $x \in \nbit$.  He chooses a random subset $I \subseteq \{1,
  \ldots ,n\}$. $\V$ verifies that
  $x_i= x_i'$ for all $i \in I$ and accepts if and only if this is the
  case.\label{step:last}
\end{myprotocol}
\caption{Protocol for EPR-based quantum bit commitment}\label{fig:eprcomm}
\end{myfigure}

\begin{lemma}\label{lem:commeprcomm}
  \comm\ is secure against dishonest committers $\dC$ if and only if \eprcomm\ is.
\end{lemma}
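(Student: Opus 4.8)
The plan is to establish the equivalence between \comm\ and \eprcomm\ by the same commuting-operations argument used for the oblivious-transfer protocols earlier, in particular Lemma~\ref{lem:seqequiv} for \RabinOT\ and the analogous observation for \Randlqot. The key point is that the dishonest party here is the committer, not the verifier, so what we need to track is that the verifier's actions in \eprcomm\ that differ from \comm\ can be deferred past all of $\dC$'s actions without changing $\dC$'s view, and that when they are deferred the two protocols coincide.

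First I would spell out what the verifier does differently in the two protocols. In \comm\ the verifier picks $x\in_R\nbit$ and $\theta\in_R\set{+,\times}^n$ at the very beginning and sends $\ket{x_i}_{\theta_i}$; in \eprcomm\ he instead prepares $n$ EPR pairs, sends one half of each, and only measures his retained halves (in basis $\{+,\times\}_{[b]}$, restricted to a random subset $I$) in the opening/verification step, after receiving $b$ and $x'$. The standard fact is that for an EPR pair $\ket{\Omega}=\frac1{\sqrt2}(\ket{00}+\ket{11})$, measuring one half in a fixed basis and obtaining outcome $v$ leaves the other half in the corresponding basis state $\ket{v}$; hence preparing EPR pairs and measuring the kept halves in basis $\theta$ produces exactly the distribution where $x$ is uniform and the half sent to $\C$ is $\ket{x_i}_{\theta_i}$ — identical to the state $\C$ receives in \comm. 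Crucially, the qubits sent to $\C$ in \eprcomm\ are never touched again by $\V$ before the opening step, so $\V$'s measurement of his own halves commutes with every operation $\dC$ performs on the received qubits.

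The core argument: in \eprcomm, consider pushing $\V$'s basis choice, subset choice, measurement, and verification check back so that $\V$ measures his retained halves (in basis $\theta\in_R\set{+,\times}^n$, say all of them, recording $x$) right after Step~1, before $\dC$ acts. This changes nothing in $\dC$'s view because $\dC$ only ever holds the sent-out halves, and operations on disjoint subsystems commute; it also does not change the joint distribution of accept/reject, since the verification in \eprcomm\ only compares $x_i$ with $x_i'$ on a random subset $I$ (and once $b$ is announced, whether $V$ measures his half in basis $\{+,\times\}_{[b]}$ before or after receiving $x'$ is immaterial). But a \eprcomm\ execution with $V$'s measurement moved to right after Step~1 is, from $\dC$'s perspective and as far as the accept/reject outcome is concerned, precisely an execution of \comm\ (with the minor and harmless modification that $V$ checks only on a random subset $I$ rather than on all of $I_b=\Set{i}{\theta_i=\{+,\times\}_{[b]}}$; since hiding is unaffected and the binding bound is stated up to negligible terms, one can either note $I$ can be taken to be all of $\set{1,\ldots,n}$ or carry the subset along, as the previous chapters do for \eprqot). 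Conversely, any attack on \comm\ is an attack on this modified \eprcomm\ with the same success probabilities, hence on \eprcomm. This gives the ``if and only if'' for security against dishonest committers in either binding sense, and perfect hiding is immediate in both protocols since no information flows from $\C$ to $\V$ before the opening step.

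The main obstacle — and it is minor — is bookkeeping around the random subset $I$ in \eprcomm's verification versus the full check in \comm, and making the statement of ``$\dC$'s view is unchanged'' precise when $\dC$'s output is a quantum register: one argues that the reduced state on $\dC$'s registers is computed by a partial trace over $V$'s halves, which is manifestly insensitive to whether and when $V$ measures those halves. I expect no real difficulty here, as it mirrors exactly the proofs of Lemma~\ref{lem:seqequiv} and the corresponding remark for \Randlqot\ in the preceding chapters; the proof can therefore be kept to a few lines referencing those arguments.
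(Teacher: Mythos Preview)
Your commuting-operations skeleton is the paper's, and the reduction to Lemma~\ref{lem:seqequiv} is the right instinct. But the one place you flag as a ``minor obstacle'' is exactly where the argument needs care, and your proposed workarounds do not actually work. In \eprcomm, $\V$ measures \emph{all} his halves in the single basis $\{+,\times\}_{[b]}$, which depends on the bit $\dC$ announces; you cannot simply swap this for a measurement in random $\theta\in_R\set{+,\times}^n$ at Step~1 and assert the accept/reject distribution is unchanged. If you keep an independent random $I$ as the check set, it will contain positions where $\theta_i\neq\{+,\times\}_{[b]}$, and there $\V$'s outcome is uncorrelated with what $\dC$ received. Taking $I=\set{1,\ldots,n}$ does not help either: then every qubit must be measured in the $b$-dependent basis, so the measurement cannot be pushed before $b$ is announced. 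So the discrepancy between the random $I$ of \eprcomm\ and the set $I_b$ of \comm\ is not something to be carried along or handwaved; it has to be resolved.

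The paper resolves it with one extra step you are missing. First modify Step~\ref{step:last} of \eprcomm\ so that $\V$ measures qubits with index in $I$ in basis $\{+,\times\}_{[b]}$ and those outside $I$ in basis $\{+,\times\}_{[1-b]}$; this is harmless because positions outside $I$ are never checked. Now observe that the pair (uniformly random $I$, $b$-dependent basis assignment inside/outside $I$) is exactly the same as choosing a uniformly random $\theta\in\set{+,\times}^n$ and setting $I=\Set{i}{\theta_i=\{+,\times\}_{[b]}}$. With this identification, pushing the choice of $I$ and the measurement to Step~1 is now legitimate (nothing depends on $b$ anymore), the sent halves collapse to the BB84 states of \comm, and the check set $I$ coincides with $I_b$ on the nose --- no residual discrepancy.
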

\begin{proof}
The proof uses similar reasoning as the one for Lemma~\ref{lem:seqequiv}. 
First, it clearly makes no difference, if we change Step~\ref{step:last} to the
following:
\begin{itemize}
\item[\ref{step:last}'.] $\V$ chooses the subset $I$, measures all
  qubits with index in $I$ in basis $\{+,\times \}_{[b]}$ and all
  qubits not in $I$ in basis $\{+,\times \}_{[1-b]}$. $\V$ verifies
  that $x_i= x_i'$ for all $i\in I$ and accepts if and only if this is
  the case.
\end{itemize}
Finally, we can observe that the view of $\dC$ does not change if $\V$
would have done his choice of $I$ and his measurement already in
Step~1. Doing the measurements at this point means that the qubits to
be sent to $\dC$ collapse to a state that is distributed identically to
the state prepared in the original scheme. The EPR-version is
therefore equivalent to the original commitment scheme from $\dC$'s
point of view.
\end{proof}

It is clear that \eprcomm\ is \index{bit commitment!hiding}hiding, i.e., that the commit phase
reveals no information on the committed bit, since no information is
transmitted to $\V$ at all. Hence we have
\begin{lemma}\label{lem:sec:hiding}
\eprcomm\ is perfectly hiding.
\end{lemma}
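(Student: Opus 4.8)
The plan is to use the fact that during the commit phase of \eprcomm\ all communication flows from $\V$ to $\C$: the committer only receives $n$ qubit-halves, performs a purely \emph{local} operation on them (a measurement in basis $\{+,\times \}_{[b]}$), and sends nothing back until the opening Step~\ref{step:epropen}. Hence the verifier's entire view at the end of the commit phase --- his local randomness together with the quantum register holding the $n$ halves he kept --- is a function only of his own actions and of the (empty) sequence of messages received from $\C$, and therefore cannot depend on $b$. This has to be argued for an \emph{arbitrary}, possibly dishonest and computationally unbounded, verifier, since it is a cheating verifier that threatens the hiding property; and by ``hiding'' one means the state \emph{before} opening, as the opening message of course reveals $b$.

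To turn this into a proof I would fix any verifier strategy and let $\rho_{\V}^{(b)}$ be the density operator describing the verifier's complete state immediately after the commit phase when the honest committer commits to $b$. Let $\rho_{\V\C}$ denote the joint state of the verifier's register and the qubits held by $\C$ just before $\C$'s measurement; this state is independent of $b$, being produced by the verifier's preparation and transmission alone. The honest committer then applies a measurement $\{M^b_k\}_k$ acting only on $\C$'s subsystem and does not communicate its outcome, so from the verifier's side the state becomes $\rho_{\V}^{(b)} = \tr_{\C}\bigl( \sum_k (\id_\V \otimes M^b_k)\, \rho_{\V\C}\, (\id_\V \otimes M^b_k)^* \bigr)$. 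Using the completeness relation $\sum_k (M^b_k)^* M^b_k = \id$ together with the cyclicity of the partial trace, the right-hand side collapses to $\tr_{\C}(\rho_{\V\C})$, which is manifestly independent of $b$. (For the honest verifier one checks in addition that $\tr_{\C}(\rho_{\V\C}) = \I^{\otimes n}$, since tracing $\C$'s half out of each EPR pair $\ket{\Omega}$ yields the maximally mixed qubit $\I$; but this stronger fact is not needed.) Thus $\rho_{\V}^{(0)} = \rho_{\V}^{(1)}$, hence $\dist{\rho_{\V}^{(0)},\rho_{\V}^{(1)}} = 0$, which is exactly the statement that \eprcomm\ is perfectly hiding.

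I expect no real obstacle here; the only points requiring a little care are the two remarks above --- restricting the hiding claim to the commit phase, and allowing arbitrary (not merely EPR-preparing) verifiers --- together with the observation that a measurement whose outcome is not forwarded acts as a trace-preserving operation on $\C$'s side and hence leaves the verifier's reduced state untouched. The companion statement for the non-EPR scheme \comm\ is not needed at this point, being already isolated in Lemma~\ref{lem:commeprcomm}.
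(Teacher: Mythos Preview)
Your proof is correct and follows the same idea as the paper, which simply notes in one sentence before the lemma that ``no information is transmitted to $\V$ at all'' during the commit phase; you have supplied the formal density-operator justification behind that sentence. One minor remark: your closing parenthetical about Lemma~\ref{lem:commeprcomm} is slightly off, since that lemma concerns security against dishonest \emph{committers} (binding), not hiding---but this does not affect your argument.
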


\section{Modeling Dishonest Committers}\label{sec:dishonestcomm}
\index{dishonest committer|(} A dishonest committer $\dC$ with bounded
memory of at most $\gamma n$ qubits in \eprcomm\ can be modeled very
similarly to the dishonest oblivious-transfer receivers $\dB$ from
Section~\ref{sec:modeldishonestreceiversrabin} and
\ref{sec:modeldishonestreceivers}: $\dC$ consists first of a circuit
acting on all $n$ qubits received, then of a measurement of all but at
most $\gamma n$ qubits, and finally of a circuit that takes the
following input: a bit $b$ that $\dC$ will attempt to open, the
$\gamma n$ qubits in memory, and some ancilla in a fixed state. The
output is a string $x' \in \nbit$ to be sent to $\V$ at the opening
stage.
\begin{definition}
  We define $\mathfrak{C}_{\gamma}$ to be the class of all committers
  $\{\dC_n\}_{n>0}$ in \comm\ or \eprcomm\ that, at the start of the opening phase
  (i.e. at Step \ref{step:epropen}), have a quantum memory of size at most
  $\gamma n$ qubits.
\end{definition}
\index{dishonest committer|)}

\section{Defining the Binding Property} \label{sec:defbinding}
\index{bit commitment!binding|(}
\subsection{The ``Standard'' Binding Condition}
In the context of unconditionally secure \emph{quantum} bit
commitment, it is widely accepted that ``the right way'' of defining
the \emph{binding property} is to require that the probability of
opening a commitment successfully to 0 plus the probability of opening
it successfully to 1 is essentially upper bounded by one, put forward by
Dumais, Mayers, and Salvail \cite{DMS00}. We call this notion
\emph{weakly binding}, as opposed to the new notion of \emph{strongly
  binding} defined in the next section below.
\begin{definition} \label{def:weakbinding}
  A (quantum) bit-commitment scheme is \emph{weakly binding}
  against $\mathfrak{C}$ if for all $\{\dC_n\}_{n>0}\in \mathfrak{C}$,
  the probability $p_b(n)$ that $\dC_n$ opens $b\in\{0,1\}$ with
  success satisfies
\[ p_0(n)+p_1(n) \leq 1+\negl{n}.
\] 
\end{definition} \index{bit commitment!weak binding}
In the next Section~\ref{sec:weakbinding}, we show that \eprcomm\ is
weakly binding against $\mathfrak{C}_{\gamma}$ for any
$\gamma<\frac{1}{2}$.

Note that the binding condition given here in
Definition~\ref{def:weakbinding} is weaker than the classical one,
where one would require that a bit $b$ exists such that $p_b(n)$ is
negligible.  For a general quantum adversary though who can always
commit to 0 and 1 in superposition, this is a too strong requirement;
thus, it is typically argued that Definition~\ref{def:weakbinding} is the best
one can hope for. 

However, we argue now that this weaker notion is not really
satisfactory, and we show that there exists a stronger notion, which
still allows the committer to commit to a superposition and thus is
not necessarily impossible to achieve in a quantum setting, but which
is closer to the classical standard way of defining the
binding property. 

\subsection{A Stronger Binding Condition}\label{sec:strongerbinding}
A shortcoming of Definition~\ref{def:weakbinding} is that committing
bit by bit is not guaranteed to yield a secure string commitment---the
argument that one is tempted to use requires independence of the
$p_{b}$'s between the different executions, which in general does not
hold.

We now argue that this notion is {\em unnecessarily} weak, at least in
some cases, and in particular in the case of commitments in the
bounded-quantum-storage model where the dishonest committer is forced
to do some partial measurement and where we assume honest parties to
produce only classical output (by measuring their entire quantum
state).  Technically, this means that for any dishonest committer
$\tilde{\sf C}$, the joint state of the honest verifier and of
$\tilde{\sf C}$ after the commit phase is a ccq-state \smash{$\rho_{V
    Z \tilde{\sf C}} = \sum_{v,z} P_{VZ}(v,z) \proj{v} \otimes
  \proj{z} \otimes \rho_{\tilde{\sf C}}^{v,z}$}, where the first
register contains the verifier's (classical) output and the remaining
two registers contain $\tilde{\sf C}$'s (partially classical) output.
We propose the following definition.

\begin{definition}\label{def:strongbinding}
  A commitment scheme in the bounded-quantum-storage model is called
  {\em $\eps$-binding}, if for every (dishonest) committer $\tilde{\sf
    C}$, inducing a joint state $\rho_{V Z \tilde{\sf C}}$ after the
  commit phase, there exists a classical binary random variable $D$,
  given by its conditional distribution $P_{D|VZ}$, such that for
  $b=0$ and $b=1$ the state \smash{$\rho_{V Z \tilde{\sf C}}^{b} =
    \sum_v P_{VZ|D}(v,z|b) \proj{v} \otimes \proj{z} \otimes
    \rho_{\tilde{\sf C}}^{v,z}$} satisfies the following condition.
  When executing the opening phase on the state $\rho_{V \tilde{\sf
      C}}^{b}$, for any strategy of \smash{$\tilde{\sf C}$}, the
  honest verifier accepts an opening to $1-b$ with probability at most
  $\eps$.
\end{definition}
It is easy to see that the binding property as defined here implies
the above discussed weak version, namely $p_b \leq P_{D}(b) +
P_{D}(1-b)\eps$ and thus $p_0 + p_1 \leq 1 + \eps$.  Furthermore, it
is straightforward to see that this stronger notion allows for a
formal proof of the obvious reduction of a string to a bit commitment
by committing bit-wise: the $i$-th execution of the bit commitment
scheme guarantees a random variable $D_i$, defined by $P_{D_i|V_i
  Z}$, such that the committer cannot open the $i$-th bit commitment
to $1-D_i$, and thus there exists a random variable $S$, namely $S
= (D_1,\ldots,D_m)$ defined by $P_{D_1\cdots D_m|V_1\cdots V_m Z}
= \prod_i P_{D_i|V_i Z}$, such that for any opening strategy, the
committer cannot open the list of commitments to any other string than
$S$.

In Section~\ref{sec:strongbinding}, we show that the bit commitment
\comm\ from Figure~\ref{fig:comm} as a matter of fact satisfies this
stronger and more useful notion of security. This turns out to be a
rather straightforward consequence of the security of the \OT\ scheme
from Chapter~\ref{chap:12OT}.

\section{Weak Binding of the Commitment Scheme} \label{sec:weakbinding}
\index{bit commitment!weak binding|(} In this section, we use the
techniques from the analysis of the \RabinOT protocol from
Chapter~\ref{chap:RabinOT} to prove our commitment scheme \comm\ (or
rather its purified version \eprcomm) weakly binding against
quantum-memory-bounded adversarial committers.

Note that the first two steps of \eprqot\ (from
Figure~\ref{fig:eprot}) and \eprcomm\ (i.e.~before the memory bound applies) are exactly the
same!  This allows us to reuse Corollary~\ref{cor:hadamard} and the
analysis of Section~\ref{sec:otsecurity} to prove the weakly binding
property of \eprcomm.
\begin{theorem}\label{thm:weakbinding}
For any $\gamma<\frac{1}{2}$,
\comm\ is perfectly hiding and weakly binding against $\mathfrak{C}_{\gamma}$.  
\end{theorem}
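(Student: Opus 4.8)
The plan is to first dispose of perfect hiding and then concentrate on the binding property, which is the real content. Perfect hiding is immediate: by Lemma~\ref{lem:sec:hiding} together with Lemma~\ref{lem:commeprcomm} (or simply because during the commit phase of \comm\ nothing at all is sent from \C\ to \V, so \V's view is independent of $b$). For the binding property, by Lemma~\ref{lem:commeprcomm} it suffices to show that \eprcomm\ is weakly binding against $\mathfrak{C}_{\gamma}$, and here I would recycle the machinery of Section~\ref{sec:otsecurity}: the first two steps of \eprcomm\ are literally the first two steps of \eprqot, so after a dishonest $\dC\in\mathfrak{C}_{\gamma}$ has compressed his registers down to $\gamma n$ qubits, the joint state of the verifier's $n$ EPR halves (register $V$) and $\dC$'s memory $M$ is, exactly as in Section~\ref{sec:modeldishonestreceiversrabin}, of the form $\ket{\psi}=\sum_{x\in\nbit}\alpha_x\ket{x}^V\otimes\ket{\varphi_x}^M$.

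Next, identify $\{0,1\}$ with $\{+,\times\}$ and let $R$ be the bit $\dC$ tries to open, $X$ the string \V\ obtains by measuring register $V$ in basis $R$, so $P_{X|R}(\cdot|r)=Q^r(\cdot)$ with $Q^+,Q^\times$ the two measurement distributions of $\rho_V$. Since $\gamma<\tfrac12$, fix $\lambda$ with $\gamma<\lambda<\tfrac12$ and then $\delta>0$ small enough that $2h(\delta)<\lambda-\gamma$. Corollary~\ref{cor:hadamard} applied to $\rho_V$ with this $\lambda$ yields a constant $\sp>0$ and an event $\ev$ with $\P[\ev\mid R\!=\!+]+\P[\ev\mid R\!=\!\times]\geq 1-2^{-\sp n}$ and $\H_\infty(X\mid R\!=\!r,\ev)\geq\lambda n$ for every $r$ with $P_{R|\ev}(r)>0$. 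The committer's opening string $x'$ is a guess $\hat X$ for $X$ computed from $M$ (and, once $R\!=\!r$ is fixed, from a fixed ancilla only); so conditioned on $\ev$ and on $R\!=\!r$, Corollary~\ref{cor:guess} (with trivial $U$, $q=\gamma n$, $\varepsilon=0$, and this $\delta$) gives $\P[\hat X\in\ball{\delta n}(X)\mid\ev,R\!=\!r]\leq 2^{-\frac12(\lambda n-\gamma n-1)+\log\ball{\delta n}}$, negligible in $n$ because $\log\ball{\delta n}\leq n h(\delta)$ and $2h(\delta)<\lambda-\gamma$.

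It then remains to convert this into a bound on $p_0+p_1$. In \eprcomm\ the verifier accepts an opening to $r$ iff $x'$ agrees with $X$ on a uniformly random subset $I$, so $\P[\text{accept}\mid\hat X,X]=2^{-d_H(\hat X,X)}$; splitting according to whether $d_H(\hat X,X)\leq\delta n$, the success probability of opening $r$ conditioned on $\ev$ is at most $\P[\hat X\in\ball{\delta n}(X)\mid\ev,R\!=\!r]+2^{-\delta n}=\negl{n}$ (for $r$ with $P_{R|\ev}(r)>0$). Writing $q^r:=\P[\ev\mid R\!=\!r]$, one gets $p_r\leq q^r\cdot\negl{n}+(1-q^r)\leq 1-q^r+\negl{n}$ for each $r$, hence
\[
p_0+p_1\;\leq\;2-(q^++q^\times)+\negl{n}\;\leq\;1+2^{-\sp n}+\negl{n}\;=\;1+\negl{n},
\]
which is the weakly-binding condition.

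The step I expect to be the main obstacle is this last piece of bookkeeping: correctly relating ``the probability of opening $b$'' to a guessing probability for the verifier's outcome $X$ — in particular absorbing the verifier's subsampling of $I$ into the Hamming-ball bound of Corollary~\ref{cor:guess} — and making sure the near-uniform split $q^++q^\times\approx 1$ supplied by the uncertainty relation is invoked at the right place (so that $1-q^r$ summed over the two values of $r$ contributes $\approx 1$ rather than $\approx 2$). Once the scheme is purified, all the quantum content is carried by Corollaries~\ref{cor:hadamard} and~\ref{cor:guess}, precisely as in the \RabinOT\ analysis, so no new quantum estimates are needed.
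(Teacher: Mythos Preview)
Your proposal is correct and follows essentially the same approach as the paper: reduce to \eprcomm, invoke Corollary~\ref{cor:hadamard} to get the event $\ev$ with $q^{+}+q^{\times}\geq 1-\negl{n}$ and a min-entropy lower bound on $X$, feed this into Corollary~\ref{cor:guess} to bound the probability that $\dC$'s opening lands in $\ball{\delta n}(X)$, use the random-subset check to kill the complementary case, and sum $p_0\leq 1-q^{+}+\negl{n}$ and $p_1\leq 1-q^{\times}+\negl{n}$. The only cosmetic difference is your parametrization via a single $\lambda$ with $2h(\delta)<\lambda-\gamma$, whereas the paper writes the threshold as $\gamma+2h(\delta)+2\eps<\tfrac12$; these are equivalent choices.
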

The proof is given below. It boils down to showing that essentially
$p_0(n) \leq 1 - \qp$ and $p_1(n) \leq 1 - \qt$. The weak binding property
then follows immediately from Corollary~\ref{cor:hadamard}. The
intuition behind $p_0(n) \leq 1 - \qp = 1 - \Qp(S^+)$ is that a committer has only a
fair chance in opening to $0$ if $x$ measured in the $+$-basis has large
probability, i.e., $x \not\in S^+$. The following proof makes this
intuition precise by choosing the $\varepsilon$ and $\delta$'s
correctly.
\begin{proof}
  It remains to show that \eprcomm\ is binding against
  $\mathfrak{C}_{\gamma}$. Let $\eps, \delta > 0$ be such that $\gamma
  + 2h(\delta) + 2\eps < 1/2$, where $h$ is the binary entropy
  function. Recall that the number $\ball{\delta n}$ of $n$-bit strings of Hamming-distance at
  most $\delta n$ from a fixed string is at most $2^{h(\delta) n}$. Let
  $R$ be the basis, determined by the bit that $\dC$ claims in
  Step~\ref{step:open}, and in which $\V$ measures the quantum state
  in Step~\ref{step:last}, and let $X$ be the outcome.
  Corollary~\ref{cor:hadamard} implies the existence of an event $\cal
  E$ such that $\P[\ev|R\!=\!+] + \P[\ev|R\!=\!\times] \geq 1 -
  \negl{n}$ and $\H_{\infty}(X|R\!=\!r,\ev) \geq (\gamma+2h(\delta) +
  2\eps) n$. Applying Corollary~\ref{cor:guess} (with constant $U$ and
  $\eps=0$), it follows that any guess $\hat{X}$ for $X$ satisfies
\begin{align*}
\P\big[ \hat{X} \in \ball{\delta n}(X) \,|\, R\!=\!r,\ev \big]
 &\leq 2^{-\frac{1}{2} (\H_{\infty}(X|X \in S^+)-\gamma n-1) + \log(\ball{\delta
 n}) } \leq 2^{-\eps n + \frac{1}{2}}.
\end{align*}
However, if $\hat{X} \not\in \ball{\delta n}(X)$ then sampling a
random subset of the positions will detect an error except with
probability at most $2^{-\delta n}$. Hence, writing $\qp \assign
\P[\ev|R\!=\!+]$ and $\qt \assign \P[\ev|R\!=\!\times]$,
$$
p_0(n) \leq (1-\qp) + \qp\cdot (2^{-\eps n + \frac12} + 2^{-\delta
  n}) \leq 1-\qp + \negl{n}
$$
and analogously $p_1(n) \leq 1-\qt + \negl{n} $. We conclude that
$$
p_0(n) + p_1(n) \leq  2 - \qp - \qt + \negl{n} \leq  1+ \negl{n} \, .
$$
\end{proof}
\index{bit commitment!weak binding|)}

\section{Strong Binding of the Commitment Scheme} \label{sec:strongbinding}
\index{bit commitment!strong binding|(} In this section, we reuse the
analysis of the \OT-protocol from Chapter~\ref{chap:12OT} to prove the
strong binding condition.

\begin{theorem}
  The quantum bit-commitment scheme \comm\ is $\eps$-binding
  according to Definition~\ref{def:strongbinding} against $\mathfrak{C}_{\gamma}$
  for a negligible (in $n$) $\eps$ if $\gamma < \frac14$.
\end{theorem}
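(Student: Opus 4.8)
The plan is to reduce the strong binding property of \comm\ directly to the sender-security of the oblivious-transfer protocol \Randqot\ from Chapter~\ref{chap:12OT}, exploiting the fact that the first two steps of \eprcomm\ and \eprRandlqot\ are identical. First I would switch to the purified version \eprcomm, which is equivalent from the dishonest committer's point of view by Lemma~\ref{lem:commeprcomm}. In \eprcomm, after the commit phase the verifier still holds $n$ halves of EPR pairs and has not yet chosen a basis; the dishonest committer $\dC \in \mathfrak{C}_{\gamma}$ has performed a measurement leaving at most $\gamma n$ qubits. The key observation is that if the verifier were to measure his halves in a uniformly random basis $\Theta \in_R \set{+,\times}^n$ (rather than in one of the two BB84-product bases), the resulting situation is exactly the state $\rho_{X\Theta\dR}$ analyzed in the proof of Theorem~\ref{thm:OT}: the committer's residual state together with the (not yet announced) $\Theta$ determines two strings $X_0 = X\pad_{I_0}$ and $X_1 = X\pad_{I_1}$ about whose pair the committer is ignorant.

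Concretely, the plan is: apply the uncertainty relation (Corollary~\ref{cor:uncertainty}) to get $\hie{\eps}{X_0 X_1 \mid \Theta} \geq (1/2 - 2\lambda)n$, then the Min-Entropy-Splitting Lemma~\ref{lemma:ESL} (or Corollary~\ref{cor:ESL}) to obtain a binary random variable $D$ with $\hie{\eps}{X_{1-D}\mid \Theta D} \geq (1/4-\lambda)n - O(\log(1/\eps'))$. This $D$ is a function of the classical data available after the commit phase (the committer's measurement outcome $Z$ and, in the purified picture, $\Theta$), but one must be careful: in the real opening phase the verifier does \emph{not} pick a uniformly random $\Theta$, he picks one of the two product bases $\{+,\times\}_{[b]}$ according to the bit $b$ the committer claims. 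The trick, exactly as in the \OT\ analysis and in the weak-binding proof (Theorem~\ref{thm:weakbinding}), is that these two product bases correspond to the two ``halves'' of a random $\Theta$: the bits in $I_0$ are those measured in the $+$-basis, the bits in $I_1$ those in the $\times$-basis. So $D=b$ means ``the committer can open to $b$'', i.e.\ he has enough information about $X_D = X|_{I_b}$ to pass the verification, but then by the min-entropy bound he has essentially \emph{no} information about $X_{1-D} = X|_{I_{1-b}}$, hence cannot produce a string matching the verifier's future measurement on those positions. I would then invoke Corollary~\ref{cor:guess} (with constant $U$, $q = \gamma n$, and $\delta$ small): the probability that the committer's claimed string $\hat{X}$ lies within Hamming distance $\delta n$ of the verifier's outcome $X|_{I_{1-D}}$ is at most $2^{-\frac12(\hmin - q - 1) + \log \ball{\delta n}}$, which is negligible provided $\gamma < 1/4$ and $\delta$ is chosen with $2h(\delta)$ small enough; and if $\hat X$ is \emph{not} within $\delta n$, the random sampling of a test subset $I$ in Step~\ref{step:last} catches an error except with probability $2^{-\delta n}$.

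Putting this together, $D$ defined via $P_{D|VZ}$ as above witnesses the strong-binding condition of Definition~\ref{def:strongbinding}: conditioned on $D=b$, the probability that the committer successfully opens to $1-b$ is negligible. The main obstacle I anticipate is a bookkeeping one rather than a conceptual one: one must carefully set up the correspondence between the ``random $\Theta$'' picture used to define $D$ (and to invoke the uncertainty relation and splitting lemma) and the ``two fixed product bases'' picture of the actual protocol, and verify that $D$ is indeed a well-defined classical random variable given only the data $(V,Z)$ available at the end of the commit phase — in particular that the definition of $D$ does not secretly depend on which bit the committer will later try to open. This is handled exactly as the corresponding step in the proof of Theorem~\ref{thm:OT}, where the committer/receiver's residual quantum register plays the role of $\dR$, the verifier's EPR-halves-plus-random-basis play the role of the honest sender's measurement, and the conditioning on $D$ and on the ``known'' string $S_D$ is done via the chain rule (Lemma~\ref{lem:chain}). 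The parameter $\gamma < 1/4$ arises precisely because the splitting lemma halves the $n/2$ bits of min-entropy guaranteed by the uncertainty relation, matching the bound in Theorem~\ref{thm:OT} with $\ell = 0$ (no privacy amplification is needed here, since the committer must reproduce the raw string rather than a hashed value), and then Corollary~\ref{cor:guess} absorbs the $\gamma n$ qubits of quantum side information together with the $\log \ball{\delta n}$ slack.
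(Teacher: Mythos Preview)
Your proposal is correct and essentially identical to the paper's proof: reduce to \eprcomm, apply Corollary~\ref{cor:uncertainty} and then the min-entropy-splitting Corollary~\ref{cor:ESL} to obtain $D$, and finish with the guessing bound Corollary~\ref{cor:guess}. The only cosmetic differences are that the paper invokes Corollary~\ref{cor:guess} with $U=(\Theta,D)$ rather than constant~$U$, and it omits your final random-sampling step since in \comm\ the verifier already checks every position in $I_{1-D}$ exactly, so the bound on $\P[\hat{X}\in\ball{\delta n}(X_{1-D})]$ directly upper-bounds the probability of a successful opening to $1-D$.
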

Intuitively, one can argue that $X$ has (smooth) min-entropy about $n/2$ given
$\Theta$.  The \index{min-entropy splitting lemma}
Min-Entropy Splitting Lemma implies that there exists
$D$ such that $X_{1-D}$ has smooth min-entropy about $n/4$ given
$\Theta$ and $D$. Privacy amplification implies that $F(X_{1-D})$ is
close to random given $\Theta, D, F$ and $\dC$'s quantum register of
size $\gamma n$, where $F$ is a \univ one-bit-output hash function,
which in particular implies that $\dC$ cannot guess $X_{1-D}$. The
formal proof is given below.
\begin{proof}
  It remains to show that \eprcomm\ is strongly binding against
  $\mathfrak{C}_{\gamma}$.  Let $\Theta \in \set{+,\times}^n$ be the
  random basis that would correspond to the choice of basis in the
  first step of \comm, i.e. $\theta_i = \set{+,\times}_{[b]}$ for $i
  \in I$ and $\theta_i = \set{+,\times}_{[1-b]}$ for $i \not\in I$. Let
  $X$ be the measurement outcome when $\V$ measures his halves of the
  EPR-pairs in basis $\Theta$.
  
Recall that $h(\cdot)$ denotes the binary Shannon entropy. Choose $\lambda, \lambda', \kappa$ and $\delta$ all positive, but
  small enough such that $\gamma \leq 1/4 - \lambda - \lambda' -
  2h(\delta) - 2 \kappa$, $h(\delta) \leq \lambda'-\kappa$, and
  $h(\delta) \leq \frac{\lambda^4}{32} -
  \kappa$. Before Step~\ref{step:open}, the overall state is given by
  the ccq-state $\rho_{X \Theta \dC}$ after $\dC$ has measured all but
  $\gamma n$ of his qubits, where $X$ describes the outcome of the
  verifier $\V$ measuring his part of the state in random basis
  $\Theta$.  From the uncertainty relation
  (Corollary~\ref{cor:uncertainty}), we know that $\hie{\eps}{X
    \mid \Theta} \geq (1/2 - 2\lambda )n$ for
  $\eps=2^{-\frac{\lambda^4}{32}n}$ 
exponentially small in $n$.  Therefore, by
  Corollary~\ref{cor:ESL}, there exists a binary random variable $D
  \in \set{0,1}$ such that for $\eps'=2^{-\lambda' n}$, it holds that
\begin{align*}
\hie{\eps+\eps'}{X_{1-D} \mid \Theta D} &\geq (1/4 - \lambda - \lambda')n -1\\ 
&\geq (1/4 -\lambda - \lambda')n -1\\
&\geq \gamma n + 2h(\delta)n + 2 \kappa n -1 \, .
\end{align*}

Recall that $\ball{\delta n} \leq 2^{h(\delta) n}$. Applying
Corollary~\ref{cor:guess}, it follows that any guess $\hat{X}$ for $X_{1-D}$
satisfies
\begin{align*}
  \P\big[ \hat{X} \in \ball{\delta n}(X_{1-D}) \big] &\leq 2^{-\frac{1}{2}
    (\hie{\eps+\eps'}{X_{1-D}|\Theta D}-\gamma n-1) + \log(\ball{\delta
      n}) } + (2 \eps + 2 \eps') \ball{\delta n}\\
  &\leq 2^{-\frac{1}{2} (2 \kappa n - 2)} + 2 \cdot 2^{
    -\frac{\lambda^4}{32}n + h(\delta) n } + 2 \cdot 2^{ -
    \lambda' n + h(\delta) n} \\
  &\leq \frac12 2^{-\kappa n } + 2 \cdot 2^{ -\kappa n } + 2 \cdot 2^{
    -\kappa n} \, ,
\end{align*}
which is negligible by the choice of the parameters.
\end{proof}

\index{bit commitment!strong binding|)}
\index{bit commitment!binding|)}

\section{Weakening the Assumptions} \label{sec:weakassumptioncomm}
\index{weak quantum model|(}As argued earlier, assuming that a party can
produce single qubits (with probability~1) is not reasonable given
current technology. Also the assumption that there is no noise on the
quantum channel is impractical.  It can be shown that a
straightforward modification of \comm\ remains secure in the
$(\phi,\eta)$-weak quantum model as introduced in
Section~\ref{sec:weakass} (see also Section~\ref{sec:moreimperfect}),
with $\phi < \frac{1}{2}$ and $\eta < 1- \phi$.

\begin{myfigure}{h}
\begin{myprotocol}{\comm'$(b,\phi)$}
\item $\V$ picks $x \in_R \nbit$ and $\theta \in_R \{+,\times \}^n$ and
  sends $x_i$ in the corresponding bases $\ket{x_1}_{\theta_1},
  \ket{x_2}_{\theta_2}, \ldots, \ket{x_n}_{\theta_n}$ to~$\C$.
 \item $\C$ commits to the bit $b$ by measuring all qubits in basis
   $\{+,\times \}_{[b]}$. Let $x' \in \nbit$ be the result.
 \item To open the commitment, $\C$ sends $b$ and $x'$ to $\V$.
 \item $\V$ verifies that $x_i= x_i'$ for $i$ where $\theta_i = \{+,\times
   \}_{[b]}$. $\V$ accepts if and only if this is the case \emph{for all but
   a $\phi$-fraction of these positions}.
\end{myprotocol}
\caption{Protocol for noise-tolerant quantum bit commitment}\label{fig:commnoisy}
\end{myfigure}

The protocol \comm'\ in Figure~\ref{fig:commnoisy} is the same as
\comm\ from Figure~\ref{fig:comm} except that in the last
Step~\ref{step:last}, $\V$ accepts if and only if $x_i = x'_i$ for all
{\em but about a $\phi$-fraction} of the $i$ where $r_i = \{+,\times
\}_{[b]}$. More precisely, for all but a
$(\phi+\varepsilon)$-fraction, where $\varepsilon > 0$ is sufficiently
small.
\begin{theorem}\label{thm:weakcommsec}
  In the $(\phi,\eta)$-weak quantum model, \comm'\ is perfectly hiding
  and it is weakly binding against $\mathfrak{C}_{\gamma}$ for any $\gamma$
  satisfying $\gamma < \frac{1}{2}(1-\eta) - 2 h(\phi)$.
\end{theorem}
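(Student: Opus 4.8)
The plan is to combine the argument behind Theorem~\ref{thm:weakbinding} (weak binding of \comm\ in the noiseless model) with the treatment of imperfect sources and noisy channels developed for \RabinOT\ in Section~\ref{sec:weakass} (Theorem~\ref{thm:BBqotsec}). Perfect hiding needs no new argument: in \comm'\ the committer \C\ transmits nothing before the opening phase, so the verifier's view is trivially independent of $b$, exactly as in Lemma~\ref{lem:sec:hiding}. Everything below concerns the weak-binding property against $\mathfrak{C}_{\gamma}$.

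First I would move to an EPR-based version \eprcomm'\ and show, along the lines of Lemma~\ref{lem:commeprcomm}, that \comm'\ is weakly binding iff \eprcomm'\ is. In \eprcomm'\ the imperfect source is simulated by the verifier as in \BBeprqot: each position is independently ``bad'' with probability $\eta$, in which case $\V$ sends several classical copies $\ket{x_i}_{\theta_i}$ for random $\theta_i$ (so a dishonest $\dC$ learns $x_i$ for free), and ``good'' with probability $1-\eta$, in which case $\V$ sends one half of an EPR pair. Let $G$ denote the good positions; by Chernoff's inequality (Lemma~\ref{lem:chernoff}) $|G|\geq(1-\eta-\varepsilon)n$ except with probability negligible in $n$, which gets absorbed into the final error term. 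At the start of the opening phase $\dC$ holds at most $\gamma n$ qubits; fix the reduced state $\rho_G$ of the verifier's halves of the good EPR pairs after $\dC$'s compression, and note that $\rho_G$ is determined before $\dC$ chooses the bit $b$ it will try to open.

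When $\dC$ opens to $b$, the verifier measures $\rho_G$ in the single basis $R\assign\{+,\times\}_{[b]}$, obtaining $X_G\in\{0,1\}^{|G|}$, and accepts only if the opening string matches $X_G$ on all but a $(\phi+\varepsilon)$-fraction of the checked positions (the bad checked positions match automatically). Applying Corollary~\ref{cor:hadamard} to the $|G|$-qubit state $\rho_G$ yields, for any $\lambda<\tfrac12$, a constant $\sp>0$ and an event $\ev$ (depending only on $R$ and $X_G$) with $\P[\ev\mid R\!=\!+]+\P[\ev\mid R\!=\!\times]\geq 1-2^{-\sp|G|}$ and $\H_{\infty}(X_G\mid R\!=\!r,\ev)\geq\lambda|G|$ for the relevant $r$. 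A random-subset argument combined with Chernoff's inequality, as in the proof of Theorem~\ref{thm:weakbinding}, then shows that, conditioned on $\ev$, $\dC$ passes the noise-tolerant check with non-negligible probability only if its opening string lies in $\ball{(\phi+2\varepsilon)n}(X_G)$. Since $\ball{(\phi+2\varepsilon)n}\leq 2^{h(\phi+2\varepsilon)n}$, Corollary~\ref{cor:guess} (with $\eps=0$, quantum memory $q=\gamma n$, and $U$ constant) bounds this probability by $2^{-\frac12(\lambda|G|-\gamma n-1)+h(\phi+2\varepsilon)n}$. Whenever $\gamma<\tfrac12(1-\eta)-2h(\phi)$ one can pick $\varepsilon>0$ small enough and $\lambda$ close enough to $\tfrac12$ that $\lambda(1-\eta-\varepsilon)-\gamma-2h(\phi+2\varepsilon)>0$, so this bound is $2^{-\Omega(n)}$. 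Writing $\qp\assign\P[\ev\mid R\!=\!+]$ and $\qt\assign\P[\ev\mid R\!=\!\times]$, this gives $p_0(n)\leq 1-\qp+\negl{n}$ and $p_1(n)\leq 1-\qt+\negl{n}$, and $\qp+\qt\geq 1-\negl{n}$ yields $p_0(n)+p_1(n)\leq 1+\negl{n}$, which is weak binding.

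I expect the main obstacle to be the noise bookkeeping in the last two steps: modelling the $\eta n$ bad positions correctly in the EPR picture (this is exactly where the factor $1-\eta$ appears in the entropy bound, since only $\approx(1-\eta)n$ qubits carry genuine uncertainty and measuring them in one of two mutually unbiased bases contributes at most half of them), and turning the verifier's $\phi$-fraction error tolerance into membership in a Hamming ball of radius $\approx\phi n$ --- this forces one to thread the slack parameter $\varepsilon$ through the size estimate for $G$, the size estimate for the randomly checked subset, the event $\ev$, the ball bound, and the min-entropy bound simultaneously. No new quantum-information ingredient is needed beyond Corollary~\ref{cor:hadamard} and Corollary~\ref{cor:guess}, which already sufficed for Theorem~\ref{thm:weakbinding} and Theorem~\ref{thm:BBqotsec}.
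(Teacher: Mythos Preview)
Your proposal is correct and follows essentially the same route as the paper's proof sketch: trivial hiding, passage to an EPR-based version with the source imperfections simulated by $\V$, restriction to the $\approx(1-\eta)n$ ``good'' single-qubit positions, application of Corollary~\ref{cor:hadamard} to get the event $\ev$ and the min-entropy bound, then Corollary~\ref{cor:guess} for the Hamming-ball guessing bound, and finally the same $p_0+p_1\leq 1+\negl{n}$ conclusion. The only cosmetic difference is parametrization: the paper rescales everything by $1/(1-\eta)$ (writing the ball radius as a fraction $\delta$ of $|J|$ and requiring $\phi/(1-\eta)<\delta$ together with $\gamma/(1-\eta)+2h(\delta)<\tfrac12$), whereas you keep absolute counts and arrive directly at $\lambda(1-\eta-\varepsilon)-\gamma-2h(\phi+2\varepsilon)>0$; both unwind to the stated condition $\gamma<\tfrac12(1-\eta)-2h(\phi)$ via the inequality $h(\nu p)\leq\nu h(p)$.
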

\begin{sketch} 
  Using \index{Chernoff's inequality} Chernoff's inequality (Lemma~\ref{lem:chernoff}), one can
  argue that for {\em honest} $\C$ and $\V$, the opening of a
  commitment is accepted except with negligible probability.  The
  hiding property holds using the same reasoning as in
  Lemma~\ref{lem:sec:hiding}. And the binding property can be argued
  essentially along the lines of Theorem~\ref{thm:weakbinding}, with
  the following modifications. Let $J$ denote the set of indices $i$
  where $\V$ succeeds in sending a single qubit. We restrict the
  analysis to those $i$'s which are in $J$. By 
  \index{Chernoff's inequality} Chernoff's inequality
  (Lemma~\ref{lem:chernoff}), the cardinality of $J$ is about
  $(1-\eta)n$ (meaning within $(1-\eta\pm \varepsilon)n$), except with
  negligible probability.  Thus, restricting to these $i$'s has the
  same effect as replacing $\gamma$ by $\gamma/(1-\eta)$ (neglecting
  the $\pm \varepsilon$ to simplify notation). Assuming that $\dC$
  knows every $x_i$ for $i \not\in J$, for all $x_i$'s with $i \in J$,
  he has to be able to guess all but about a $\phi/(1-\eta)$-fraction
  correctly, in order to be successful in the opening. Using
  Corollary~\ref{cor:guess}, we can show that for a correctly chosen
  $\delta >0$, the probability of guessing $\hat{X}$ within Hamming
  distance $\delta n$ to the real $X$ is negligible. Therefore, $\dC$
  succeeds with only negligible probability if the fraction of allowed
  errors $\phi/(1-\eta)$ is smaller than $\delta$, i.e.
$$
\phi/(1-\eta) < \delta \, ,$$ 
Additionally, in order for the machinery from Theorem~\ref{thm:weakbinding} to work, $\delta$ must be such that 
$$ \frac{\gamma}{1-\eta} + 2h(\delta) < \frac{1}{2} \, .
$$ 
$\delta$ can be chosen that way if 
$$
\frac{\gamma}{1-\eta} + 2 \, h\!\left( \frac{\phi}{1-\eta} \right) < \frac{1}{2} \, .
$$
Using the fact that $h(\nu p) \leq \nu h(p)$ for any $\nu \geq 1$
and $0 \leq p \leq \frac12$ such that $\nu p \leq 1$, this is clearly
satisfied if
$\gamma + 2 h(\phi) < \frac{1}{2}(1-\eta)$. 
\end{sketch}

\begin{theorem}\label{thm:weakcommsecstrong}
  In the $(\phi,\eta)$-weak quantum model, \comm'\ is perfectly hiding
  and it is strongly binding against $\mathfrak{C}_{\gamma}$ for any $\gamma$
  satisfying $\gamma < \frac{1}{4}(1-\eta) - 3 h(\phi) - \sqrt[4]{32
  \, h(\phi)}$.
\end{theorem}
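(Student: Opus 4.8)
The plan is to run the argument behind the (ideal-setting) strong-binding statement of Section~\ref{sec:strongbinding} on the positions where the imperfect source behaves well, exactly in the way the proof of Theorem~\ref{thm:weakbinding} was upgraded to Theorem~\ref{thm:weakcommsec}. As a first step, as in Lemma~\ref{lem:commeprcomm}, I would replace \comm'\ by its EPR-based (purified) version, with $\V$ simulating the $\eta$-rate multi-qubit emissions of its source by sending, on an $\eta$-fraction of the positions, two or more copies of a freshly sampled BB84 state $\ket{x_i}_{\theta_i}$; this is equivalent from the committer's point of view, so it suffices to prove strong binding for the purified protocol. Perfect hiding is immediate (as in Lemma~\ref{lem:sec:hiding}) since no information travels to $\V$ during the commit phase.

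For the binding part, let $J$ be the set of positions on which $\V$'s source emits a single qubit; by Chernoff's inequality (Lemma~\ref{lem:chernoff}), $n' \assign |J|$ lies within $(1-\eta\pm\eps)n$ except with negligible probability, and I would restrict the whole analysis to these $n'$ positions, granting a dishonest $\dC\in\mathfrak{C}_{\gamma}$ the bits $x_i$ for $i\notin J$ for free (as is forced in the weak quantum model, where a multi-qubit emission leaks $\theta_i$, and hence $x_i$, to $\dC$). On the $J$-positions the situation is the ideal one with $n'$ qubits and a committer holding at most $\gamma n=\gamma' n'$ qubits of memory, $\gamma'\assign\gamma/(1-\eta)$. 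Then, just as in Section~\ref{sec:strongbinding}: the uncertainty relation (Corollary~\ref{cor:uncertainty}) applied to $\V$'s retained EPR halves on $J$ gives $\hie{\eps}{X|_J\mid\Theta}\ge(\frac12-2\lambda)n'$ with $\eps=2^{-\lambda^4 n'/32}$; the Min-Entropy-Splitting Corollary~\ref{cor:ESL} yields a binary random variable $D$ — taken as the pointer of Definition~\ref{def:strongbinding}, specified through its conditional distribution given $\V$'s and $\dC$'s classical data — with $\hie{\eps+\eps'}{(X_{1-D})|_J\mid\Theta D}\ge(\frac14-\lambda-\lambda')n'-1$ for $\eps'=2^{-\lambda' n'}$; and Corollary~\ref{cor:guess} (with side information $\Theta,D$ and $q=\gamma' n'$) bounds the probability that $\dC$'s guess $\hat X$ lies in $\ball{\delta n'}\big((X_{1-D})|_J\big)$ by a quantity that is negligible provided $h(\delta)<\frac14-\gamma'-\lambda-\lambda'-2\kappa$, $h(\delta)<\lambda'-\kappa$ and $h(\delta)<\frac{\lambda^4}{32}-\kappa$ for some $\kappa>0$. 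Finally, as in Theorem~\ref{thm:weakcommsec}, to open the commitment to $1-D$ the committer must agree with $X_{1-D}$ on all but a $(\phi+\eps)$-fraction of the relevant $\approx n/2$ positions, of which only $\approx(1-\eta)n/2$ lie in $J$; concentrating all allowed errors there, this forces us to take $\delta>\phi/(1-\eta)$ (up to vanishing slack), and when $\hat X\notin\ball{\delta n'}(\cdot)$ the error-tolerance check in the opening phase still rejects except with probability $2^{-\Omega(n)}$. This establishes $\eps$-binding in the sense of Definition~\ref{def:strongbinding}.

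It then remains to check that the constraints are simultaneously satisfiable under the claimed hypothesis on $\gamma$. Writing $H\assign h\big(\phi/(1-\eta)\big)$ and letting $h(\delta)$ decrease towards $H$, the constraints force $\lambda>\sqrt[4]{32H}$ and $\lambda'>H$, hence $\gamma'=\gamma/(1-\eta)<\frac14-3H-\sqrt[4]{32H}$; multiplying by $1-\eta$ and using $h(\nu p)\le\nu h(p)$ for $\nu\ge1$ (so that $(1-\eta)H\le h(\phi)$ and $(1-\eta)^{3/4}\le1$) turns this into $\gamma<\frac14(1-\eta)-3h(\phi)-\sqrt[4]{32\,h(\phi)}$, which is precisely the stated bound. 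I expect the one genuinely delicate point to be exactly this bookkeeping, and in particular the requirement $\lambda^4/32>h(\delta)\ge H$: the smoothing error of the uncertainty relation decays only at the meagre rate $\lambda^4/32$, so to keep $\eps$ negligible $\lambda$ must stay bounded well away from $0$, and combined with the noise-imposed lower bound $h(\delta)\ge H$ this is what produces the unusual $\sqrt[4]{32\,h(\phi)}$ loss and forces the tolerable bit-flip probability $\phi$ to be rather small. Everything else is a routine merge of the proofs of Theorem~\ref{thm:weakbinding}, Theorem~\ref{thm:weakcommsec} and the strong-binding theorem of Section~\ref{sec:strongbinding}.
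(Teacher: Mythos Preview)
Your proposal is correct and follows essentially the same route as the paper's sketch: restrict to the single-qubit positions $J$, rerun the strong-binding machinery of Section~\ref{sec:strongbinding} with $\gamma$ replaced by $\gamma/(1-\eta)$, force $\delta>\phi/(1-\eta)$ from the error-tolerance, and then verify that the parameter constraints are satisfiable under the stated bound on $\gamma$ via $h(\nu p)\le\nu h(p)$. One small slip: your first negligibility condition should read $2h(\delta)<\tfrac14-\gamma'-\lambda-\lambda'$ (the factor $2$ comes from the $\tfrac12$ in the exponent of Corollary~\ref{cor:guess}); with only $h(\delta)$ there your final ``$3H$'' would not follow, but since you do arrive at $3H$ you evidently had the correct constraint in mind.
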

\begin{sketch}
  The proof goes like the proof of Theorem~\ref{thm:weakcommsec}, but
  uses the techniques from Section~\ref{sec:strongbinding}. In order
  for those to work, we need to choose $\lambda, \lambda'$,
  and $\delta$ all positive and such that
\begin{align} \label{eq:conditions} \begin{split}
\frac{\phi}{1 - \eta} &< \delta,  \\
\frac{\gamma}{1 - \eta} + 2 h(\delta) + \lambda' + \lambda &< 1/4 \,
, \\
h(\delta) &< \lambda' \, , \\
h(\delta) &< \frac{\lambda^4}{32} \, .
\end{split} \end{align}
We verify that the assumption $\gamma < \frac{1}{4}(1-\eta) - 3 h(\phi) - \sqrt[4]{32
  \, h(\phi)}$ on $\gamma$ allows for that. Rearranging
the terms and using that $x < \sqrt[4]{x}$ for $0<x<1$ yields
\[
 \frac{\gamma}{1 - \eta} + 3 \frac{h(\phi)}{1 - \eta} +
 \sqrt[4]{32\frac{h(\phi)}{1- \eta}} < 1/4 \, .
\]
Using as in the previous proof the fact that $h(\nu p) \leq \nu h(p)$ for any
$\nu \geq 1$ and $0 \leq p \leq \frac12$ such that $\nu p \leq 1$, we
get that
\[
 \frac{\gamma}{1 - \eta} + 3 h\left(\frac{\phi}{1-\eta}\right) +
 \sqrt[4]{32 \, h\left(\frac{\phi}{1-\eta}\right)} < 1/4.
\]
That allows to choose $\delta > \frac{\phi}{1-\eta}$ such that
\[
 \frac{\gamma}{1 - \eta} + 2 h(\delta) +  h(\delta) + 
 \sqrt[4]{32 \, h(\delta)} < 1/4,
\]
and therefore, also $\lambda$ and $\lambda'$ can be chosen such that
the conditions \eqref{eq:conditions} are fulfilled.
\end{sketch}

\index{weak quantum model|)}
\index{bit commitment|)}



\chapter[\QKD Against Bounded Eavesdroppers]
{\QKD Secure Against Quantum-Memory-Bounded Eavesdroppers} \label{chap:qkd}
\index{quantum key distribution|(} In this chapter, we present another
application for the uncertainty relation derived in
Section~\ref{sec:morerelation}. This illustrates that these relations
are useful in scenarios beyond the simple two-party setting.

In Quantum Key Distribution (\QKD), two honest players Alice and Bob
want to agree on a secure key, using only completely insecure quantum
and authentic classical communication. The computationally unbounded
eavesdropper Eve should not get any information about the key. A major
difficulty when implementing \QKD schemes is that they require a
low-noise quantum channel.  The tolerated noise level depends on the
actual protocol and on the desired security of the key.  Because the
quality of the channel typically decreases with its length, the
maximum tolerated \index{noise level}noise level
 is an important parameter limiting the
\index{maximum distance}maximum distance between Alice and Bob.

We consider a model in which the adversary has a limited amount of
quantum memory to store the information she intercepts during the
protocol execution. In this model, we show that the maximum
tolerated noise level is larger than in the standard scenario where
the adversary has unlimited resources.  

For simplicity, we restrict ourselves to {\em one-way \QKD protocols}
which are protocols where error-correction is performed
non-interactively, i.e., a single classical message is sent from one
party to the other.

The results in this chapter appeared in~\cite{DFRSS07}.


\section{Derivation of the Maximum Tolerated Noise Level}
Let $\bset$ be a set of orthonormal bases of a $d$-dimensional Hilbert
space $\cH_d$. For each basis $\vartheta \in \bset$, we assume that
the $d$ basis vectors are parametrized by the elements of the fixed
set $\cX$ of size $|\cX|=d$. We then consider \QKD protocols consisting
of the steps described in Figure~\ref{fig:QKDShape}.
\begin{myfigure}{h}
\begin{myprotocol}[let $N \in \naturals$ be arbitrary]{One-Way \QKD}
\item \emph{Preparation:} For $i=1 \ldots N$, Alice chooses at random
  a basis $\vartheta_i \in \bset$ and a random element $X_i \in \cX$.
  She encodes $X_i$ into the state of a quantum system according to
  the basis $\vartheta_i$ and sends this system to Bob.  Bob measures
  each of the states he receives according to a randomly chosen basis
  $\vartheta'_i$ and stores the outcome $Y_i \in \cX$ of this
  measurement. \index{preparation}
\item \emph{Sifting:} Alice and Bob publicly announce their choices of
  bases and keep their data at position $i$ only if $\vartheta_i =
  \vartheta'_i$. In the following, we denote by $X$ and $Y$ the
  concatenation of the remaining data $X_i$ and $Y_i$, respectively.
  $X$ and $Y$ are sometimes called the \emph{sifted raw key}. \index{sifting}
\item \emph{Error correction:} Alice computes some error correction
  information $C$ depending on $X$ and sends $C$ to Bob.  Bob computes
  a guess $\hat{X}$ for Alice's string $X$, using $C$ and
  $Y$. \index{error correction}
\item \emph{Privacy amplification:} Alice chooses at random a function
  $f$ from a two-universal family of hash functions and announces $f$
  to Bob. Alice and Bob then compute the final key by applying $f$ to
  their strings $X$ and $\hat{X}$, respectively. \index{privacy amplification}
\end{myprotocol}
\caption{General form for {\em one-way} \QKD protocols.}\label{fig:QKDShape}
\end{myfigure} \index{quantum key distribution!one-way}

Note that the quantum channel is only used in the preparation step.
Afterwards, the communication between Alice and Bob
is only classical (over an authentic channel).

As shown in~\cite[Lemma~6.4.1]{Renner05}, the length $\ell$ of the
secret key that can be generated by the protocol described above is
given by\footnote{The approximation in this and the following
  equations holds up to some small additive value which depends
  logarithmically on the desired security $\eps$ of the final key.}
\[
\ell \approx \hminee(\rho_{X \regE} \mid \regE) - \hmax(C)  \, ,
\]
where the cq-state $\rho_{X \regE}$ is the state of the quantum system
with the property that $\regE$ contains all the information Eve has
gained during the preparation step of the protocol and where
$\hmax(C)$ is the number of error correction bits sent from Alice to
Bob. Note that this formula can be seen as a generalization of the
well-known expression by Csisz\'{a}r and K\"{o}rner for classical key
agreement~\cite{CK78}.

Let us now assume that Eve's system $\regE$ can be decomposed into a
classical part $U$ and a purely quantum part $\regE'$. Then, by the
same derivation as in the proof of Corollary~\ref{thm:pasmooth}, we find
\[
  \ell 
\approx 
  \hminee(\rho_{X U \regE'} \mid U \regE') - \hmax(C) 
\geq
  \hiee{X \mid U} - \qhmax(\rho_{\regE'}) - \hmax(C) \ .
\]
As, during the preparation step, Eve does not know the encoding
bases which are chosen at random from the set $\bset$, we can apply our
uncertainty relation (Theorem~\ref{thm:genrel}) to get a lower bound for
the min-entropy of $X$ conditioned on Eve's classical information $\Theta$,
i.e.,
\[
\hiee{X \mid \Theta} \geq M h,
\]
where $M$ denotes the length of the sifted \index{raw key}raw key $X$ and $h$ is the
\index{average entropic uncertainty bound}average entropic uncertainty
bound for $\bset$. \comment{write much more!} Let $q$ be the bound
on the size of Eve's quantum memory $\qhmax(\rho_{\regE'}) \leq q$.
Moreover, let $e$ be the average amount of error correction
information that Alice has to send to Bob per symbol of the sifted raw
key $X$. Then
\[
  \ell 
\gtrapprox 
  M (h-e) - q \ .
\]
Hence, if the memory bound only grows sublinearly in the length $M$ of
the sifted raw key, then the \emph{key rate}, i.e., the number of key
bits generated per bit of the sifted raw key, is lower bounded by
\[
  \mathrm{rate}
\geq
  h-e \ .
\]

\section{The Binary-Channel Setting}
For a binary channel (with a two-dimensional Hilbert space $\cH_2$), the average
amount of error correction information $e$ is given by the binary Shannon
entropy\footnote{This value of $e$ is only achieved if an optimal
  error-correction scheme is used. In practical implementations, the
  value of $e$ might be slightly larger.} $h(p)$, where $p$ is
the bit-flip probability (for classical bits encoded according to some
orthonormal basis as described above). The achievable key rate of a
\QKD protocol using a binary quantum channel is thus given by
\[
  \mathrm{rate}_{\mathrm{binary}} \geq h - h(p) \ .
\]
Summing up, we have derived the following theorem.

\begin{theorem}
  Let $\bset$ be a set of orthonormal bases of $\cH_2$ with average
  entropic uncertainty bound $h$. Then, a \emph{one-way} \QKD protocol
  as in Figure~\ref{fig:QKDShape} produces a secure key against
  eavesdroppers whose quantum-memory size is sublinear in the length
  of the raw key (i.e., sublinear in the number of qubits sent from
  Alice to Bob) at a positive rate as long as the bit-flip probability
  $p$ fulfills
\begin{equation} \label{eq:noiselevel}
  h(p) < h \ .
\end{equation}
\end{theorem}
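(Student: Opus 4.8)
The plan is to assemble the theorem directly from the pieces already developed in the excerpt: the secret-key-length formula from Renner's thesis, the chain-rule decomposition used in Corollary~\ref{thm:pasmooth}, the general uncertainty relation of Theorem~\ref{thm:genrel}, and the bound on the error-correction information coming from the error-correction step of the one-way protocol. The target is to show that the per-symbol key rate is at least $h - h(p)$ and hence strictly positive whenever $h(p) < h$, provided the adversary's quantum memory grows only sublinearly in the raw-key length.

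First I would recall the key-length formula $\ell \approx \hminee(\rho_{X\regE}\mid\regE) - \hmax(C)$ from \cite[Lemma 6.4.1]{Renner05}, where $\regE$ collects all of Eve's information gathered during the preparation step and $C$ is the error-correction message. Then I would split Eve's register as $\regE = (U,\regE')$ with $U$ classical (this includes the basis announcements $\Theta$ from the sifting step) and $\regE'$ the genuinely quantum part, and apply the same chain-rule argument as in the proof of Corollary~\ref{thm:pasmooth}: using the quantum chain rule (Lemma~\ref{lem:qchainrule}) and the fact that dropping a quantum register cannot decrease smooth min-entropy (Lemma~\ref{lem:dropquantum}), one gets $\hminee(\rho_{XU\regE'}\mid U\regE') \geq \hiee{X\mid U} - \qhmax(\rho_{\regE'}) \geq \hiee{X\mid\Theta} - q$, where $q$ is the memory bound. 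Next I would invoke Theorem~\ref{thm:genrel} (or, for the two-dimensional BB84 case, Corollary~\ref{cor:uncertainty}): since during preparation Eve does not learn the random encoding basis, $\Theta$ is uniform over $\bset^M$ and the relation gives $\hiee{X\mid\Theta} \gtrsim M h$ where $M$ is the sifted-raw-key length. Finally, for a binary channel with bit-flip probability $p$, an optimal one-way error-correcting code requires $\hmax(C) \lesssim M\, h(p)$ bits (this is exactly the information-reconciliation bound used in Section~\ref{sec:weakass}, with rate arbitrarily close to $1 - h(p)$). Combining, $\ell \gtrsim M(h - h(p)) - q$, so the rate $\ell/M$ is bounded below by $h - h(p) - q/M$, which tends to $h - h(p)$ as $M\to\infty$ when $q = o(M)$; this is positive precisely under \eqref{eq:noiselevel}.

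The main obstacle I expect is not any single hard computation but the careful bookkeeping of the approximation/smoothness parameters: the key-length formula holds only up to an additive term logarithmic in the desired security $\eps$, Theorem~\ref{thm:genrel} gives $\hie{\eps'}{X\mid\Theta}\geq(h-2\lambda)M$ only for a smoothness $\eps'$ that is negligible but nonzero (and one pays a $2\lambda M$ slack for an arbitrarily small $\lambda>0$), and the chain rule again costs an additive logarithmic term in a further smoothing parameter. One must check that all of these losses are either sublinear in $M$ or can be made an arbitrarily small constant fraction of $M$, so that they are absorbed into the ``positive rate'' claim without affecting the threshold $h(p) < h$. A secondary subtlety is making precise that $U$ genuinely contains $\Theta$ and nothing that would spoil the uncertainty bound---i.e., that Eve's classical side information available when the uncertainty relation is applied is exactly the basis choice, which is legitimate because the preparation step is completed (and Eve's attack on the quantum channel fixed) before the bases are announced. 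Once these points are handled, the theorem follows by taking $\lambda$ small and $M$ large.

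I would present the argument as a short proof: state the key-length formula, perform the register split and chain-rule estimate in one displayed chain of inequalities, substitute the uncertainty bound and the error-correction bound, divide by $M$, and let $M\to\infty$ under the assumption $q=o(M)$, concluding that the rate is at least $h-h(p)>0$. This keeps the proof modular and lets each cited lemma do its job without re-deriving anything.
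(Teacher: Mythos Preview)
Your proposal is correct and follows essentially the same route as the paper: the paper derives the theorem by invoking Renner's key-length formula, splitting Eve's register into a classical part $U$ and a quantum part $\regE'$, applying the chain-rule estimate exactly as in Corollary~\ref{thm:pasmooth}, bounding $\hiee{X\mid\Theta}$ via Theorem~\ref{thm:genrel}, and substituting $e=h(p)$ for the binary channel to obtain $\mathrm{rate}\geq h-h(p)$ when $q$ is sublinear. Your discussion of the smoothness-parameter bookkeeping is more explicit than the paper's, which simply absorbs these terms into the $\approx$ and $\gtrapprox$ notation with a footnote.
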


For the BB84 protocol~\cite{BB84}, we have $h = \frac{1}{2}$ (cf.
Inequality~\eqref{eq:maassenuffink}).  Inequality~\eqref{eq:noiselevel} is
thus satisfied as long as $p \leq 11\%$.  This bound coincides with
the known bound for one-way \QKD in the standard model (with an
unbounded eavesdropper). So, using our analysis here, the memory-bound
does not give an advantage.

The situation is different for the six-state protocol where $h =
\frac{2}{3}$. According to~\eqref{eq:noiselevel}, security against
memory-bounded adversaries is guaranteed (i.e. $h(p) < \frac{2}{3}$)
as long as $p \leq 17\%$. If one requires security against an
unbounded adversary, the threshold for the same protocol lies below
$13\%$ as shown by Lo~\cite{Lo01}, and even the best known QKD
protocol on binary channels with one-way classical post-processing can
only tolerate noise up to roughly $14.1\%$~\cite{RGK05}. It has also
been shown that, in the unbounded model, no such protocol can tolerate
an error rate of more than~$16.3\%$.

The performance of \QKD protocols against quantum-memory bounded
eavesdroppers can be improved further by making the choice of the
encoding bases more random. For example, they might be chosen from the
set of all possible orthonormal bases on a two-dimensional Hilbert
space.  As shown in Section~\ref{sec:uncertbound}, the overall average
entropic uncertainty bound is then given by $h \approx 0.72$
and~\eqref{eq:noiselevel} is satisfied if $p \lessapprox 20\%$.  For an
unbounded adversary, the thresholds are the same as for the six-state
protocol (i.e., $14.1\%$ for the best known one-way protocol).

\section{Possible Extensions}
It is an interesting open problem to consider protocols using
higher-dimensional quantum systems. The results described in
Section~\ref{sec:uncertbound} show that for high-dimensional systems,
the average entropic \index{average entropic uncertainty bound}
uncertainty bound converges to its theoretical maximum. The maximal
tolerated channel noise might thus be higher for such protocols
(depending on the noise model for higher-dimensional quantum
channels).

\index{quantum key distribution!one-way}Another interesting problem 
is to derive completely one-way
quantum-key-distribution schemes, i.e.~to eliminate the interactive
\index{sifting}sifting phase from the protocol in Figure~\ref{fig:QKDShape}. The idea
is to let the honest parties use a pre-shared secret key to determine
the bases of the encoding. If a key of size linear in the number of
qubits is used, the scheme has to guarantee that a big portion of the
key can be reused several times in order to yield a reasonable amount
of fresh key. Quantifying the amount of information an eavesdropper can
learn about the pre-shared key by interfering in the preparation step
and eavesdropping on the following classical communication is an open
problem. 

Another approach consists of expanding a pre-shared key of
size only logarithmic in the number of qubits into a pseudo-random linear-size key
to determine the bases of the encoding. It is an open question how to
extend our uncertainty relation from Section~\ref{sec:morerelation}
to the case of only pseudo-random bases.



\index{quantum key distribution|)}

\chapter{Conclusion} \label{chap:conclusions}


\section{Towards Practice} \label{sec:techpractice}
In the following two sections, we elaborate on the question how close
to practice our systems are. First, we argue that imperfections
occurring in practice like \emph{dark counts} and \emph{empty pulses}
are covered by our $(\phi,\eta)$-weak quantum model used in
Sections~\ref{sec:weakass}, \ref{sec:weakmodel12ot}, and~\ref{sec:weakassumptioncomm}. Second,
we sketch how our techniques can be extended to the more realistic
setting of \emph{noisy quantum memory}.

\subsection{More Imperfections} \label{sec:moreimperfect}
A natural approach for implementing two-party protocols like \BBqot,
\Randlqot,\ and \comm\ is to use the polarization of photons governed
by the laws of quantum optics. Such systems are nowadays at the stage
where they can be built in a optical physics lab. \index{laser} Besides the
already modeled bit errors and multi-pulse emissions, more
imperfections of the physical apparatus such as \emph{empty pulses}
and \emph{dark counts} need to be taken into account.
\index{empty pulse} \index{dark count} \index{weak quantum model}

The players have synchronized clocks and in every predefined time
\index{time slot} slot, the sender is supposed to send out a single qubit. In practice,
\index{weak coherent pulse} \index{single-photon source}
weak coherent pulses are used to approximate single-photon sources by
producing in average only a small fraction of one qubit per pulse.
This means that most of the pulses are \emph{empty}, but on the other
hand, there is also a small probability for a multi-qubit pulse. The
\index{multi-qubit emission}receiver reports to the sender in which 
time slots he received pulses.

Empty pulses also occur when the quantum channel lets a transmitted
qubit escape or when it is absorbed. It is realistic that a good
estimate on the rate at which empty pulses are produced (when no
adversary is present) is known, e.g., from the hardware specifications
and by measuring and calibrating the experimental setup. In this case,
the adversary can only take advantage of empty pulses caused by
absorption in the fiber. The best the adversary can do is to
substitute the fiber for one that preserves all qubits sent and to
report empty pulses when a single pulse has been received. The effect
is to increase the rate at which multi-qubit pulses occur. This attack
\index{photon-number-splitting attack} 
is known as \emph{Photon-Number-Splitting attack} as first noted by
Huttner, Imoto, Gisin, and Mor~\cite{HIGM95} and for instance
explained in \cite{BLMS00a,BLMS00b} in the setting of quantum key
distribution. It follows that empty pulses can also be included in the
$(\phi,\eta)$-weak quantum model by an appropriate adjustment of
parameter $\eta$.

Furthermore, thermal fluctuation in the detector hardware might result
in detection even though no qubit was received. This is called a
\emph{dark count}. In this time slot, the receiver will report the
\index{dark count}reception of a qubit and as the outcome is random, it agrees with the
actual bit sent with probability $\frac12$.

Formally, assume that a practical implementation of \BBqot,
\Randlqot,\ or \comm\ takes place in a setting where $\phix$ is the
probability for a bit error caused by the channel, $\phidc$ is the
probability for a dark count in a specific time slot, $\etamq$ is the
probability for a multi-qubit transmission in a non-empty pulse, and
$\etaab$ is the probability for an empty
pulse caused by absorption of a non-empty pulse. 
In these terms, dark counts contribute $\frac{\phidc}{2}$ to the
bit-error rate $\phix$. If the adversary is able to get perfect
transmission, she can suppress single-qubit pulses up to a rate of
$\etaab$, thereby increasing the rate $\etamq$ of multi-photon pulses
by $\frac{1}{1-\etaab}$.  It follows that if \BBqot, \comm,\ and
\Randlqot\ 
are secure in the $(\phix+\frac{\phidc}{2},
\frac{\etamq}{1-\etaab})$-weak quantum model, then their
implementation is also secure, provided it is accurately modeled by
these four parameters.  \index{weak quantum model}
\index{empty pulse} \index{dark count}

Likewise, a variety of imperfections specific to particular
implementations may be adapted to the weak quantum model.
 
\subsection{Generalizing the Memory Model} \label{sec:noisymem}
The \index{bounded-quantum-storage model}bounded-quantum-storage model
 limits the number of physical qubits
the adversary's memory can contain. A more realistic model would
rather address the noise process the adversary's memory undergoes.
For instance, it is not hard to build a very large, but unreliable
memory device containing a large number of qubits. It is reasonable to
expect that our protocols remain secure also in a scenario where the
adversary's memory is of arbitrary size, but where some quantum
operation (modeling noise) applies to it. If we do not substitute
$\qhmax(\rho_{\regE})$ with the number of qubits $q$ in
Term~(\ref{eq:lemma}) in the privacy-amplification
Section~\ref{sec:pa}, then our constructions can cope with slightly
more general memory models. In particular, all our protocols that are
secure against adversaries with memory of no more than $\gamma n$
qubits are also secure against any noise model that reduces the rank
$\qhmax(\rho_{\regE})$ of the mixed state $\rho_{\regE}$ held by the
adversary to at most $2^{\gamma n}$. \index{noisy-memory model}

An example of a noise process resulting in a reduction of $\qhmax(\rho_{\regE})$
is an \index{erasure channel}erasure channel. Assuming the $n$ initial qubits are each erased
with probability larger than $1-\gamma$ when the memory bound applies,
it holds except with negligible probability in $n$ that
$\qhmax(\rho_{\regE})<\gamma n$.  The same applies if the noise process is
modeled by a \index{depolarizing channel}depolarizing channel with error probability
$p=1-\gamma$. Such a depolarizing channel replaces each qubit by a
random one with probability $p$ and does nothing with probability
$1-p$.

The technique we have developed does not allow to deal with depolarizing channels
with $p < 1-\gamma$ although one would expect that some $0< p < 1-\gamma$
should be sufficient to ensure security against such adversaries.
The reason being that not knowing the positions where the errors occurred
should make it more difficult for the adversary than when the noise process
is modeled by an erasure channel. However, it seems that our uncertainty 
relations 
are not strong enough to address this case. Generalizing the
bounded-quantum-storage model to more realistic \index{noisy-memory model}noisy-memory models is
an interesting open question.

\section{Conclusion} \label{sec:conclusion}
The \index{bounded-quantum-storage model}bounded-quantum-storage model
presented in this thesis is an attractive model, in both the
theoretical and practical sense. On the theoretical side, it allows
for very simple protocols implementing basic two-party primitives such
as oblivious transfer and bit commitment. New high-order entropic
\index{uncertainty relation}uncertainty relations
 have been established in order to show the
security with the help of techniques such as purification and privacy
amplification by two-universal hashing. These uncertainty
relations can also be applied in different settings like quantum key
distribution.

On the practical side, the protocols do not require any quantum memory
for honest players and remain secure provided the adversary has a
quantum memory of size bounded by a constant fraction of all
transmitted qubits.  Such a gap between the amount of storage required
for honest players and adversaries is not achievable by classical
means.  The protocols can be adapted to tolerate various kinds of
errors and in fact, they can be implemented with today's technology. A
collaboration of people from the computer science and physics
departments of the University of Aarhus is currently working on the
implementation of these protocols\footnote{See
\texttt{http://www.brics.dk/{\textasciitilde}salvail/qusep.html}
for further information on the QUSEP project.}.

In summary, one can say that the bounded-quantum-storage model has
passed its first tests by proving its power (the possibility of
oblivious transfer) and by inspiring beautiful theoretical results
(quantum uncertainty relations). It is a good sign that the protocols
for the basic primitives are simple in structure. In principle, enough
instances of these protocols could be used to implement more involved
cryptographic tasks like secure identification, which reduces
essentially to securely checking whether two inputs are equal (without
revealing more than this mere bit of information). However, it is a
natural next step to find more efficient, direct protocols for those
tasks, secure in the bounded-quantum-storage model. Such a direct
approach gives a better ratio between storage-bound and
communication-complexity and is the topic of a recent paper
\cite{DFSS07}.

A major open problem is the optimality of the bounds on the
adversary's quantum memory. The bit-commitment protocol \comm\ for
instance appears to be secure against any adversary with memory less
than $n$ qubits, but our analysis requires the memory to be smaller
than $n/2$ (or $n/4$ for strong binding).  Also, finding protocols
secure against adversaries in more general \index{noisy-memory model}
noisy-memory models, as discussed in the last
Section~\ref{sec:noisymem}, would certainly be a natural and
interesting extension of this work to more practical settings~
\cite{DSTW07privcom}. Furthermore, there is still a lack of simple and
intuitive security definitions for primitives like \OT etc.\ with
rigorous composability results (like universal composability) in the
quantum setting. Very recent results in this direction have been
established in~\cite{WW07privcom}.



\clearemptydoublepage
\markboth{\textsc{Notation}}{\textsc{Notation}}
\phantomsection
\addcontentsline{toc}{chapter}{Notation}
\chapter*{{\Huge Notation}}

\begin{tabular}{ll}
\multicolumn{2}{l}{\bf General}  \\ \hline
$\log$ & binary logarithm \index{log@$\log(\cdot)$} \\
$\ln$  & natural logarithm \index{ln@$\ln(\cdot)$} \\
$\naturals$ & natural numbers: $1,2,3,\ldots$ \\
$\reals$   & real numbers \\
$[a,b]$ & set of real numbers $r$ such that $a \leq r \leq b$ \index{interval}\\
$(a,b]$ & set of real numbers $r$ such that $a < r \leq b$\\
$x|_I$  & substring of $x$ consisting of bit positions in index set
$I$ \index{substring}\\
$x\pad_I$ & as above, padded with $0$s \\
$\ball{\delta n}(x)$ & set of $n$-bit strings with Hamming distance at
most $\delta n$ from $x$ \index{negl@$\negl{n}$}$\negl{n}$\\
$\negl{n}$ & any function in $n$ smaller than the inverse of any
polynomial\\
& for large enough $n$  \\
$[+, \times]_b$ & $+$ for $b=0$ and $\times$ for $b=1$\\
$\delta_{i,j}$ & \index{Kronecker delta}Kronecker delta\\ \hline
\\
\multicolumn{2}{l}{\bf Classical Information Theory} \\ \hline
$P_{X|Y}$ & conditional probability distribution of $X$ given $Y$ \\
$\E[R]$ & expected value of the real random variable $R$
\index{expected value}\\
$\delta(P,Q)$ & variational distance between distributions $P$ and
$Q$ \index{variational distance}\\
$P \approx_\eps Q$ & $P$ and $Q$ are at variational distance at most
$\eps$ \\
$\unif$ & independent and uniformly distributed binary random
variable\\
$\unif^\ell$ & $\ell$ copies of it\\
$\ev$ & \index{event $\ev$}event \\
$\id_{\ev}$ & \index{indicator random variable}indicator random variable of event $\ev$\\ 
$X \!\leftrightarrow\! Z \!\leftrightarrow\! Y$ & \index{Markov chain}Markov chain\\ \hline
\\
\multicolumn{2}{l}{\bf Quantum Information Theory} \\ \hline
$\cH_d$ & \index{Hilbert space}Hilbert space of dimension $d$\\
$\dens{\cH}$ & set of \index{density operator}density operators on $\cH$ \\
$\rho$ & density operator: normalized, Hermitian, non-negative\\
$\trace{\rho}$ & \index{trace}trace of $\rho$ \\
$\id$ & \index{$\I$ (fully mixed state)}fully mixed state\\
$\delta(\rho,\sigma)$ & \index{trace distance}trace distance between $\rho$ and $\sigma$ \\
$\ket{b}_{\theta}$ & classical bit $b$ encoded in basis $\theta$\\
$\rho_{XE}$ & \index{cq-state}cq-state \\ \hline
\end{tabular}

\begin{tabular}{ll}
\multicolumn{2}{l}{\bf Entropies} \\ \hline
$h(\cdot)$ & binary Shannon \index{entropy!Shannon}entropy function \\
$\pi_\alpha(X|Y)$ & \index{$\alpha$-order sum}$\alpha$-order sum of $X$ given $Y$ with joint
distribution $P_{XY}$\\
$\H_\alpha(X|Y)$ &  R\'enyi \index{entropy!classical R\'enyi}entropy of order $\alpha$ of
$X$ given $Y$\\
$\H_\infty(X|Y)$ &  \index{entropy!min-}min-entropy of $X$ given $Y$\\
$\H_2(X|Y)$ &  \index{entropy!collision}collision entropy of $X$ given $Y$\\
$\H(X|Y)$ &  \index{entropy!Shannon}Shannon entropy of $X$ given $Y$\\
$\H_0(X|Y)$ &  \index{entropy!max-}max-entropy of $X$ given $Y$\\
$\tH_\alpha(X|Y)$ & \index{entropy!average conditional R\'enyi}average conditional R\'enyi entropy of order
$\alpha$\\
$\H_\alpha^\eps(X|Y)$ &  $\eps$-smooth R\'enyi entropy of order
$\alpha$ of $X$ given $Y$\\
$\hiee{X|Y}$ &  \index{entropy!smooth min-}$\eps$-smooth min-entropy of $X$ given $Y$\\
$\hmaxee{X|Y}$ &  $\eps$-smooth max-entropy of $X$ given $Y$\\
\\
$\H_\alpha(\rho)$ & R\'enyi entropy of order $\alpha$ of the state
$\rho$\\
$\qhmin(\rho_{AB} | \sigma_B)$ & min-entropy of $\rho_{AB}$
relative to $\sigma_B$\\
$\qhmin(\rho_{AB} | B)$ & min-entropy of $\rho_{AB}$
given $\cH_B$\\
$\hminee(\rho_{AB} | \sigma_B)$ & $\eps$-smooth min-entropy of $\rho_{AB}$
relative to $\sigma_B$\\
$\hminee(\rho_{AB} | B)$ & $\eps$-smooth min-entropy of $\rho_{AB}$
given $\cH_B$\\ \hline

\end{tabular}


\clearemptydoublepage
\markboth{\textsc{Bibliography}}{\textsc{Bibliography}}
\phantomsection
\addcontentsline{toc}{chapter}{Bibliography}
\bibliographystyle{alpha} 
\bibliography{qip,crypto,procs} 

\index{linear function|see {non-degenerate linear function}}
\index{NDLF|see {non-degenerate linear function}}
\index{quantum uncertainty relation|see {uncertainty relation}}
\index{entropic uncertainty relation|see {uncertainty relation}}
\index{high-order entropic uncertainty relation|see {uncertainty relation}}
\index{computational basis|see {basis}}
\index{diagonal basis|see {basis}}
\index{basis!mutually unbiased|see {mutually unbiased bases}}
\index{entropic uncertainty bound|see {average entropic uncertainty
    bound}}
\index{uncertainty bound|see {average entropic uncertainty bound}}
\index{statistical distance|see {variational distance}}
\index{Kolmogorov distance|see {variational distance}}

\clearemptydoublepage
\phantomsection
\addcontentsline{toc}{chapter}{Index}
\markboth{\textsc{Index}}{\textsc{Index}}
\small
\printindex

\end{document}